\newcommand{\graueng}{Doctor}
\newcommand{\abrevi}{D.Sc.}
\newtheorem{myex}{Example}
\newtheorem{mydef}{Def.}
\newtheorem{myprob}{Problem}
\newtheorem{mythm}{Theorem}
\newtheorem{proof}{Proof}
\newtheorem{myprop}{Proposition}
\newtheorem{mycor}{Corollary}
\newtheorem{mylemma}{Lemma}
\newtheorem{myremark}{Remark}
\newcommand{\helv}{%
\fontfamily{phv}\fontseries{b}\fontsize{9}{11}\selectfont}
\titleformat{\paragraph}[runin]
  {\sffamily\small\bfseries}
  {}
  {0em}
  {#1} 
\titlespacing*{\section}{0pc}{3ex \@plus4pt \@minus3pt}{5pt}
\titlespacing*{\subsection}{0pc}{2.5ex \@plus3pt \@minus2pt}{0pt}
\titlespacing*{\subsubsection}{0pc}{2ex \@plus2.5pt \@minus1.5pt}{0pt}
\titlespacing*{\paragraph}{0pc}{1.5ex \@plus2pt \@minus1pt}{10pt}
\title{\uppercase{M}anaging large-scale scientific hypotheses as uncertain and probabilistic data}
\author{Bernardo}{Gon\c{c}alves}
\abstract{  \OnePageChapter
\noindent
In view of the paradigm shift that makes science ever more data-driven, in this thesis we propose a synthesis method for encoding and managing large-scale deterministic scientific hypotheses as uncertain and probabilistic data. 

In the form of mathematical equations, hypotheses symmetrically relate aspects of the studied phenomena. For computing predictions, however, deterministic hypotheses can be abstracted as functions. We build upon Simon's notion of structural equations in order to efficiently extract the (so-called) causal ordering between variables, implicit in a hypothesis structure (set of mathematical equations). 

We show how to process the hypothesis predictive structure effectively through original algorithms for encoding it into a set of functional dependencies (fd's) and then performing causal reasoning in terms of acyclic pseudo-transitive reasoning over fd's. Such reasoning reveals important causal dependencies implicit in the hypothesis predictive data and guide our synthesis of a probabilistic database. Like in the field of graphical models in AI, such a probabilistic database should be normalized so that the uncertainty arisen from competing hypotheses is decomposed into factors and propagated properly onto predictive data by recovering its joint probability distribution through a lossless join. That is motivated as a design-theoretic principle for data-driven hypothesis management and predictive analytics.

The method is applicable to both quantitative and qualitative deterministic hypotheses and demonstrated in realistic use cases from computational science.



}
\item \textsf{MayBMS}: U-relational database management system
\item \emph{sch(R)}: data columns of table $R$
\begin{document}


\chapter{Introduction}\label{ch:introduction}

\noindent
In view of the paradigm shift that makes science ever more data-driven \cite{hey2009}, \emph{in this thesis we demonstrate that large deterministic scientific hypotheses can be effectively encoded and managed as a kind of uncertain and probabilistic data}. 

Deterministic hypotheses can be formed as principles or ideas, then expressed mathematically and implemented in a program that is run to give their \emph{decisive} form of data (see Fig. \ref{fig:galileo}). Hypotheses can also be learned in large scale, as exhibited in the \textsf{Eureqa} project \cite{schmidt2009}. 
Examples of `structured deterministic hypotheses' include tentative mathematical models in physics, engineering and economical sciences, or conjectured boolean networks in molecular biology and social sciences. These are important reasoning devices, as they are solved to generate valuable predictive data for decision making in science and increasingly in business as well. 

In fact, we can refer nowadays to a broad, modern context of \emph{data science} \cite{dhar2013} and \emph{big data} \cite{jagadish2014} in which the complexity and scale of so-called `data-driven' problems require proper data management tools for the predicted data to be analyzed effectively. In this thesis, we pay attention to a quite general class of (tentative) \emph{computational science} models,\footnote{`Computational science' is (sic.) ``a rapidly growing multidisciplinary field that uses advanced computing capabilities to understand and solve complex problems'' \cite{benioff2005}. We may refer to non-stochastic, tentative computational science models throughout this text as `structured deterministic hypotheses.'} and we look at them in an original way as a distinguished kind of data source. 

\begin{spacing}{1}
\noindent\begin{minipage}[t]{0.5\columnwidth}%
\footnotesize{\shadowbox{Law of free fall}}\vspace{1pt}\\
\footnotesize{\emph{``If a body falls from rest, its velocity at any point is proportional to the time it has been falling.''}
\begin{center}
\vspace{-2pt}
(i)
\end{center}
}
\vspace{-7pt}
\begin{center}
\begin{footnotesize}
\begin{verbatim}
for k = 0:n;
   t = k * dt; 
   v = -g*t + v_0; 
   s = -(g/2)*t^2 + v_0*t + s_0; 
   t_plot(k) = t; 
   v_plot(k) = v; 
   s_plot(k) = s;
end
\end{verbatim}
\end{footnotesize}
\vspace{3pt}
(iii)
\end{center}
\end{minipage}
\hspace{0.2pt}	
\noindent\begin{minipage}[t]{0.5\columnwidth}%
\vspace{-20pt}
\begin{footnotesize}
\begin{eqnarray*}
a(t) \!\!\!&=&\!\!\! -g\\
\operatorname{v}(t) \!\!\!&=&\!\!\! -g t \,+\, \operatorname{v_0}\\
s(t) \!\!\!&=&\!\!\!  -(g/2)t^2 \,+\, \operatorname{v_0}t \,+\, s_0
\end{eqnarray*}
\begin{center}
\vspace{-2.5pt}
(ii)
\end{center}
\vspace{-7pt}
\begin{center}
\begingroup\setlength{\fboxsep}{1pt}
\colorbox{blue!5}{%
   \begin{tabular}{c|c|c|c}
  \textsf{FALL} & $t$ & $\operatorname{v}$ & $s$\\
      \hline    
  & $0$ & $0$ & $5000$\\
  & $1$ & $-32$ & $4984$\\
  & $2$ & $-64$ & $4936$\\
  & $3$ & $-96$ & $4856$\\
  & $4$ & $-128$ & $4744$\\
  & $\cdots$ & $\cdots$ & $\cdots$\\
   \end{tabular}
}\endgroup\vspace{8pt}\\
(iv)
\end{center}
\end{footnotesize}
\end{minipage}
\vspace{-8pt}
\begin{figure}[hb]
\caption{Multi-fold view of a deterministic scientific hypothesis.}
\label{fig:galileo}
\vspace{-8pt}
\end{figure}
\end{spacing}

It is generally considered that computational science models, interpreted here as hypotheses to explain real-world phenomena, are of strategic relevance \cite{benioff2005}. 
They are usually complex in that they may have hundreds to thousands of intertwined (coupled) variables and be computed along space, time or frequency domains in arbitrarily large scale. It is important to note the distinction between the structure and data levels. Consider, say, Lotka-Volterra's model, which essentially consists in (Eqs. \ref{eq:lotka-volterra-model}) two ordinary differential equations, complemented by seven subsidiary equations $f_1(t),\, f_2(x_0),\, f_3(y_0),\, f_4(b),\, f_5(p),\, f_6(r),\, f_7(d)$ to set the values of its domain variable $t$ and (input) parameters $x_0,\, y_0,\, b,\, p,\, r,\, d$.
\begin{eqnarray}
\left\{ 
  \begin{array}{lll}
\dot{x} &=& x(b - py)\\
\dot{y} &=& y(rx - d)\\
\end{array} \right.
\label{eq:lotka-volterra-model}
\end{eqnarray}
In a sense, it can be said fairly simple, as it is characterized by a set $\mathcal E$ of equations and a set $\mathcal V$ of variables, sized $|\mathcal E| = |\mathcal V| = 9$. 
Yet, at the data level this model (cf. Chapter \ref{ch:vision}) can be made very large just by computing its predictions in a fine time resolution and/or along an extended time window. 

As we shall see shortly, the technical challenges associated with this thesis involve (not only but) majorly the structure level where, e.g., such Lotka-Volterra model can be abstracted as a deterministic structure $\mathcal S(\mathcal E, \mathcal V)$ with $|\mathcal S|=18$.\footnote{The structure length $|\mathcal S|$ is a measure of how dense the hypothesis structure is, comprising the total sum of the number of variables appearing in each equation.} We are really concerned here with models whose structure $\mathcal S$ is in the order of $|\mathcal S| \lesssim 1M$, and whose results (data!) shall be difficult to analyze by handicrafted practice. Note that the data level of a model can be set as large as wanted (set the domain resolution and/or extension accordingly), but it shall be necessarily large when its structure is itself large. By `large-scale hypotheses' then we mean tentative deterministic models that are large at structure level.

Overall, such class of hypotheses can be said to qualify to at least four of the five v's associated to the notion of big data:\footnote{The `v' of \emph{velocity} may appear in connection with machine learning hypotheses, which we discuss in Chapter \ref{ch:applicability}.} \emph{value}, because of their role in advancing science and technology; \emph{volume}, due to the large scale of modern scientific problems; \emph{variety}, because of their structural heterogeneity, even when they refer to the same phenomena; and \emph{veracity}, due to their uncertainty.

The idea of managing hypotheses `as data' may sound intriguing and in fact it raises a number of research questions of both conceptual and technical nature.\footnote{We shall keep record of those questions and revisit them in \S\ref{sec:questions}.} We start by outlining below the conceptual research questions.

\begin{itemize}
\item[\textbf{RQ1}.]\label{rq1}
How to define and encode hypotheses `as data'? What are the sources of uncertainty that may be present and should be considered?

\item[\textbf{RQ2}.]\label{rq2} 
How does hypotheses `as data' relate with observational data or, likewise, phenomena `as data' from a database perspective?

\item[\textbf{RQ3}.]\label{rq3}
Does every piece of simulated data qualify as a scientific hypothesis? What is the difference between managing `simulation' data from managing `hypotheses' as data?

\item[\textbf{RQ4}.]\label{rq4} 
Is there available a proper (machine-readable) data format we can use to automatically extract mathematically-expressed hypotheses from? 
\end{itemize}

\noindent
It has been a challenge of this thesis to provide reasonable answers to these questions, which are brought together into the vision of hypotheses `as data' (we call it the $\Upsilon$-DB vision) and its use case that we present in Chapter \ref{ch:vision}, and experiment with in realistic scenarios in Chapter \ref{ch:applicability}. 

\begin{framed}
\noindent
The $\Upsilon$-DB vision formulates the problem of hypothesis encoding as a problem of \emph{probabilistic database design}. A number of technical questions arise then.
\end{framed}

We introduce now technical context, materials and methods identified and selected in this thesis as a basis to realize the $\Upsilon$-DB vision in terms of probabilistic database design. We shall outline in the sequel the technical research questions to be answered by the core of the thesis.

\section{Problem Space and Specific Goals}\label{sec:goals}

It has been a goal of this thesis to investigate the capabilities of probabilistic databases to enable hypothesis data management as a particular case of simulation data management. In the sequel, we first characterize the use case of hypothesis data management and then formulate it in terms of probabilistic DB design.

\subsection{Simulation data management}\label{subsec:simulation}

Simulation laboratories provide scientists and engineers with very large, possibly huge datasets that reconstruct phenomena of interest in high resolution. Notorious examples are the John Hopkins Turbulance Databases \cite{meneveau2007}, and the Human Brain Project (HBP) neuroscience simulation datasets \cite{markram2006}. A core motivation for the delivery of such data is enabling new insights and discoveries through \emph{hypothesis testing against observations}. 
Nonetheless, while the use case for \emph{exploratory analytics} is currently well understood and many of its challenges have already been coped with so that high-resolution simulation data is increasingly more accessible \cite{ailamaki2010,ahmad2010}, %
only very recently, as part of this thesis work, the use case of hypothesis management has been taken into account for \emph{predictive analytics} \cite{goncalves2014}. 

In fact, there is a pressing call for innovative technology to integrate (observed) data and (simulated) theories in a unified framework \cite{cushing2013,weinberg2010,golub2010}. The point has just been raised by leading neuroscientists in the context of the HBP, who are incisive on the compelling argument that massive simulation databases should be constrained by experimental data in corrective loops to test precise hypotheses \cite[p. 28]{fregnac2014}.
Fig. \ref{fig:flow-chart}$\,$ shows a simplified view of the (data-driven) scientific method life cycle. It distinguishes the phases of exploratory analytics (context of discovery) and predictive analytics (context of justification), and highlights the loop between hypothesis formulation and testing \cite{losee2001}.

\begin{figure}[t]\footnotesize
\advance\leftskip-0.1cm
\tikzstyle{rect}=[rectangle,
                                    thick,
                                    minimum size=20pt,
                                    minimum width=20pt,
                                    fill=black!8,
                                    draw=black]
\tikzstyle{rectGreen}=[rectangle,
                                    thick,
                                    minimum size=20pt,
                                    fill=green!15,
                                    draw=black]
\tikzstyle{rectBlue}=[rectangle,
                                    thick,
                                    minimum size=20pt,
                                    fill=blue!9,
                                    draw=black]
\tikzstyle{backDisc}=[rectangle,
			       style=dashed,
                                    rounded corners=3pt,
                                    minimum size=82pt,
                                    minimum width=217pt,
                                    draw=black]
\tikzstyle{backJust}=[rectangle,
			       style=dashed,
                                    rounded corners=3pt,
                                    minimum size=82pt,
                                    minimum width=195pt,
                                    draw=black]
\tikzstyle{box}=[rectangle,
                                    fill=none,
                                    draw=none]
\tikzstyle{diamond1}=[diamond,
                                    thick,
                                    minimum size=45pt,
                                    inner sep=0pt,
                                    fill=black!8,
                                    draw=black]
\tikzstyle{edge} = [draw,thick,->, ,bend right] 
\begin{tikzpicture}[scale=1.5]
    \node[backDisc] (backDisc) at (-1.65,4) {};
    \node[backJust] (backJust) at (3.3,4) {};
    \node[box] (disc) at (-3.0,4.73) {\textsf{Context of discovery}};
    \node[box] (just) at (4.25,4.73) {\textsf{Context of justification}};
        
    \node[rect] (ph) at (-3.32,3.7) {
    \begin{tabular}{cc}
	$\!\!\!\!$\textsf{Phenomenon}$\!\!\!\!$\vspace{-2pt}\\
	$\!\!\!\!$\textsf{observation}$\!\!\!\!$\\
    \end{tabular}
};    
    \node[rectGreen] (h) at (-1.75,3.7) {
    \begin{tabular}{cc}
	$\!\!\!\!$\textsf{Hypothesis}$\!\!\!\!$\vspace{-2pt}\\
	$\!\!\!\!$\textsf{formulation}$\!\!\!\!$\\
    \end{tabular}
};    
    \node[rect] (cm) at (-0.12,3.7) {
    \begin{tabular}{cc}
	$\!\!\!\!$\textsf{Computational}$\!\!\!\!$\vspace{-2pt}\\
	$\!\!\!\!$\textsf{simulation}$\!\!\!\!$\\
    \end{tabular}
};    
    \node[rectBlue] (t) at (1.9,3.7) {
    \begin{tabular}{cc}
	$\!\!\!\!$\textsf{Testing}$\!\!\!\!$\vspace{-2pt}\\
	$\!\!\!\!$\textsf{against data}$\!\!\!\!$\\
    \end{tabular}
};    
    \node[rect] (pub) at (4.82,3.7) {
    \begin{tabular}{cc}
	$\!\!\!\!$\textsf{Publishing}$\!\!\!\!$\vspace{-2pt}\\
	$\!\!\!\!$\textsf{results}$\!\!\!\!$\\
    \end{tabular}
};    

    \node[diamond1,rotate=90] (hvalid) at (3.35,3.7) {\rotatebox[origin=c]{-90}{\textsf{valid?}}};
    \node[box] (yes) at (4.0,3.87) {\textsf{yes}};
    \node[box] (no) at (3.0,4.13) {\textcolor{red}{\textsf{no}}};

    \draw[->] (ph) to (h);
    \draw[->] (h) to (cm);
    \draw[->] (cm) to (t);
    \draw[->] (t) to (hvalid);
    \draw[->] (hvalid) to (pub);
    \draw[edge] (hvalid) to (h);
                
\end{tikzpicture}
\caption[A view of the scientific method life cycle]{A view of the scientific method life cycle. It highlights hypothesis formulation and a backward transition to reformulation if predictions `disagree' with observations.}
\label{fig:flow-chart}
\end{figure}

Simulation data, being generated and tuned from a combination of theoretical and empirical principles, has a distinctive feature to be considered when compared to data generated by high-throughput technology in large-scale scientific experiments. It has a pronounced \emph{uncertainty} component that motivates the use case of hypothesis data management for \emph{predictive analytics} \cite{goncalves2014}. Essential aspects of hypothesis data management can be described in contrast to simulation data management as follows --- Table \ref{tab:hypothesis} summarizes our comparison. 

\begin{table}
\caption{Simulation data management vs. hypothesis data management.}
\label{tab:hypothesis}
\begin{spacing}{1.1}
\begingroup\setlength{\fboxsep}{3pt}
\colorbox{gray!5}{%
   \begin{tabular}{p{.5\textwidth}|p{.45\textwidth}}
  \rowcolor{gray!15} \textbf{Simulation data management} & \textbf{Hypothesis data management}\\
      \hline    
   Exploratory analytics & Predictive analytics\\
   Raw data & Sample data\\
   Extremely large (TB, PB) & Very large (MB, GB)\\
   Dimension-centered access pattern & Claim-centered access pattern\\
   Denormalized for faster retrieval & Normalized for uncertainty factors\\
   Batch-, incremental-only data updates & Probability distribution updates\\  
   \end{tabular}
}\endgroup
\end{spacing}
\end{table}

\begin{itemize}
\item \emph{Sample data}. Hypothesis management shall not deal with the same volume of data as in simulation data management for exploratory analytics, but only samples of it. This is aligned, for example, with the architectural design of CERN's particle-physics experiment and simulation ATLAS, where there are four tier/layers of data. The volume of data significantly decreases from (tier-0) the raw data to (tier-3) the data actually used for analyses such as hypothesis testing \cite[p. 71-2]{ailamaki2010}. 
Samples of raw simulation data are to be selected for comparative studies involving competing hypotheses in the presence of evidence (sample observational data). This principle is also aligned with how data is delivered at model repositiories. Since observations are usually less available, only the fragment (sample) of the simulation data that matches in coordinates the (sample) of observations is required out of simulation results for comparative analysis. 
For instance, we show in \S\ref{subsec:case-baroreflex} a predictive analytical study extracted from the Virtual Physiological Rat Project (VPR1001-M) comparing sample simulation data (heart rates) from a baroreflex model with observations on a Dahl SS rat strain.\footnote{\url{http://virtualrat.org/computational-models/vpr1001/}.} 
The simulation is originally set to produce predictions in the time resolution of $t_\Delta=0.01$. But since the observational sample is only as fine as $t_\Delta=0.1$, there is no gain in rendering a predicted sample with $t_\Delta \geq 0.1$ for hypothesis testing. Note that such a `sampling' does not incur in any additional uncertainty as typical of statistical sampling \cite{bolstad2007}.

\item \emph{Claim-centered access pattern}. In simulation data management the access pattern is dimension-centered (e.g., based on selected space-time coordinates) and the data is denormalized for faster retrieval, as typical of Data Warehouses (DW's) and OLAP applications.\footnote{On-Line Analytical Processing, as distinguished from OLTP (On-Line Transaction Processing. The latter is meant for transaction processing of daily queries and updates in operational systems, while the former is for analytical queries in Data Warehouses (DW's) that gather a lot of data collected from different sources for decision making.}
In particular, on account of the so-called `big table' approach, each state of the modeled physical system is recorded in a large, single row of data. This is fairly reasonable for an Extract-Transform-Load (ETL) data ingesture pipeline characterized by \mbox{batch-,} incremental-only updates (see Fig. \ref{fig:etl-pipeline}). Such a setting is in fact fit for exploratory analytics, as entire states of the simulated system shall be accessed at once (e.g., providing data to a visualization system). Altogether, data retrieval is critical and there is no risk of update anomalies. 
Hypothesis management, in contrast, should be centered on claims identified within the hypothesis structure w.r.t. available data dependencies. Since the focus is on resolving uncertainty for decision making (which hypothesis is a best fit?), the data must be normalized based on \emph{uncertainty factors}. 
This is key for the correctness of uncertainty modeling and efficiency of probabilistic reasoning, say, in a probabilistic database \cite[p.30-1]{suciu2011}.

\item \emph{Uncertainty modeling}. In uncertain and probabilistic data management \cite{suciu2011}, the uncertainty may come from two sources: \emph{incompleteness} (missing data), and \emph{multiplicity} (inconsistent data). Hypothesis management on sample simulation data is concerned with the multiplicity of prediction records due to competing hypotheses targeted at the same studied phenomenon. Such a multiplicity naturally gives rise to a probability distribution that may be initially uniform and eventually conditioned on observations. Conditioning is an applied \emph{Bayesian inference} problem that translates into database update for transforming the prior probability distribution into a posterior \cite{goncalves2014}.

\end{itemize}

\begin{figure}[t]
\begin{center}
\tikzstyle{rect2}=[rectangle,
                                    thick,
                                    minimum size=23pt,
                                    fill=black!20,
                                    draw=black]
\tikzstyle{rect3}=[rectangle,
                                    rounded corners=3pt,
                                    minimum size=75pt,
                                    minimum width=110pt,
                                    draw=black]
\tikzstyle{box}=[rectangle,
                                    fill=none,
                                    draw=none]
\tikzstyle{cyl1}=[cylinder,
                                    thick,
                                    minimum size=50pt,
                                    inner sep=0pt,
                                    fill=none,
                                    draw=black]
\tikzstyle{edge} = [draw,thick,->,bend left]
\begin{tikzpicture}[scale=1.1]
    \node[rect3] (back) at (0.1,3) {};
    \node[rect2] (d1) at (-1.0,3.65) {$\mathcal{D}^1$};
    \node[rect2] (d2) at (-0.1,3.1) {$\mathcal{D}^2$};
    \node[box] (dot) at (0.5,2.65) {$\ddots$};
    \node[rect2] (dn) at (1.25,2.4) {$\mathcal{D}^p$};
    \node[cyl1,rotate=90] (h) at (4.8,3) {\rotatebox[origin=c]{-90}{sim}};
    \node[box] (hlabel) at (4.8,2.0) {$\bigcup_{i=1}^{p} R_i$};
    \draw[->] (back) to (h);
    \node[box] (etl) at (3.0,3.3) {\textsf{ETL}};    
\end{tikzpicture}
\caption[The usual data ingesture pipeline of simulation data management]{The usual data ingesture pipeline of simulation data management. Datasets $\bigcup_{i=1}^p\mathcal{D}^i$ generated by simulation trials on (hypothesis) models are loaded each into a `big' table $\bigcup_{i=1}^{p} R$. The uncertainty is then ``buried'' in the database, which lacks a logical organization for enabling data-driven hypothesis management and predictive analytics.}
\label{fig:etl-pipeline}
\end{center}
\end{figure}

Overall, hypothesis data management is also OLAP-like, yet markedly different from simulation data management.

\begin{framed}
\noindent
A key point that distinguishes hypothesis management is that a fact or unit of data is defined by its \textbf{predictive content}. That is, every clear-cut predicted fact (w.r.t.$\!$ available data dependencies) is a claim. Accordingly, the data should be decomposed and organized for a claim-centered access pattern.
\end{framed}

\begin{figure}[H]
\begin{spacing}{1}
\begin{center}
\tikzstyle{rect1}=[rectangle,
                                    thick,
                                    minimum size=23pt,
                                    draw=black]
\tikzstyle{rect2}=[rectangle,
                                    thick,
                                    minimum size=23pt,
                                    fill=black!20,
                                    draw=black]
\tikzstyle{rect3}=[rectangle,
                                    rounded corners=3pt,
                                    minimum size=82pt,
                                    minimum width=110pt,
                                    draw=black]
\tikzstyle{box}=[rectangle,
                                    fill=none,
                                    draw=none]
\tikzstyle{cyl1}=[cylinder,
                                    thick,
                                    minimum size=30pt,
                                    inner sep=0pt,
                                    fill=none,
                                    draw=black]
\tikzstyle{cyl2}=[cylinder,
                                    thick,
                                    fill=black!20,
                                    minimum size=30pt,
                                    inner sep=0pt,
                                    draw=black]
\tikzstyle{edge} = [draw,thick,->,bend left]
\begin{tikzpicture}[scale=1.1]
    \node[rect3] (back) at (0,3) {};
    \node[rect1] (s) at (0,3.7) {$\mathcal{S}_k$};
    \node[rect2] (d1) at (-1.2,2.25) {$\mathcal{D}_k^1$};
    \node[rect2] (d2) at (-0.225,2.25) {$\mathcal{D}_k^2$};
    \node[box] (dot) at (0.45,2.25) {...};
    \node[rect2] (dn) at (1.15,2.25) {$\mathcal{D}_k^p$};
    \node[cyl1,rotate=90] (h) at (4.2,3) {\rotatebox[origin=c]{-90}{h}};
    \node[box] (hlabel) at (4.2,2.1) {$\bigcup_{k=1}^{n}H_k$};
    \node[cyl2,rotate=90] (y) at (7.3,3) {\rotatebox[origin=c]{-90}{y}};
    \node[box] (ylabel) at (7.3,2.1) {$\bigcup_{k=1}^{n}\bigcup_{\ell=1}^{m}Y_k^\ell$};
    \node[box] (cond) at (7.3,4) {\huge\rotatebox[origin=c]{180}{$\circlearrowleft$}};
    \draw[-] (s) to (d1);
    \draw[-] (s) to (d2);
    \draw[-] (s) to (dn);
    \draw[->] (back) to (h);
    \node[box] (etl) at (2.7,3.3) {\textsf{ETL}};    
    \draw[->] (h) to (y);
    \node[box] (etl) at (5.7,3.3) {\textsf{U-intro}};
    \node[box] (etl) at (7.3,4.5) {\textsf{conditioning}};
\end{tikzpicture}
\caption[Design-theoretic pipeline for processing hypotheses as uncertain and probabilistic data]{Pipeline for processing hypotheses as uncertain and probabilistic data. For each hypothesis $k$, its structure $\mathcal S_k$ is given in a machine-readable format, and all of its sample simulation data trials $\bigcup_{i=1}^p\mathcal{D}_k^i$ are indicated their target phenomenon, say $\phi$, to be loaded into a `big table' $H_k$. Then the synthesis comes into play to read a base of possibly very many hypotheses $\bigcup_{k=1}^n H_k$ and transform them into a probabilistic database where each hypothesis is decom- posed into claim tables $\bigcup_{\ell=1}^{m}Y_k^\ell$. A probability distribution is computed for each pheno\-menon $\phi$, covering all the hypotheses and their trials targeted at $\phi$. This distribution is then updated into a posterior in the presence of observational data.} 
\label{fig:pipeline}
\end{center}
\end{spacing}
\end{figure}

To anticipate Chapter \ref{ch:vision}, the synthesis method we have developed in this thesis work for processing hypotheses as uncertain and probabilistic data comprises a design-theoretic pipeline (see Fig. \ref{fig:pipeline}) that extends the one shown in Fig. \ref{fig:etl-pipeline}.

\subsection{Probabilistic database design}

Probabilistic databases (p-DB's) have evolved into mature technology in the last decade with the emergence of new data models and query processing techniques \cite{suciu2011}. One of the state-of-the-art probabilistic data models is the U-relational representation system with its probabilistic world-set algebra (p-WSA) implemented in \textsf{MayBMS} \cite{koch2009}. That is an elegant extension of the relational model we shall refer to in this thesis for the management of large-scale uncertain and probabilistic data. 

We look at U-relations from the point of view of p-DB design, for which no formal design methodology has yet been proposed. Despite the advanced state of probabilistic data management techniques, 
a lack of methods for the systematic design of p-DBs may prevent wider adoption. The availability of design methods has been considered one of the key success factors for the rapid growth of applications in the field of Graphical Models (GM's) \cite{darwiche2010}, considered to inform research in p-DB's \cite[p. $\!$14]{suciu2011}. Analogously, we have proposed to distinguish methods for p-DB design in three classes \cite{goncalves2014}: (i) \emph{subjective} construction, (ii) \emph{learning} from data, and (iii) \emph{synthesis} from other kind of formal specification. 

The first is the less systematic, as the user has to model for the data and correlations by steering all the p-DB construction process (\textsf{MayBMS}' use cases \cite{koch2009}, e.g., are illustrated that way). The second comprises analytical techniques to extract the data and learn correlations from external sources, possibly unstructured, into a p-DB under some ad-hoc schema. This is the prevalent one up to date, motivated by information extraction and data integration applications \cite[p. 10-3]{suciu2011}. In this thesis we present a methodology of the third kind, as we extract data dependencies from some previously existing formal specification (the hypothesis mathematical structure) to synthesize a p-DB algorithmically. Such a type of construction method has been successful, e.g., for building Bayesian Networks \cite{darwiche2010}. To our knowledge, this thesis is the first synthesis method for p-DB design (cf. \S\ref{sec:related-work-synthesis4u}). 

\begin{framed}
\noindent
We shall develop means to extract the specification of a hypothesis and encode it into a U-relational DB for data-driven hypothesis management and analytics. That is, we shall flatten deterministic hypotheses into U-relations. 
\end{framed}

The synthesis method that we have developed for p-DB's relies on the extraction of \emph{functional dependencies} (fd's; cf. \cite{ullman1988,abiteboul1995,maier1983}) that are basic input to algorithmic synthesis.\footnote{In fact, it has been considered a critical failure in traditional DB design the lack of techniques to obtain important information such as fd's in the real world \cite[p. 62]{badia2011}. 
} For an example of fd, consider relation \textsf{FALL} in Fig. \ref{fig:galileo}. There holds an fd $t \to \operatorname{v} s$, meaning that values of attribute time $t$ functionally determine values of both attributes velocity $\operatorname{v}$ and position $s$. More precisely, let $\mu$ and $\tau$ be any two tuples (rows) in an instance of relation (table) \textsf{FALL}. Then it satisfies fd $t \to \operatorname{v} s$ iff $\mu[t]=\tau[t]$ implies $\mu[\operatorname v s]=\tau[\operatorname v s]$. In our illustrative relation \textsf{FALL}, that fd is, in particular, a \emph{key constraint}, which means that (values of) $t$ play the role of a key to (provide access to the values of) $\operatorname v$ and $s$ in the relation. 

A related concept which is also a major one for us is that of \emph{normalization} \cite{ullman1988,abiteboul1995,maier1983}, viz., to ensure that the DB resulting from a design process bears some desirable properties which are associated with some notion of \emph{normal form} (ibid.). For hypothesis management, the uncertainty has to be modeled and should be normalized so that the uncertainty of one claim may not be undesirably mixed with the uncertainty of another claim. It is expected to involve a processing of the causal dependencies implicit in the given hypothesis structure. We shall introduce in detail such concepts in context when necessary.

\subsection{Structural equations}\label{subsec:structural}
\noindent
The flattening of the user mathematical models into hypothesis p-DB's, nonetheless, is not straightforward. It has been a goal of this thesis to investigate proper abstractions on mathematical models in order to (partly) capture their semantics, viz., to an extent that is tailored for hypothesis management (as opposed to, say, model solving). We shall abstract mathematical models into intermediary artifacts that are amenable to be further encoded into fd's.

In fact, given a system of equations with a set of variables appearing in them, in a seminal article Simon introduced an asymmetrical, functional relation among variables that establishes a (so-called) \emph{causal ordering} \cite{simon1953}. That became known as \emph{structural equation models} (SEM's) or just `structural equations' (cf. also \cite{pearl2000}). Along these lines, our goal is to extract the causal ordering implicit in the structure of a deterministic hypothesis into a set of fd's that guides our synthesis of U-relational DB's. As we shall see throughout this text, 

\begin{framed}
\noindent
the causal ordering we capture and process through fd's provides causal dependencies implicit in the predictive data that are very useful information to decompose uncertainty for the sake of probabilistic modeling and reasoning. 
\end{framed}

\subsection{Uncertainty Model}\label{subsec:uncertainty}
\noindent
In uncertain and probabilistic data management, there are essentially two sources of uncertainty: \emph{incompleteness} (missing data), and \emph{multiplicity} (inconsistent data). 

\begin{framed}
\noindent
The kind of uncertainty that is dealt with in this work is the multiplicity of hypothesis trial records identified to be targeted at the same phenomenon record. That is, the uncertainty arises from the existance of competing hypotheses. If multiple hypotheses and trials are inserted for the same phenomenon, the system interprets it as defining a probability distribution.
\end{framed}

Such a probability distribution (usually uniform) on the multiplicity of competing hypotheses is in accordance with probability theory under possible-worlds semantics \cite[Ch. 1]{suciu2011}. It is modeled into the U-relational data model and its p-WSA operators, and implemented into the \textsf{MayBMS} system as we shall see in \S\ref{sec:u-relations}.\footnote{Our own system of hypothesis management is to be delivered on top of the \textsf{MayBMS} backend.} The \textsf{conf()} aggregate operator, for instance, in spite of the name, performs standard (non-Bayesian) probabilistic inference on such probability distribution. Eventually, however, there is a need to condition the initial probability distribution in the presence of observations. For the conditioning, then, we shall adopt Bayesian inference so that the prior probability distribution can be updated to a posterior.

The informal discussion of this section opens the way for a number of technical research questions that we outline next.

\begin{itemize}
\item[\textbf{RQ5}.]\label{rq5}
Is there an algorithm to, given a SEM, efficiently extract its causal ordering? What are the computational properties of this problem?

\item[\textbf{RQ6}.]\label{rq6} 
What is the connection between SEM's and fd's? Can we devise an encoding scheme to `orient equations' and then effectively transform one into the other with guarantees? Once we do it, what design-theoretic properties have such a set of fd's? 

\item[\textbf{RQ7}.]\label{rq7} 
Is  such fd set ready to be used for p-DB schema synthesis as an encoding of the hypothesis causal structure? If not, what kind of further processing we have to do? Can we perform it efficiently by reasoning directly on the fd's? How does it relate to the SEM's causal ordering? 

\item[\textbf{RQ8}.]\label{rq8} 
Is the uncertainty decomposition required for predictive analytics reducible to the structure level (fd processing), or do we need to process the simulated data to identify additional uncertainty factors? Finally, what properties are desirable for a p-DB schema targeted at hypothesis management? Are they ensured by this synthesis method?

\item[\textbf{RQ9}.]\label{rq9} 
Given all such a design-theoretic machinery to process hypotheses into \mbox{(U-)relational} DB's, what properties can we detect on the hypotheses back at the conceptual level? Do we have now technical means to speak of hypotheses that are ``good'' in terms of principles of the philosophy of science?
\end{itemize}

\noindent
The core of this thesis is devoted to answer these questions, and we shall accomplish it throughout Chapters \ref{ch:encoding}, \ref{ch:reasoning} and \ref{ch:synthesis4u}.

\section{Thesis Statement}\label{sec:statement}
The statement of this thesis is that it is possible to effectively encode and manage large deterministic scientific hypotheses as uncertain and probabilistic data. Its key challenges are of both conceptual and technical nature. Conceptually, we provide core, non-obvious abstractions to define and encode hypotheses as data. Technically, we provide a number of algorithms that compose a design-theoretic pipeline to encode hypotheses as uncertain and probabilistic data, and verify their efficiency and correctness. The applicability and effectiveness of our method is demonstrated in realistic case studies in computational science.

Besides, it is worthwhile highlighting some non-goals of this thesis.
\begin{itemize}
\item[\textbf{N1}.] Although we perform some sort of \emph{information extraction} \cite{chang2006} for the acquisition of hypotheses from some model repositories on the web, it is very basic and ad-hoc in order to obtain a testbed for our method. That is, we are not proposing means for the systematic extraction of hypotheses from available sources. In fact we shall outline it in \S\ref{sec:future-work} as an important direction of future work.

\item[\textbf{N2}.] We do not address \emph{solving} computational models or \emph{numerical analytics} in any sense. In fact we rely on the numerical solvers (implemented into tools that we use) as `transaction processing' systems, load their computed data into a relational `big' fact table and then render it into U-relational tables synthesized by our method. We do not deal with \emph{data visualization} either in any sense.

\item[\textbf{N3}.] The efficiency and scalability of \emph{query processing} in p-DB's, in particular U-relational's \textsf{MayBMS} and its p-WSA (which we rely on) is \emph{not} addressed or evaluated in this thesis. In fact, the performance of U-relations and p-WSA has been extensively evaluated and shown to be effective \cite{antova2008,koch2009}. All performance tests carried out in this thesis comprise our design-theoretic techniques for the encoding and synthesis of U-relational hypothesis databases. 

\item[\textbf{N4}.] In terms of \emph{uncertainty} and \emph{statistical analysis}, we stick to (i) process some well-defined forms of multiplicity in the data which constitute the model of uncertainty dealt with in this work; then (ii) by relying on \textsf{MayBMS} we perform probabilistic inference; and (iii) eventually (at application level) we perform Bayesian inference and so that a posterior probability distribution is propagated through p-DB updates. We do not provide any additional form of \emph{uncertainty management}. Rather, we manage the data extracted into the system (under user control) and process its uncertainty in terms of the specific sources of uncertainty recognized in $\Upsilon$-DB (cf. Chapter $\!$\ref{ch:vision}).
\end{itemize}

\section{Thesis Contributions}\label{sec:contributions}

The contributions of this thesis are outlined as follows.

\subsection{Innovative Contributions}\label{subsec:vision-contributions}

This thesis presents the vision of hypotheses as data (and its use case) so-called $\!\Upsilon\!$-DB vision. It has been published in the vision track of VLDB 2014 \cite{goncalves2014}, for its (sic.) potentially high-impact visionary content. The innovative system of $\!\Upsilon\!$-DB has been described in a `system prototype demonstration' paper \cite{goncalves2015b}.\footnote{Preliminary version available at \href{http://arxiv.org/abs/1411.7419}{CoRR abs/1411.7419}.}

\subsection{Technical Contributions}\label{subsec:technical-contributions}

This thesis presents specific technical developments over the $\!\Upsilon\!$-DB vision. In short, it shows how to encode deterministic hypotheses as uncertain and probabilistic data. Our detailed technical contributions (cf. Chapters \ref{ch:encoding}, \ref{ch:reasoning}, and \ref{ch:synthesis4u}) are formulated into a formal method for the design of hypothesis p-DB's which is described in a technical report \cite{goncalves2015a}.\footnote{Preliminary version available at \href{http://arxiv.org/abs/1411.5196}{CoRR abs/1411.5196}.} 
The method, together with our realistic testbed scenarios and performance evaluation, are yet to be published.

\section{Thesis Outline}\label{sec:outline}

The structure of the remainder of this thesis is outlined for reference.
\vspace{12pt}

\noindent
\textbf{Chapter \ref{ch:vision}}. [$\boldsymbol \Upsilon$\textbf{-DB Vision}]. The research vision of hypotheses as (uncertain and probabilistic) data, the characterization of its use case, key points and technical challenges are presented.
\vspace{12pt}

\noindent
\textbf{Chapter \ref{ch:encoding}}. [\textbf{Encoding}]. The problem of encoding a hypothesis `as data' given its formal specification (set of mathematical equations) is presented and addressed by an encoding scheme that transforms the equations into fd's with guarantees in terms of preserving the hypothesis causal structure.
\vspace{12pt}

\noindent
\textbf{Chapter \ref{ch:reasoning}}. [\textbf{Causal Reasoning}]. It is presented a technique for causal reasonig as acyclic pseudo-transitive reasoning over the encoded fd's. It processes the hypothesis causal ordering to find the `first causes' for each of its predictive variables.
\vspace{-5pt}

\noindent
\textbf{Chapter \ref{ch:synthesis4u}}. [\textbf{p-DB Synthesis}]. It is presented a technique to address the problem of uncertainty introduction and propagation for the transformation of hypotheses into U-relational databases. The synthesized U-database is shown to bear desirable properties for hypothesis management and predictive analytics.
\vspace{12pt}

\noindent
\textbf{Chapter \ref{ch:applicability}}. [\textbf{Applicability}]. A discussion of applicability, the implementation of the proposed techniques into a prototype system for test and demonstration of the vision realization through realistic case studies are presented.
\vspace{12pt}

\noindent
\textbf{Chapter \ref{ch:conclusions}}. [\textbf{Conclusions}]. Research questions are revisited, and the significance and limitations of the thesis with directions to future work and final considerations are discussed.

\chapter{Vision: Hypotheses as Data}\label{ch:vision}

High-throughput technology and large-scale scientific experiments provide scientists with \emph{empirical} data that has to be extracted, transformed and loaded before it is ready for analysis \cite{hey2009}. In this vision we consider \emph{theoretical} data, or data generated by simulation from deterministic scientific hypotheses, which also needs to be pre-processed to be analyzed. %

\textbf{Hypotheses as data}. In view of the age of data-driven science, we consider deterministic scientific hypotheses from a multi-fold point of view: formed as principles or learned in large scale,\footnote{As exhibited, e.g., in the \textsf{Eureqa} project \cite{schmidt2009}.} hypotheses are formulated mathematically and coded in a program that is run to give their \emph{decisive} form of data (see Fig. \ref{fig:galileo}).

\textbf{Uncertain data.} $\!$The semantic structure of relation \textsf{FALL} (Fig. \ref{fig:galileo}), item (iv) can be expressed by the functional dependency (fd) $t \to \operatorname{v}\,s$. $\!$This is typical semantics assigned to empirical data in the design of experiment databases. $\!$A space-time dimension (like time $t$ in our example) is used as a key to observables (like velocity $\operatorname{v}$ and position $s$). In \emph{empirical} uncertainty, it is such ``physical'' dimension keys like $t$ that may be violated, say, by alternative sensor readings.

Hypotheses, however, are tentative explanations of phenomena \cite{losee2001}, which characterizes a different kind of uncertain data. In order to manage such \emph{theoretical} uncertainty, we shall need two special attributes to compose, say, the epistemological dimension of keys to observables: $\phi$, identifying the studied phenomena; and $\upsilon$, identifying the hypotheses aimed at explaining them. That is, we shall leverage the semantics of relations like \textsf{FALL} to $\phi\, \upsilon\, t \to \operatorname v s$. This leap is a core abstraction in this vision of $\Upsilon$-DB.

\begin{figure}[t]
\begin{center}
\begin{tikzpicture}[scale=1.6]
    \filldraw[fill=blue!20, draw=blue!60] (-1.5,0) circle (1.1cm);
    \filldraw[fill=red!20, draw=red!60] (4.0,0) circle (1.1cm);
    \filldraw[fill=green!20, draw=green!60] (3.6,0.25) circle (0.5cm);
    \filldraw[fill=yellow!15, draw=yellow!70] (4.1,-0.6) circle (0.4cm);

    \node at (-1.4,1.4) {$\vec{X}$};
    \node at (4.2,1.4) {$\vec{Y}$};
    \node at (-1.45,-1.4) {Given state};
    \node at (4.0,-1.4) {Predicted state};

    \node at (-1.45,0.73) {\small{$a$}};
    \node at (-2.0,-0.1) {\small{$b$}};
    \node at (3.7,0.55) {\footnotesize{$\upsilon_1(a)$}};
    \node at (4.15,0.3) {\footnotesize{$\upsilon_1(b)$}};
    \node at (4.4,-0.6) {\footnotesize{$\upsilon_2(b)$}};
    \node at (0.4,0.2) {$\upsilon_1: \vec{X} \to \vec{Y}$};
    \node at (0.7,-1.05) {$\upsilon_2: \vec{X} \to \vec{Y}$};

    \node (x1) at (-1.3,0.8) {};
    \node (x2) at (-1.9,0.0) {};
    \node (y1) at (3.4,0.35) {};
    \node (y2a) at (3.85,0.15) {};
    \node (y2) at (4.2,-0.8) {};

    \fill[blue] (x1) circle (1pt);
    \fill[blue] (x2) circle (1pt);
    \fill[red] (y1) circle (1pt);
    \fill[red] (y2a) circle (1pt);
    \fill[red] (y2) circle (1pt);

    \draw[->] (x1) to[out=25,in=190] (y1);
    \draw[->] (x2) to[out=52,in=200] (y2a);
    \draw[->] (x2) to[out=-40,in=170] (y2);
\end{tikzpicture}
\caption[Scientific hypotheses seen as alternative functions to predict data.]{Deterministic scientific hypotheses seen as alternative functions to predict data, giving rise to both theoretical and empirical sources of uncertainty.}\label{fig:functions}
\end{center}
\vspace{-10pt}
\end{figure}

\textbf{Predictive data.} Scientific hypotheses are tested by way of their predictions \cite{losee2001}. In the form of mathematical equations, hypotheses symmetrically relate aspects of the studied phenomenon. However, for computing predictions, deterministic hypotheses are applied asymmetrically as \emph{functions} \cite{simon1966}. 
They take a given valuation over input variables (parameters) to produce values of output variables (predictions). By observing that, we shall seek a principled method to transform the (symmetric) mathematical equations of a hypothesis into (asymmetric) fd's. 

By looking at deterministic hypotheses as alternative functions to predict data (see Fig. \ref{fig:functions}), in this vision we shall deal with two sources of uncertainty. Given a well-defined context with a set of alternative hypotheses aimed at explaining (providing predictions for) a selected phenomenon:

\begin{itemize}
\item \emph{Theoretical} uncertainty,\footnote{That is, multiplicity of hypothesis entries associated with a phenomenon.} comprises selecting the best tentative model (function) to produce (the best) data?
\item \emph{Empirical} uncertainty,\footnote{That is, multiplicity of hypothesis trial entries associated with a phenomenon.} comprises, for each candidate model, what is the (parameter) input setting that calibrates it the best way for the selected phenomenon?
\end{itemize}

Note that these two sources of uncertainty are intertwined in that one cannot `clean' one without cleaning the other --- neither theory nor parameters are directly observable, but only their joint results (the predictions) \cite{losee2001}. In this thesis we aim at providing means to support such kind of `integrated' analytics.

\textbf{Applications.} Big computational science research programs such as the Human Brain Project,\footnote{\url{http://www.humanbrainproject.eu/}.} or Cardiovascular Mathematics,\footnote{\url{http://icerm.brown.edu/tw14-1-pdecm}.} are highly-demanding applications challenged by such theoretical (big) data. Users need to analyze results of hundreds to thousands of data-intensive simulation trials.

Besides, recent initiatives on web-based model repositories have been fostering large-scale model integration, sharing and reproducibility in the computational sciences (e.g., \cite{hines2004,chelliah2013,hunter2003}). They are growing reasonably fast on the web, (i) promoting some \textsf{MathML}-based standard for model specification, but (ii) with limited integrity and lack of support for rating/ranking competing models. For those two reasons, they provide a strong use case for our vision of hypothesis management. The Physiome project \cite{hunter2003,bassingthwaighte2000}, e.g., is planned to integrate several large deterministic models of human physiology --- a fairly simple model of the human cardiovascular system, e.g., has about 600+ variables.

Also, there is a pressing call for \emph{deep} predictive analytic tools to support users assessing what-if scenarios in business enterprises \cite{haas2011}. Deep predictive analytics are based on first principles (deterministic hypotheses) and go beyond descriptive analytics or shallow predictive analytics such as statistical forecasting (ibid.).\footnote{The concept of `deep' predictive analytics is from Haas et al. \cite{haas2011}, and is discussed in more detail in \S\ref{subsec:models-and-data}.}

\textbf{U-relations.} 
All that ratifies that \emph{hypothesis management} is a promising class of applications for probabilistic DB's. The vision of \mbox{$\Upsilon$-DB} is currently set to be delivered on top of U-relations and probabilistic world-set algebra (p-WSA) \cite{koch2009}. These were developed in the influential \textsf{MayBMS} project.\footnote{Project website: \url{http://maybms.sourceforge.net/}. \textsf{MayBMS} is as a backend extension of \textsf{PostgreSQL}. It offers all the traditional querying capabilities of the latter in addition to the uncertain and probabilistic's.} $\!\!\!$As implied by some of its design principles, viz., compositionality and the ability to introduce uncertainty, \textsf{MayBMS}' query language fits well to hypothesis management. We shall look at it, as previously mentioned, from the point of view of a synthesis method for p-DB design. 
We shall particularly make use of the \texttt{repair key} operation, which gives rise to alternative worlds as maximal-subset repairs of an argument key.

\textbf{Predictive analytics tool.} In database research (e.g., \cite{rahm2000}), uncertainty is usually seen as an undesirable property that hinders data quality. We shall refer to U-relations and p-WSA as implemented in \textsf{MayBMS}, nonetheless, to show that the ability to introduce \emph{controlled} uncertainty into an (otherwise complete) simulation dataset can be a tool for `deep' predictive analytics on a set of competing or alternative hypotheses. Fig.\ref{fig:study} shows such a scenario of hypotheses `as data' compete to explain a phenomenon `as data.'

\begin{figure}[t]
\centering
\includegraphics[keepaspectratio,width=\textwidth]{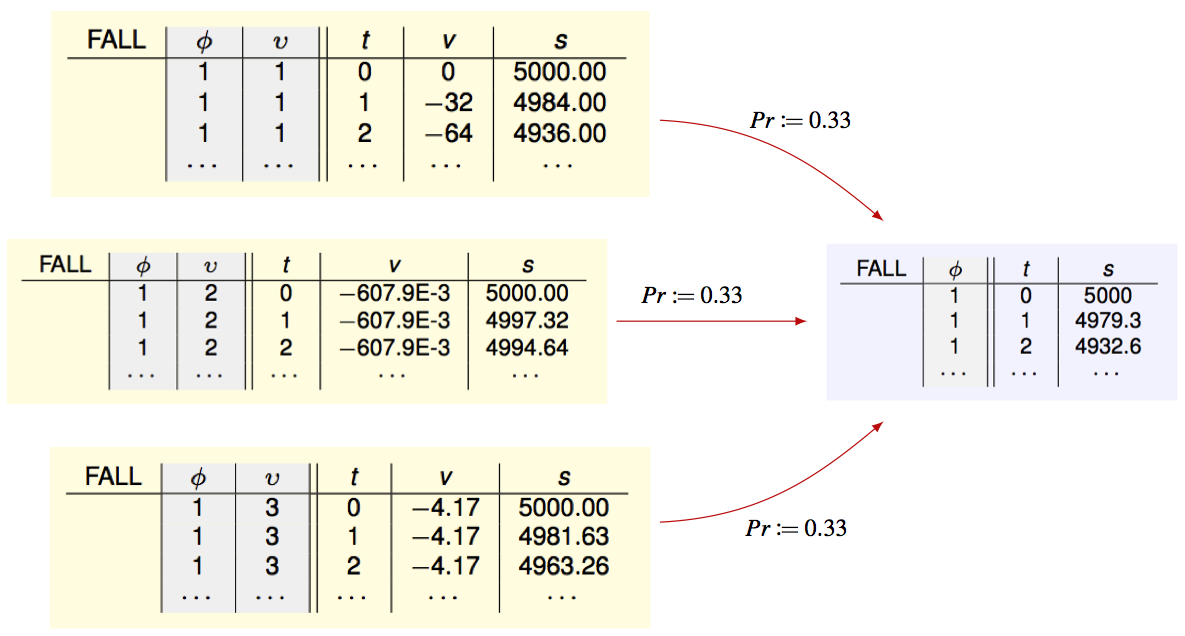}
\caption[Predictive analytics in a data-intensive hypothesis evaluation study.]{Predictive analytics in a data-intensive hypothesis evaluation study: hypotheses (simulated data) compete to explain a phenomenon (observed data).}
\label{fig:study}
\end{figure}

As a roadmap to most of the remainder of this chapter, we claim that if hypotheses can be encoded and identified (see \S \ref{sec:encoding}), and their uncertainty quantified by some probability distribution (see \S \ref{sec:transformation}), then they can be rated/ranked and browsed by the user under selectivity criteria. Furthermore, their probabilities can be conditioned for possibly being re-ranked in the presence of evidence (see \S \ref{sec:analytics}).

\section{Running Example}\label{sec:use-case}

Let us consider Example \ref{ex:fall} for the presentation of the vision.

\begin{myex}\label{ex:fall}
A research is conducted on the effects of gravity on a falling object in the Earth's atmosphere. Scientists are uncertain about the precise object's density and its predominant state as a fluid or a solid. Three hypotheses are then considered as alternative explanations of the fall (see Fig. \ref{fig:descriptive}). Due to parameter uncertainty, six simulation trials are run for $\mathcal H_1$, and four for $\mathcal H_2$ and $\mathcal H_3$ each. $\Box$
\end{myex}

\begin{spacing}{1}
\begin{figure}[H]
   \centering
\begingroup\setlength{\fboxsep}{5pt}
\colorbox{blue!5}{%
   \begin{tabular}{c|c|p{0.48\linewidth}}
  \textsf{PHENOMENON} & $\phi$ & \textsf{Description}\\
      \hline    
   & $1$ & Effects of gravity on an object falling in the Earth's atmosphere.\\
   \end{tabular}
}\endgroup
\vspace{6pt}
\begingroup\setlength{\fboxsep}{5pt}
\colorbox{yellow!15}{%
   \begin{tabular}{c|c|l}
  \textsf{HYPOTHESIS} & $\upsilon$ & \textsf{Name}\\
      \hline    
   & $1$ & Law of free fall\\
   & $2$ & Stokes' law\\
   & $3$ & Velocity-squared law\\
   \end{tabular}
}\endgroup
\caption{Descriptive (textual) data of Example \ref{ex:fall}.}
\label{fig:descriptive}
\end{figure}
\end{spacing}

\begin{spacing}{1}
\begin{figure}[H]
\centering
\begingroup\setlength{\fboxsep}{3pt}
\colorbox{blue!5}{%
   \begin{tabular}{c|>{\columncolor[gray]{0.92}}c||c|c|c|c|c|c|c|c|c}
  $H_1$ & \textsf{tid} & $\phi$ & $\upsilon$ & $t$ & $g$ & $\operatorname{v_0}$ & $s_0$ & $a$ & $\operatorname{v}$ & $s$\\
      \hline    
   & $1$ & $1$ & $1$ & $0$ & $32.0$ & $0$ & $5000$ & $-32.0$ & $0$ & $5000$\\
   & $2$ & $1$ & $1$ & $0$ & $32.0$ & $10$ & $5000$ & $-32.0$ & $10$ & $5000$\\
   & $3$ & $1$ & $1$ & $0$ & $32.0$ & $20$ & $5000$ & $-32.0$ & $20$ & $5000$\\
   & $4$ & $1$ & $1$ & $0$ & $32.2$ & $0$ & $5000$ & $-32.2$ & $0$ & $5000$\\
   & $5$ & $1$ & $1$ & $0$ & $32.2$ & $10$ & $5000$ & $-32.2$ & $10$ & $5000$\\
   & $6$ & $1$ & $1$ & $0$ & $32.2$ & $20$ & $5000$ & $-32.2$ & $20$ & $5000$\\
   \cline{2-11}
   & $1$ & $1$ & $1$ & $0.1$ & $32.0$ & $0$ & $5000$ & $-32.0$ & $-3.2$ & $4999.84$\\
   & $2$ & $1$ & $1$ & $...$ & $32.0$ & $0$ & $5000$ & $-32.0$ & $...$ & $...$\\
   \end{tabular}
}\endgroup
\caption[`Big' fact table $H_1$ loaded with simulation raw data]{`Big' fact table $H_1$ of hypothesis $\upsilon=1$ loaded with simulation raw data: trials on $H_1$ are identified by \textsf{tid}.}
\label{fig:fact-table}
\vspace{-6pt}
\end{figure}
\end{spacing}

\noindent
The construction of $\Upsilon$-DB, a Data Warehouse (DW), requires a simple user description of a research. That is, descriptive records of the phenomena and hypotheses dimensions (see Fig. \ref{fig:descriptive}) are to be inserted first such that basic referential constraints are satisfied by their associated datasets (fact tables). For instance, each one of the six trial datasets for hypothesis $\mathcal H_1$ shall reference its id $\upsilon=1$ as a foreign key from table \textsf{HYPOTHESIS} further in their synthesized relations. 

Fig. \ref{fig:fact-table} shows the `big' fact table $H_1$ for hypothesis $\upsilon\!=\!1$ loaded with its trial datasets for phenomenon $\phi=\!1$. Although table $H_1$ is denormalized for faster data retrieval as usual in DW's, the extraction of the hypothesis equations allows to render it automatically since all variables must appear in some equation. Now we proceed to the hypothesis encoding and start to address research questions RQ1-4.

\section{Hypothesis Encoding}\label{sec:encoding}

We aim at extracting, for each hypothesis, a set of fd's from its mathematical equations. Suppose we are given a set of equations of hypothesis $\mathcal H_1$ below, and let us examine the set $\Sigma_{1}$ of fd's we target at.\footnote{Recall that a rigorous presentation of the method to encode fd set $\Sigma_{1}$ is due by Chapter \ref{ch:encoding}.}

\begin{center}
\begingroup\setlength{\fboxsep}{2pt}
\colorbox{gray!7}{%
\begin{tabular}{l l l}
\advance\leftskip-3cm
$\;\mathcal H_1$.&&$\!\!\!\!\!\!\!\!\!\!$ \textsf{Law of free fall}\\
\hline
$\!a(t)$ &$\!\!\!$=$\!\!\!$& $g$\\
$\!\operatorname{v}(t)$ &$\!\!\!$=$\!\!\!$& $-g t \,+\, \operatorname{v_0}$\\
$\!s(t)$ &$\!\!\!$=$\!\!\!$& $-(g/2)t^2 \,+\, \operatorname{v_0}t \,+\, s_0$
\end{tabular}
}\endgroup
\end{center}

\begin{spacing}{1}
\begin{eqnarray*}\label{sigma1}
\Sigma_{1} = \{\quad 
\phi &\to& g,\\ 
\phi &\to& \operatorname{v_0},\\ 
\phi &\to& s_0,\\ 
g\,\upsilon &\to& a,\\
g\,\operatorname{v_0}\,t\,\upsilon &\to& \operatorname{v},\\
g\,\operatorname{v_0}\,s_0\,t\,\upsilon &\to& s \quad\}.
\end{eqnarray*}
\end{spacing}

\noindent
In order to derive $\Sigma_{1}$ from the equations of $\mathcal H_1$, we focus on their implicit data dependencies and get rid of constants and possibly complex mathematical constructs. Equation $\operatorname{v}(t) \!=\! -gt \!+\! \operatorname{v_0}$, e.g., written this way (roughly speaking), suggests that $\operatorname{v}$ is a prediction variable \emph{functionally dependent} on $t$ (the physical dimension), $g$ and $\operatorname{v_0}$ (the parameters). Yet a dependency like $\,g\operatorname{v_0} t \to \operatorname{v}\,$ may hold for infinitely many equations.\footnote{Think of, say, how many polynomials satisfy that dependency `signature.'} In fact, we need a way to identify $\mathcal H_1$'s mathematical formulation precisely, i.e., an abstraction of its data-level semantics. This is achieved by introducing hypothesis id $\upsilon$ as a special attribute in the fd (see $\Sigma_{1}$). 

\begin{framed}
\noindent
This is a \emph{data representation} of a deterministic scientific hypothesis. It is built into an encoding scheme (see \S\ref{sec:scheme}) that leverages the semantics of structural equations.
\end{framed}

\noindent
The other special attribute, the phenomenon id $\phi$, is supposed to be a key to the values of parameters, i.e., determination of parameters is an empirical, phenomenon-dependent task. The fd $\phi \to g\operatorname{v_0} s_0\,$ is to be (expectedly) violated when the user is uncertain about the values of parameters. 
%
The same rationale applies to derive $\Sigma_{2} \!=\! \Sigma_{3}$ from the equa\-tions of $\mathcal H_2$, $\mathcal H_3$ below. These, n.b., vary in structure w.r.t. $\mathcal H_1$ (e.g., they include parameter $D$, the object's diameter). 

\begin{center}
\begingroup\setlength{\fboxsep}{2pt}
\colorbox{gray!7}{%
\begin{tabular}{l | l}
\advance\leftskip-3cm
$\!\mathcal H_2$. \textsf{Stokes' law} & $\!\mathcal H_3$. \textsf{Velocity-squared law}\\
\hline
$a(t) \,=\, 0$ & $a(t) \,=\, 0$\\
$\operatorname{v}(t) \,=\, -\sqrt{gD/\,4.6\!\times\! 10^{-4}}$ & $\operatorname{v}(t) \,=\, -gD^2/\,3.29\!\times\! 10^{-6}$\\
$s(t) \,=\,  -t\,\sqrt{gD/\,4.6\!\times\! 10^{-4}} \!+\! s_0$ & $s(t) \,=\,  -(gD^2/\,3.29\!\times\! 10^{-6})\,t \!+\! s_0\!\!$\\
\end{tabular}
}\endgroup
\end{center}

\begin{spacing}{1}
 \begin{eqnarray*}
\Sigma_{2} = \Sigma_{3} = \{\quad 
\phi &\to& g,\\ 
\phi &\to& D,\\ 
\phi &\to& s_0,\\ 
\phi &\to& a,\\
g\,D\,\upsilon &\to& \operatorname{v},\\
g\,D\,s_0\,t\,\upsilon &\to& s \quad\}.
\end{eqnarray*}
\end{spacing}

The key point here is that, if the hypothesis structure (set of equations) is given in a machine-readable format for mathematics, then the method to extract the hypothesis fd set from its equations can be carefully designed based on such hypothesis data representation abstraction. In fact, we shall explore W3C's \textsf{MathML} as a format for hypothesis specification.\footnote{\url{http://www.w3.org/Math/}.}

\section{Reasoning over FD's}\label{sec:reasoning}
\noindent
Once each hypothesis fd set has been extracted, some reasoning is to be performed to discover implicit data dependencies. In fact, dependency theory is equipped with a formal system (cf. \S\ref{sec:armstrong}) for reasoning over fd sets like $\Sigma_1$ and derive other fd's in its closure $\Sigma_1^+$. As we elaborate on in Chapter \ref{ch:reasoning}, we shall be particularly concerned with the pseudo-transitivity inference rule. Applied over fd's $\{\,\phi \to g,\;\;g\, \upsilon \to a\,\} \subset \Sigma_1$, for instance, it gives us $\phi\, \upsilon \to a\,$. This inference allows us to observe that $\{g\}$ is a `factor' on the uncertainty of $a$, but $\{\phi\,\upsilon\}$ should be a dimensional \emph{key constraint} for values of $a$. 

In fact, note that derived fd's like $\langle \phi\, \upsilon,\, a\rangle \in \Sigma_1^+$, which should be a constraint on values of $a$ in $H_1$, are (expectedly) violated in the presence of uncertainty: observe in Fig. \ref{fig:fact-table} the multiplicity $\{32.0, 32.2\}$ of values of $a$ under the same pair \mbox{$(\phi\mapsto 1,\, \upsilon\mapsto1)$}, which should functionally determine them in $H_1$. 
For that reason we admit a special attribute `trial id' \textsf{tid} to be overimposed into $H_1$ for a trivial repair, provisionally, until uncertainty can be introduced in a controlled way by synthesis `4U.'
It is meant to identify simulation trials and ``pretend'' certainty not to lose the integrity of the data. It is under this imposed certainty that the raw simulation trial data is safely loaded from files (see Fig. \ref{fig:fact-table}). Note, however, how `certainty' is held at the expense of redundancy and, mostly important, opaqueness for predictive analytics (since \textsf{tid} isolates or hides the inconsistency w.r.t. to the violated constraints). This is until the next stage of the $\Upsilon$-DB construction pipeline, when uncertainty is to be introduced in a controlled manner.

\section{Uncertainty Introduction}\label{sec:transformation}

Before we proceed to the uncertainty introduction procedure, note in relation $H_1$ (Fig. \ref{fig:fact-table}), that the predicted acceleration values $a$ are such that an association between the hypothesis and a target phenomenon, viz., $(\phi\mapsto1, \upsilon\mapsto1)$ is established.
In fact, as of the insertion of each hypothesis trial dataset, the user must set for it a target phenomenon. This may be non-obvious but is quite convenient a design decision for the envisioned system of $\Upsilon$-DB because hypotheses, as (abstract) universal statements \cite{losee2001}, can only be derived predictions from (be empirically grounded) by assigning (callibrating) them onto some real-world phenomenon. 
This assignment is set at data entry time because in fact it only holds at the data level.\footnote{Hypotheses are `universal' by definition \cite{losee2001}. They (must) qualify for a \emph{class} of different situated phenomena, while its predictive datasets must be very specific (for one specific situation).} It is to be recorded in an `explanation' table named $H_0$ by default (see Fig. \ref{fig:explanation}, top), being provided with weights for establishing a prior probability distribution which (by user choice) may or may not be uniform.

\begin{figure}[t]
\advance\leftskip0.5cm
\begingroup\setlength{\fboxsep}{2pt}
\colorbox{blue!8}{%
   \begin{tabular}{c|c||c|c}
  $H_0$ & $\phi$ & $\upsilon$ & \textsf{Conf}\\
      \hline    
   & $1$ & $1$ & 3\\
   & $1$ & $2$ & 1\\
   & $1$ & $3$ & 1\\
   \end{tabular}
}\endgroup
\hspace{3pt}
\begingroup\setlength{\fboxsep}{2pt}
\colorbox{yellow!15}{%
   \begin{tabular}{c|>{\columncolor[gray]{0.92}}c|c||c}
  $Y_0$ & $V \mapsto D$ & $\phi$ & $\upsilon$\\
      \hline    
   & $\textsf{x}_0 \mapsto 1$ & $1$ & $1$\\
   & $\textsf{x}_0 \mapsto 2$ & $1$ & $2$\\
   & $\textsf{x}_0 \mapsto 3$ & $1$ & $3$\\
   \end{tabular}
}\endgroup
\hspace{3pt}
\begingroup\setlength{\fboxsep}{2pt}
\colorbox{yellow!15}{%
   \begin{tabular}{c|>{\columncolor[gray]{0.92}}c||c}
  $W$ & $V \mapsto D$ & \textsf{Pr}\\
      \hline    
   & $\textsf{x}_0 \mapsto 1$ & $.6$\\
   & $\textsf{x}_0 \mapsto 2$ & $.2$\\
   & $\textsf{x}_0 \mapsto 3$ & $.2$\\
   \end{tabular}
}\endgroup
\caption[`Explanation' relational table $H_0$.]{`Explanation' relational table $H_0$ and its associated U-relational table $Y_0$ rendered by application of the \textsf{repair-key} operation.}
\label{fig:explanation}
\end{figure}

The data transformation of `certain' to `uncertain' relations then starts with query $Q_0$, whose result set is materialized into U-relational table $Y_0$ (see Fig. \ref{fig:explanation}). As we introduce in detail in \S\ref{sec:u-relations}, U-relations have in their schema a set of pairs $(V_i, D_i)$ of \emph{condition columns} \cite{koch2009} to map each discrete random variable $\textsf{x}_i$ created by the \textsf{repair-key} operation to one of its possible values (e.g., $\textsf{x}_0 \mapsto 1$). The world table $W$ is internal to \textsf{MayBMS}' and automatically stores their marginal probabilities. The formal semantics the \textsf{repair-key} operation is given in \S\ref{sec:u-relations}.

\begin{framed}
\noindent
$Q_0$. \textsf{\textbf{create table} $Y_0$ \textbf{as select} $\phi,\,\upsilon$ \textbf{from} (\textbf{repair key} $\phi$ \textbf{in} $H_0$ \textbf{weight by} Conf);}
\end{framed}

The possible-world semantics of p-DB's (cf. \S\ref{sec:u-relations}) can be seen as a generalization of \emph{data cleaning}. In the context of p-DB's \cite{suciu2011}, data cleaning does not have to be one-shot --- which is more error-prone \cite{beskales2009}. Rather, it can be carried out gradually, viz., by keeping all mutually inconsistent tuples under a probability distribution (ibid.) that can be updated in face of evidence until the probabilities of some tuples eventually tend to zero to be eliminated. This motivates Remark \ref{rmk:method}.

\begin{myremark}
\label{rmk:method}
Consider U-relational table $Y_0$ (Fig. \ref{fig:explanation}). Note that it abstracts the goal of a data-intensive hypothesis evaluation study, or the scientific method itself \cite{losee2001}, as the repair of each $\phi$ as a key. That is, in \mbox{$\Upsilon$-DB} users can develop their research directly upon data with support of query and update capabilities to rate/rank their hypotheses $\upsilon$ w.r.t. each $\phi$, until the relationship $r(\phi,\, \upsilon)$ is repaired to be a \emph{function} $f\!: \Phi \to \Upsilon$ from each phenomenon $\phi $ to its best explanation $\upsilon$. $\Box$
\end{myremark}

Given a `big' fact table such as $H_1$, we need to identify/group the correlated input attributes under independent uncertainty units, viz., `u-factors,' each one associated with a random variable.\footnote{An attribute can be inferred `input' (viz., a parameter) by means of fd reasoning (cf. \S\ref{sec:scheme}).} 
We illustrate that by means of query $Q_1$, which materializes view $Y_1[g]$ for (let $g=Z_i$) identified u-factor $Z_i \subseteq Z$ in $H_1[\phi,\, Z]$. 

\begin{framed}
\noindent
$Q_1$. \textsf{\textbf{create table} $Y_1[g]$ \textbf{as select} U.$\phi$, U.$g$ \textbf{from} (\textbf{repair key} $\phi$ \textbf{in}}\\ 
\indent$\!\!\!\!\!$ \textsf{(\textbf{select} $\phi,\, g$, \textbf{count(*)} \textbf{as} Fr \textbf{from}
$H_1$ \textbf{group by} $\phi,\, g$)}\\ 
\indent$\!\!\!\!\!$ \textsf{\textbf{weight by} Fr) as U;}
\end{framed}

\noindent
The result set of $Q_1$ is stored in $Y_1[g]$, see Fig. \ref{fig:input}. 
Note that the possible values of $g$ are mapped to random variable $\textsf{x}_1$, and that table $H_1$ is considered source for a joint probability distribution (on the values of $H_1$'s input parameters) which may not be uniform: we count the frequency \textsf{Fr} of each possible value of a u-factor $Z_i \subseteq Z$ (as done for $g$ in $Q_1$) and pass it as argument to the \textsf{weight-by} construct.

\begin{figure}[t]
\begin{spacing}{1}
\advance\leftskip 0.4cm
\begin{subfigure}{0.45\textwidth}
\begingroup\setlength{\fboxsep}{3pt}
\colorbox{blue!5}{%
   \begin{tabular}{c|c|c||c|c|c}
  $H_1[\textsf{tid},\,\phi,\,Z]$ & \textsf{tid} & $\phi$ & $g$ & $\operatorname{v_0}$ & $s_0$\\
      \hline    
   & $1$ & $1$ & $32$ & $0$ & $5000$\\
   & $2$ & $1$ & $32$ & $10$ & $5000$\\
   & $3$ & $1$ & $32$ & $20$ & $5000$\\
   & $4$ & $1$ & $32.2$ & $0$ & $5000$\\
   & $5$ & $1$ & $32.2$ & $10$ & $5000$\\
   & $6$ & $1$ & $32.2$ & $20$ & $5000$\\
   \end{tabular}
}\endgroup\\
\end{subfigure}
\hspace{45pt}
\begin{subfigure}{0.45\textwidth}
\begingroup\setlength{\fboxsep}{5pt}
\colorbox{yellow!12}{%
   \begin{tabular}{c|>{\columncolor[gray]{0.92}}c|c||c}
  $Y_1[g]$ & $V \mapsto D$ & $\phi$ & $g$\\
      \hline    
   & $\textsf{x}_1 \mapsto 1$ & $1$ & $32$\\
   & $\textsf{x}_1 \mapsto 2$ & $1$ & $32.2$\\
   \end{tabular}
}\endgroup
\vspace{7pt}
\begingroup\setlength{\fboxsep}{3pt}
\colorbox{yellow!15}{%
   \begin{tabular}{c|>{\columncolor[gray]{0.92}}c|c}
  $W$ & $V \mapsto D$ & \textsf{Pr}\\
      \hline    
   & $\cdots$ & $\cdots$\\   
\cline{2-3} 
   & $\textsf{x}_1 \mapsto 1$ & $.5$\\
   & $\textsf{x}_1 \mapsto 2$ & $.5$\\
   \end{tabular}
}\endgroup
\end{subfigure}
\caption{Result set of query $Q_1$ on simulation trial dataset for hypothesis $H_1$.}
\label{fig:input}
\end{spacing}
\end{figure}

So far, we have presented informally the procedure of \emph{u-factorization}. Now we proceed to \emph{u-propagation} --- both are presented rigorously in Chapter \ref{ch:synthesis4u}.
We consider $\,g\,\upsilon \to a \in \Sigma_1$ again in order to synthesize predictive U-relation $Y_1[a]$. Since $a$ is functionally determined by $\upsilon$ and $g$ only, and these are independent, we propagate their uncertainty onto $a$ into $Y_1[a]$ by query $Q_2$.

\begin{framed}
\noindent
$Q_2.$ \textsf{\textbf{create table} $Y_1[a]$ \textbf{as select} $H_1.\phi$, $H_1.\upsilon$, $H_1.a$ \textbf{from} $H_1$, $\,Y_0$, $Y_1[g]$ \textbf{as} G }\\
\indent$\!\!\!\!\!\!\!\!$ \textsf{\textbf{where} $H_1.\phi$=$Y_0.\phi$ \textbf{and} $H_1.\upsilon$=$Y_0.\upsilon$ \textbf{and} G.$\phi$=$H_1.\phi$ \textbf{and} G.$g$=$H_1.g$;}
\end{framed}

\noindent
Query $Q_2^\prime$ (not shown) then selects $\phi$, $\upsilon$ and $a$ from \mbox{$Y_i[a]$} for each $i\!=\!1..3$. The result sets of $Q_2$ and $Q_2^\prime$ (resp. $Y_1[a]$ and $Y[a]$) are shown in Fig. \ref{fig:output}.

\begin{figure}[H]
\begin{spacing}{1}
\centering
\begingroup\setlength{\fboxsep}{3pt}
\colorbox{yellow!15}{%
   \begin{tabular}{c|>{\columncolor[gray]{0.92}}c|>{\columncolor[gray]{0.92}}c|c|c||c}
  $Y_1[a]$ & $V_0 \mapsto D_0$ & $V_1 \mapsto D_1$ & $\phi$ & $\upsilon$ & $a$\\
      \hline    
   & $\textsf{x}_0 \mapsto 1$ & $\textsf{x}_1 \mapsto 1$ & $1$ & $1$ & $-32$\\
   & $\textsf{x}_0 \mapsto 1$ & $\textsf{x}_1 \mapsto 2$ & $1$ & $1$ & $-32.2\!\!\!\!$\\
   \end{tabular}
}\endgroup
\vspace{5pt}\\
\begingroup\setlength{\fboxsep}{2pt}
\colorbox{yellow!15}{%
   \begin{tabular}{c|>{\columncolor[gray]{0.92}}c|>{\columncolor[gray]{0.92}}c|c|c||c}
  $Y[a]$ & $V_0 \mapsto D_0$ & $V_1 \mapsto D_1$ & $\phi$ & $\upsilon$ & $a$\\
      \hline    
   & $\textsf{x}_0 \mapsto 1$ & $\textsf{x}_1 \mapsto 1$ & $1$ & $1$ & $-32$\\
   & $\textsf{x}_0 \mapsto 1$ & $\textsf{x}_1 \mapsto 2$ & $1$ & $1$ & $-32.2$\\
   & $\textsf{x}_0 \mapsto 2$ & $-$ & $1$ & $2$ & $0$\\
   & $\textsf{x}_0 \mapsto 3$ & $-$ & $1$ & $3$ & $0$\\
   \end{tabular}
}\endgroup
\caption{U-relational predictive tables rendered by query using the fd's.}
\label{fig:output}
\end{spacing}
\end{figure}

Compare relations $H_1[a]$ and $Y_1[a]$. 
By accounting for the correlations captured in the fd $\,g\,\upsilon \to a$, we could propagate onto $a$ the uncertainty coming from the hypothesis and the only parameter $a$ is sensible to, thus precisely situating tuples of $Y_1[a]$ in the space of possible worlds. The same is done for predictive attributes $\operatorname{v}$ and $s$. In the end, $\Upsilon$-DB shall be ready for predictive analytics, i.e., with all competing predictions as possible alternatives which are mutually inconsistent.

A key point here is that all the synthesis process is amenable to algorithm design. Except for the user `research' description, the $\Upsilon$-DB construction is fully automated based on the hypothesis structure (set of equations) and the raw hypothesis trial data.

\section{Predictive Analytics}\label{sec:analytics}

Users of Example \ref{ex:fall}, has to be able, say, to query \mbox{phenomenon $\phi=1$} w.r.t. predicted position $s$ at specific values of time $t$ by considering all hypotheses $\upsilon$ admitted. That is illustrated by query $Q_3$, which creates integrative table $Y[s]$; and by query $Q_4$, which computes the \emph{confidence} aggregate operation \cite{koch2009} for all $s$ tuples where $t=3$ (Fig. \ref{fig:analytics} shows $Q_4$'s result, apart from column \textsf{Posterior}).

The confidence on each hypothesis for the specific prediction of $Q_4$ is split due to parameter uncertainty such that they sum up back to its total confidence. For $H_2$ and $H_3$, e.g., we have $\{g\,D\,s_0\,t\,\upsilon \to s\} \,\subset\, \Gamma$, where $\Gamma=\Sigma_2 = \Sigma_3$. Since $g$ and $D$ are the parameter uncertainty factors of $s$ ($s_0$ is certain), with 2 possible values (not shown) each, then there are only $2 \times  2 = 4$ possible $s$ tuples for $H_2$ and $H_3$ each. Considering all hypotheses $\upsilon$ for the same phenomenon $\phi$, the confidence values sum up to one in accordance with the laws of probability.

\begin{framed}
$Q_3$. \textsf{\textbf{create table} $Y[s]$ \textbf{as} \textbf{select} U.$\phi$, U.$\upsilon$, U.$t$, U.$s$ \textbf{from}}\\
\indent\indent \textsf{(\textbf{select} $\phi$, $\upsilon$, $t$, $s$ \textbf{from} $Y_1[s]$ \textbf{union all}}\\
\indent\indent \textsf{\textbf{select} $\phi$, $\upsilon$, $t$, $s$ \textbf{from} $Y_2[s]$ \textbf{union all}}\\
\indent\indent \textsf{\textbf{select} $\phi$, $\upsilon$, $t$, $s$ \textbf{from} $Y_3[s]$) \textbf{as} U, $Y_0$}\\
\indent\indent \textsf{\textbf{where} U.$\phi$=$Y_0$.$\phi$ \textbf{and} U.$\upsilon$=$Y_0$.$\upsilon$;}
\end{framed}

\begin{framed}
$Q_4$. \textsf{\textbf{select} $\phi$, $\upsilon$, $s$, \textbf{conf()} \textbf{as} Prior \textbf{from} $Y[s]$ \textbf{where} $t$=3}\\
\indent\indent \textsf{\textbf{group by} $\phi$, $\upsilon$, $s$ \textbf{order by} Prior \textbf{desc};}
\end{framed}

\begin{figure}[H]
\begin{spacing}{1}
\centering
\begingroup\setlength{\fboxsep}{3pt}
\colorbox{yellow!15}{%
   \begin{tabular}{c|c|c|c||>{\columncolor[gray]{0.92}}c|>{\columncolor[gray]{0.92}}c}
  $Y[s]$ & $\phi$ & $\upsilon$ & $s$ & \textsf{Prior} & \textsf{Posterior}\\
      \hline    
   & $1$ & $1$ & $2188.36$ & $.1$ & $.167$\\
   & $1$ & $1$ &  $2205.82$ & $.1$ & $.168$\\
   & $1$ & $1$ &  $2320.51$ & $.1$ & $.167$\\
   & $1$ & $1$ &  $2337.97$ & $.1$ & $.165$\\
   & $1$ & $1$ &  $2452.66$ & $.1$ & $.149$\\
   & $1$ & $1$ &  $2470.12$ & $.1$ & $.145$\\
\cline{2-6}
   & $1$ & $2$ & $2930.59$ & $.05$ & $.020$\\
   & $1$ & $2$ &  $2943.44$ & $.05$ & $.019$\\
   & $1$ & $2$ &  $4991.92$ & $.05$ & $.000$\\
   & $1$ & $2$ &  $4991.97$ & $.05$ & $.000$\\
\cline{2-6}
   & $1$ & $3$ &  $4778.87$ & $.05$ & $.000$\\
   & $1$ & $3$ &  $4779.56$ & $.05$ & $.000$\\
   & $1$ & $3$ &  $4944.72$ & $.05$ & $.000$\\
   & $1$ & $3$ &  $4944.89$ & $.05$ & $.000$ 
   \end{tabular}
}\endgroup
\caption{Analytics on predicted position $s$ conditioned on observation.}\label{fig:analytics}
\end{spacing}
\end{figure}

Users can make informed decisions in light of such confidence aggregates, which are to be eventually conditioned in face of evidence (observed data). Example \ref{ex:bayes} features such kind of Bayesian conditioning for \emph{discrete} random variables mapped to the possible values of predictive attributes (like position $s$) whose domain are \emph{continuous}.

\begin{myex}\label{ex:bayes}
Suppose position $s=2250$ feet is observed at \mbox{$t=3$} secs, with standard deviation $\sigma=20$. Then, by applying Bayes' theorem for normal mean with a discrete prior \cite{bolstad2007}, \textsf{Prior} is updated to \textsf{Posterior} (see Fig. \ref{fig:analytics}). $\Box$
\end{myex}

The procedure uses normal density function (\ref{eq:density}), with (say) \mbox{$\sigma=20$}, to get the likelihood $f(y \,|\, \mu_k)$ of each alternative prediction of $s$ from $Y[s]$ as mean $\mu_k$ given $y$ at observed $s=2250$. Then it applies Bayes' rule (\ref{eq:bayes}) to get the posterior $p(\mu_k \,|\, y)$ \cite{bolstad2007}.
\begin{eqnarray}
f(y \,|\, \mu_k) \!\!&=&\!\! \frac{1}{\sqrt{2\pi \sigma^2}}\, e^{-\frac{1}{2\sigma^2}(y-\mu_k)^2}\label{eq:density}\\
p(\mu_k \,|\, y) \!\!&=&\!\! f(y \,|\,\mu_k)\;p(\mu_k) \;/\;\textstyle\sum_{i=1}^n f(y \,|\,\mu_i)\;p(\mu_i)\label{eq:bayes}
\end{eqnarray}

\noindent
In the general case (cf. examples shown in Chapter \ref{ch:applicability}), we actually have phenomenon `as data:' a sample of independent observed values $y_1,\,...,\,y_n$ (e.g., Brazil's population observed by census over the years). Then, the likelihood $f(y_1,\,...,\,y_n \,|\, \mu_k)$ for each competing trial $\mu_k$, is computed as a product $\textstyle\prod_{j=1}^n f(y_j \,|\, \mu_{kj})$ of the single likelihoods $f(y_j \,|\, \mu_{kj})$ \cite{bolstad2007}. Bayes' rule is then settled by (\ref{eq:sample-bayes}) to compute the posterior $p(\mu_k \,|\, y_1,\,...,\,y_n)$ given prior $p(\mu_k)$. 

\vspace{-5pt}
\begin{eqnarray}
p(\mu_k \,|\, y_1,\,\hdots,\,y_n) \!&=&\! \frac{\textstyle\prod_{j=1}^n f(y_j \,|\, \mu_{kj})\;p(\mu_k)}{\displaystyle\sum_{i=1}^m \displaystyle\prod_{j=1}^n f(y_j \,|\, \mu_{ij})\;p(\mu_i)}\label{eq:sample-bayes}
\end{eqnarray}

As a result, the prior probability distribution assigned to u-factors via \mbox{\textsf{repair key}} is to be eventually conditioned on observed data. 
This is an applied \emph{Bayesian inference} problem that translates into a \emph{p-$\!$DB update} one to induce effects of posteriors back to table $W$. In a first prototype of the \mbox{$\Upsilon$-DB} system (cf. \ref{sec:demo}), we accomplish it by performing Bayesian inference at application level and then applying p-WSA's \textsf{update} (a variant of SQL's \textsf{update}) into \textsf{MayBMS}. This solution is good enough to let us complete use case demonstrations of $\Upsilon$-DB.\footnote{Cf. Chapter \ref{ch:applicability}, and \S\ref{sec:demo} in particular.}

\section{Related Work}\label{sec:related-work}

The vision of managing hypotheses as data has some roots in Porto and Spaccapietra \cite{porto2011}, who motivated a conceptual data model to support (the so-called) \emph{in silico} science by means of a scientific model management system. We discuss now the work we understand to be mostly related to our vision of data-driven hypothesis management and analytics.

\subsection{Models-and-data}\label{subsec:models-and-data}
Haas et al. \cite{haas2011} provide an original long-term perspective on the evolution of database technology. They characterize the data typically managed by traditional DB systems as a record about the past, not a conclusion or an insight or a solution (ibid.). In the context of scientific databases, e.g., their position is suggestive that DB technology has been designed for empirical data, not the theoretical data generated by simulation from domain-specific principles or scientific hypotheses. 

They recognize current DB technology to have raised the art of scalable `descriptive' analytics to a very high level; but point out, however, that nowadays (sic.) what enterprises really need is `prescriptive' analytics to identify optimal business, policy, investment, and engineering decisions in the face of uncertainty. Such analytics, in turn, shall rest on \emph{deep} `predictive' analytics that go beyond mere statistical forecasting and are imbued with an understanding of the fundamental mechanisms that govern a system's behavior, allowing what-if analyses \cite{haas2011}.
In sum, there is a pressing call for \emph{deep} predictive analytic tools in business enterprises as much as in science's. 

In comparison with the $\Upsilon$-DB vision, Haas et al. are proposing a long-term \emph{models-and-data} research program to pursue data management technology for deep predictive analytics. They discuss strategies to extend query engines for model execution within a \mbox{(p-)DB}. Along these lines, query optimization is understood as a more general problem with connections to algebraic solvers.

Our framework in turn essentially comprises an abstraction and technique for the encoding of \emph{hypotheses as data}. It can be understood (in comparison) as putting models strictly into a flattened data perspective. For that reason it has been directly applicable by building upon recent work on p-DBs \cite{suciu2011}. In principle, it can be integrated into, say, the OLAP layer of the models-and-data project.

\subsection{Scientific simulation data}\label{subsec:simulation-data}
\noindent
As previsouly mentioned, science's ETL is distinguished by its unfrequent, incremen\-tal-only updates and by having large raw files as data sources \cite{ailamaki2010}. 
Challenges for enabling an efficient access to high-resolution, raw simulation data have been documented from both supercomputing,\cite{meneveau2007} and database research viewpoints;\cite{ailamaki2013} and pointed as key to the use case of \emph{exploratory analytics}. The extreme scale of the raw data has motivated such non-conventional approaches for data exploration, viz., the `immersive' query processing (move the program to the data) \cite{meneveau2007,kanov2011}, or `in situ' query processing in the raw files \cite{ailamaki2011,ailamaki2012}. Both exploit the spatial structure of the data in their indexing schemes.

That line of research is motivated for equipping scientist end-users for an immediate interaction with their very large simulation datasets.\footnote{Sometimes phrased `here is my files, here is my queries, where are my results?' \cite{ailamaki2011}.} The NoDB approach, in particular, argues to eliminate such ETL phase (viz., the loading) for a direct access to data `in situ' in the raw data files \cite{ailamaki2012}. In fact, data exploration is a fundamental use case of data-driven science.

Nonetheless, being generated from first principles or learned deterministic hypotheses, simulation data has a pronounced uncertainty component that motivates a another use case, viz., the case of hypothesis management and \emph{predictive} analytics \cite{haas2011,goncalves2014}. As we have motivated in \S\ref{sec:goals}, the latter requires probabilistic DB design for enabling uncertainty decomposition (factorization). 

Hypothesis management shall not deal with the same volume of data as in simulation data management for exploratory analytics, but samples of it (cf. Table \ref{tab:hypothesis} for a comparison). 
For instance, in CERN's particle-physics experiment ATLAS there are four tier/layers of data management. The volume of data significantly decreases from the (tier-0) raw data to the (tier-3) data actually used for analyses such as hypothesis testing \cite[p. 71-2]{ailamaki2010}. 

Overall, the overhead incurred in loading samples of raw simulation trial datasets into a p-DB is justified for enabling a principled hypothesis evaluation and rating/ranking according to the scientific method.

\subsection{Hypothesis encoding}\label{subsec:hypothesis-encoding}

Our framework is comparable with Bioinformatics' initiatives that address hypothesis encoding into the RDF data model \cite{soldatova2011}: (i) the Robot Scientist \cite{king2009} is a knowledge-base system (KBS) for automated generation and testing of hypotheses about what genes encode enzymes in the yeast organism; (ii) HyBrow \cite{racunas2004} is a KBS for scientists to test their hypotheses about events of the galactose metabolism also of the yeast organism; and (iii) SWAN \cite{gao2006} is a KBS for scientists to share hypotheses on possible causes of the Alzheimer disease.

The Robot Scientist relies on rule-based logic programming analytics to automatically generate and test RDF-encoded hypotheses of the kind `gene G has function A' against RDF-encoded empirical data \cite{king2009}. HyBrow is likewise, but hypotheses are formulated by the user about biological events \cite{racunas2004}. SWAN in turn disfavors analytic techniques for hypothesis evaluation and focus on descriptive aspects: hypotheses are high-level natural language statements retrieved from publications. Each `hypothesis' is associated with lower-level  `claims' (both RDF-encoded) that are meant to support it on the basis of some empirical evidence (RDF-encoded gene/protein data). 
In particular, SWAN \cite{gao2006} differs from the former in that each hypothesis is unstructured, being then more related to efforts on the retrieval of textual claims from the narrative fabric of scientific reports \cite{deWaard2009}. 

All of them though, consist in some ad-hoc RDF encoding of sequence and genome analysis hypotheses under varying levels of structure (viz., from `gene G has function A' statements to free text). Our framework in turn consists in the U-relational encoding of hypotheses from mathematical equations, which is (to our knowledge) the first work on hypothesis relational encoding. 

Finally, as for hypothesis evaluation and comparison analytics, the \mbox{$\Upsilon$-DB} vision is distinguished in terms of its Bayesian inference approach. The latter has been pointed out as a major direction for the improvement of the Bioinformatics' initiatives just mentioned (cf. \cite[p. 13]{soldatova2011}), and is in fact an influential model of decision making for hypothesis evaluation \cite[p. 220]{losee2001}.

\section{Summary: Key Points}\label{sec:challenges}
\noindent
We outline some key points in the $\Upsilon$-DB vision:

\begin{itemize}
\item `Structured deterministic hypotheses' are encoded as theoretical data and distinguished from empirical data by the introduction of an \emph{epistemological dimension} into their semantic structure.

\item Two \emph{sources of uncertainty} are considered: theoretical uncertainty, originating from competing hypotheses; and empirical uncertainty, derived from alternative simulation trials on each hypothesis for the same phenomenon.

\item A method to extract the structure of a hypothesis can be carefully designed based on a \emph{hypothesis data representation} and shall be reducible in terms of  machine-readable format for mathematical modeling, viz., W3C's \textsf{MathML}, which we shall adopt as a standard for hypothesis specification.

\item We have seen that the controlled \emph{introduction of uncertainty} into simulation data is amenable to algorithm design and then reducible to a design-theoretic synthesis method for the construction of U-relational DB's.

\item Simulation data can be modeled as hypothesis data whenever it is associated with a target phenomenon. 
As the same phenomenon may happen to be associated with many such hypotheses, the research activity can be modeled as a \emph{data cleaning problem} in p-DB's. 
\end{itemize}

Essentially, the vision of $\Upsilon$-DB comprises a design-theoretic pipeline (Fig. \ref{fig:pipeline}). 
For the insertion of a hypothesis $k$, we shall be given a \textsf{MathML}-compliant structure $\mathcal S_k$ together with its simulation trial datasets $\mathcal D_k^\ell$ in raw files (e.g., .mat, .csv). 
Then we apply an Extract-Transform-Load (ETL) automatic procedure to generate the hypothesis `big' fact table $H_k$ under the trial id's.

The extracted equations are firstly 
encoded into fd's. Then, at any time, as many hypotheses may have been inserted into the system, the  uncertainty introduction (U-intro) procedure can be applied to process the encoded fd's and synthesize the `uncertain' U-relations that are to be eventually conditioned on observations.

Note, in Fig. \ref{fig:pipeline}, that the \textsf{ETL} procedure is operated in a `local' view for each hypothesis $k$, while the \textsf{U-intro} procedure and the conditioning are operated in the `global' view of all available hypotheses $k=1..n$. 
The pipeline opens up four main tracks of technical research challenges from the ETL stage on, viz., (i) \emph{hypothesis encoding} and (ii) \emph{causal reasoning over fd's}, (iii) \emph{p-DB synthesis} and (iv) \emph{conditioning}. We address in the sequel the three first track of challenges in depth. The problem of conditioning is outlined for further work in \S\ref{sec:future-work}.


\chapter{Hypothesis Encoding}\label{ch:encoding}

In this chapter we address the problem of hypothesis encoding. In \S\ref{sec:sems} we introduce notation and basic concepts of structural equations and the problem of causal ordering. In \S\ref{sec:coa} we study the problem of extracting the causal ordering implicit in the structure of a deterministic hypothesis and show that Simon's classical approach \cite{simon1953,druzdzel2008} is intractable. In \S\ref{sec:tcm} then we build upon a less notorious approach of Nayak's \cite{nayak1994} and borrow an efficient algorithm for it that fits very well our use case for hypothesis encoding. In \S\ref{sec:scheme} we develop an encoding scheme that builds upon the idea of structural equations through an original abstraction of hypotheses `as data.' In \S\ref{sec:tcm-experiments} we present experiments that attest how the encoding scheme works in practice for large hypotheses. In \S\ref{sec:related-work-encoding} we discuss related work. In \S\ref{sec:encoding-conclusions} we summarize the results of this chapter.

\section{Preliminaries: Structural Equations}\label{sec:sems}
\noindent
Given a system of mathematical equations involving a set of variables, to build a \emph{structural equation model} (SEM) is, essentially, to establish a one-to-one mapping between equations and variables \cite{simon1953}. That shall enable further detecting the hidden asymmetry between variables, i.e., their causal ordering. For instance, Einstein's famous equation $E=m\,c^2$ states the equivalence of mass and energy, summarizing a theory that can be imputed two different asymmetries (for different applications), say, given a fixed amount of mass $m=m_0$ (and recall $c$ is a constant), predict the particle's relativistic rest energy $E$; or given the particle's rest energy, predict its mass or potential for nuclear fission.

To stress the point, consider Newton's second law $F = m\,a\,$ in such a scalar setting. The modeler can either use it to compute (predict), say, acceleration values given an amount of mass and different force intensities, or to predict force intensities given a fixed acceleration (e.g., for testing an engineered dynamometer). The point here is that Newton's equation is not enough to derive predictions. That is, it has a number of variables $|\mathcal V|=3$, which is larger than $|\mathcal E|=1$. It must be completed with two more equations in order to qualify as an (applied) hypothesis `as data.' Although usually it is interpreted an asymmetry towards $a$, technically, there is nothing in its semantics to suggest so.\footnote{As the equality construct `=' is used as a predicate, not an assignment operator.} Compare the two systems given in Fig. \ref{fig:newton-graphs}.\footnote{We shall introduce the notion of `directed causal graphs' shortly.} In sum, the causal ordering of any system of equations is not to be guessed, as it can be inferred. In this chapter we rely on previous work (mostly AI's work, viz., \cite{simon1953,nayak1994,druzdzel2008}) and adapt it for the encoding of hypotheses into fd's.

\begin{figure}[t]
\advance\leftskip0.2cm
\begin{subfigure}{0.475\textwidth}
\vspace{-15pt}
\begin{eqnarray*}
F &=& 10\,[N],\\ 
m &=& 5\,[kg],\\ 
F &=& m\,a
\end{eqnarray*}
\vspace{-16pt}\\
\begin{eqnarray*}
a &=& 90\,[m/s^2],\\ 
m &=& 5\,[kg],\\ 
F &=& m\,a
\end{eqnarray*}
\end{subfigure}
\hspace{-20pt}
\begin{subfigure}{0.475\textwidth}
\vspace{8pt}
\tikzstyle{circ}=[circle,
                                    thick,
                                    minimum size=0.3cm,
                                    draw=black]
\tikzstyle{vertex}=[circle,fill=black!10,minimum size=20pt,inner sep=0pt]
\tikzstyle{selected vertex} = [vertex, fill=red!24]
\tikzstyle{edge} = [draw,thick,->,bend left]
\tikzstyle{weight} = [font=\small]
\tikzstyle{selected edge} = [draw,line width=5pt,-,red!50]
\tikzstyle{ignored edge} = [draw,line width=5pt,-,black!20]
\begin{tikzpicture}[scale=0.6]
    \foreach \pos/\name in {{(0,5)/F}, {(0,2)/m}, {(4,2)/a}}
        \node[vertex] (\name) at \pos {$\name$};
    \draw[->] (F) to (a);
    \draw[->] (m) to (a);
\end{tikzpicture}\vspace{20pt}\\
\tikzstyle{circ}=[circle,
                                    thick,
                                    minimum size=0.3cm,
                                    draw=black]
\tikzstyle{vertex}=[circle,fill=black!10,minimum size=20pt,inner sep=0pt]
\tikzstyle{selected vertex} = [vertex, fill=red!24]
\tikzstyle{edge} = [draw,thick,->,bend left]
\tikzstyle{weight} = [font=\small]
\tikzstyle{selected edge} = [draw,line width=5pt,-,red!50]
\tikzstyle{ignored edge} = [draw,line width=5pt,-,black!20]
\begin{tikzpicture}[scale=0.6]
    \foreach \pos/\name in {{(0,5)/F}, {(0,2)/m}, {(4,2)/a}}
        \node[vertex] (\name) at \pos {$\name$};
    \draw[->] (a) to (F);
    \draw[->] (m) to (F);
\end{tikzpicture}
\end{subfigure}
\vspace{10pt}
\caption{``Directed causal graphs'' associated with the two systems.}\label{fig:newton-graphs}
\end{figure}

\begin{mydef}\label{def:structure}
A \textbf{structure} is a pair $\mathcal S(\mathcal E, \mathcal V)$, where $\mathcal E$ is a set of equations over set $\mathcal V\!$ of variables, $|\mathcal E| \leq |\mathcal V|$, such that:
\begin{itemize}
\item[(a)] In any subset of $k$ equations of the structure, at least $k$ different variables appear;

\item[(b)] In any subset of $k$ equations in which $r$ variables appear, $k \leq r$, if the values of any $(r - k)$ variables are chosen arbitrarily, then the values of the remaining $k$ variables can be determined uniquely --- finding these unique values is a matter of solving the equations.
\end{itemize}
\end{mydef}

\begin{mydef}\label{def:complete}
Let $\mathcal S(\mathcal E, \mathcal V)$ be a structure. We say that $\mathcal S$ is self-contained or \textbf{complete} if $|\mathcal E|=|\mathcal V|$.
\end{mydef}
\noindent
In short, we are interested in systems of equations that are `structural' (Def. \ref{def:structure}) and `complete' (Def. \ref{def:complete}), viz., that has as many equations as variables and no subset of equations has fewer variables than equations.\footnote{Also, we expect the systems of equations given as input to be `independent' in the sense of Linear Algebra. In our context, that means systems that can only have non-redundant equations. In that case, if some subset of equations has fewer variables than equations, then the system must be `overconstrained.'}

Complete structures can be solved for unique sets of values of their variables. In this work, however, we are not concerned with solving sets of mathematical equations at all, but with processing their causal ordering in view of U-relational DB design. Simon's concept of causal ordering has its roots in econometrics studies (cf. \cite{simon1953}) and has been taken further in AI with a flavor of Graphical Models (GMs) \cite{druzdzel1993,pearl2000,druzdzel2008}. In this thesis we translate the problem of causal ordering into the language of data dependencies, viz., into fd's.

\begin{mydef}
Let $\mathcal S$ be a structure. We say that $\mathcal S$ is \textbf{minimal} if it is complete and there is no complete structure $\mathcal S^\prime \!\subset \mathcal S$. 
\end{mydef}

\begin{mydef}
The \textbf{structure matrix} $A_S$ of a structure $\mathcal S(\mathcal E, \mathcal V)$, with $f_1,\, f_2, \hdots, f_n \in \mathcal E$ and $x_1,\, x_2, \hdots, x_m \in \mathcal V$, is a $n \times m$ matrix of 1's and 0's in which entry $a_{ij}$ is non-zero if variable $x_j$ appears in equation $f_i$, and zero otherwise.
\end{mydef}

Elementary row operations (e.g., row multiplication by a constant) on the structure matrix may hinder the structure's causal ordering and then are not valid in general \cite{simon1953}. This also emphasizes that the problem of causal ordering is not about solving the system of mathematical equations of a structure, but identifying its hidden asymmetries.

\begin{mydef}\label{def:tcm}
$\!$Let $\mathcal S(\mathcal E, \mathcal V)\!$ be a complete structure. $\!\!$Then a \textbf{total causal mapping} over $\mathcal S$ is a bijection $\varphi\!: \mathcal E \to \mathcal V$ such that, for all $f \!\in \mathcal E$, if $\varphi(f)=x$ then $x \!\in Vars(f)$.
\end{mydef}

\vspace{-3pt}
\noindent
Simon has informally described an algorithm (cf. \cite{simon1953}) that, given a complete structure $\mathcal S(\mathcal E, \mathcal V)$, can be used to compute a \emph{partial} causal mapping $\varphi_p$ from partitions on the set of equations to same-cardinality partitions on the set of variables. As shown by Dash and Druzdzel \cite{druzdzel2008}, the causal mapping returned by Simon's (so-called) Causal Ordering Algorithm (\textsf{COA}) is not \emph{total} when $\mathcal S$ has variables that are \emph{strongly coupled} (because they can only be determined simultaneously). They also have shown that any total mapping $\varphi$ over $\mathcal S$ must be consistent with \textsf{COA}'s partial mapping $\varphi_p$ \cite{druzdzel2008}. The latter is made partial by design (merge strongly coupled variables into partitions or clusters) in order to force its induced causal graph $G_{\varphi_{p}}$ to be acyclic. Algorithm \ref{alg:coat}, \textsf{COA}$_t$, is a variant of Simon's $\!$\textsf{COA} adapted to illustrate our use case. It returns a total causal mapping $\varphi$, instead of a partial causal mapping. We illustrate it through Example \ref{ex:struct-matrix} and Fig. \ref{fig:coa}.

\begin{spacing}{1.2}
\begin{figure}[t]
\tikzset{node style ge/.style={circle,inner sep=0pt,minimum size=17pt}}
\begin{subfigure}{0.475\textwidth}
\hspace{12pt}
\begin{tikzpicture}[baseline=(A.center)]
\matrix (A) [matrix of math nodes, nodes = {node style ge},column sep=01.0 mm] {
 & \node (x1) {x_1}; & \node (x2) {x_2}; & \node (x3) {x_3}; & \node (x4) {x_4}; & \node (x5) {x_5}; & \node (x6) {x_6}; & \node (x7) {x_7};\\
\node (f1) {f_1}; & \node (a11) {1}; & \node (a12) {0}; & \node (a13) {0}; & \node (a14) {0}; & \node (a15) {0}; & \node (a16) {0}; & \node (a17) {0};\\
\node (f2) {f_2}; & \node (a21) {0}; & \node (a22) {1}; & \node (a23) {0}; & \node (a24) {0}; & \node (a25) {0}; & \node (a26) {0}; & \node (a27) {0};\\
\node (f3) {f_3}; & \node (a31) {0}; & \node (a32) {0}; & \node (a33) {1}; & \node (a34) {0}; & \node (a35) {0}; & \node (a36) {0}; & \node (a37) {0};\\
\node (f4) {f_4}; & \node (a41) {1}; & \node (a42) {1}; & \node (a43) {1}; & \node (a44) {1}; & \node (a45) {1}; & \node (a46) {0}; & \node (a47) {0};\\
\node (f5) {f_5}; & \node (a51) {1}; & \node (a52) {0}; & \node (a53) {1}; & \node (a54) {1}; & \node (a55) {1}; & \node (a56) {0}; & \node (a57) {0};\\
\node (f6) {f_6}; & \node (a61) {0}; & \node (a62) {0}; & \node (a63) {0}; & \node (a64) {1}; & \node (a65) {0}; & \node (a66) {1}; & \node (a67) {0};\\
\node (f7) {f_7}; & \node (a71) {0}; & \node (a72) {0}; & \node (a73) {0}; & \node (a74) {0}; & \node (a75) {1}; & \node (a76) {0}; & \node (a77) {1};\\
};
\end{tikzpicture}
\caption{Structure matrix as given.}\label{fig:coa-a}
\end{subfigure}
\hspace{-6pt}
$\to$
\tikzstyle{background0}=[rectangle,
                                                fill=gray!05,
                                                inner sep=0.025cm,
                                                rounded corners=1mm]
\tikzstyle{background1}=[rectangle,
                                                fill=gray!20,
                                                inner sep=0.025cm,
                                                rounded corners=1mm]
\tikzstyle{background2}=[rectangle,
                                                fill=gray!40,
                                                inner sep=0.025cm,
                                                rounded corners=1mm]
\begin{subfigure}{0.475\textwidth}
\vspace{-5pt}
\hspace{6pt}
\begin{tikzpicture}[baseline=(A.center)]
  \tikzset{BarreStyle/.style =   {opacity=.35,line width=3.25 mm,line cap=round,color=#1}}
\matrix (A) [matrix of math nodes, nodes = {node style ge},column sep=1.0 mm] {
 & \node (x1) {x_1}; & \node (x2) {x_2}; & \node (x3) {x_3}; & \node (x4) {x_4}; & \node (x5) {x_5}; & \node (x6) {x_6}; & \node (x7) {x_7};\\
\node (f1) {f_1}; & \node (a11) {1}; & \node (a12) {0}; & \node (a13) {0}; & \node (a14) {0}; & \node (a15) {0}; & \node (a16) {0}; & \node (a17) {0};\\
\node (f2) {f_2}; & \node (a21) {0}; & \node (a22) {1}; & \node (a23) {0}; & \node (a24) {0}; & \node (a25) {0}; & \node (a26) {0}; & \node (a27) {0};\\
\node (f3) {f_3}; & \node (a31) {0}; & \node (a32) {0}; & \node (a33) {1}; & \node (a34) {0}; & \node (a35) {0}; & \node (a36) {0}; & \node (a37) {0};\\
\node (f4) {f_4}; & \node (a41) {1}; & \node (a42) {1}; & \node (a43) {1}; & \node (a44) {1}; & \node (a45) {1}; & \node (a46) {0}; & \node (a47) {0};\\
\node (f5) {f_5}; & \node (a51) {1}; & \node (a52) {0}; & \node (a53) {1}; & \node (a54) {1}; & \node (a55) {1}; & \node (a56) {0}; & \node (a57) {0};\\
\node (f6) {f_6}; & \node (a61) {0}; & \node (a62) {0}; & \node (a63) {0}; & \node (a64) {1}; & \node (a65) {0}; & \node (a66) {1}; & \node (a67) {0};\\
\node (f7) {f_7}; & \node (a71) {0}; & \node (a72) {0}; & \node (a73) {0}; & \node (a74) {0}; & \node (a75) {1}; & \node (a76) {0}; & \node (a77) {1};\\
};
 \draw [BarreStyle=blue] (a11.north west) to (a11.south east); 
 \draw [BarreStyle=red] (a22.north west) to (a22.south east);
 \draw [BarreStyle=green] (a33.north west) to (a33.south east);
 \draw [BarreStyle=blue] (a44.north west) to (a55.south east); 
 \draw [BarreStyle=red] (a66.north west) to (a66.south east); 
 \draw [BarreStyle=green] (a77.north west) to (a77.south east); 
     \begin{pgfonlayer}{background}
        \node [background0,
                    fit=(a11) (a12) (a13) (a14) (a15) (a16) (a17) (a21) (a22) (a23) (a24) (a25) (a26) (a27) (a31) (a32) (a33) (a34) (a35) (a36) (a37) (a41) (a42) (a43) (a44) (a45) (a46) (a47) (a51) (a52) (a53) (a54) (a55) (a56) (a57) (a61) (a62) (a63) (a64) (a65) (a66) (a67) (a71) (a72) (a73) (a74) (a75) (a76) (a77) ]
                    {};
        \node [background1,
                    fit=(a44) (a45) (a46) (a47) (a54) (a55) (a56) (a57) (a64) (a65) (a66) (a67) 
                    (a74) (a75) (a76) (a77) ]
                    {};
        \node [background2,
                    fit=(a66) (a67) (a76) (a77) ]
                    {};
    \end{pgfonlayer}
\end{tikzpicture}
\vspace{2pt}
\caption{\textsf{COA} execution in 3 recursive steps.}\label{fig:coa-b}
\end{subfigure}
\caption[Running Simon's Causal Ordering Algorithm (\textsf{COA})]{Running Simon's Causal Ordering Algorithm (\textsf{COA}) on a given structure (Fig. \ref{fig:coa-a}). Minimal subsets detected in a recursive step $k$, highlighted in different shades of gray, have their diagonal elements colored (Fig. \ref{fig:coa-b}).}
\label{fig:coa}
\end{figure}
\end{spacing}

\begin{spacing}{1.1}
\begin{algorithm}[t]
\caption{COA$_t$ as a variant of Simon's COA.}
\label{alg:coat}
\begin{algorithmic}[1]
\Procedure{\textsf{COA}$_t$}{${\mathcal S\!:\, \text{structure over}\; \mathcal E \;\text{and}\; \mathcal V}$}
\Require $\mathcal S$ given is complete, i.e., $|\mathcal E|=|\mathcal V|$
\Ensure Returns total causal mapping $\varphi: \mathcal E \to \mathcal V$
\State $\varphi \gets \varnothing$, $\mathcal{S}_c \gets \varnothing$ 
\ForAll{minimal $\mathcal S^\prime \subset \mathcal S$}\vspace{1pt}
\State $\mathcal{S}_k \gets \mathcal{S}_k \cup \mathcal S^\prime$ \Comment{store minimal structures $\mathcal S^\prime$ found in $\mathcal S$}\vspace{1pt}
\State $\mathcal V^\prime \gets \mathcal S^\prime(\mathcal V)$
\ForAll{$f \in \mathcal S^\prime(\mathcal E)$}
\State $x \gets \text{any} \;x_a \in \mathcal V^\prime$\vspace{1pt}
\State $\varphi \gets \varphi \cup \langle f,\, x \rangle$\vspace{1pt}
\State $\mathcal V^\prime \gets \mathcal V^\prime \setminus \{x\}$
\EndFor
\EndFor
\State $\mathcal T \gets \mathcal S \setminus \bigcup_{\mathcal S^\prime \in \mathcal{S}_k} \mathcal S^\prime$\vspace{1pt}
\If{$\mathcal T \neq \varnothing$}
\State \Return $\varphi \;\cup\;$\textsf{COA}$_t(\mathcal T)$
\EndIf
\State \Return $\varphi$
\EndProcedure
\end{algorithmic}
\end{algorithm}
\end{spacing}

\begin{myex}
Consider structure $\mathcal S(\mathcal E, \mathcal V)$ whose matrix is shown in Fig. $\!$\ref{fig:coa-a}. $\!$Note that $\mathcal S$ is complete, since $|\mathcal E|\!=\!|\mathcal V|\!=\!7$, but not minimal. The set of all minimal subsets $\mathcal S^\prime\! \subset \mathcal S$ is $\mathcal{S}_c \!=\! \{\, \{f_1\},\, \{f_2\},\, \{f_3\} \,\}$. $\!$By eliminating the variables identified at recursive step $k$, a smaller structure $\mathcal T \subset \mathcal S$ is derived. Compare the partial causal mapping eventually returned by \textsf{COA}, $\varphi_p \!=\! \{\, \langle\{f_1\}, \{x_1\}\rangle,\, \langle\{f_2\}, \{x_2\}\rangle,\, \langle\{f_3\}, \{x_3\}\rangle,$ $\langle\{f_4, f_5\},\, \{x_4, x_5\}\rangle,\,\langle\{f_6\}, \{x_6\}\rangle,\, \langle\{f_7\}, \{x_7\}\rangle \,\}$, to the total causal mapping returned by \textsf{COA}$_t$, $\,\varphi =\! \{\langle f_1, x_1\rangle,\, \langle f_2, x_2\rangle,\, \langle f_3, x_3 \rangle,\, \langle f_4, x_4 \rangle,\, \langle f_5, x_5 \rangle,\, \langle f_6, x_6\rangle,\, \langle f_7, x_7\rangle\}$. $\!$Since $x_4$ and $x_5$ are strongly coupled (see Fig.\ref{fig:coa-b}), \textsf{COA}$_t$ maps them arbitrarily (e.g., it could be $f_4 \mapsto x_5,\, f_5 \mapsto x_4$ instead). Such total mapping $\varphi$ renders a cycle in the directed causal graph $G_{\varphi}$ induced by $\varphi$ (see Fig.\ref{fig:causal-graph}). $\Box$
\label{ex:struct-matrix}
\end{myex}

\begin{figure}[H]
\begin{center}
\tikzstyle{rect}=[rectangle,
                                    thick,
                                    minimum size=0.3cm,
                                    draw=black]
\tikzstyle{circ}=[circle,
                                    thick,
                                    minimum size=0.3cm,
                                    draw=black]
\tikzstyle{vertex}=[circle,fill=black!10,minimum size=20pt,inner sep=0pt]
\tikzstyle{selected vertex} = [vertex, fill=red!24]
\tikzstyle{edge} = [draw,thick,->,bend left]
\tikzstyle{weight} = [font=\small]
\tikzstyle{selected edge} = [draw,line width=5pt,-,red!50]
\tikzstyle{ignored edge} = [draw,line width=5pt,-,black!20]
\begin{tikzpicture}[scale=0.85]
    \foreach \pos/\name in {{(-0.5,2)/x_1}, {(2,2)/x_2}, {(4.5,2)/x_3}, 
                            		 {(0.7,0)/x_4}, {(3.3,0)/x_5},
		 			 {(1,-2)/x_6}, {(3,-2)/x_7}}
        \node[vertex] (\name) at \pos {$\name$};
    \draw[->] (x_1) to (x_4);
    \draw[->] (x_1) to (x_5);
    \draw[->] (x_2) to (x_4);
    \draw[->] (x_3) to (x_4);
    \draw[->] (x_3) to (x_5);
    \draw[->] (x_4) to[out=10,in=200] (x_5);
    \draw[->] (x_5) to[out=170,in=-20] (x_4);
    \draw[->] (x_4) to (x_6);
    \draw[->] (x_5) to (x_7);
\end{tikzpicture}
\end{center}
\caption[Directed causal graph $G_{\varphi}$ induced by mapping $\varphi$ for structure $\mathcal S$]{Directed causal graph $G_{\varphi}$ induced by mapping $\varphi$ for structure $\mathcal S$. An edge connects a node $x_i$ towards a node $x_j$, with $x_i, x_j \in \mathcal V$, iff $x_i$ appears in the equation $f \in \mathcal E$ such that $\varphi(f)=x_j$.}
\label{fig:causal-graph}
\end{figure}

\section{The Problem of Causal Ordering}\label{sec:coa}

The serious issue with Alg. \ref{alg:coat}, \textsf{COA}($_t$), is that finding all minimal structures in a given structure (cf. line 3) is a hard problem that can only be addressed heuristically as a problem of co-clustering (also called biclustering \cite{madeira2004,dhillon2003}) in Boolean matrices. Simon's approach, however, as we shall see next, is not the only way to cope with the problem of causal ordering.

In fact, in order to study the computational properties of SEM's and the problem of causal ordering, we observe that any structure $\mathcal S(\mathcal E, \mathcal V)$ satisfying Def. \ref{def:structure} can be modeled straightforwardly as a bipartite graph $G=(V_1 \cup V_2, E)$, where the set $\mathcal E$ of equations and the set $\mathcal V$ of variables are the disjoint vertex sets, i.e., $V_1 \mapsto \mathcal E$, $V_2 \mapsto \mathcal V$, and $E \mapsto \mathcal S$ is the edge set connecting equations to the variables appearing in them. Fig. \ref{fig:bipartite} shows the bipartite graph $G$ corresponding to the structure given in Example \ref{ex:struct-matrix} --- for a comprehensive text on graph concepts and its related algorithmic problems, cf. Even \cite{even2011}. 

\vspace{5pt}
\begin{figure}[H]
\begin{center}
\tikzstyle{rect}=[rectangle,
                                    thick,
                                    minimum size=0.3cm,
                                    draw=black]
\tikzstyle{circ}=[circle,
                                    thick,
                                    minimum size=0.3cm,
                                    draw=black]
\tikzstyle{vertex}=[circle,fill=black!10,minimum size=20pt,inner sep=0pt]
\tikzstyle{selected vertex} = [vertex, fill=red!24]
\tikzstyle{edge} = [draw,thick,->,bend left]
\tikzstyle{weight} = [font=\small]
\tikzstyle{selected edge} = [draw,line width=5pt,-,red!50]
\tikzstyle{ignored edge} = [draw,line width=5pt,-,black!20]
\begin{tikzpicture}[scale=0.85]
    \foreach \pos/\name in {{(0,10)/f_1}, {(6,10)/x_1}, {(0,9)/f_2}, {(6,9)/x_2},
					{(0,8)/f_3}, {(6,8)/x_3}, {(0,7)/f_4}, {(6,7)/x_4},  
					{(0,6)/f_5}, {(6,6)/x_5}, {(0,5)/f_6}, {(6,5)/x_6}, 
		 			 {(0,4)/f_7}, {(6,4)/x_7}}
        \node[vertex] (\name) at \pos {$\name$};
    \draw[-] (f_1) to (x_1);
    \draw[-] (f_2) to (x_2);
    \draw[-] (f_3) to (x_3);
    \draw[-] (f_4) to (x_1);
    \draw[-] (f_4) to (x_2);
    \draw[-] (f_4) to (x_3);
    \draw[-] (f_4) to (x_4);
    \draw[-] (f_4) to (x_5);
    \draw[-] (f_5) to (x_1);
    \draw[-] (f_5) to (x_3);
    \draw[-] (f_5) to (x_4);
    \draw[-] (f_5) to (x_5);
    \draw[-] (f_6) to (x_4);
    \draw[-] (f_6) to (x_6);
    \draw[-] (f_7) to (x_5);
    \draw[-] (f_7) to (x_7);
\end{tikzpicture}
\end{center}
\vspace{-8pt}
\caption{Bipartite graph $G$ of structure $\mathcal S$ from Example \ref{ex:struct-matrix}.}\label{fig:bipartite}
\end{figure}

A \emph{biclique} (or complete bipartite graph) is a bipartite graph $G=(A \cup B,\, E)$ such that for every two vertices $a \in A$, $b \in B$, we have $(a,\, b) \in E$ \cite{even2011}. Note that for \emph{balanced} bicliques, i.e., when $|A|=|B|=K$, the degree \emph{deg}$(u)$ of any vertex $u \in A \cup B$ must be \emph{deg}$(u)=|A|=|B|=K$.

Recent approaches to co-clustering problems (e.g., \cite{uno2010}) have come with the notion of \emph{pseudo-biclique} (also called `quasi-biclique'), which is a relaxation of the biclique concept to allow some less rigid notion of connectivity than the `complete connectivity' required in a biclique. 
Now, recall that Simon's \textsf{COA}($_t$) needs to find, at each recursive step, all minimal subsets $\mathcal S^\prime \subseteq \mathcal S$. Theorem \ref{thm:coa-hardness} situates this particular computational task in terms of its complexity, which for $|\mathcal E^\prime| = |\mathcal V^\prime| \geq 2$ is equivalent to find, at each recursive step, the minimal-size `pseudo-bicliques' (i.e., with the least $K \geq 2$) in its corresponding bipartite graph (e.g., see Fig. \ref{fig:bipartite}). Here we take, in Def. \ref{def:pseudo-biclique}, a specific notion of pseudo-biclique. 

\begin{mydef}\label{def:pseudo-biclique}
Let $G=(A \cup B,\, E)$ be a bipartite graph. We say that $G$ is a \textbf{$K$-balanced pseudo-biclique} if $|A|=|B|=K$ with $|E| \geq 2K$ and, for all vertices $u \in A \cup B$, $deg(u)\geq 2$.
\end{mydef}
Now we state (originally) the \emph{balanced pseudo-biclique problem} (BPBP) as a decision problem as follows. 

\begin{framed}
\noindent
(BPBP). Given a bipartite graph $G=(V_1 \,\cup\, V_2,\, E)$ and a positive integer $K \geq 2$, does $G$ contain a $K$-balanced pseudo-biclique?
\end{framed}

\begin{mylemma}\label{lemma:pseudo-biclique}
The \emph{balanced pseudo-biclique problem} (BPBP) is NP-Complete.
\end{mylemma}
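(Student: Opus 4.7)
The proof splits naturally into membership in NP and NP-hardness.

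Membership in NP is immediate: a certificate is a pair of $K$-subsets $A \subseteq V_1$ and $B \subseteq V_2$. A verifier checks in $O(K^2)$ time that $|A|=|B|=K$ and that every vertex of $A \cup B$ has at least two neighbours in $B$ (resp.\ $A$) within the induced subgraph $G[A \cup B]$. The edge-count requirement $|E| \geq 2K$ comes for free, since $2|E|=\sum_{u \in A \cup B}\deg(u)\geq 4K$ whenever every vertex has degree at least two.

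For NP-hardness the natural source is the Balanced Complete Bipartite Subgraph problem (BCBS, problem GT24 in Garey \& Johnson), which asks whether a bipartite $G$ contains $K_{K,K}$. A mere carry-over of the instance does not suffice, because BPBP is a proper relaxation of BCBS --- every biclique is a pseudo-biclique, but the converse fails. The reduction must therefore enrich the BCBS instance with gadgets that outlaw the ``pseudo-but-not-biclique'' candidates. My plan is: for every non-edge $(u,v)$ with $u\in V_1$, $v\in V_2$ of the BCBS instance $G$, attach a small gadget of degree-bounded ``trap'' vertices whose only way of reaching degree $\geq 2$ inside any $K'$-balanced pseudo-biclique forces $u$ and $v$ not to be chosen together. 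The target $K'$ is tuned so that the only way of assembling a $K'$-balanced pseudo-biclique of $G'$ is to select $K$ vertices from each side of $G$ that avoid all non-edges, i.e.\ to exhibit a genuine $K_{K,K}$.

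The main obstacle I anticipate is closing the relaxation gap: min-degree-$2$ is much weaker than bipartite completeness, so the gadgets must simultaneously (i)~rule out every spurious pseudo-biclique of $G'$ that does not project onto a complete biclique of $G$, (ii)~not admit new spurious solutions living entirely inside the gadget apparatus, and (iii)~use only polynomially many auxiliary vertices and edges. Getting this bookkeeping right --- especially the degree contributions of the gadgets versus the core bipartition --- is the delicate step. As a fallback, if the BCBS gadget proves unwieldy, I would reduce instead from CLIQUE via the standard two-copy bipartite construction ($v \mapsto v_1 \in V_1, v_2 \in V_2$; $\{u,v\}\in E(H) \mapsto (u_1,v_2),(v_1,u_2)$; plus self-pairings $(v_1,v_2)$) augmented with ``mirror-enforcing'' gadgets that lock $A$ and $B$ onto the two copies of the same subset of $V(H)$, so that pseudo-bicliques of size $k$ correspond exactly to $k$-cliques of $H$.
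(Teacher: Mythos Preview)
Your route diverges sharply from the paper's. The paper dispatches hardness in two sentences ``by restriction'' in the Garey--Johnson sense: it asserts that tightening Def.~\ref{def:pseudo-biclique} to require $|E|=K^{2}$ (equivalently $\deg(u)=K$ for every $u$) collapses the pseudo-biclique notion to the biclique notion, so that BBP (problem GT24) is a ``special case'' of BPBP, and concludes. You have correctly put your finger on the weak point of that move. Proof by restriction must restrict the \emph{instance set} so that the yes/no answer of the general problem coincides, on the restricted set, with that of the known NP-complete one; it does not license tightening the \emph{solution predicate}. Since every $K$-balanced biclique is a $K$-balanced pseudo-biclique but not conversely, a no-instance of BBP may be a yes-instance of BPBP, and the identity map on instances is not answer-preserving. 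The paper's ``restriction'' is applied to the wrong object, and your decision to build a genuine gadget reduction is better grounded than the shortcut the paper takes.

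That said, what you have written is a plan, not a proof. The non-edge gadget is specified only by its intended effect; you yourself flag the delicate bookkeeping --- ruling out pseudo-bicliques that live partly or wholly inside the gadget apparatus, tracking degree contributions across the gadget/core boundary, keeping the blow-up polynomial --- as the hard step still to be done. The CLIQUE-via-bipartite-double-cover fallback is in the same state. To finish, you must exhibit a concrete gadget (or concrete mirror-enforcers) and verify both directions of the correspondence; until that is on the page, the hardness half of the lemma remains unproved on your side as well.
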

\begin{proof}
We show (by restriction \cite{garey1979}) that the BPBP is a generalization of the balanced biclique problem (BBP), referred `balanced complete bipartite subgraph' problem \cite[GT24, p. 196]{garey1979}, which is shown to be NP-Complete by means of a transformation from `clique' \cite[p. 446]{johnson1987}. The restriction from BPBP to BBP (special case) is made by requiring (cf. Def. \ref{def:pseudo-biclique}) either (a) $|E|=K^2$ or (b) $deg(u)=K$, for $K\geq 2$,\footnote{Note that clearly, for any positive integer $K\geq 2$, we have for $K^2 \geq 2K$.} which are equivalent ways of enforcing the inquired $K$-balanced pseudo-biclique to be a $K$-balanced biclique. 
$\Box$
\end{proof}
\noindent
We introduce another hypothesis structure (see Fig. \ref{fig:hardness}) to illustrate the correspondence between the pseudo-biclique property and \textsf{COA}'s algorithmic approach as elaborated in the proof of Theorem \ref{thm:coa-hardness}.

\vspace{-7pt}
\begin{spacing}{1.2}
\begin{figure}[t]
\tikzset{node style ge/.style={circle,inner sep=0pt,minimum size=17pt}}
\tikzstyle{background0}=[rectangle,
                                                fill=gray!10,
                                                inner sep=0.025cm,
                                                rounded corners=1mm]
\tikzstyle{background1}=[rectangle,
                                                fill=gray!40,
                                                inner sep=0.025cm,
                                                rounded corners=1mm]
\begin{subfigure}{0.475\textwidth}
\vspace{-5pt}
\hspace{40pt}
\begin{tikzpicture}[baseline=(A.center)]
  \tikzset{BarreStyle/.style =   {opacity=.35,line width=3.25 mm,line cap=round,color=#1}}
\matrix (A) [matrix of math nodes, nodes = {node style ge},column sep=1.0 mm] {
 & \node (x1) {x_1}; & \node (x2) {x_2}; & \node (x3) {x_3}; & \node (x4) {x_4};\\
\node (f1) {f_1}; & \node (a11) {1}; & \node (a12) {0}; & \node (a13) {1}; & \node (a14) {0};\\
\node (f2) {f_2}; & \node (a21) {1}; & \node (a22) {1}; & \node (a23) {0}; & \node (a24) {0};\\
\node (f3) {f_3}; & \node (a31) {0}; & \node (a32) {1}; & \node (a33) {1}; & \node (a34) {0};\\
\node (f4) {f_4}; & \node (a41) {1}; & \node (a42) {1}; & \node (a43) {1}; & \node (a44) {1};\\
};
     \begin{pgfonlayer}{background}
        \node [background0,
                    fit=(a11) (a12) (a13) (a21) (a22) (a23) (a31) (a32) (a33) ]
                    {};
        \node [background1,
                    fit=(a44) ]
                    {};
    \end{pgfonlayer}
\end{tikzpicture}
\vspace{2pt}
\caption{\textsf{COA} execution in 2 recursive steps.}
\end{subfigure}
\begin{subfigure}{0.55\textwidth}
\hspace{-2pt}
\tikzstyle{rect}=[rectangle,
                                    thick,
                                    minimum size=0.3cm,
                                    draw=black]
\tikzstyle{circ}=[circle,
                                    thick,
                                    minimum size=0.3cm,
                                    draw=black]
\tikzstyle{vertex}=[circle,fill=black!10,minimum size=20pt,inner sep=0pt]
\tikzstyle{selected vertex} = [vertex, fill=red!24]
\tikzstyle{edge} = [draw,thick,->,bend left]
\tikzstyle{weight} = [font=\small]
\tikzstyle{selected edge} = [draw,line width=5pt,-,red!50]
\tikzstyle{ignored edge} = [draw,line width=5pt,-,black!20]
\begin{tikzpicture}[scale=0.85]
    \foreach \pos/\name in {{(0,10)/f_1}, {(6,10)/x_1}, {(0,9)/f_2}, {(6,9)/x_2},
					{(0,8)/f_3}, {(6,8)/x_3}, {(0,7)/f_4}, {(6,7)/x_4}} 
        \node[vertex] (\name) at \pos {$\name$};
    \draw[-] (f_1) to (x_1);
    \draw[-] (f_1) to (x_3);
    \draw[-] (f_2) to (x_1);
    \draw[-] (f_2) to (x_2);
    \draw[-] (f_3) to (x_2);
    \draw[-] (f_3) to (x_3);
    \draw[-] (f_4) to (x_1);
    \draw[-] (f_4) to (x_2);
    \draw[-] (f_4) to (x_3);
    \draw[-] (f_4) to (x_4);
\end{tikzpicture}
\caption{Bipartite graph.}
\end{subfigure}
\caption{Another hypothesis structure example.}
\label{fig:hardness}
\end{figure}
\end{spacing}

\begin{mythm}\label{thm:coa-hardness}
Let $\mathcal S(\mathcal E, \mathcal V)$ be a complete structure. Then the extraction of its causal ordering by Simon's $\textsf{COA}(\mathcal S)$ is NP-Hard.
\end{mythm}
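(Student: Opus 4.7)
The plan is to prove NP-hardness by a reduction from the balanced pseudo-biclique problem (BPBP), shown NP-Complete in Lemma \ref{lemma:pseudo-biclique}. The reduction will exploit the fact that line 3 of Algorithm \ref{alg:coat} forces \textsf{COA} (and its variant \textsf{COA}$_t$) to enumerate the minimal substructures $\mathcal{S}' \subseteq \mathcal{S}$ at each recursive step, and this enumeration is, as the excerpt already signals, equivalent to searching for minimum-size balanced pseudo-bicliques in the bipartite graph $G=(V_1 \cup V_2, E)$ associated with $\mathcal{S}$ (with $V_1 \mapsto \mathcal{E}$, $V_2 \mapsto \mathcal{V}$).

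First I would formalize this equivalence. Given a structure $\mathcal{S}(\mathcal{E},\mathcal{V})$ and a subset $\mathcal{S}'\subseteq\mathcal{S}$ with $|\mathcal{E}'|=|\mathcal{V}'|=K\geq 2$, I claim that $\mathcal{S}'$ is minimal (hence complete) if and only if the subgraph of $G$ induced by the $K$ equation-vertices and the $K$ variable-vertices incident to them is a $K$-balanced pseudo-biclique in the sense of Def. \ref{def:pseudo-biclique}. The forward direction follows from Def. \ref{def:structure}(a): no equation-vertex of $\mathcal{S}'$ can have degree $1$ since that single variable, together with the single equation, would itself be a complete substructure $\mathcal{S}'' \subsetneq \mathcal{S}'$, contradicting minimality of $\mathcal{S}'$; by a symmetric argument on variable-vertices, and by summing degree lower bounds, $|E(\mathcal{S}')|\geq 2K$. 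The backward direction is analogous: any $K$-balanced pseudo-biclique of minimum size must be a complete structure with no smaller complete substructure.

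With the equivalence in hand, I would derive a Karp-style reduction: given an instance $(G,K)$ of BPBP, construct a structure $\mathcal{S}_G$ whose equation set and variable set correspond to $V_1$ and $V_2$ and whose incidences mirror $E$; pad $\mathcal{S}_G$ if necessary with a disjoint trivially complete structure so that the input satisfies $|\mathcal{E}|=|\mathcal{V}|$ (the completeness hypothesis of \textsf{COA}$_t$). Executing \textsf{COA}$_t$ on $\mathcal{S}_G$ must, at some recursive step, reveal a minimal substructure of size $K$ iff $G$ admits a $K$-balanced pseudo-biclique. Since the construction of $\mathcal{S}_G$ and the inspection of \textsf{COA}$_t$'s output are both polynomial in $|G|+K$, a polynomial-time algorithm for \textsf{COA} would collapse BPBP to P, contradicting Lemma \ref{lemma:pseudo-biclique} unless $\mathrm{P}=\mathrm{NP}$.

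The main obstacle I anticipate is technical rather than conceptual: handling the interaction between the recursive peeling of minimal substructures in \textsf{COA}$_t$ and the \emph{global} existence of a $K$-balanced pseudo-biclique in $G$. A $K$-pseudo-biclique in $G$ could be `hidden' inside a larger coupled block that \textsf{COA}$_t$ peels off only in a later recursive step, or it could be strictly contained in an even smaller minimal structure that \textsf{COA}$_t$ removes first. To rule out such pathologies, the padding construction must be carefully engineered so that (i) no substructure of size $<K$ exists in $\mathcal{S}_G$ outside of controlled padding components, and (ii) the padding components are detected and removed in a known prefix of the recursion, guaranteeing that the $K$-pseudo-biclique of $G$, if present, surfaces as a minimal subset at an identifiable step of \textsf{COA}$_t$.
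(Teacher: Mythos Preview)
Your overall plan coincides with the paper's: both locate the hardness in line~3 of \textsf{COA}$_t$ and tie the search for non-trivial minimal substructures to the BPBP of Lemma~\ref{lemma:pseudo-biclique}. The paper's proof, however, stops at the forward implication (a minimal complete substructure of size $K\geq 2$ yields a $K$-balanced pseudo-biclique, via the degree argument you also sketch) and then simply asserts that the decision version of line~3 ``corresponds exactly to BPBP''; it does not construct a Karp reduction with padding, and it frames the result as intractability of \textsf{COA}'s internal subroutine rather than of the causal-ordering problem itself (which \S\ref{sec:tcm} shows is in~P).

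Where your more ambitious formalization breaks is the backward direction of the claimed equivalence. A $K$-balanced pseudo-biclique need not satisfy Def.~\ref{def:structure}(a), hence need not be a structure at all: with $K=5$, let $f_1,f_2,f_3$ each be incident only to $\{x_1,x_2\}$ and $f_4,f_5$ each only to $\{x_3,x_4,x_5\}$; every vertex has degree $\geq 2$ and $|E|=12\geq 2K$, yet the block $\{f_1,f_2,f_3\}$ spans only two variables. Your qualifier ``of minimum size'' does not rescue the iff, since BPBP hands you a fixed $K$, and you give no argument that a minimum-$K$ pseudo-biclique must satisfy Def.~\ref{def:structure}(a). (A smaller point: the variable-degree half of your forward argument is not literally ``symmetric'' to the equation-degree half; it needs Def.~\ref{def:structure}(a) to conclude that deleting the unique equation containing a degree-$1$ variable leaves a complete substructure of size $K-1$.) So the padding you defer would have to do double duty---force Def.~\ref{def:structure}(a) on an arbitrary bipartite input \emph{and} control the recursion---and that is precisely the technical content that neither your proposal nor the paper's proof supplies.
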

\begin{proof}
We show that, at each recursive step $k$ of \textsf{COA}, to find all non-trivial minimal subsets (i.e., $|\mathcal E^\prime| \geq 2$) translates into an optimization problem associated with the decision problem BPBP, which we know by Lemma \ref{lemma:pseudo-biclique} to be NP-Complete. See Appendix \S\ref{a:coa-hardness}. 
$\Box$
\end{proof}

Nonetheless, the problem of causal ordering can be solved efficiently by means of a different, less notorious approach due to Nayak \cite{nayak1994}, which we introduce and build upon next.

\section{Total Causal Mappings}\label{sec:tcm}

The problem of causal ordering can be solved in polynomial time by (i) finding any total causal mapping $\varphi\!:\, \mathcal E \to \mathcal V$ over structure $\mathcal S$ given (cf. Def. \ref{def:tcm}); and then (ii) by computing the transitive closure $C^+_{\varphi}$ of the set $C_{\varphi}$ (cf. Eq. \ref{eq:direct-causal-dependencies}) of \emph{direct causal dependencies} induced by $\varphi$. 
\begin{eqnarray}\label{eq:direct-causal-dependencies}
\!\!\!C_{\varphi} \!= \{\, (x_a, x_b) \,| \;\text{there exists}\; f \!\in \mathcal E \;\text{such that}\; \varphi(f) = x_b \;\text{and}\; x_a \in Vars(f)  \,\}
\end{eqnarray}
\vspace{-30pt}
\begin{mydef}\label{def:causal-dependency}
Let $\mathcal S(\mathcal E, \mathcal V)$ be a structure with variables $x_a, x_b \in \mathcal V$, and $\varphi$ a total causal mapping over $\mathcal S$ inducing set of direct causal dependencies $C_{\varphi}$ and its transitive closure $C^+_{\varphi}$. We say that $(x_b, x_a)$ is a \textbf{direct causal dependency} in $\mathcal S$ if $(x_b, x_a) \in C_{\varphi}$, and that $(x_b, x_a)$ is a \textbf{causal dependency} in $\mathcal S$ if $(x_b, x_a) \in C^+_{\varphi}$.
\end{mydef}

In other words, $(x_a, x_b)$ is in $C_{\varphi}$ iff $x_b$ direct and causally depends on $x_a$, given the causal asymmetries induced by $\varphi$. 
Those notions open up an approach to causal reasoning that fits very well to our use case, which is aimed at encoding hypothesis structures into fd sets and then performing (symbolic) causal reasoning in terms of acyclic pseudo-transitive reasoning over fd's (cf. Chapter \ref{ch:reasoning}).\footnote{Note that it differs from AI research (e.g., \cite{druzdzel2008}) geared for reasoning over GM's.}
For it to be effective, nonetheless, we shall need to ensure some properties of total causal mappings first. 

For a given structure $\mathcal S$, there may be multiple total causal mappings over $\mathcal S$ (recall Example \ref{ex:struct-matrix}). But the causal ordering of $\mathcal S$ must be unique (see Fig. \ref{fig:causal-graph}). Therefore, a question that arises is whether the transitive closure $C^+_{\varphi}$ is the same for any total causal mapping $\varphi$ over $\mathcal S$. Proposition \ref{prop:causal-ordering}, originally from Nayak \cite{nayak1994}, ensures that is the case.

\begin{myprop}\label{prop:causal-ordering}
Let $\mathcal S(\mathcal E, \mathcal V)$ be a structure, and $\varphi_1\!:\, \mathcal E \to \mathcal V$ and $\varphi_2\!:\, \mathcal E \to \mathcal V$ be any two total causal mappings over $\mathcal S$. Then $C^+_{\varphi_1}$ = $C^+_{\varphi_2}$. 
\end{myprop}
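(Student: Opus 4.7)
The plan is to leverage Simon's canonical partition of $\mathcal{S}$ into minimal complete substructures $\mathcal{S}_1, \ldots, \mathcal{S}_m$ produced by $\textsf{COA}$, show that any total causal mapping over $\mathcal{S}$ must refine this partition, and then argue that the transitive closure is invariant under the choice of refinement, both within and between the partition blocks. First, I would verify that any total causal mapping $\varphi$ satisfies $\varphi(\mathcal{E}_i) = \mathcal{V}_i$ for every $i$. This follows by induction on the layers of $\textsf{COA}$'s recursion: at the outermost layer, the equations of a minimal substructure $\mathcal{S}_1$ mention only variables from $\mathcal{V}_1 = Vars(\mathcal{E}_1)$, so any bijection $\varphi$ with $\varphi(f) \in Vars(f)$ must map $\mathcal{E}_1$ bijectively onto $\mathcal{V}_1$ by pigeonhole; once the outermost equations and variables are ``consumed'', the inductive hypothesis applies to the reduced structure.

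The core step, following the spirit of Nayak's approach, is to show that for each minimal substructure $\mathcal{S}_i$ of size $k \geq 2$, the directed graph $G_\varphi$ restricted to $\mathcal{V}_i$ is \emph{strongly connected}, regardless of the choice of $\varphi$. Arguing by contradiction, suppose $G_\varphi|_{\mathcal{V}_i}$ is not strongly connected. Then its condensation is a nontrivial DAG and hence admits a source, i.e., a proper initial strongly connected component $T \subsetneq \mathcal{V}_i$ receiving no edges from $\mathcal{V}_i \setminus T$. For each $x \in T$, the unique equation $f = \varphi^{-1}(x)$ then satisfies $Vars(f) \subseteq T$, because every other variable of $f$ is a direct cause of $x$ in $G_\varphi$ and therefore, by the initial-SCC property, must already lie in $T$. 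Setting $\mathcal{E}_T = \varphi^{-1}(T)$, the pair $(\mathcal{E}_T, T)$ forms a complete substructure of $\mathcal{S}_i$ of size $|T| < k$, contradicting the minimality of $\mathcal{S}_i$.

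Consequently, within each $\mathcal{S}_i$ of size $\geq 2$, the restriction of $C^+_\varphi$ to $\mathcal{V}_i \times \mathcal{V}_i$ is the complete relation on $\mathcal{V}_i$ for any $\varphi$, and for singleton minimal substructures it is trivially empty. For cross-cluster direct dependencies, observe that $(x_a, x_b) \in C_\varphi$ with $x_a \in \mathcal{V}_i$, $x_b \in \mathcal{V}_j$, $i \neq j$, occurs precisely when $x_a \in Vars(\varphi^{-1}(x_b))$, i.e., when some $f \in \mathcal{E}_j$ mentions $x_a$ --- a purely structural condition on $\mathcal{S}$ independent of $\varphi$. Combining these two observations yields $C^+_{\varphi_1} = C^+_{\varphi_2}$. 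The main obstacle will be the strong-connectivity argument above: I must carefully verify that the initial SCC $T$ produced by the condensation yields a well-defined proper complete substructure of $\mathcal{S}_i$ (in particular that $|\mathcal{E}_T| = |T|$ and $Vars(\mathcal{E}_T) \subseteq T$) and that the direction of edges in $G_\varphi$ (from causes to effects) is used consistently. The layer bookkeeping of the first step and the cross-cluster argument are routine once this core step is in place.
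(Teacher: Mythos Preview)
Your approach is sound but takes a genuinely different route from the paper. The paper (following Nayak) never invokes \textsf{COA}: it shows $C_{\varphi_1} \subseteq C_{\varphi_2}^+$ directly by taking $(x', x) \in C_{\varphi_1}$ and, whenever $\varphi_1$ and $\varphi_2$ disagree at $x$, building the sequence $s_0 = x$, $s_i = \varphi_2(\varphi_1^{-1}(s_{i-1}))$; finiteness forces a return to $x$, and the resulting $C_{\varphi_2}$-cycle carries $x'$ to $x$ in $C_{\varphi_2}^+$. This is short and self-contained, and is consonant with the chapter's theme of bypassing \textsf{COA} (whose intractability the paper establishes in Theorem~\ref{thm:coa-hardness}). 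Your argument instead uses \textsf{COA}'s output as a purely mathematical object and in return delivers a more structural description of $C_\varphi^+$: intra-cluster strong connectivity plus a $\varphi$-independent cluster DAG. Your proof of strong connectivity via a source SCC contradicting minimality is clean and is the real content of your route.

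One point to tighten: your cross-cluster biconditional is misstated. You claim that $(x_a, x_b) \in C_\varphi$ (with $x_a \in \mathcal{V}_i$, $x_b \in \mathcal{V}_j$, $i \neq j$) holds ``precisely when some $f \in \mathcal{E}_j$ mentions $x_a$'', but the direct edge requires the \emph{specific} equation $\varphi^{-1}(x_b)$ to mention $x_a$, and which equation that is does depend on $\varphi$. What is $\varphi$-independent is only the cluster-level edge: some arc from $\mathcal{V}_i$ into $\mathcal{V}_j$ exists in $G_\varphi$ iff some $f \in \mathcal{E}_j$ mentions some $v \in \mathcal{V}_i$. You then need an extra (easy) step --- reachability in $G_\varphi$ coincides with reachability in this cluster DAG, using your strong-connectivity result in every cluster along a path --- before ``combining these two observations'' actually yields $C_{\varphi_1}^+ = C_{\varphi_2}^+$.
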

\begin{proof}
The proof is based on an argument from Nayak \cite{nayak1994}, which we present in arguably much clearer way (see Appendix, \S\ref{a:causal-ordering}). Intuitively, it shows that if $\varphi_1$ and $\varphi_2$ differ on the variable an equation $f$ is mapped to, then such variables, viz., $\varphi_1(f)$ and $\varphi_2(f)$, must be causally dependent on each other (strongly coupled). 
$\Box$
\end{proof}

Another issue is concerned with the precise conditions under which total causal mappings exist (i.e., whether or not all variables in the equations can be causally determined). In fact, by Proposition \ref{prop:mapping-existence}, based on Nayak \cite{nayak1994} apud. Hall \cite[p. 135-7]{even2011}, we know that the existence condition holds iff the given structure is complete. 

Before proceeding to it, let us refer to Even \cite{even2011} to briefly introduce the additional graph-theoretic concepts which are necessary here. A \emph{matching} in a graph is a subset of edges such that no two edges in the matching share a common node. A matching is said \emph{maximum} if no edge can be added to the matching (without hindering the matching property). Finally, a matching in a graph is said `perfect' if every vertex is an end-point of some edge in the matching --- in a bipartite graph, a perfect matching is said a \emph{complete} matching.

\begin{myprop}\label{prop:mapping-existence}
Let $\mathcal S(\mathcal E, \mathcal V)$ be a structure. Then a total causal mapping $\varphi\!:\, \mathcal E \to \mathcal V$ over $\mathcal S$ exists iff $\mathcal S$ is complete. 
\end{myprop}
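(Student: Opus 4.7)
My plan is to translate the question of existence of a total causal mapping into the language of bipartite matchings, where the matching-theoretic tools are tailor-made for exactly this kind of claim, and then invoke Hall's marriage theorem (as hinted in the excerpt). I would represent $\mathcal S(\mathcal E,\mathcal V)$ as a bipartite graph $G=(\mathcal E\,\cup\,\mathcal V,E)$ in which an edge $(f,x)\in E$ exists iff $x\in Vars(f)$; this is the same bipartite encoding already used in \S\ref{sec:coa}. Under this encoding, a total causal mapping is precisely a complete (perfect) matching of $G$, since the bijection $\varphi$ saturates both sides, and the condition $\varphi(f)\in Vars(f)$ is exactly the requirement that $(f,\varphi(f))$ be an edge of $G$.

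The forward direction is then immediate: if a total causal mapping $\varphi\!:\mathcal E\to\mathcal V$ exists, it is by Def.~\ref{def:tcm} a bijection, whence $|\mathcal E|=|\mathcal V|$, and so $\mathcal S$ is complete by Def.~\ref{def:complete}. I would dispatch this in a single sentence.

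For the converse, I would assume $\mathcal S$ is complete, so $|\mathcal E|=|\mathcal V|$. The key observation is that clause~(a) of Def.~\ref{def:structure} is exactly the statement of \emph{Hall's condition} on the $\mathcal E$-side of $G$: for every subset $\mathcal E'\subseteq\mathcal E$ with $|\mathcal E'|=k$, the neighbourhood $N(\mathcal E')$ in $G$ (which is precisely $\bigcup_{f\in\mathcal E'}Vars(f)$) contains at least $k$ variables, i.e.\ $|N(\mathcal E')|\geq|\mathcal E'|$. Hall's marriage theorem therefore supplies a matching $M\subseteq E$ that saturates $\mathcal E$. Because $|\mathcal E|=|\mathcal V|$, the matching $M$ must saturate $\mathcal V$ as well and is thus complete, which I would read off as a bijection $\varphi\!:\mathcal E\to\mathcal V$ satisfying $\varphi(f)\in Vars(f)$ for all $f$; this is the desired total causal mapping.

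I do not expect any real obstacle: the entire content of the proposition is the recognition that Def.~\ref{def:structure}(a) was, tacitly, Hall's condition all along, together with the numerical equality forced by completeness. The only minor point worth stating carefully is that Hall's theorem yields a matching saturating one side, and that one then uses $|\mathcal E|=|\mathcal V|$ to upgrade it to a perfect matching; I would make this step explicit to avoid any ambiguity between ``matching'' and ``bijection.''
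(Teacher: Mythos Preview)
Your proposal is correct and follows essentially the same approach as the paper: both encode $\mathcal S$ as a bipartite graph, identify a total causal mapping with a complete matching, recognize Def.~\ref{def:structure}(a) as Hall's condition and Def.~\ref{def:complete} as the cardinality condition $|V_1|=|V_2|$, and then invoke Hall's theorem. Your write-up is in fact slightly more explicit than the paper's in separating the two directions and in spelling out why an $\mathcal E$-saturating matching must be perfect when $|\mathcal E|=|\mathcal V|$.
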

\begin{proof}
We observe that a total causal mapping $\varphi\!:\, \mathcal E \to \mathcal V$ over $\mathcal S$ corresponds exactly to a complete matching $M$ in a bipartite graph $B = (V_1 \cup V_2, E)$, where $V_1 \mapsto \mathcal E$, $V_2 \mapsto \mathcal V$, and $E \mapsto \mathcal S$.
In fact, by Even apud. Hall's theorem (cf. \cite[135-7]{even2011}), we know that $B$ has a complete matching iff (a) for every subset of vertices $F \subseteq V_1$, we have $|F| \leq |E(F)|$, where $E(F)$ is the set of all vertices connected to the vertices in $F$ by edges in $E$; and (b) $|V_1|=|V_2|$.  
By Def. \ref{def:structure} (no subset of equations has fewer variables than equations), and Def. \ref{def:complete} (number of equations is the same as number of variables), it is easy to see that conditions (a) and (b) above hold iff $\mathcal S$ is a complete structure. 
$\Box$
\end{proof}

The problem of finding a maximum matching is a well-studied algorithmic problem. In this thesis we adopt the Hopcroft-Karp algorithm \cite{karp1973}, which is known to be polynomial-time, bounded by $O(\sqrt{|V_1|+|V_2|}\,|E|)$.\footnote{The Hopcroft-Karp algorithm solves maximum matching in a bipartite graph efficiently as a problem of finding maximum flow in a network (cf. \cite[p. 135-7]{even2011}, or \cite[p. 664-9]{cormen2009}).} That is, we handle the problem of total causal mapping by (see Alg. \ref{alg:tcm}) translating it to the problem of maximum matching in a bipartite graph (in linear time) and then applying the Hopcroft-Karp algorithm to get the matching and finally translate it back to the total causal mapping, as suggested by the proof of Proposition \ref{prop:mapping-existence}.

\begin{spacing}{1.1}
\begin{algorithm}[H]
\caption{Find a total causal mapping for a given structure.}
\label{alg:tcm}
\begin{algorithmic}[1]
\Procedure{TCM}{$\mathcal S\!:\, \text{structure over}\; \mathcal E \;\text{and}\; \mathcal V$}
\Require $\mathcal S$ given is a complete structure, i.e., $|\mathcal E|=|\mathcal V|$
\Ensure Returns a total causal mapping $\varphi$
\State $B(V_1 \cup V_2, E) \gets \varnothing$
\State $\varphi \gets \varnothing$
\ForAll{$\langle f, X\rangle \in \mathcal S$}\Comment{translates the structure $\mathcal S$ to a bipartite graph $B$}\vspace{1pt}
\State $V_1 \gets V_1 \cup \{f\}$
\ForAll{$x \in X$}\vspace{1pt}
\State $V_2 \gets V_2 \cup \{x\}$
\State $E \gets E \cup \{(f, x)\}$
\EndFor
\EndFor
\State $M \gets \textsf{Hopcroft-Karp}(B)$\Comment{solves the maximum matching problem}
\ForAll{$(f, x) \in M$}\vspace{1pt}\Comment{translates the matching to a total causal mapping}
\State $\varphi \gets \varphi \cup \{\langle f, x\rangle\}$
\EndFor
\State \Return $\varphi$\vspace{1pt}
\EndProcedure
\end{algorithmic}
\end{algorithm}
\end{spacing}

\noindent
Fig. \ref{fig:matching} shows the complete matching found by the \textsf{Hopcroft-Karp} algorithm for the structure given in Example \ref{ex:struct-matrix}.

\begin{figure}[H]
\begin{center}
\tikzstyle{rect}=[rectangle,
                                    thick,
                                    minimum size=0.3cm,
                                    draw=black]
\tikzstyle{circ}=[circle,
                                    thick,
                                    minimum size=0.3cm,
                                    draw=black]
\tikzstyle{vertex}=[circle,fill=black!10,minimum size=20pt,inner sep=0pt]
\tikzstyle{selected vertex} = [vertex, fill=red!24]
\tikzstyle{edge} = [draw,thick,->,bend left]
\tikzstyle{weight} = [font=\small]
\tikzstyle{selected edge} = [draw,line width=5pt,-,red!50]
\tikzstyle{ignored edge} = [draw,line width=5pt,-,black!20]
\begin{tikzpicture}[scale=0.85]
    \foreach \pos/\name in {{(0,12)/f_1}, {(6,12)/x_1}, {(0,11)/f_2}, {(6,11)/x_2},
					{(0,10)/f_3}, {(6,10)/x_3}, {(0,9)/f_4}, {(6,9)/x_4},  
					{(0,8)/f_5}, {(6,8)/x_5}, {(0,7)/f_6}, {(6,7)/x_6}, 
		 			 {(0,6)/f_7}, {(6,6)/x_7}}
        \node[vertex] (\name) at \pos {$\name$};
    \draw[-] (f_1) to (x_1);
    \draw[-] (f_2) to (x_2);
    \draw[-] (f_3) to (x_3);
    \draw[-] (f_4) to (x_4);
    \draw[-] (f_5) to (x_5);
    \draw[-] (f_6) to (x_6);
    \draw[-] (f_7) to (x_7);
\end{tikzpicture}
\end{center}
\vspace{-5pt}
\caption{Complete matching $M$ for structure $S$ from Example \ref{ex:struct-matrix}.}\label{fig:matching}
\end{figure}

\noindent
Corollary \ref{cor:tcm} summarizes the results we have so far.

\begin{mycor}\label{cor:tcm}
Let $\mathcal S(\mathcal E, \mathcal V)$ be a complete structure. Then a total causal mapping $\varphi\!:\, \mathcal E \to \mathcal V$ over $\mathcal S$ can be found by (Alg. \ref{alg:tcm}) \textsf{TCM} in time that is bounded by $O(\sqrt{|\mathcal E|}\,|\mathcal S|)$. 
\end{mycor}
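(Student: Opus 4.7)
The plan is to argue that the corollary follows directly by composing Proposition \ref{prop:mapping-existence} (existence via complete matching in the associated bipartite graph) with the complexity of the Hopcroft-Karp algorithm, and then verifying that the translation overhead in \textsf{TCM} does not dominate the running time. Concretely, I would structure the argument in three short steps: (i) correctness of the bipartite encoding, (ii) applicability of Hopcroft-Karp with correctness guarantee, (iii) complexity bookkeeping.

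For (i), I would observe that lines 3--8 of Alg. \ref{alg:tcm} construct the bipartite graph $B = (V_1 \cup V_2,\, E)$ with $V_1 \mapsto \mathcal E$, $V_2 \mapsto \mathcal V$, and $E \mapsto \mathcal S$, exactly as in the construction used in the proof of Proposition \ref{prop:mapping-existence}. Since $\mathcal S$ is complete by hypothesis, that proposition guarantees that $B$ admits a complete matching $M$, and moreover that any complete matching $M$ in $B$ corresponds bijectively to a total causal mapping $\varphi : \mathcal E \to \mathcal V$ over $\mathcal S$ via $\varphi(f) = x \iff (f, x) \in M$. The final loop (lines 10--12) performs this translation, so \textsf{TCM} returns a valid total causal mapping whenever one exists; by Proposition \ref{prop:mapping-existence}, it does.

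For (ii), the Hopcroft-Karp algorithm invoked in line 9 is guaranteed to return a maximum matching in a bipartite graph. Since a complete matching exists in $B$, a maximum matching is necessarily complete; hence the output $M$ is a complete matching, validating the correctness of step (i).

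For (iii), the Hopcroft-Karp algorithm runs in $O(\sqrt{|V_1| + |V_2|}\,|E|)$ time. In our reduction, $|V_1| = |\mathcal E|$, $|V_2| = |\mathcal V|$, and $|E| = |\mathcal S|$. Because $\mathcal S$ is complete, $|\mathcal V| = |\mathcal E|$, so $\sqrt{|V_1| + |V_2|} = \sqrt{2|\mathcal E|} = O(\sqrt{|\mathcal E|})$, yielding $O(\sqrt{|\mathcal E|}\,|\mathcal S|)$ for the matching step. The graph-construction loop and the back-translation loop each run in time $O(|\mathcal S|)$ and $O(|\mathcal E|) \leq O(|\mathcal S|)$, respectively, which are dominated by the matching step (since $\sqrt{|\mathcal E|} \geq 1$). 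Summing up, \textsf{TCM} runs in $O(\sqrt{|\mathcal E|}\,|\mathcal S|)$ time, as claimed. I do not anticipate any real obstacle here: the result is essentially a bookkeeping corollary of Proposition \ref{prop:mapping-existence} and the well-known Hopcroft-Karp bound.
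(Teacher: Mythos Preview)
Your proposal is correct and follows essentially the same approach as the paper: translate to a bipartite graph, invoke Hopcroft--Karp, and do the complexity bookkeeping using $|\mathcal E| = |\mathcal V|$ to simplify $\sqrt{|V_1|+|V_2|}$ to $O(\sqrt{|\mathcal E|})$, noting that the translation steps are $O(|\mathcal S|)$ and hence dominated. The paper's version is terser and focuses almost exclusively on the complexity accounting, whereas you also spell out the correctness via Proposition~\ref{prop:mapping-existence}, but the substance is the same.
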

\begin{proof}
Let $B = (V_1 \cup V_2, E)$ be the bipartite graph corresponding to complete structure $\mathcal S$ given to \textsf{TCM}, where $V_1 \mapsto \mathcal E$, $V_2 \mapsto \mathcal V$, and $E \mapsto \mathcal S$. 
The translation of $\mathcal S$ into $B$ is done by a scan over it. This scan is of length $|\mathcal S| = |E|$. Note that number $|E|$ of edges rendered is precisely the length $|\mathcal S|$ of structure, where the denser the structure, the greater $|\mathcal S|$ is. The re-translation of the matching computed by internal procedure \textsf{Hopcroft-Karp}, in turn, is done at expense of $|\mathcal E| = |\mathcal V| \leq |\mathcal S|$. Thus, it is easy to see that \textsf{TCM} is dominated by the maximum matching algorithm \textsf{Hopcroft-Karp}, which is known to be $O(\sqrt{|V_1|+|V_2|}\,|E|)$, i.e., $O(\sqrt{|\mathcal E|+|\mathcal V|}\,|\mathcal S|)$. Since $\mathcal S$ is assumed complete, we have $|\mathcal E| \!=\! |\mathcal V|$ then $\sqrt{|\mathcal E|+|\mathcal V|} = \sqrt{2}\,\sqrt{|\mathcal E|}$. Therefore, \textsf{TCM} must have running time at most $O(\sqrt{|\mathcal E|}\,|\mathcal S|)$.
$\Box$
\end{proof}

\begin{myremark}\label{rmk:correct-mapping}
Let $\mathcal S(\mathcal E, \mathcal V)$ be a complete structure. Then we know (cf. Proposition \ref{prop:mapping-existence}) that a total causal mapping over $\mathcal S$ exists. Let it be defined $\varphi \triangleq \textsf{TCM}(\mathcal S)$. Then the causal ordering implicit in $\mathcal S$ shall be correctly extracted (cf. Proposition \ref{prop:causal-ordering}) by processing the causal dependencies induced by $\varphi$, as we show in Chapter \ref{ch:reasoning}. 
$\Box$
\end{myremark}

Now we are ready to accomplish the hypothesis encoding into fd's, as we show next.

\section{The Encoding Scheme}\label{sec:scheme}

We shall encode variables as relational attributes and map equations onto fd's through total causal mappings. 
Let $Z$ be a set of attribute symbols such that $Z \!\simeq\! \mathcal V$, where $\mathcal S(\mathcal E, \mathcal V)$ is a complete structure; and let $\phi,\, \upsilon \notin Z$ be two special attribute symbols kept to identify (resp.) phenomena and hypotheses. We are explicitly distinguishing symbols in $Z$, assigned by the user into structure $\mathcal S$, from epistemological symbols $\phi$ and $\upsilon$. 
Then we consider a sense of Simon's into the nature of scientific modeling and interventions \cite{simon1953}, summarized in Def. \ref{def:exogenous}.

\begin{mydef}\label{def:exogenous}
Let $\mathcal S(\mathcal E, \mathcal V)$ be a structure and $x \in \mathcal V$ be a variable. We say that $x$ is \textbf{exogenous} if there exists an equation $f \!\in \mathcal E$ such that $Vars(f) = \{x\}$. In this case $f$ can be written $f(x) \!= 0$, and must be mapped to $x$ in any total causal mapping $\varphi$ over $\mathcal S$. We say that $x$ is \textbf{endogenous} otherwise.
\end{mydef}

\noindent
Remark \ref{rmk:modeling} introduces an interpretation of Def. $\!$\ref{def:exogenous} with a data dependency flavor.

\begin{myremark}
The values of exogenous variables (attri\-butes) are to be determined empirically, outside of the system (proposed structure $\mathcal S$). Such values are, therefore, dependent on the phenomenon id $\phi$ only. The values of endogenous variables (attributes) are in turn to be determined theoretically, within the system. They are dependent on the hypothesis id $\upsilon$ and shall be dependent on the phenomenon id $\phi$ as well indirectly. 
$\Box$
\label{rmk:modeling}
\end{myremark}

As introduced in \S\ref{sec:encoding}, the encoding scheme we are presenting here is not obvious. It goes beyond Simon's structural equations to abstract the data-level semantics of mathematical deterministic hypotheses. Whereas Simon's structural equations are able to represent only linear equations, our encoding scheme can represent non-linear equations and arbitrarily complex mathematical operators by means of its data representation of deterministic hypotheses.

For instance, take non-linear equation $y = a\, x^2$ and suppose that, considering the context of its complete system of equations, (Alg. \ref{alg:tcm}) \textsf{TCM} maps it onto variable $y$. Then, by an abstraction of the equation semantics, we shall encode it into fd $a \,x\, \upsilon \to y$. That is, the hypothesis identifier $\upsilon$ captures the data-level semantics of the hypothesis equation.\footnote{Note that, without the hypothesis id, infinitely many equations fit the pattern $a \,x \to y$.}

We encode complete structures into fd sets by means of (Alg. $\!$\ref{alg:h-encode}) \textsf{h-encode}. $\!$Fig. \ref{fig:h-fdschema} presents an fd set defined $\Sigma \triangleq$ \textsf{h-encode}($\mathcal S$), encoding the same structure $\mathcal S$ from Example \ref{ex:struct-matrix}.

\begin{spacing}{1.1}
\begin{algorithm}[H]
\caption{Hypothesis encoding.}
\label{alg:h-encode}
\begin{algorithmic}[1]
\Procedure{h-encode}{$\mathcal S\!:\, \text{structure over}\; \mathcal E \;\text{and}\; \mathcal V$, $\;\mathcal D\!:\,$ domain variables}
\Require $\mathcal S$ given is a complete structure, i.e., $|\mathcal E|=|\mathcal V|$
\Ensure Returns a non-redundant fd set $\Sigma$
\State $\Sigma \gets \varnothing$
\State $\varphi \gets \textsf{TCM}(\mathcal S)$\vspace{1pt}
\ForAll{$\langle f,\, x \rangle \in \varphi_t$}\vspace{1pt}
\State $Z \gets X \setminus \{x\}$, where $\langle f, X\rangle \in \mathcal S$
\If{$Z = \varnothing\;$ or $\;Z \subseteq \mathcal D$} \Comment{$x$ is exogenous}
\If{$x \notin \mathcal D$} \Comment{supress $\phi$-fd for dimensions like time $t$}
\State $\Sigma \gets \Sigma \cup \langle Z \cup \{\phi\}, x \rangle$
\EndIf
\Else \Comment{$x$ is endogenous}
\State $\Sigma \gets \Sigma \cup \langle Z \cup \{\upsilon\},\; x \rangle$
\EndIf
\EndFor
\State \Return $\Sigma$
\EndProcedure
\end{algorithmic}
\end{algorithm}
\end{spacing}

\begin{figure}[t]
\begin{framed}
\vspace{-15pt}
\begin{eqnarray*}
\Sigma = \{\quad 
\phi &\to& x_1,\\ 
\phi &\to& x_2,\\ 
\phi &\to& x_3,\\ 
x_1\,x_2\,x_3\,x_5\,\upsilon &\to& x_4,\\
x_1\,x_3\,x_4\,\upsilon &\to& x_5,\\
x_4\,\upsilon &\to& x_6,\\
x_5\,\upsilon &\to& x_7 \quad\}.
\end{eqnarray*}
\vspace{-20pt}
\end{framed}
\vspace{-5pt}
\caption{Encoded fd set $\Sigma$ (cf. $\!$Alg. $\!$\ref{alg:h-encode}) for the structure from Example \ref{ex:struct-matrix}.}
\label{fig:h-fdschema}
\end{figure}


Now we study the design-theoretic properties of the encoded fd sets. We shall make use of the concept of `canonical' fd sets (also called `minimal' \cite[p.$\!$ 390]{ullman1988}), see Def. $\!$\ref{def:minimal}.
\begin{mydef}
Let $\Sigma$ be an fd set. We say that $\Sigma$ is \textbf{canonical} if:
\begin{itemize}
\vspace{-2pt}
\item[(a)] each fd in $\Sigma$ has the form $X \!\to A$, where $|A|=1$;\vspace{-2pt}
\item[(b)] For no $\langle X, A\rangle \in \Sigma$ we have $(\Sigma - \{\langle X, A\rangle\})^+ = \Sigma^+$;\vspace{-2pt}
\item[(c)] for each fd $X \!\to A$ in $\Sigma$, there is no $Y \subset X$ such that $(\Sigma \setminus \{X \!\to A\} \cup \{Y \!\to A \})^+ = \Sigma^+$.
\end{itemize}
\label{def:minimal}
\end{mydef}
For an fd set satisfying such properties (Def. $\!$\ref{def:minimal}) individually, we say that it is (a) \emph{singleton-rhs}, (b) \emph{non-redundant} and (c) \emph{left-reduced}. It is said to have an attribute $A$ in $X$ that is `extraneous' w.r.t. $\!\Sigma$ if it is \emph{not} left-reduced (Def. \ref{def:minimal}-c) \cite[p. $\!$74]{maier1983}. Finally, an fd $X \!\to Y$ in $\Sigma$ is said \emph{trivial} if $Y \subseteq X$. Note that the presence of a trivial fd in a an fd set is sufficient to make it redundant.

\begin{mythm}\label{thm:non-redundant}
Let $\Sigma$ be an fd set defined $\Sigma \triangleq$ \textsf{h-encode}($\mathcal S$) for some complete structure $\mathcal S$. Then $\Sigma$ is non-redundant and singleton-rhs but may not be left-reduced (then may not be canonical). 
\end{mythm}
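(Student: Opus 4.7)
The plan is to verify the three claims separately, leveraging the bijectivity of the total causal mapping returned by \textsf{TCM} and exhibiting a small structure that serves as a counterexample for left-reducedness.

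First, I would dispatch singleton-rhs by inspection of Algorithm \ref{alg:h-encode}: every fd inserted into $\Sigma$ has the form $\langle Z \cup \{\phi\},\, x\rangle$ or $\langle Z \cup \{\upsilon\},\, x\rangle$, where $x$ is the single variable that $\varphi$ maps to the equation under consideration. So condition (a) of Def. \ref{def:minimal} holds by construction.

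For non-redundancy, I would use that $\varphi\!:\, \mathcal E \to \mathcal V$ is a bijection (Proposition \ref{prop:mapping-existence}) and that the main loop of \textsf{h-encode} iterates exactly once over each pair $\langle f, x\rangle \in \varphi$. Hence every non-dimension variable $x$ appears on the right-hand side of \emph{exactly one} fd in $\Sigma$. By a standard observation on Armstrong's system (reflexivity, augmentation, transitivity only introduce on the rhs attributes that already appear on the rhs of some premise), an fd $X \to x$ cannot be derived from a family of fd's whose right-hand sides never mention $x$. Therefore removing any fd $X \to x$ from $\Sigma$ strictly shrinks its closure, and condition (b) of Def. \ref{def:minimal} holds.

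The main obstacle is arguing that $\Sigma$ need not be left-reduced, which I would handle by exhibiting a minimal counterexample. Consider the complete structure $\mathcal S$ given by three equations $f_1(x_1) = 0$, $f_2(x_1, x_2) = 0$, $f_3(x_1, x_2, x_3) = 0$, with $\mathcal D = \varnothing$. It is easy to check that $\mathcal S$ satisfies Def. \ref{def:structure}, and the unique total causal mapping is $\varphi = \{\langle f_1, x_1\rangle, \langle f_2, x_2\rangle, \langle f_3, x_3\rangle\}$, so \textsf{h-encode} returns
$$\Sigma \;=\; \{\, \phi \to x_1,\quad x_1\,\upsilon \to x_2,\quad x_1\,x_2\,\upsilon \to x_3 \,\}.$$
A single pseudo-transitivity step combining $x_1\,\upsilon \to x_2$ with $x_1\,x_2\,\upsilon \to x_3$ yields $x_1\,\upsilon \to x_3$, so $x_2$ is extraneous on the left-hand side of the third fd. Hence $\Sigma$ violates condition (c) of Def. \ref{def:minimal}, and is therefore not canonical in general, although it is non-redundant and singleton-rhs.
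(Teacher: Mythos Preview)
Your proposal is correct and follows essentially the same route as the paper's proof: singleton-rhs by inspection of the algorithm, non-redundancy via the bijectivity of $\varphi$ (so each attribute appears on the rhs of exactly one fd and cannot be re-derived once that fd is removed), and failure of left-reducedness via the very same three-equation counterexample $\{\phi\to x_1,\ x_1\upsilon\to x_2,\ x_1x_2\upsilon\to x_3\}$. The only cosmetic difference is that the paper packages the non-redundancy step into an explicit lemma (that $X\to A$ is in $\Sigma^+$ only if $A\subseteq X$ or some $\langle Y,A\rangle\in\Sigma$), whereas you invoke this as a ``standard observation'' about Armstrong's rules; your phrasing there is slightly loose (reflexivity \emph{can} introduce attributes on the rhs, namely those already in the lhs), but since the algorithm guarantees $x\notin X$ for each inserted fd, the conclusion stands.
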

\begin{proof}
We show that properties (a-b) of Def. \ref{def:minimal} must hold for $\Sigma$ produced by (Alg. \ref{alg:h-encode}) \textsf{h-encode}, but property (c) may not hold (i.e., encoded fd set $\Sigma$ may not be left-reduced). See Appendix, \S\ref{a:non-redundant}. 
$\Box$
\end{proof}

We draw attention to the significance of Theorem \ref{thm:non-redundant}, as it sheds light on a connection between Simon's complete structures \cite{simon1953} and fd sets \cite{ullman1988}. In fact, we continue to elaborate on that connection in next chapter to handle causal ordering processing symbolically by causal reasoning over fd's.

\section{Experiments}\label{sec:tcm-experiments}

\noindent
Fig. \ref{fig:tcm-experiments} shows the results of experiments we have carried out in order to study how effective the procedure of hypothesis encoding is in practice, in particular its behavior for hypotheses whose structure $\mathcal S$ has been randomly generated over orders of magnitude $|\mathcal S| \approx 2^k$, to have length up to $|\mathcal S| \approx 2^{20} \lesssim 1M$. The largest structure considered, with $|\mathcal S|\approx 1M$, has been generated to have exactly $|\mathcal E| = 2.5K$. 

We have executed ten runs for each tested order of magnitude, and then taken its mean running time in $ms$.\footnote{The experiments were performed on a 2.3GHZ/4GB Intel Core i5 running Mac OS X 10.6.8.} The plot is shown in Fig. \ref{fig:tcm-experiments} in logscale base 2. In fact a a near-, sub-quadratic slope is expected for the curve structure length $|\mathcal S| \times \text{time}$.

These scalability results are compatible with the computational complexity of \textsf{h-encode}, which is (cf. Corollary \ref{cor:tcm}) bounded by $O(\sqrt{|\mathcal E|}\,|\mathcal S|)$.\footnote{Note that, for any arbitrary structure $\mathcal S(\mathcal E, \mathcal V)$, we have $|\mathcal E| \leq |\mathcal S| \leq |\mathcal E|^2$. So, in worst case (the densest structure possible) we have $|\mathcal S| = |\mathcal E|^2$ and then can establish a time bound function of $|\mathcal E|$ only, viz., $O(|\mathcal E|^2\sqrt{|\mathcal E|})$.}

\begin{figure}[H]
\advance\leftskip 0.6cm
\begin{tikzpicture}[y=.015cm, x=2cm,font=\sffamily,scale=1.0]
    \begin{axis}[
        height=0.4\textwidth,
        width=0.75\textwidth,
        xlabel={structure length $|\mathcal S|$},
        ylabel={time [ms]},
        xmode=log,
       log basis x={2},
        ymode=log,
       log basis y={2}
]
      \addplot[mark=*, mark size=1.2pt, mark options={fill=black}] 
		file {./data/tcm.data}; 
     \end{axis}
\end{tikzpicture}
\caption{Performance of hypothesis encoding (in logscale).}\label{fig:tcm-experiments}
\end{figure}

\section{Related Work}\label{sec:related-work-encoding}
\noindent
Modeling physical and socio-economical systems as a set of equations is a traditional modeling approach, and a very large bulk of models exist up to date. 
Simon's early work on structural equations and causal ordering comprises a specific notion of causality aimed at further contributing to the potential of such modeling approach (cf. \cite{simon1952}). It is meant for \emph{identifying} influences among variables (or their values) that are implicit in the system model for enabling \emph{informed} interventions. These may apply either to the system (phenomenon) under study, or to the model itself (say, when its predictions are not approximating observations very well).

Significant research effort has been devoted to causal modeling and reasoning in the past decades in both statistics and AI (cf. \cite{pearl2000,darwiche2010}). 
The notion of causality used can be traced back to the early work of Simon's and others in Econometrics. Nonetheless, there are two important differences to be emphasized:

\begin{itemize}
\item Such work is majorly devoted to deal with (statistical) qualitative hypotheses, not (deterministic) quantitative hypotheses;

\item The causal model is assumed as given or is derived from data, instead of being converted or synthesized from a set of equations. 
\end{itemize}

These are both core differences that also apply to our work in comparison to the bulk of existing work in probabilistic DB's. 
Our main point here, though, is to clarify the technical context and state of the art of the problem of causal ordering. A few works have been concerned with extracting a causal model out of some previous existing formal specification such as a set of equations. This is a reason why causal ordering has been an yet barely studied problem from the computational point of view.

Dash and Druzdzel revisit the problem and re-motivate it in light of modern applications \cite{druzdzel2008}. First, they provide a formal description of how Simon's \textsf{COA} gives a summary of the causal dependencies implicit in a given SEM. That is, in clustering the strongly coupled variables into a causal graph, \textsf{COA} provides a condensed representation of the causal model implicit in the given SEM. They show then that any valid total causal mapping produced for a given SEM must be consistent with \textsf{COA}'s partial causal mapping. 

Yet, the serious problem is that the algorithm turns out to be intractable. In fact, no formal study of \textsf{COA}'s computational properties can yet be found in the literature. 
In this thesis we have obtained the (negative) hardness result that it is intractable, which turns out to be compatible with Nayak's intuition (sic.) that it is a worst-case exponential time algorithm (cf. \cite[p. 37]{nayak1996}). 

Inspired on Serrano and Gossard's work on constraint modeling and reasoning \cite{serrano1987}, Nayak reports an approach that is provably quite effective to process the causal ordering: extract a total causal mapping and then compute the transitive closure of the direct causal dependencies. In this thesis we build upon it to perform causal reasoning in terms of a form of transitive reasoning. Such approach fits very well to our use case, viz., the synthesis of p-DB's from fd's. As we show in Chapter \ref{ch:reasoning}, we process the causal ordering of a hypothesis structure (abstracted as a SEM) in terms of acyclic causal reasoning over fd's and prove its correctness. This is enabled by the encoding scheme presented in this chapter.

\section{Summary of Results }\label{sec:encoding-conclusions}

In this chapter we have studied and developed an encoding scheme to process the mathematical structure of a deterministic hypothesis into a set of fd's towards the encoding of hypotheses `as data.' Then we have studied the design-theoretic properties held by such an encoded fd set. We list the results achieved as follows.

\begin{itemize}
\item By Theorem \ref{thm:coa-hardness}, we know (an original hardness result) that Simon's approach to process the causal ordering of a structure is intractable; 

\item By building upon on the work of Simon \cite{simon1953} and Nayak \cite{nayak1994} (cf. Propositions \ref{prop:causal-ordering} and \ref{prop:mapping-existence}), we have framed an approach to efficiently extract the basic information (a total causal mapping) for processing the causal ordering implicit in the mathematical structure of a deterministic hypothesis; 

\item By Corollary \ref{cor:tcm}, we know how to process the complete structure of a hypothesis into a total causal mapping in time that is bounded by $O(\sqrt{|\mathcal E|}\,|\mathcal S|)$. That is, the machinery of hypothesis encoding is provably suitable for very large hypothesis structures.

\item By Theorem \ref{thm:non-redundant}, which studies the design-theoretic properties of the encoded fd sets, we have unraveled the connection between Simon's complete structures and fd sets to further explore it in next chapter. 

\item We have performed experiments (cf. Fig. \ref{fig:tcm-experiments}) to study how effective the approach is in practice, or how it scales for hypotheses whose structure $\mathcal S$ is randomly generated to have length up to the order of $|\mathcal S| \lesssim 1M$.\footnote{The tests were up to this order only because of the hardware limitations of our experimental settings. In theory (cf. complexity time bounds), larger structures can be handled very efficiently.}
\end{itemize}


\chapter{Causal Reasoning over FD's}\label{ch:reasoning}
\noindent
In this chapter we present a technique to address the problem of causal ordering processing in order to enable the synthesis of U-relational DB's. In \S\ref{sec:armstrong} we introduce Armstrong's classical inference system to reason over fd's. In \S\ref{sec:ptc} we develop the core concept and algorithm of the folding of an fd set, as a method for acyclic causal reasoning over fd's. 
In \S\ref{sec:connections} we show its connections (equivalence) with causal reasoning. In \S\ref{sec:folding-experiments} we present experiments on how the method behaves in practice. \S\ref{sec:related-work-reasoning} we discuss related work. In \S\ref{sec:reasoning-conclusions} we conclude the chapter.

\section{Preliminaries: Armstrong's Inference Rules}\label{sec:armstrong}

\noindent
As usual notational conventions from the DB literature \cite{ullman1988,abiteboul1995}, we write $X, Y, Z$ to denote sets of relational attributes and $A, B, C$ to denote singleton attribute sets. Also, we write $XY$ as shorthand for $X \cup Y$. 

Functional dependency theory relies on Armstrong's inference rules (or axioms) of (R0) reflexivity, (R1) augmentation and (R2) transitivity, which forms a sound and complete inference system for reasoning over fd's \cite{ullman1988}. From R0-R2 one can derive additional rules, viz., (R3) decomposition, (R4) union and (R5) pseudo-transi\-tivity. 
 \vspace{3pt}\\ 
\indent
\textbf{R0}. If $Y \!\subseteq X$, then $X \!\to Y$;\vspace{1pt}\\
\indent
\textbf{R1}. If $X \!\to Y$, then $XZ \!\to YZ$;\vspace{1pt}\\
\indent
\textbf{R2}. $\!$If $X \!\to Y\!$ and $Y \!\to W\!$, then $X \!\to W$;\vspace{1pt}\\
\indent
\textbf{R3}. If $X \!\to YZ$, then $X \!\to Y$ and $X \!\to Z$;\vspace{1pt}\\
\indent
\textbf{R4}. If $X \!\to Y$ and $X \!\to Z$, then $X \!\to YZ$;\vspace{1pt}\\
\indent
\textbf{R5}. $\!$If $X \!\to Y\!$ and $YZ \!\to W\!$, then $XZ \!\to W$.
\vspace{5pt}

\noindent
Given an fd set $\Sigma$, one can obtain $\Sigma^+$, the closure of $\Sigma$, by a finite application of rules R0-R5. We are concerned with reasoning over an fd set in order to process its implicit causal ordering. The latter, as we shall see in \S\ref{sec:connections}, can be performed in terms of \mbox{(pseudo-)}transitive reasoning. Note that R2 is a particular case of R5 when $Z\!=\!\varnothing$, then we shall refer to R5 reasoning and understand R2 included. The next definition opens up a way to compute $\Sigma^+$ very efficiently.

Let $\Sigma$ be an fd set on attributes $U$, with $X \subseteq U$. Then $X^+$, the \emph{attribute closure} of $X$ w.r.t. $\Sigma$, is the set of attributes $A$ such that $\langle X, A\rangle \in \Sigma^+$.
%
\noindent
Bernstein has long given algorithm (Alg. \ref{alg:xclosure}) \textsf{XClosure} to compute $X^+$. It is polynomial time in $|\Sigma| \cdot |U|$ (cf. \cite{bernstein1976}), where $\Sigma$ and $U$ are (resp.) the given fd set and the attribute set over which it is defined. A tighter time bound (linear time in $|\Sigma| \cdot |U|$) is achievable as discussed further in Remark \ref{rmk:linear}.

\vspace{10pt}
\begin{spacing}{1.1}
\begin{algorithm}[H]
\caption{Attribute closure $X^+$ (cf. \cite[p. $\!$388]{ullman1988}).}
\label{alg:xclosure}
\begin{algorithmic}[1]
\Procedure{XClosure}{$\Sigma\!: \text{fd set},\; X\!: \text{attribute set}$}
\Require $\Sigma$ is an fd set, $X$ is a non-empty attribute set
\Ensure $X^+$ is the attribute closure of $X$ w.r.t. $\Sigma$
\State size $\gets 0$
\State $\Lambda \gets \varnothing$
\State $X^+ \gets X$
\While{size $< |X^+|$ }
\State size $\gets |X^+|$
\State $\Sigma \gets \Sigma \setminus \Delta$
\ForAll{$\langle Y,\, Z \rangle \in \Sigma$}
\If{$Y \subseteq X^+$}
\State $\Delta \gets \Delta \cup \{\langle Y,\, Z \rangle\}$ \Comment{consumes fd}
\State $X^+ \gets\, X^+ \cup Z$
\EndIf
\EndFor
\EndWhile
\State \Return $X^+$
\EndProcedure
\end{algorithmic}
\end{algorithm}
\end{spacing}
\vspace{-2pt}

\section{Acyclic Pseudo-Transitive Reasoning}\label{sec:ptc}
\noindent
As discussed in the previous chapter, we shall process the causal ordering in terms of computing the transitive closure of each endogenous variable (predictive attribute). Before we proceed to that, we shall develop some machinery to reason over fd's in terms of Armstrong's rule R5 (pseudo-transitivity). We shall then demonstrate the correspondence between this kind of reasoning with causal reasoning shortly in the sequel.

\vspace{-2pt}
\begin{mydef}\label{def:ptc}
Let $\Sigma$ be a set of fd's on attributes $U\!$. Then $\Sigma^{\triangleright}$, the \textbf{pseudo-transitive closure} of $\Sigma$, is the minimal set $\Sigma^{\triangleright} \supseteq \Sigma$ such that $X \!\to Y$ is in $\Sigma^{\vartriangleright}\!$, with $XY\! \subseteq U$, iff it can be derived from a finite (possibly empty) application of rule R5 over fd's in $\Sigma$. In\vspace{-2pt} that case, we may write $X \!\xrightarrow{\triangleright} Y$ and omit `w.r.t. $\!\Sigma$' if it can be understood from the context.
\end{mydef}

We are in fact interested in a very specific proper subset of $\Sigma^\vartriangleright\!$, say, a kernel of fd's in $\Sigma^\vartriangleright$ that gives a ``compact'' representation of the causal ordering implicit in $\Sigma$. Note that, to characterize such special subset we shall need to be careful w.r.t. the presence of cycles in the causal ordering.

\begin{framed}
\vspace{-13pt}
\begin{mydef}\label{def:folded}
Let $\Sigma$ be a set of fd's on attributes $U\!$, and $\langle X, \!A\rangle \in \Sigma^\vartriangleright\!$ with $X\!A \subseteq U$. $\!$We say that $X \!\to\! A$ is \textbf{folded} (w.r.t.$\!$ $\Sigma$), and write $X \xrightarrow{\looparrowright} A$, if it is non-trivial and for no $Y \subset U$ with $Y \!\nsupseteq X$, we have $Y \!\to X$ and $X \!\not\to Y$ in $\Sigma^+$.
\end{mydef}
\vspace{-13pt}
\end{framed}

The intuition of Def. \ref{def:folded} is that an fd is folded when there is no sense in going on with pseudo-transitive reasoning over it anymore (nothing new is discovered). Given an fd $X \!\to A$ in fd set $\Sigma$, we shall be able to find some folded fd $Z \!\to A$ by applying (R5) pseudo-transitivity as much as possible while ruling out cyclic or trivial fd's in some clever way.

\begin{mydef}
$\!$Let $\Sigma$ be an fd set on attributes $U$, and $\langle X, A\rangle \in \Sigma$ be an fd with $XA \subseteq U$. Then,
\begin{itemize}
\item[(a)] $A^\looparrowright\!$, the \textbf{(attribute) folding} of $A$ (w.r.t. $\!\!\Sigma$) is an\vspace{-2pt} attribute set $Z \!\subset U$ such that $Z \xrightarrow{\looparrowright} A$;

\item[(b)] Accordingly, $\Sigma^\looparrowright\!$, the \textbf{folding} of $\Sigma$, is a proper subset $\Sigma^\looparrowright\! \subset \Sigma^\vartriangleright$ such that an fd $\langle Z, A\rangle \in \Sigma^\vartriangleright\!$ is in $\Sigma^\looparrowright$\vspace{-2pt} iff $X \xrightarrow{\looparrowright} A$ for some $Z \subset U$. 
\end{itemize}
\label{def:folding}
\end{mydef}

\begin{myex} (continued). 
Fig. \ref{fig:h-fdschema} shows an fd set $\Sigma$ (left) and its folding $\Sigma^\looparrowright\!\!$ (right). Note that the folding can be obtai\-ned by computing the attribute folding for $A$ in each fd $X \!\to A$ in $\Sigma$. We illustrate below some reasoning steps to partially compute an attribute folding considering the subset of fd's in $\Sigma$ with $\phi \notin X$. 

\begin{spacing}{1}
\begin{eqnarray*}
1.\,\;\;\;\;\;\;\;\;\;\;\;x_1\,x_2\,x_3\,x_5\,\upsilon &\to& x_4 \quad \text{\emph{[given]}}\\
2.\,\;\;\;\;\;\;\;\;\;\;\;\;\;\;\;x_1\,x_3\,x_4\,\upsilon &\to& x_5 \quad \text{\emph{[given]}}\\
3.\,\;\;\;\;\;\;\;\;\;\;\;\;\;\;\;\;\;\;\;\;\;\;\;x_5\,\upsilon &\to& x_7 \quad \text{\emph{[given]}}\\
4.\,\;\;\;\;\;\;\;\;\;\;\;\;\;\;\;x_1\,x_3\,x_4\,\upsilon &\to& x_7 \quad \text{\emph{[R5 over (2), (3)]}}\\
5.\,\;\;\;\;\;\;\;\;\;\;\;x_1\,x_2\,x_3\,x_5\,\upsilon &\to& x_7 \quad \text{\emph{[R5 over (1), (4)]}}\\
\line(1,0){70}&\!\!\!\!\! \line(1,0){35}\!\!\!\!\!&\line(1,0){120}\\\vspace{-2pt}
6.\;\;\therefore\;\;\;\;x_1\,x_2\,\,x_3\,x_4\,\upsilon &\to& x_7 \quad \text{\emph{[R5 over (2), (5)]}}.
\end{eqnarray*}
\end{spacing}
\vspace{9pt}

Note that (6) is still amenable to further application of R5, say over (1), to derive (7) $x_1\,x_2\,x_3\,x_5\,\upsilon \!\to x_7$. However, even though (1) and (6) have (resp.) the form (1) $X \!\to A$ and (6) $Y \!\to B$ with $Y \xrightarrow{\triangleright} X$, we have $X \xrightarrow{\triangleright} Y$ as well which characterizes a cycle that fetches nothing into $Y$.\footnote{Note, when at the step deriving (6) by R5 over (2), (5), that such cycle was not yet formed.} In fact, if we consider only the fd's 1-3 given, then (6) itself satisfies Def. \ref{def:folded} and then is folded. The same holds, e.g., for (1) by an empty application of R5. 
$\Box$ 
\end{myex}

\begin{figure}[t]
\begin{framed}
\vspace{-16pt}
\begin{subfigure}{0.45\columnwidth}
\begin{eqnarray*}
\Sigma = \{\quad 
\phi &\to& x_1,\\ 
\phi &\to& x_2,\\ 
\phi &\to& x_3,\\ 
x_1\,x_2\,x_3\,x_5\,\upsilon &\to& x_4,\\
x_1\,x_3\,x_4\,\upsilon &\to& x_5,\\
x_4\,\upsilon &\to& x_6,\\
x_5\,\upsilon &\to& x_7 \;\}.
\end{eqnarray*}
\end{subfigure}
\hspace{-8pt}
\begin{subfigure}{0.45\columnwidth}
\begin{eqnarray*}
\Sigma^\looparrowright \!= \{\;
\phi &\to& x_1,\\ 
\phi &\to& x_2,\\ 
\phi &\to& x_3,\\ 
\phi\,\upsilon\,x_5 &\to& x_4,\\
\phi\,\upsilon\,x_4 &\to& x_5,\\
\phi\,\upsilon\,x_5 &\to& x_6,\\
\phi\,\upsilon\,x_4 &\to& x_7 \;\}.
\end{eqnarray*}
\end{subfigure}
\end{framed}
\vspace{-11pt}
\caption{Fd set $\Sigma$ encoding (cf. Alg. \ref{alg:h-encode}) the structure of Fig. \ref{fig:coa-a} and its folding $\Sigma^\looparrowright$ derived by Alg. \ref{alg:folding}.}
\label{fig:h-fdschema}
\vspace{-6pt}
\end{figure}

\begin{mylemma}
Let $\mathcal S(\mathcal E, \mathcal V)$ be a complete structure, $\varphi$ a total causal mapping over $\mathcal S$ and $\Sigma$ an fd set encoded through $\varphi$ given $\mathcal S$. 
If $\langle X, A\rangle \in \Sigma$, then $A^\looparrowright\!$, the attribute folding of $A$ (w.r.t. $\!\Sigma$) exists and is unique.
\label{lemma:folding-unique}
\end{mylemma}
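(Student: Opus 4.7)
The plan is to establish existence constructively via pseudo-transitive (R5) rewriting and uniqueness by appealing to the invariance of the causal ordering established in Proposition \ref{prop:causal-ordering}.

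First I would record the key structural fact about $\Sigma$: because it is produced by \textsf{h-encode} through the bijective total causal mapping $\varphi$, each attribute $B \in \mathcal V$ appears on the right-hand side of exactly one fd in $\Sigma$ --- namely $\phi \to B$ when $B$ is exogenous, and some $Y_B\,\upsilon \to B$ with $Y_B \subseteq \mathcal V \setminus \{B\}$ when $B$ is endogenous. This gives a canonical one-step R5 substitution for any attribute appearing in the lhs of a derived fd, and it also disposes of the exogenous case immediately, since $\phi$ never appears on a rhs in $\Sigma$, so no witness $Y$ as in Def. \ref{def:folded} can exist for $\phi \to B$.

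For the endogenous case, existence would proceed by greedy saturation from $\langle X, A \rangle \in \Sigma$, which is non-trivial by Theorem \ref{thm:non-redundant}. If no witness $Y$ as in Def. \ref{def:folded} exists, the current fd is folded and we are done; otherwise, I would pick such a $Y$ and apply R5 using the encoded fd(s) determining those attributes of the current lhs that are not already in $Y$, deriving a fd $X' \to A$ in $\Sigma^\vartriangleright$ that is ``strictly closer to first causes.'' Termination would follow from a well-founded measure on the lhs, e.g.\ the multiset (under the standard well-ordering) of depths of lhs attributes in the DAG of strongly connected components of the causal graph $G_\varphi$; this measure strictly decreases at each step, because the clause $X \not\to Y$ in Def. \ref{def:folded} forbids cyclic replacements that would otherwise loop back.

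For uniqueness, suppose two distinct attribute sets $Z_1,\, Z_2 \subset U$ both satisfied $Z_i \xrightarrow{\looparrowright} A$. By Proposition \ref{prop:causal-ordering}, the transitive closure $C_\varphi^+$ of direct causal dependencies is an invariant of $\mathcal S$; equivalently, the decomposition of $\mathcal V$ into strongly connected components of $G_\varphi$, together with the partial order among them, is determined by $\Sigma$ alone. A folded lhs for $A$ must then consist of exactly one attribute drawn from each maximal strongly connected component that causally precedes $A$ without being reachable from $A$ --- any omission would surrender a witness $Y$ violating Def. \ref{def:folded}, and any extra attribute would as well, because it could be replaced by its canonical determining lhs. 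Hence $Z_1 = Z_2$.

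The main obstacle I anticipate is securing confluence of the R5 rewriting across different choices of witness $Y$, especially when strongly connected components are involved: a priori, different rewriting orders could yield syntactically different lhs's at intermediate stages. I would likely need a diamond-style confluence argument for one-step R5 substitutions, combined with the Nayak invariance of $C_\varphi^+$, to pin down that any apparent branching collapses into a single canonical folded fd.
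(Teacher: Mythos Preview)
Your route is genuinely different from the paper's. The paper argues both halves in a purely syntactic way, never invoking Proposition~\ref{prop:causal-ordering} or SCCs. For existence it runs a finiteness contradiction: if no fd for $A$ were folded, the sequence of witnesses $Y$ from Def.~\ref{def:folded} would give an infinite strictly descending chain, but cycles are excluded by the clause $X\not\to Y$ in Def.~\ref{def:folded} and $|\Sigma^+|\le 2^{2|U|}$. For uniqueness it uses only that $\varphi$ is a bijection, so there is \emph{exactly one} fd $\langle X,A\rangle\in\Sigma$; hence the R5 derivation from $X\to A$ is a single linear chain $Y^n\xrightarrow{\triangleright}\cdots\xrightarrow{\triangleright}Y^0\xrightarrow{\triangleright}X\xrightarrow{\triangleright}A$, and its endpoint is the folding. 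Your existence argument via a well-founded SCC-depth multiset is a cleaner version of the same termination idea and would go through.

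The gap is in your uniqueness argument. Your structural characterization --- ``a folded lhs for $A$ consists of exactly one attribute drawn from each maximal SCC that causally precedes $A$ without being reachable from $A$'' --- is already wrong on the paper's running example (Fig.~\ref{fig:h-fdschema}). First, when $A$ sits in a non-trivial SCC the folded lhs contains an attribute from $A$'s \emph{own} component (e.g.\ $\phi\,\upsilon\,x_5\to x_4$), so ``not reachable from $A$'' fails. Second, and more fatally, ``exactly one attribute from each SCC'' does not single out \emph{which} attribute when the component has several; your follow-up that ``any extra attribute could be replaced by its canonical determining lhs'' just swaps one representative for another of the same size rather than reducing. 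Your stated worry about confluence is precisely the issue, and a diamond lemma will not save it: for $x_6$ one can derive both $\phi\,\upsilon\,x_5\to x_6$ and $\phi\,\upsilon\,x_4\to x_6$ in $\Sigma^{\vartriangleright}$, and each lhs determines the other in $\Sigma^+$, so both satisfy Def.~\ref{def:folded}. R5-rewriting over strongly coupled variables is therefore not confluent at the level of attribute sets, and your SCC characterization cannot by itself force $Z_1=Z_2$. The paper avoids this by never attempting to characterize \emph{all} folded lhs's; it argues directly that the chain emanating from the unique $\langle X,A\rangle\in\Sigma$ is single.
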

\begin{proof}
See Appendix, \S\ref{a:folding-unique}.
$\Box$
\end{proof}

\noindent
We give an original algorithm (Alg. \ref{alg:folding}) to com\-pute the folding of an fd set. At its core there lies (Alg. \ref{alg:afolding}) \textsf{AFolding}, which can be understood as a non-obvious variant of \textsf{XClosure} (cf. Alg. \ref{alg:xclosure}) designed for acyclic pseudo-transitivity reasoning. 
In order to compute the folding of attribute $A$ in fd $\langle X, A\rangle \in \Sigma$, algorithm \textsf{AFolding} can be seen as backtracing the causal ordering implicit in $\Sigma$ towards $A$. Analogously, in terms of the directed graph $G_{\varphi}$ induced by the causal ordering (see Fig. \ref{fig:causal-graph}), that would comprise graph traversal to identify the nodes $x \in \mathcal V$ that have $x_a \in \mathcal V$ in their reachability, i.e., $x \rightsquigarrow x_a$. Rather, \textsf{AFolding}'s processing of the causal ordering is fully symbolic based on Armstrong's rewrite rule R5.

\begin{myex}
Cyclicity in an fd set $\Sigma$ may have the effect of making its folding $\Sigma^\looparrowright\!$ to degenerate to $\Sigma$ itself. For instance, consider $\Sigma\!=\!\{A \!\to B,\, B \!\to A\}$. Note that $\Sigma$ is canonical, and \textsf{AFolding} (w.r.t. $\Sigma$) is $B$ given $A$, and $A$ given $B$. That is, $\Sigma^\looparrowright \!= \Sigma$. 
$\Box$
\label{ex:degenerate}
\end{myex}

\begin{spacing}{1.1}
\begin{algorithm}[H]
\caption{Folding of an fd set.}
\label{alg:folding}
\begin{algorithmic}[1]
\Procedure{folding}{$\Sigma\!: \text{fd set}$} 
\Require $\Sigma$ given encodes complete structure $\mathcal S$
\Ensure Returns fd set $\Sigma^\looparrowright$, the folding of $\Sigma$
\State $\Sigma^\looparrowright \gets \varnothing$

\ForAll{$\langle X,\, A \rangle \in \Sigma$}
\State $Z \gets\, \textsf{AFolding}(\Sigma,\, A)$
\State $\Sigma^\looparrowright \gets \Sigma^\looparrowright \cup \langle Z,\, A \rangle$
\EndFor
\State \Return $\Sigma^\looparrowright$
\EndProcedure
\end{algorithmic}
\end{algorithm}
\end{spacing}
%
\begin{spacing}{1.1}
\begin{algorithm}[H]
\caption{Folding of an attribute w.r.t. an fd set.}
\label{alg:afolding}
\begin{algorithmic}[1]
\Procedure{AFolding}{$\Sigma\!: \text{fd set},\; A\!: \text{attribute}$} 
\Require $\Sigma$ is parsimonious
\Ensure Returns $A^\looparrowright\!$, the attribute folding of $A$ (w.r.t. $\Sigma$)
\State $\Delta \gets \varnothing$ \Comment{consumed fd's}
\State $\Lambda \gets \varnothing$  \Comment{consumed attrs.}
\State $A^\star \gets A$ \Comment{stores attrs. $A$ is found to be `causally dependent' on (cf. \S\ref{sec:connections})}
\State \emph{size} $\gets 0$
\While{\emph{size} $< |A^\star|$ } \Comment{halts when $A^{(i+1)}\!=\!A^{(i)}$}
\State \emph{size} $\gets |A^\star|$
\State $\Sigma \gets \Sigma \setminus \Delta$
\ForAll{$\langle Y,\, B \rangle \in \Sigma$}
\If{$B \in A^\star$}
\State $\Delta \gets \Delta \cup \{\langle Y,\, B \rangle\}$ \Comment{consumes fd}
\State $A^\star \gets\, A^\star \cup Y$
\State $\Lambda \gets\, \Lambda \cup B$ \Comment{consumes attr.}
\ForAll{$C \in Y$}
\If{$C \in \Lambda$ and $B \in X$ for $\langle X, C\rangle \in \Delta$}\Comment{cyclic fd}
\State $\Lambda \gets\, \Lambda \setminus B$ \Comment{reingests it to simulate cyclic app. of R5}
\EndIf
\EndFor
\EndIf
\EndFor
\EndWhile
\State \Return $A^\star \setminus \Lambda$
\EndProcedure
\end{algorithmic}
\end{algorithm}
\end{spacing}

\begin{mythm}\label{thm:afolding}
Let $\mathcal S(\mathcal E, \mathcal V)$ be a complete structure, and $\Sigma$ an fd set encoded given $\mathcal S$. 
Now, let $\langle X, A\rangle \in \Sigma$. Then \textsf{AFolding}($\Sigma, A$) correctly computes $A^\looparrowright\!$, the attribute folding of $A$ (w.r.t. $\!\Sigma$) in time $O(|\mathcal S|^2)$.
\end{mythm}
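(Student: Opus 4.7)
The plan is to split the argument into two independent parts: (i) correctness, namely that the returned set $Z = A^\star \setminus \Lambda$ satisfies $Z \xrightarrow{\looparrowright} A$ in the sense of Definition \ref{def:folded}; and (ii) the $O(|\mathcal S|^2)$ time bound. Since uniqueness of the folding is already guaranteed by Lemma \ref{lemma:folding-unique}, for correctness it suffices to exhibit \emph{any} folding of $A$.

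For correctness, I would establish a loop invariant for the outer \textbf{while}: at the beginning of each iteration, $A^\star \xrightarrow{\triangleright} A$ holds via a sequence of R5 applications corresponding exactly to the fd's that have been moved to $\Delta$, and $\Lambda \subseteq A^\star$ collects the attributes already eliminated as intermediate steps of those derivations. Each step consumes an fd $\langle Y, B\rangle$ with $B \in A^\star$, which matches exactly the pattern enabling R5: combining the current derivation (whose lhs contains $B$) with $Y \to B$ yields a new derivation with lhs $(A^\star \setminus \{B\}) \cup Y$, justifying both the absorption $A^\star \gets A^\star \cup Y$ and the ``consumption'' $\Lambda \gets \Lambda \cup \{B\}$. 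Termination is immediate because $A^\star$ grows monotonically inside the finite attribute universe, so the fixpoint $|A^\star|$ stable is reached in at most $|\mathcal V|$ iterations. At termination no further R5 step can eliminate an attribute in $Z = A^\star \setminus \Lambda$, since otherwise the inner scan would have consumed the responsible fd; this gives minimality and non-triviality as required by Definition \ref{def:folded}.

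The main obstacle will be justifying the cycle-handling branch (the inner \textbf{ForAll} scanning $C \in Y$), which represents precisely the exception in Definition \ref{def:folded}: whenever eliminating $B$ would only succeed through a chain $Z \to B$ with $B \to Z$ also in $\Sigma^+$, attribute $B$ must remain in the folding. Concretely, if some $C \in Y$ was already consumed by an fd $\langle X, C\rangle \in \Delta$ with $B \in X$, then deriving $B$ from $Y \ni C$ while $C$ itself was derived using $B$ closes a cyclic R5 application that would violate the ``$X \not\to Y$'' clause. Reingesting $B$ by removing it from $\Lambda$ is exactly the corrective action, and I would verify this by a case analysis on the structure of the detected cycle, appealing again to the invariant above to show that $Z$ after the correction still satisfies $Z \xrightarrow{\triangleright} A$ while preventing the offending elimination.

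For the complexity bound, each fd is consumed at most once (it is added to $\Delta$ and removed from $\Sigma$ by $\Sigma \gets \Sigma \setminus \Delta$ at the top of the next iteration), so summed over the whole execution the absorption and consumption work totals $\sum_{\langle Y,B\rangle \in \Sigma}(|Y|+1) = O(|\mathcal S|)$ using hash-based membership tests; the cycle check per fd costs $O(|Y|)$ if one maintains a dictionary from each consumed attribute $C$ to the fd in $\Delta$ that consumed it. The outer \textbf{while} runs at most $|\mathcal V| + 1 \leq |\mathcal S| + 1$ times before $A^\star$ stabilises, and each iteration rescans the (shrinking) fd set at total cost $O(|\mathcal S|)$. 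Hence the algorithm runs in time $O(|\mathcal V| \cdot |\mathcal S|) = O(|\mathcal S|^2)$, matching the stated bound.
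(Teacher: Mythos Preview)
Your proposal follows essentially the same route as the paper: both argue by induction on outer-loop iterations, showing each consumed fd $\langle Y,B\rangle$ with $B\in A^\star$ corresponds to one R5 application, and both obtain the $O(|\mathcal S|^2)$ bound from at most $|\mathcal V|$ (resp.\ $|\mathcal E|$) outer iterations times an $O(|\mathcal S|)$ scan. The paper phrases the invariant slightly differently---``$A^{(k)}\setminus\Lambda^{(k)}\to A$ is folded w.r.t.\ the consumed set $\Delta^{(k)}$''---and closes the inductive step via Lemma~\ref{lemma:singleton-rhs}: if the current lhs $ZS$ were not folded, some $W\to C$ with $C\in ZS$ would lie in $(\Delta^{(k+1)})^+$, forcing $C$ to have been consumed already, a contradiction; you may want to borrow that lemma to sharpen your ``no further R5 step can eliminate an attribute'' claim into a statement about $\Sigma^+$ rather than single steps. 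Note also that the paper's proof, like your sketch, does not fully work out the cycle-detection branch, so your flagging it as the main obstacle is accurate.
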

\begin{proof}
For the proof roadmap, note that \textsf{AFolding} 
is monotone (size of $A^\star$ can only increases) and terminates precisely when $A^{(i+1)}\!=\!A^{(i)}$, where $A^{(i)}$ denotes the attributes in $A^\star$ at step $i$ of the outer loop. The folding $A^\looparrowright\!$ of $A$ at step $i$ is $A^{(i)} \setminus \Lambda^{(i)}$. We shall prove by induction, given attribute $A$ from fd $X \!\to A$ in parsimonious $\Sigma$, that $A^\star \!\setminus \Lambda$ returned by \textsf{AFolding}($\Sigma, A$) is the unique attribute folding $A^\looparrowright\!$ of $A$. 
See Appendix, \S\ref{a:afolding}.
$\Box$
\end{proof}

\begin{myremark}
\label{rmk:linear}
Let $\Sigma$ be an arbitrary fd set on attribute set $U$. Beeri and Bernstein gave a straightforward optimization to (Alg. \ref{alg:xclosure}) \textsf{XClosure} to make it linear in $|\Sigma| \cdot |U|$ (cf. $\!$\cite[p. $\!\!$43-5]{beeri1979}), where $|\Sigma| \cdot |U|$ is the maximum length for a string encoding all the fd's. Note that the actual length of such string in our case is exactly $|\mathcal S|$. The optimization mentioned applies likewise to (Alg. \ref{alg:afolding}) \textsf{AFolding}.\footnote{We omit its tedious exposure here. In short, it shall require one more auxiliary data structure to keep track, for each fd not yet consumed, of how many attributes not yet consumed appear in its rhs.}
That is, \textsf{AFolding} can be implemented to run in linear time in $|\mathcal S|$. 
$\Box$
\end{myremark}

\begin{mycor}
Let $\mathcal S(\mathcal E, \mathcal V)$ be a complete structure, and $\Sigma$ an fd set encoded given $\mathcal S$. 
Then algorithm \textsf{folding}($\Sigma$) correctly computes $\Sigma^\looparrowright\!$, the folding of $\Sigma$ in time that is $f(|\mathcal S|)\,\Theta(|\mathcal E|)$, where $f(|\mathcal S|)$ is the time complexity of (Alg. \ref{alg:afolding}) \textsf{AFolding}.
\label{cor:folding}
\end{mycor}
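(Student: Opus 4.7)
The plan is to reduce the correctness of \textsf{folding} to that of its subroutine \textsf{AFolding} (Theorem \ref{thm:afolding}), and then to bound the running time by observing that the outer loop makes exactly $|\Sigma|$ such calls, with $|\Sigma| = \Theta(|\mathcal E|)$.

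For correctness, I would argue as follows. By Def.~\ref{def:folding}(b), $\Sigma^\looparrowright$ is the set of pairs $\langle Z, A\rangle$ for which $Z \xrightarrow{\looparrowright} A$, and by Def.~\ref{def:folding}(a) there is exactly one such $Z = A^\looparrowright$ per rhs attribute $A$ occurring in $\Sigma$. Now, by Alg.~\ref{alg:h-encode}, the total causal mapping $\varphi$ is a bijection, so each endogenous (and non-suppressed exogenous) variable $A$ appears as the rhs of exactly one fd $\langle X, A\rangle \in \Sigma$; this is consistent with Thm.~\ref{thm:non-redundant}, which ensures $\Sigma$ is singleton-rhs and non-redundant. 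Therefore the loop in Alg.~\ref{alg:folding} enumerates precisely the rhs attributes for which a folded fd must be emitted, with no duplication and no omission. Lemma~\ref{lemma:folding-unique} guarantees that $A^\looparrowright$ is well-defined (exists and is unique), and Thm.~\ref{thm:afolding} guarantees that $\textsf{AFolding}(\Sigma, A)$ returns exactly that $A^\looparrowright$. It follows that the $\Sigma^\looparrowright$ computed by Alg.~\ref{alg:folding} coincides set-theoretically with the object prescribed by Def.~\ref{def:folding}(b).

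For the running time, the outer loop iterates $|\Sigma|$ times. Since \textsf{h-encode} emits at most one fd per equation (cf.~lines 5--10 of Alg.~\ref{alg:h-encode}), and at most the domain variables are suppressed in the exogenous branch, we have $|\Sigma| = \Theta(|\mathcal E|)$. Each iteration performs one call to \textsf{AFolding} at cost $f(|\mathcal S|)$ plus a single set-insertion whose cost is dominated asymptotically. Summing yields $f(|\mathcal S|)\,\Theta(|\mathcal E|)$ as claimed.

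The main subtlety I anticipate is the indexing argument in the correctness part, namely making precise that iterating over fds of $\Sigma$ (as Alg.~\ref{alg:folding} does) is equivalent to iterating over the rhs attributes that index $\Sigma^\looparrowright$ via Def.~\ref{def:folding}. This hinges on the bijectivity of $\varphi$ guaranteed by Prop.~\ref{prop:mapping-existence} together with the singleton-rhs property from Thm.~\ref{thm:non-redundant}; once these are invoked, the reduction to Thm.~\ref{thm:afolding} and Lemma~\ref{lemma:folding-unique} is straightforward and the complexity bound follows by a one-line accounting of the loop.
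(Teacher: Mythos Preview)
Your proposal is correct and follows essentially the same approach as the paper: reduce correctness to Theorem~\ref{thm:afolding} (with Lemma~\ref{lemma:folding-unique} for uniqueness), observe that the outer loop visits each fd in $\Sigma$ exactly once, and account $|\Sigma|=\Theta(|\mathcal E|)$ calls at cost $f(|\mathcal S|)$ each. You are slightly more explicit than the paper about the indexing argument (invoking bijectivity of $\varphi$ and the singleton-rhs property), but the structure and substance of the argument are the same.
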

\begin{proof}
See Appendix, \S\ref{a:folding}. 
$\Box$
\end{proof}

Finally, it shall be convenient to come with a notion of \emph{parsimonious} fd sets (see Def. \ref{def:parsimonious}), which is suggestive of a distinguishing feature of such mathematical information systems in comparison with arbitrary information systems. 

\begin{mydef}
Let $\Sigma$ be set of fd's on attributes $U$. Then, we say that $\Sigma$ is \textbf{parsimonious} if it is canonical and, for all fd's $\langle X, A\rangle \in \Sigma$ with $XA \subseteq U$, there is no $Y \subset U$ such that $Y \neq X$ and $\langle Y, A\rangle \in \Sigma$.
\label{def:parsimonious}
\end{mydef}

Proposition \ref{prop:folding-parsimonious} then shall be useful further in connection with the concept of the folding.

\begin{myprop}
Let $\mathcal S(\mathcal E, \mathcal V)$ be a complete structure, $\varphi$ a total causal mapping over $\mathcal S$ and $\Sigma$ an fd set encoded through $\varphi$ given $\mathcal S$. Let $\Sigma^\looparrowright\!$ be the folding of $\Sigma$, then $\Sigma^\looparrowright\!$ is parsimonious.
\label{prop:folding-parsimonious}
\end{myprop}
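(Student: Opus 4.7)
My plan is to verify the four properties comprising parsimoniousness (Def.~\ref{def:parsimonious}): singleton-rhs, non-redundancy, and left-reducedness (together these make $\Sigma^\looparrowright$ canonical, Def.~\ref{def:minimal}), plus the uniqueness-of-rhs condition. I would dispatch the easy ones first. Singleton-rhs is inherited from $\Sigma$ (which is singleton-rhs by Thm.~\ref{thm:non-redundant}): Alg.~\ref{alg:folding} emits, for each $\langle X,A\rangle\in\Sigma$, exactly one fd $\langle A^\looparrowright, A\rangle$ with the same singleton rhs $A$. Uniqueness of rhs follows from the fact that $\Sigma$ is encoded through the bijection $\varphi\!:\mathcal E\to\mathcal V$ (so each attribute of $\mathcal V$ appears as rhs of exactly one fd in $\Sigma$) combined with Lemma~\ref{lemma:folding-unique}; hence Alg.~\ref{alg:folding} produces exactly one fd per rhs attribute.

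For non-redundancy, I would argue contrapositively. Suppose removing some $\langle A^\looparrowright, A\rangle$ from $\Sigma^\looparrowright$ leaves the closure unchanged. By the uniqueness-of-rhs just established, no other fd in $\Sigma^\looparrowright$ has $A$ on its rhs, and by Def.~\ref{def:folded} the removed fd is non-trivial so $A\notin A^\looparrowright$. A straightforward induction on derivation length shows that, from a set containing no fd with $A$ on the rhs, Armstrong's axioms cannot derive any fd $X\to Y$ with $A\in Y$ and $A\notin X$; hence the closure must strictly shrink --- contradiction. So $\Sigma^\looparrowright$ is non-redundant.

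Left-reducedness will be the main obstacle, so I plan to work it last and in most detail. Leveraging Thm.~\ref{thm:afolding}, I would exploit the operational invariant that the output $A^\looparrowright=A^\star\setminus\Lambda$ of Alg.~\ref{alg:afolding} consists exactly of those attributes in the backward causal chain of $A$ that are either epistemological sources (viz., $\phi$ and $\upsilon$, which never appear as rhs of any encoded fd) or are reingested precisely because they sit in a strongly coupled cycle (lines~12--14). Assuming for contradiction that some $C\in A^\looparrowright$ is extraneous, i.e., $(A^\looparrowright\setminus\{C\})\to A$ lies in $(\Sigma^\looparrowright)^+$, then since the unique fd in $\Sigma^\looparrowright$ with $A$ on its rhs is $\langle A^\looparrowright,A\rangle$ itself, any such derivation must first derive $(A^\looparrowright\setminus\{C\})\to A^\looparrowright$, which in turn requires deriving $C$ from $A^\looparrowright\setminus\{C\}$. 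I would then split into two cases: if $C\in\{\phi,\upsilon\}$, then no fd in $\Sigma^\looparrowright$ has $C$ on the rhs and the non-redundancy inductive argument above reapplies to block the derivation; if $C$ is in a cycle, its reingestion at line~14 of Alg.~\ref{alg:afolding} is precisely the witness that no pseudo-transitive rewrite can eliminate $C$ without reintroducing it, and here I expect the rigorous argument to invoke Prop.~\ref{prop:causal-ordering} so that the strongly-coupled characterization is independent of the specific $\varphi$ chosen. Either case contradicts derivability of $A$ from $A^\looparrowright\setminus\{C\}$, completing the proof.
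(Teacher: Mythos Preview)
Your treatment of singleton-rhs, uniqueness-of-rhs, and non-redundancy is essentially the paper's: all three follow from the bijectivity of $\varphi$, Thm.~\ref{thm:non-redundant}, Lemma~\ref{lemma:folding-unique}, and the Lemma~\ref{lemma:singleton-rhs}-style induction you sketch.

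For left-reducedness, however, your route diverges from the paper's, and the paper's is substantially cleaner. The paper does \emph{not} go through the operational semantics of Alg.~\ref{alg:afolding} at all. It argues purely from Def.~\ref{def:folded}: supposing some $S\subsetneq Z$ with non-trivial $S\to A$ in $(\Sigma^\looparrowright)^+$, it first notes (via the same observation you make, that $Z\to A$ is the unique fd with rhs $A$) that $S\to Z$ must hold; it then shows that $S\to A$ must itself be \emph{folded}, because any witness $W$ to its being non-folded (i.e., $W\to S$, $S\not\to W$, $W\nsupseteq S$) would, composed with $S\to Z$, yield $W\to Z$ with $Z\not\to W$ and $W\neq Z$, contradicting that $Z\to A$ is folded. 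Two distinct foldings of $A$ then contradict Lemma~\ref{lemma:folding-unique}. No case split, no algorithm internals, no appeal to Prop.~\ref{prop:causal-ordering}.

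Your approach has two soft spots. First, your characterization of $A^\looparrowright$ as comprising only $\{\phi,\upsilon\}$ plus reingested cycle attributes is incomplete: domain variables (e.g., $t$) also have their $\phi$-fd suppressed by \textsf{h-encode} (Alg.~\ref{alg:h-encode}, line~6) and so also survive into $A^\looparrowright$ as sources; this is easily patched. Second, and more seriously, your Case~2 is a promissory note: ``reingestion is precisely the witness that no pseudo-transitive rewrite can eliminate $C$'' is the claim to be proved, not a proof, and the gesture toward Prop.~\ref{prop:causal-ordering} does not obviously supply the missing step, since what you actually need is non-derivability of $C$ from $A^\looparrowright\setminus\{C\}$ in $(\Sigma^\looparrowright)^+$, whereas Prop.~\ref{prop:causal-ordering} concerns stability of $C_\varphi^+$ under change of $\varphi$. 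The paper's declarative argument sidesteps this entirely.
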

\begin{proof}
See Appendix, \S\ref{a:folding-parsimonious}. 
$\Box$
\end{proof}

\section{Equivalence with Causal Ordering}\label{sec:connections}
\noindent
Now we show the equivalence of acyclic pseudo-transitive reasoning with causal ordering processing. We start with Theorem \ref{thm:connections1}, which establishes the equivalence between the notion of causal dependency and the fd encoding scheme presented in Chapter \ref{ch:encoding}.

\begin{mythm}\label{thm:connections1}
Let $\mathcal S(\mathcal E, \mathcal V)$ be a complete structure, $\varphi$ a total causal mapping over $\mathcal S$ and $\Sigma$ an fd set encoded through $\varphi$ given $\mathcal S$. Then $x_a, x_b \in \mathcal V$ are such that $x_b$ is causally dependent on $x_a$, i.e., $(x_a, x_b) \in C^+_\varphi$ iff there is some non-trivial fd $\langle X, B\rangle \in \Sigma^\vartriangleright$ with $A \in X$, where $B \mapsto x_b$ and $A \mapsto x_a$. 
\end{mythm}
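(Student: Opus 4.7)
The plan is to prove the biconditional by two separate inductive arguments, exploiting the tight correspondence between Armstrong's R5 rule and transitivity of the direct causal relation $C_\varphi$ established by the encoding scheme of Alg.~\ref{alg:h-encode}.

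For the ($\Leftarrow$) direction, I would induct on the number $n$ of R5 applications needed to derive $\langle X, B\rangle$ from the fds originally in $\Sigma$. For the base case ($n=0$), the fd $\langle X, B\rangle \in \Sigma$ was produced by \textsf{h-encode} from some equation $f \in \mathcal E$ with $\varphi(f) = x_b$, so by construction the non-epistemological attributes of $X$ are precisely those $A \in Z$ corresponding to variables in $Vars(f)\setminus\{x_b\}$; hence $A \in X$ with $A \mapsto x_a \in \mathcal V$ implies $x_a \in Vars(f)$ and $(x_a,x_b) \in C_\varphi \subseteq C^+_\varphi$ by Def.~\ref{def:causal-dependency}. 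For the inductive step, suppose $\langle X, B \rangle$ arises by R5 from $\langle Y, C\rangle$ and $\langle CW, B\rangle$ (both with strictly shorter derivations), so that $X = YW$. If $A \in W \subseteq CW$, the inductive hypothesis applied to $\langle CW, B\rangle$ already gives $(x_a, x_b) \in C^+_\varphi$. If instead $A \in Y$, the IH on $\langle Y, C\rangle$ yields $(x_a, x_c) \in C^+_\varphi$ (with $C \mapsto x_c$), while the IH on $\langle CW, B\rangle$ (noting $C$ sits in its lhs) yields $(x_c, x_b) \in C^+_\varphi$; transitivity of $C^+_\varphi$ closes the case.

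For the ($\Rightarrow$) direction, I would induct on the length $k$ of a causal chain $x_a = x_{i_0},\, x_{i_1},\, \ldots,\, x_{i_k} = x_b$ witnessing $(x_a,x_b)\in C^+_\varphi$. The base case $k=1$ is direct causal dependency: some $f \in \mathcal E$ has $\varphi(f)=x_b$ and $x_a \in Vars(f)$, so Alg.~\ref{alg:h-encode} explicitly produces an fd $\langle X, B\rangle \in \Sigma \subseteq \Sigma^\vartriangleright$ with $A \in X$, and it is non-trivial because $A \mapsto x_a \neq x_b \mapsto B$ (as $\varphi$ is a bijection). For the inductive step, split the chain at the intermediate $x_c = x_{i_1}$: by the base case we have $\langle X_1, C\rangle \in \Sigma$ with $A \in X_1$, and by IH we have a non-trivial $\langle X_2, B\rangle \in \Sigma^\vartriangleright$ with $C \in X_2$. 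A single R5 application then derives $X_1 \cup (X_2\setminus\{C\}) \to B$ in $\Sigma^\vartriangleright$; this fd contains $A$ and is non-trivial because $B$ is the image under $\varphi$ of $x_b$ which cannot causally depend on itself in the $\varphi$-induced ordering (Proposition~\ref{prop:causal-ordering} guarantees an unambiguous causal closure without spurious self-dependency beyond strongly coupled clusters, and the encoding reflects exactly those asymmetries).

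The main obstacle I expect is the careful bookkeeping around the epistemological attributes $\phi$ and $\upsilon$, which appear in the lhs of the encoded fds but do not correspond to any variable in $\mathcal V$: the statement only quantifies over $A, B$ that map to $\mathcal V$, so these extra attributes must be allowed to appear in $X$ without participating in the correspondence $A \mapsto x_a$. A related subtlety is the preservation of non-triviality under R5, which relies on the fact that $\varphi$ never maps an equation $f$ to a variable appearing on both sides after pseudo-transitive derivation in a causally consistent manner; this is precisely where the completeness of $\mathcal S$ and Proposition~\ref{prop:causal-ordering} are needed to rule out degenerate cyclic derivations that would otherwise trivialize the fd.
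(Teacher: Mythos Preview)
Your approach mirrors the paper's almost exactly: both directions proceed by induction (on R5 derivations for $\Leftarrow$, on causal-chain length for $\Rightarrow$), with the same base cases and essentially the same inductive steps; your case split ``$A\in W$ versus $A\in Y$'' in the $\Leftarrow$ direction is just a more explicit rendering of what the paper does when it writes ``let $D\in Y$ and $E\in Z$.'' The paper is in fact sketchier than you are, dispatching the entire $\Rightarrow$ direction with a one-line gesture at a ``symmetrical inductive argument.''

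The one genuine weak point is your resolution of the non-triviality obstacle. Proposition~\ref{prop:causal-ordering} asserts only that $C^+_{\varphi}$ is independent of the choice of total causal mapping; it does \emph{not} rule out $(x_b,x_b)\in C^+_\varphi$, which does occur whenever $x_b$ lies in a strongly coupled cluster. Concretely, in your $\Rightarrow$ inductive step, if $x_b$ also appears in the equation mapped to $x_c$ (so $B\in X_1$), the R5-derived fd $X_1\cup(X_2\setminus\{C\})\to B$ is trivial and the argument stalls. The paper does not close this edge case either---in its $\Leftarrow$ induction it simply stipulates $B\notin Y$, and its $\Rightarrow$ induction is left implicit---so you have correctly located the delicate point; but your appeal to Proposition~\ref{prop:causal-ordering} does not resolve it.
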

\begin{proof}
We prove the statement by induction. We consider first the `if' direction, and then its `only if' converse. See Appendix \S\ref{a:connections1}.
$\Box$
\end{proof}

Def. \ref{def:first-cause} then gives useful terminology for a neat concept towards our goal in this chapter.

\begin{mydef}\label{def:first-cause}
Let $\mathcal S(\mathcal E, \mathcal V)$ be a structure with variables $x_a, x_b \in \mathcal V$, and $\varphi$ a total causal mapping over $\mathcal S$ inducing set of direct causal dependencies $C_{\varphi}$ and its transitive closure $C^+_{\varphi}$. We say that $x_a$ is a \textbf{first cause} of $x_b$ in $\mathcal S$ if $(x_a, x_b) \in C^+_{\varphi}$ and, for no $x \in \mathcal V$, we have $(x, x_a) \in C^+_{\varphi}$.
\end{mydef}

Proposition \ref{prop:exogenous} connects the notion of first cause with those of exogenous and endogenous variables introduced in Chapter \ref{ch:encoding}.

\begin{myprop}\label{prop:exogenous}
Let $\mathcal S(\mathcal E, \mathcal V)$ be a structure with variable $x  \in \mathcal V$. Then $x$ can only be a first cause of some $y \in \mathcal V$ if $x$ is exogenous. Accordingly, any variable $y \in \mathcal V$ can only have some first cause $x \in \mathcal V$ if it is endogenous.
\end{myprop}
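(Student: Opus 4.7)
The plan is to prove both implications by reducing them to the construction of direct causal dependencies $C_\varphi$ in Eq.~(\ref{eq:direct-causal-dependencies}) and the exogeneity clause of Def.~\ref{def:exogenous}. The core observation, which I would establish as a short preliminary remark, is this: if $x$ is exogenous then the unique equation $f$ with $\varphi(f)=x$ satisfies $Vars(f)=\{x\}$, so the only pair of the form $(z,x)$ that can belong to $C_\varphi$ is $(x,x)$; conversely, if $x$ is endogenous then the unique $f$ with $\varphi(f)=x$ must contain some $z\in Vars(f)$ with $z\neq x$, immediately giving $(z,x)\in C_\varphi\subseteq C_\varphi^+$.

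For the first assertion (``$x$ is a first cause $\Rightarrow$ $x$ is exogenous'') I would argue by contrapositive. Assuming $x$ endogenous, the observation above produces some $z\neq x$ with $(z,x)\in C_\varphi^+$, which directly violates the ``no predecessor'' clause of Def.~\ref{def:first-cause}, so $x$ cannot be a first cause of any $y\in\mathcal V$.

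For the second assertion (``$y$ has a first cause $x$ $\Rightarrow$ $y$ is endogenous'') I would first record a small lemma: no variable can be a first cause of itself, because $(y,y)\in C_\varphi^+$ itself witnesses a predecessor of $y$; hence every first cause $x$ of $y$ satisfies $x\neq y$. Then, arguing again by contrapositive, I would suppose $y$ is exogenous and prove by induction on the length $n\geq 1$ of a chain $x=z_0,z_1,\dots,z_n=y$ of direct causal dependencies witnessing $(x,y)\in C_\varphi^+$ that $x=y$. The base case $n=1$ is immediate: $(x,y)\in C_\varphi$ forces $x\in Vars(f)=\{y\}$, where $f$ is the unique equation with $\varphi(f)=y$, so $x=y$. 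For the inductive step, the last edge $(z_{n-1},y)\in C_\varphi$ forces $z_{n-1}=y$ by the same reasoning; then the prefix $x=z_0,\dots,z_{n-1}=y$ is a shorter chain witnessing $(x,y)\in C_\varphi^+$, and the inductive hypothesis applies. This contradicts $x\neq y$, so $y$ must be endogenous.

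The only subtlety I anticipate is the presence of the reflexive pairs $(y,y)$ in $C_\varphi$ --- since every $y$ appears in its own $\varphi$-image equation --- which prevent a one-line argument from the last step of the chain and are exactly what the inductive shortening above is designed to absorb.
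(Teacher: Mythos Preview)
Your argument for the first assertion is essentially identical to the paper's: both argue by contradiction/contrapositive that an endogenous $x$ has some $z\neq x$ with $(z,x)\in C_\varphi$, violating the no-predecessor clause.

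For the second assertion the paper takes the short route: since $y$ is exogenous, $Vars(\varphi^{-1}(y))=\{y\}$, so (implicitly treating $C_\varphi$ as irreflexive) there is no pair $(\,\cdot\,,y)\in C_\varphi$ at all, hence none in $C_\varphi^+$, and $y$ cannot have a first cause. Your induction on chain length is a more careful treatment that works even when the literal Eq.~(\ref{eq:direct-causal-dependencies}) is read as including the reflexive pairs $(y,y)$, a subtlety the paper's proof simply glosses over. That extra care is legitimate.

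There is, however, one point worth sharpening. Your ``small lemma'' that no variable is a first cause of itself, under the very reading you adopt, actually proves much more: since $(x,x)\in C_\varphi$ for \emph{every} $x$ (because $x\in Vars(\varphi^{-1}(x))$ by Def.~\ref{def:tcm}), the no-predecessor clause of Def.~\ref{def:first-cause} is violated for every variable, and no first cause exists at all. This makes both assertions vacuously true and your induction superfluous. So you should either (i) note this collapse explicitly and stop, or (ii) adopt the paper's tacit convention that $C_\varphi$ records only pairs $(x_a,x_b)$ with $x_a\neq x_b$ (consistent with the causal graph in Fig.~\ref{fig:causal-graph}, which has no self-loops), in which case your lemma no longer holds but the paper's direct argument---no edge enters an exogenous $y$---does the job without induction.
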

\begin{proof}
Straightforward from definitions, see Appendix \S\ref{a:exogenous}. 
$\Box$
\end{proof}

Note that exogenous variables are encoded into fd's $X \to A$ with $\phi \in X$. Since the values of such variables are assigned ``outside'' the system (cf. Remark \ref{rmk:modeling}), they are devoid of indirect causal dependencies and then have no uncertainty except for their own. Thus, we have not to be concerned at all with processing the causal (uncertainty) chaining towards them. Our goal is rather find the first causes of the endogenous variables (predictive attributes). 

We shall need then the terminology of Def. \ref{def:y-projection}, and then we introduce Lemma \ref{lemma:connections} paving the way to our goal.

\begin{mydef}\label{def:y-projection}
Let $\mathcal S$ be a structure, and $\Sigma$ be a set of fd's encoded over it. Then $\Upsilon(\Sigma)$, the \textbf{$\upsilon$-projection of $\Sigma$}, is the subset of fd's $X \to A$ such that $\upsilon \in X$. Accordingly, $\Phi(\Sigma)$, the \textbf{$\phi$-projection of $\Sigma$}, is the subset of fd's $X \to A$ such that $\upsilon \notin X$.
\end{mydef}

Fig. \ref{fig:y-folding-projection} illustrates the $\upsilon$-projection of an fd set and the folding applied over such fd subset in order to compute the first causes of endogenous variables.

\begin{mylemma}\label{lemma:connections}
Let $\mathcal S(\mathcal E, \mathcal V)$ be a complete structure, $\varphi$ a total causal mapping over $\mathcal S$ and $\Sigma$ an fd set encoded through $\varphi$ given $\mathcal S$. Then a variable $x_a \in \mathcal V$ can only be a first cause of some variable $x_b \in \mathcal V$, where $\langle X, B\rangle \in \Sigma$, and $B \mapsto x_b$, $A \mapsto x_a$, if either (i) $A \in X$ or (ii) $A \notin X$ but there is $\langle Z, C\rangle \in \Sigma^\vartriangleright\!$ with $A \in Z$ and $C \in X$. 
\end{mylemma}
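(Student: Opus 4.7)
My plan is to prove the necessary condition by case analysis on whether $A$ belongs to $X$. Assume $x_a$ is a first cause of $x_b$, so in particular $(x_a, x_b) \in C^+_\varphi$ by Definition \ref{def:first-cause}. A preliminary observation I would make, by inspection of the encoding scheme (Algorithm \ref{alg:h-encode}), is that for the encoded fd $\langle X, B\rangle \in \Sigma$ and any non-special attribute $A$ we have $A \in X$ if and only if $x_a \in \operatorname{Vars}(\varphi^{-1}(x_b)) \setminus \{x_b\}$, i.e., by Eq.~\eqref{eq:direct-causal-dependencies}, if and only if $(x_a, x_b) \in C_\varphi$. Thus case (i) corresponds exactly to $x_a$ being a \emph{direct} causal dependency of $x_b$, and the content of the lemma is to verify case (ii) whenever $x_a$ is only an indirect cause. (Note that by Proposition \ref{prop:exogenous}, $x_b$ must be endogenous for first causes to exist at all, otherwise the statement is vacuous.)

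For case (ii), suppose $A \notin X$. Then $(x_a, x_b) \notin C_\varphi$ but $(x_a, x_b) \in C^+_\varphi$, so by the definition of transitive closure there exists a chain $x_a = x_{d_0},\, x_{d_1},\, \ldots,\, x_{d_k} = x_b$ with $k \geq 2$ and each consecutive pair in $C_\varphi$. The final link $(x_{d_{k-1}}, x_b) \in C_\varphi$ makes $x_{d_{k-1}}$ a direct cause of $x_b$, whence its corresponding attribute, which I set as $C$, lies in $X$ by the preliminary observation. The prefix of the chain certifies that $(x_a, x_{d_{k-1}}) \in C^+_\varphi$, and applying Theorem \ref{thm:connections1} to this pair yields a non-trivial fd $\langle Z, C\rangle \in \Sigma^\vartriangleright$ with $A \in Z$. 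This exhibits the witnesses $C \in X$ and $\langle Z, C\rangle \in \Sigma^\vartriangleright$ with $A \in Z$ required by case (ii), completing the argument.

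The proof is essentially an unwinding of definitions once Theorem \ref{thm:connections1} is available. The only subtle point I anticipate is properly justifying the bijection between non-special lhs attributes of the encoded fd and the direct causal dependencies of its target variable; this requires reading off Algorithm \ref{alg:h-encode} and noting that for an endogenous $x_b$ the encoded fd has the form $(\{\upsilon\} \cup \{A' \mid x_{a'} \in \operatorname{Vars}(\varphi^{-1}(x_b)) \setminus \{x_b\}\}) \to B$, so that the non-epistemological portion of the lhs coincides (under $\mapsto$) with the direct causes of $x_b$. Beyond this bookkeeping, no further technical difficulty is expected.
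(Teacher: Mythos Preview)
Your proof is correct and takes a cleaner route than the paper's. Both arguments rest on Theorem \ref{thm:connections1}, but they apply it at different points. The paper applies Theorem \ref{thm:connections1} directly to the pair $(x_a, x_b)$, obtaining some non-trivial $\langle Z', B\rangle \in \Sigma^\vartriangleright$ with $A \in Z'$, and then argues on the fd side: since $\langle X, B\rangle$ is the unique fd in $\Sigma$ with $B$ on the right and $A \notin X$, the derivation of $Z' \to B$ by R5 must have consumed some $\langle Z, C\rangle$ with $A \in Z$ and $C \in X$. Your argument instead stays on the causal-graph side: you unfold $(x_a, x_b) \in C^+_\varphi$ into a chain, pick the penultimate vertex as $x_c$ (so $C \in X$ by the direct-cause correspondence you spell out), and only then invoke Theorem \ref{thm:connections1} on the shorter pair $(x_a, x_c)$ to produce $\langle Z, C\rangle \in \Sigma^\vartriangleright$ with $A \in Z$. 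Your route avoids the slightly informal reasoning about how R5 derivations must factor, and the bookkeeping you flag (reading off Algorithm \ref{alg:h-encode} that the non-epistemological lhs attributes correspond exactly to direct causes) is the right justification for the preliminary observation. One small point worth making explicit: in a shortest chain with $k \geq 2$, the penultimate $x_{d_{k-1}}$ has a proper direct cause and is therefore endogenous, so the encoded fd for it indeed exists and Theorem \ref{thm:connections1} applies cleanly.
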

\begin{proof}
We prove the statement by construction out of Theorem \ref{thm:connections1}, see Appendix \S\ref{a:lemma-connections}. 
$\Box$
\end{proof}

\begin{figure}[t]
\begin{framed}
\vspace{-11pt}
\begin{subfigure}{0.45\columnwidth}
\begin{eqnarray*}
\Sigma = \{\quad 
\phi &\to& x_1,\\ 
\phi &\to& x_2,\\ 
\phi &\to& x_3,\\ 
x_1\,x_2\,x_3\,x_5\,\upsilon &\to& x_4,\\
x_1\,x_3\,x_4\,\upsilon &\to& x_5,\\
x_4\,\upsilon &\to& x_6,\\
x_5\,\upsilon &\to& x_7 \;\}.
\end{eqnarray*}
\end{subfigure}
\hspace{-10pt}
\begin{subfigure}{0.45\columnwidth}
\begin{eqnarray*}
\,\Upsilon(\Sigma)^\looparrowright \!= \{\; 
x_1\;x_2\;x_3\,\upsilon\,x_5 &\to& x_4,\\
x_1\;x_2\;x_3\,\upsilon\,x_4 &\to& x_5,\\
x_1\;x_2\;x_3\,\upsilon\,x_5 &\to& x_6,\\
x_1\;x_2\;x_3\,\upsilon\,x_4 &\to& x_7 \;\}.
\end{eqnarray*}
\end{subfigure}
\end{framed}
\vspace{-11pt}
\caption{Fd set $\Sigma$ encoding the structure of Fig. \ref{fig:coa-a} and the folding $\Upsilon(\Sigma)^\looparrowright$ of its $\upsilon$-projection.}
\label{fig:y-folding-projection}
\vspace{-9pt}
\end{figure}

Finally, Theorem \ref{thm:connections2} further clarifies the purpose of the folding and its meaning in terms of causal ordering.

\begin{mythm}\label{thm:connections2}
Let $\mathcal S(\mathcal E, \mathcal V)$ be a complete structure, $\varphi$ a total causal mapping over $\mathcal S$ and $\Sigma$ an fd set encoded through $\varphi$ given $\mathcal S$. Now, let $B$ be an attribute that encodes some variable $x_b \in \mathcal V$.
If $\langle X, B\rangle \in \Upsilon(\Sigma)^\looparrowright\!$,\footnote{Note that the folding is taken w.r.t. the $\upsilon$-projection of $\Sigma$, then $x_b$ where $B \mapsto x_b$ is an endogenous variable.} then every first cause $x_a$ of $x_b$ (if any) is encoded by some attribute $A \in X$.
\end{mythm}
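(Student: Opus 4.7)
The plan is to pin down any first cause $x_a$ of $x_b$ as an attribute that is both (i) necessarily drawn into the lhs during \textsf{AFolding}'s backtracing and (ii) never consumed out of it, so that it must survive into the output $X$.

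First I would fix a first cause $x_a$ of $x_b$ and invoke Proposition \ref{prop:exogenous} to conclude that $x_a$ is exogenous. By the encoding scheme (Algorithm \ref{alg:h-encode}, lines 6--8), the attribute $A$ that encodes $x_a$ therefore appears in $\Sigma$ only as the rhs of a $\phi$-fd of the form $\phi \to A$, which by Definition \ref{def:y-projection} lies in $\Phi(\Sigma)$ and \emph{not} in $\Upsilon(\Sigma)$. Hence no fd in $\Upsilon(\Sigma)$ carries $A$ on its rhs. Since $\langle X, B\rangle \in \Upsilon(\Sigma)^\looparrowright$ is computed by \textsf{AFolding}$(\Upsilon(\Sigma), B)$ (by Theorem \ref{thm:afolding} and Corollary \ref{cor:folding}), the conditional test ``$B \in A^\star$'' on line 9 of Algorithm \ref{alg:afolding} is never triggered with $B$ instantiated to $A$; thus $A$ is never moved into $\Delta$ and never added to $\Lambda$. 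Consequently, once $A$ enters $A^\star$ it remains in $A^\star \setminus \Lambda$ at termination, i.e.\ $A \in X$.

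Next I would show that $A$ does in fact enter $A^\star$. Since $(x_a, x_b) \in C^+_\varphi$ (by Definition \ref{def:first-cause}), there is a path of direct causal dependencies $x_a = x_{i_0} \to x_{i_1} \to \cdots \to x_{i_n} = x_b$ in $G_\varphi$. Every intermediate $x_{i_k}$ for $k \geq 1$ must be endogenous, because a direct causal dependency into an exogenous variable would (by Definition \ref{def:exogenous}) be trivial. Consequently, the fd encoding the direct causes of each such $x_{i_k}$ takes the form $\langle Y_k \cup \{\upsilon\}, B_{i_k}\rangle$ and belongs to $\Upsilon(\Sigma)$, with $B_{i_{k-1}} \in Y_k$ (where $B_{i_0} = A$). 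A straightforward reverse induction on $k = n, n-1, \ldots, 0$ over the outer loop of \textsf{AFolding} then shows that the attribute $B_{i_k}$ is added to $A^\star$: at step $k+1$ of the induction, $B_{i_{k+1}} \in A^\star$, and the inner loop iterates over the fd $\langle Y_{k+1} \cup \{\upsilon\}, B_{i_{k+1}}\rangle$, at which point line 11 extends $A^\star$ with $Y_{k+1} \cup \{\upsilon\} \ni B_{i_k}$. Applied all the way down the chain, this yields $A = B_{i_0} \in A^\star$, and by the previous paragraph $A \in A^\star \setminus \Lambda = X$.

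The main obstacle I anticipate is reconciling this backtracing with the cycle-handling bookkeeping on lines 13--15 of Algorithm \ref{alg:afolding}: one must argue that reingestion into $\Lambda$ along some branch of the causal chain cannot prevent $A$ from ultimately being retained in $A^\star \setminus \Lambda$. This, however, is handled by the first step above --- the reingestion mechanism only ever touches attributes that appear on the rhs of some fd of $\Upsilon(\Sigma)$, and $A$ does not. A minor auxiliary point is that, because $x_a$ is exogenous and has no incoming direct causal dependency, $A$ cannot lie on any cycle in the chain being contracted, so the inductive argument proceeds without interference. Together these observations yield $A \in X$, which is precisely the claim of the theorem.
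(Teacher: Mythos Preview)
Your proof is correct and takes a genuinely different route from the paper's. The paper argues by contradiction at the level of the pseudo-transitive closure: assuming a first cause $x_c$ with $C\notin X$, it invokes Lemma~\ref{lemma:connections} to locate $C$ either in the lhs of the encoded fd $\langle Y,B\rangle\in\Upsilon(\Sigma)$ or in the lhs of some $\langle Z,D\rangle\in\Upsilon(\Sigma)^\vartriangleright$ with $D\in Y$, and in each case uses the fact that no fd in $\Upsilon(\Sigma)$ has $C$ on its rhs (so $C$ ``cannot be consumed by R5'') together with uniqueness of the folding (Lemma~\ref{lemma:folding-unique}) to reach a contradiction. You instead work operationally: identifying $X$ with the output of \textsf{AFolding}$(\Upsilon(\Sigma),B)$ via Theorem~\ref{thm:afolding}, you show directly that (i) $A$ never enters $\Lambda$ because it is never an rhs in $\Upsilon(\Sigma)$, and (ii) $A$ must enter $A^\star$ by backward induction along a causal chain from $x_b$ to $x_a$.

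Both arguments pivot on the same elementary fact (an exogenous attribute never sits on the rhs of an $\upsilon$-fd), but they exploit it differently. Your approach is more concrete and sidesteps both the case split of Lemma~\ref{lemma:connections} and the explicit appeal to folding uniqueness; the paper's approach stays at the declarative level of Definitions~\ref{def:ptc}--\ref{def:folding} and does not depend on the particular algorithm that realizes the folding. One minor wording issue: your discussion of lines~13--15 speaks of ``reingestion into $\Lambda$'', but that block only \emph{removes} from $\Lambda$, so it can only help your conclusion; your final point---that $A$ is never touched there because it is never an rhs---is the right one regardless.
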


\begin{proof}
We show that the existance of a missing first cause $x_c$ of $x_b$ for folded $X \xrightarrow{\looparrowright} B$, where $B \mapsto x_b$ and $C \mapsto x_c$ but $C \notin X$ leads to a contradiction. See Appendix \S\ref{a:thm-connections2}. 
$\Box$
\end{proof}

\begin{myremark}\label{rmk:ptc-folding}
Observe that, on the one hand, the goal of computing the transitive closure $C^+_\varphi$ of a set of induced causal dependencies $C_\varphi$ is to derive the entire causal ordering of a given structure. The goal of folding, on the other hand, is not to discover all variables (attributes) a given variable (attribute) is causally dependent on, but only all of its first causes (if any). 
$\Box$
\end{myremark}

In particular, the results just shown comprise a method to compute, for each endogenous variable (predictive attribute), all of its first causes. This is a core goal of the reasoning device developed in this chapter in order to enable the automatic synthesis of hypotheses as uncertain and probabilistic (U-relational) data.

\section{Experiments}\label{sec:folding-experiments}

\noindent
Fig. \ref{fig:folding-experiments} shows the results of experiments we have carried out in order to study how effective the causal reasoning over fd's is in practice, in particular its behavior for hypotheses whose structure $\mathcal S$ has been randomly generated over orders of magnitude $|\mathcal S| \approx 2^k$, to have length up to $|\mathcal S| \approx 2^{20} \lesssim 1M$. The largest structure considered, with $|\mathcal S|\approx 1M$, has been generated to have exactly $|\mathcal E| = 2.5K$. 
Like in the experiments of the previous chapter, we have executed ten runs for each tested order of magnitude, and then taken its mean running time in $ms$.\footnote{The experiments were performed on a 2.3GHZ/4GB Intel Core i5 running Mac OS X 10.6.8.} 

The plot is shown in Fig. \ref{fig:tcm-experiments} is in logscale base 2. Notice the linear rate of growth across orders of magnitude (base 2) from $1K$ to $1M$ sized structures. For a growth factor of 2 in structure length (doubled), the time required by causal reasoning grows a factor of 2 (doubled as well). 
These scalability results are compatible with the computational complexity of \textsf{folding}, which is bounded by $O(|\mathcal S|^2)$. Yet, that is a bit overestimated time bound as we see in the plot of Fig. \ref{fig:folding-experiments}.

\begin{figure}[t]
\advance\leftskip 0.6cm
\begin{tikzpicture}[y=.015cm, x=2cm,font=\sffamily,scale=1.0]
    \begin{axis}[
        height=0.4\textwidth,
        width=0.75\textwidth,
        x label style={at={(axis description cs:0.5,-0.23)},anchor=south},
        xlabel={structure length $|\mathcal S|$},
        y label style={at={(axis description cs:0.01,.5)},anchor=south},
        ylabel={time [ms]},
        xmode=log,
       log basis x={2},
        ymode=log,
       log basis y={2}
]
      \addplot[mark=*, mark size=1.2pt, mark options={fill=black}] 
		file {./data/folding.data}; 
     \end{axis}
	\begin{scope}[shift={(0.5,220)}] 
	\draw (0,0) -- 
		plot[mark=*, mark size=1.2pt, mark options={fill=black}] (0.1,0) -- (0.2,0)
		node[right]{\textsf{folding}};
	\end{scope}
\end{tikzpicture}
\caption{Performance of acyclic causal reasoning over fd's (logscale).}\label{fig:folding-experiments}
\vspace{-8pt}
\end{figure}

\section{Related Work}\label{sec:related-work-reasoning}

The concept of fd set folding and the design of (Alg. $\!$\ref{alg:afolding}) \textsf{AFolding} as a not quite obvious variant of \textsf{XClosure}, is an original approach to the problem of processing the causal ordering of a hypothesis via \emph{acyclic pseudo-transitive reasoning over fd's}. To the best of our knowledge, such a specific form of fd reasoning was an yet unexplored problem in the database research literature (reasoning over fd's is extensively covered in Maier \cite{maier1983}).

Recent years have seen the emergence of some foundational work in causality in databases \cite{meliou2010}. It is motivated for improving DB usability in terms of providing users with explanations to query answers (and non-answers). Essentially, the idea is borrowed from AI work on causality (cf. \S\ref{sec:related-work-encoding}) to identify causal ordering between tuples. Given a query and its result set, the system should be able to explain to the user what tuples `caused' that answer, or why possibly expected tuples are missing. That requires causal chain of tuples for a given query, which can be computationally expensive as the database instance can be very large. For conjunctive queries, the causality is said to be computed very efficiently \cite{meliou2010b}. 

A more specific problem addressed by Kanagal et. al is the so-called sensitivity analysis \cite{kanagal2011}, which is aimed at establishing a more refined connection between the query answer (output) and elements of the DB instance (input) for supporting user interventions. Instead of providing the user with causes and non-causes, the goal is to enable the user to know how changes in the input affect the output. This line of work is strongly related to the vision of `reverse data management' \cite{meliou2011}.

Causal reasoning in the presence of constraints (viz., fd's) is an yet unexplored topic, though called for as worth of future work by Meliou et al. \cite[p. 3]{meliou2010}. The fd's are rich information that can be exploited for the sake of explanation and sensitivity analysis. Once they are available, it is intuitive that the search space of such problems shall be significantly reduced. 

In fact, our encoding of equations into fd's captures the causal chain from exogenous (input) to endogenous (output) tuples in at schema level. Nonetheless our form of causal reasoning over fd's is geared for hypothesis management and analytics, from a uncertainty management point of view. A concrete connection to causality in DB's is not yet established.

\section{Summary of Results }\label{sec:reasoning-conclusions}

In this chapter we have studied and developed a technique for acyclic causal reasoning over fd's. We list the results achieved as follows.

\begin{itemize}
\item We have developed principled concepts and a core algorithm, viz., (Alg. \ref{alg:folding}) the \textsf{folding}, in order to perform acyclic pseudo-transitive reasoning over fd's. This is towards an efficient method for causal reasoning, yet elegant as a database formalism for the systematic construction of hypothesis probabilistic DB's.

\item We have given a reasonably tight time bound for the behavior of such reasoning device in terms of the structure given as input. We have established (cf. Theorem \ref{thm:afolding}, Corollary \ref{cor:folding}) the time bound of $O(|\mathcal S|^2)$ for the \textsf{folding} algorithm.

\item We have shown the correctness of the \textsf{folding} algorithm in connection with causal reasoning (cf. Theorem \ref{thm:connections1}, Theorem \ref{thm:connections2}). 

\item We have defined the core notion of first causes (cf. Def. \ref{def:first-cause}, Proposition \ref{prop:exogenous}), which is meant to guide the procedure of \textsf{U-intro} (Chapter \ref{ch:synthesis4u}) by precisely capturing the uncertainty factors on endogenous variables (predictive attributes). This is similar, yet markedly different from computing the transitive closure of causal dependencies (cf. Remark \ref{rmk:ptc-folding}).

\item We have performed experiments (cf. Fig. \ref{fig:folding-experiments}) to study how effective the approach of causal reasoning over fd's is in practice, or how it scales for hypotheses whose structure $\mathcal S$ is randomly generated to have length up to the order of $|\mathcal S| \lesssim 1M$. The experiments show that the time bound, though already effective for very large structures, are a bit overestimated.
\end{itemize}


\chapter{Probabilistic Database Synthesis}\label{ch:synthesis4u}
\noindent
In this chapter we present a technique to synthesize hypothesis U-relations. At this stage of the pipeline, relational schema $\boldsymbol H$ is loaded with datasets computed from the hypotheses under alternative trials (input settings). The challenge is how to model or design its probabilistic version (i.e., render the U-relations $\boldsymbol Y$) so that it is suitable for data-driven hypothesis management and analytics. 

In \S\ref{sec:u-relations} we introduce U-relational DB's. In \S\ref{sec:u-intro} we present a running example to illustrate the uncertainty introduction procedure (\textsf{U-intro} in the pipeline, cf. Fig. \ref{fig:pipeline}). 
In \S\ref{sec:u-factorization} we present the technique to factorize the uncertainty present in the `big' fact table in terms of the well-defined uncertainty factors. In \S\ref{sec:u-propagation} then we show how to propagate such uncertainty into the predictive attributes properly, i.e., based on their first causes detected as shown in Chapter \ref{ch:reasoning}. 
In \S\ref{sec:related-work-synthesis4u}, we discuss related work. 
Finally, in \S\ref{sec:synthesis4u-conclusions} we conclude the chapter.
\footnote{We postpone the presentation of experiments on p-DB synthesis (\textsf{U-intro} as a whole) to \S\ref{sec:experiments-app}.}

\section{Preliminaries: U-Relations and Probabilistic WSA}\label{sec:u-relations}
\noindent
Three remarkable features of U-relations are: \emph{expressiveness} (being closed under positive relational algebra queries); \emph{succinctness} (efficient storage of a very large number of possible worlds through vertical decompositions to support attribute-level uncertainty); and \emph{efficient query processing} (including confidence computation) \cite{koch2009}. 

A U-relational database or U-database is a finite set of structures,
\begin{center}
$\boldsymbol W = \{\,\langle R_1^1,\, ...,\, R_m^1,\, p^{[1]}\rangle,\, ...,\, \langle\, R_1^n,\, ..., R_m^n,\, p^{[n]}\,\rangle\,\}$,
\end{center}
 of relations $R_1^i,\, ...,\, R_m^i$ and numbers $0 < p^{[i]} \leq 1$ such that
$\sum_{\,1\leq\, i \,\leq n} p^{[i]} = 1$. An element $R_1^i,\, ...,\, R_m^i,\, p^{[i]} \in \boldsymbol W$ is a \emph{possible world}, with $p^{[i]}$ being its probability \cite{koch2009}.

Probabilistic world-set algebra (p-$\!$WSA) consists of the operations of relational algebra, an operation for computing tuple confidence \textsf{conf}, and the \textsf{repair-key} operation for introducing uncertainty --- by giving rise to alternative worlds as maximal-subset repairs of an argument key \cite{koch2009}.

Let $R_\ell[U]$ be a relation, and $XA \subseteq U$. For each possible world $\langle R_1, ..., R_m,$ $p\rangle \in \boldsymbol W$, let $A \in U$ contain only numerical values greater than zero and let $R_\ell$ satisfy the fd $(U \setminus A) \to U$. Then, \textsf{repair-key} is:

\vspace{12pt}
$\llbracket \textsf{repair-key}_{X@A}(R_\ell) \rrbracket (\boldsymbol W\!) := \left\{ \,\langle\, R_1,  ..., R_\ell, R_m, \hat{R_\ell}\,[U \setminus A], \,\hat{p}\,\rangle\, \right\}$,
\vspace{12pt}

\noindent
where $\langle R_1, ..., R_\ell,\, R_m,\, p \rangle \in \boldsymbol W$, $\hat{R}_\ell$ is a maximal repair of fd $X \to U$ in $R_\ell$, and $\hat{p} = p\, \cdot \displaystyle\prod_{t \in \hat{R}_\ell} \frac{t.B}{\sum_{s \,\in\, R_\ell \,:\, s.X\,=\,t.X} s.B}$.

\vspace{10pt}
\noindent
U-relations (cf. Fig. \ref{fig:maybms}) have in their schema a set of pairs $(V_i, D_i)$ of \emph{condition columns} (cf.$\!$ \cite{koch2009}) to map each discrete random variable $\textsf{x}_i$ to one of its possible values \mbox{(e.g., $\textsf{x}_1 \!\mapsto\! 1$)}. The world table $W\!$ stores their mar\-ginal probabilities (cf. $\!$ the notion of \emph{pc-tables} \cite[Ch. $\!$2]{suciu2011}). 
For an illustration of the data transformation from certain to uncertain relations, consider query (\ref{eq:explanation}) in p-WSA's extension of relational algebra, whose result set is materialized into U-relation \textsf{Y}$_0$ as shown in (Fig. \ref{fig:maybms}).
\vspace{-10pt}
\begin{eqnarray}
\! Y_0 \,:=\, \pi_{\phi,\upsilon}( \textsf{repair-key}_{\phi @\textsf{Conf}} (H_0)\,).
\label{eq:explanation}
\end{eqnarray}
\noindent
Also, let $R[\,\overline{V_i\,D_i} \;|\, sch(R)\,],\, S[\,\overline{V_j\,D_j} \;|\, sch(S)\,]$ be two U-relations, where $R.\,\overline{V_i\,D_i}$ is the union of all pairs of condition columns $V_i\,D_i$ in $R$, then operations of selection $\llbracket\, \sigma_\psi(R)\,\rrbracket$, projection $\llbracket\, \pi_Z(R) \,\rrbracket$, and product $\llbracket R \times S \rrbracket$ issued in relational algebra are rewritten in positive relational algebra on U-relations:\vspace{-4pt}\\

$\!\!\!\!\!\llbracket \sigma_\psi(R) \rrbracket := \sigma_\psi(R[\overline{V_i\,D_i}\,\,|\, sch(R)])$;\vspace{3pt}

$\!\!\!\!\!\llbracket\, \pi_Z(R) \,\rrbracket := \pi_{\,\overline{V_i\,D_i}\,Z}(R)$;\vspace{3pt}

$\!\!\!\!\!\llbracket R \times S \rrbracket := \pi_{(R.\overline{V_i\,D_i} \;\cup\; S.\overline{V_i\,D_i}) \to \overline{V\,D} \;\cup\; sch(R) \,\cup\, sch(S)} ( R \bowtie_{R.\overline{V_i\,D_i} \;\textsf{is consistent with}\; S.\overline{V_j\,D_j}} S )$.
\vspace{1pt}

If $R$ and $S$ have $k$ and $\ell$ pairs of condition columns each, then $\llbracket R \times S \rrbracket$ returns a U-relation with $k + \ell$ such pairs. If $k=0$ or $\ell=0$ (or both), then $R$ or $S$ (or both) are classical relations, but the rewrite rules above apply accordingly. All that rewriting is parsimonious translation (sic. \cite{koch2009}): the number of algebraic operations does not increase and each of the operations selection, projection and product/join remains of the same kind. Query plans are hardly more complicated than the input queries. In fact, it has been verified hat off-the-shelf relational database query optimizers do well in practice.

For a comprehensive overview of U-relations and p-WSA we refer the reader to \cite{koch2009}. In this thesis we look at U-relations from the point of view of p-DB design, for which no methodology has yet been proposed. We are concerned in particular with hypothesis management applications \cite{goncalves2014}.

\begin{figure}[t]
\centering
\begingroup\setlength{\fboxsep}{2pt}
\colorbox{blue!7}{%
   \begin{tabular}{c|c|c|c}
  \textsf{H}$_0$ & $\phi$ & $\upsilon$ & \textsf{Conf}\\
      \hline    
   & $1$ & $1$ & 2\\
   & $1$ & $2$ & 2\\
   & $1$ & $3$ & 1\\
   \end{tabular}
}\endgroup\vspace{2pt}\\
\begingroup\setlength{\fboxsep}{2pt}
\colorbox{yellow!15}{%
   \begin{tabular}{c|>{\columncolor[gray]{0.92}}c||c|c}
  \textsf{Y}$_0$ & $V \mapsto D$ & $\phi$ & $\upsilon$\\
      \hline    
   & $\textsf{x}_0 \mapsto 1$ & $1$ & $1$\\
   & $\textsf{x}_0 \mapsto 2$ & $1$ & $2$\\
   & $\textsf{x}_0 \mapsto 3$ & $1$ & $3$\\
   \end{tabular}
}\endgroup
\begingroup\setlength{\fboxsep}{2pt}
\colorbox{yellow!15}{%
   \begin{tabular}{c|>{\columncolor[gray]{0.92}}c||c}
  \textsf{W} & $V \mapsto D$ & \textsf{Pr}\\
      \hline    
   & $\textsf{x}_0 \mapsto 1$ & $.4$\\
   & $\textsf{x}_0 \mapsto 2$ & $.4$\\
   & $\textsf{x}_0 \mapsto 3$ & $.2$\\
   \end{tabular}
}\endgroup
\vspace{-3pt}
\caption{U-relation generated by the repair-key operation.}
\label{fig:maybms}
\vspace{-9pt}
\end{figure}

\section{Running Example}\label{sec:u-intro}

Before proceeding, we consider Example \ref{ex:population}, which is 
fairly representative to illustrate how to deal with correlations in the predictive data of deterministic hypotheses for the sake of suitable data-driven analytics.

\begin{myex}
We explore three slightly different theoretical models in population dynamics with applications in Ecology, Epidemics, Economics, etc: (\ref{eq:malthus}) Malthus' model, (\ref{eq:logistic})$\!$ the logistic equation and (\ref{eq:lotka-volterra}) the Lotka-$\!$Volterra model. In practice, such equations are meant to be extracted from \textsf{MathML}-compliant XML files (cf. Chapter \ref{ch:applicability}). For now, consider that the ordinary differential equation notation `$\dot{x}$' is read `variable $x$ is a function of time $t$ given initial condition $x_0$.$\!$' 
\begin{eqnarray}
\dot{x}=bx
\label{eq:malthus}
\end{eqnarray}
\vspace{-38pt}
\begin{eqnarray}
\dot{x}=b(1-x/K)x
\label{eq:logistic}
\end{eqnarray}
\vspace{-36pt}
\begin{eqnarray}
\left\{ 
  \begin{array}{lll}
\dot{x} &=& x(b - py)\\
\dot{y} &=& y(rx - d)
\end{array} \right.
\label{eq:lotka-volterra}
\end{eqnarray}
%
The models are completed (by the user) with additional equations to provide the values of exogenous variables (or `input parameters'),\footnote{Given $\mathcal S(\mathcal E, \mathcal V)$, it is actually a task of the encoding algorithm (viz., sub-procedure \textsf{TCM}) to infer, for each variable $x \in \mathcal V$, whether it is exogenous or endogenous by means of the total causal mapping.} e.g., $x_0\!=200,\, b=10$, such that we have SEM's (resp.) $\mathcal S_k(\mathcal E_k, \mathcal V_k)$ for $k=1..3$,

\begin{itemize}
\item $\mathcal E_1 \!=\! \{\, f_1(t),\; f_2(x_0),\; f_3(b),\; f_4(x, t, x_0, b) \,\}$;
\item $\mathcal E_2 \!=\! \{ f_1(t),\, f_2(x_0),\, f_3(K),\, f_4(b),\, f_5(x, t, x_0, K, b) \}$;
\item $\mathcal E_3 \!=\! \{\, f_1(t),\; f_2(x_0),\; f_3(b),\; f_4(p),\; f_5(y_0),\; f_6(d),\;f_7(r),$\vspace{1pt}\\ 
			$\quad f_8(x, t, x_0, b, p, y),\; f_9(y, t, y_0, d, r, x) \,\}$. 
\end{itemize}

\noindent
Fig. \ref{fig:population-fds} shows the fd sets encoded from structures $\mathcal S_k$ given above.\footnote{Recall that domain variables like time $t$ are informed to the encoding algorithm to suppress an fd $\phi \!\to t$.}
We also consider trial datasets for hypothesis $\upsilon\!=\!3$ (viz., the Lotka-Volterra model), which are loaded into the `big' fact table relation $H_3 \in \boldsymbol H$ as shown in Fig. \ref{fig:population-table}. We admit a special attribute `trial id' \textsf{tid} to keep hypothesis trials identified until uncertainty is introduced in a controlled way by p-DB synthesis (\textsf{U-intro} stage, cf. Fig. \ref{fig:pipeline}).  $\Box$
\label{ex:population}
\end{myex}

\begin{figure}[t]
\begin{framed}
\vspace{-6pt}
\begin{subfigure}{0.45\textwidth}
\begin{eqnarray*}
\!\Sigma_1 = \{\;\; \phi &\!\to\!& x_0,\\ 
\phi &\!\to\!& b,\\ 
\!\!x_0\,b\,t\,\upsilon &\!\to\!& x \,\,\}.
\end{eqnarray*}
\begin{eqnarray*}
\Sigma_2 = \{\;\; \phi &\!\to\!& x_0,\\ 
\phi &\!\to\!& K,\\ 
\phi &\!\to\!& b,\\ 
x_0\,K\,b\,t\,\upsilon &\!\to\!& x \;\,\}.
\end{eqnarray*}
\end{subfigure}
\hspace{15pt}
\begin{subfigure}{0.45\textwidth}
\begin{eqnarray*}
\Sigma_3 = \{\;\;\; \phi &\!\to\!& x_0,\\ 
\phi &\!\to\!& b,\\
\phi &\!\to\!& p,\\
\phi &\!\to\!& y_0,\\ 
\phi &\!\to\!& d,\\ 
\phi &\!\to\!& r,\\
x_0\,b\,p\,t\,\upsilon\,y &\!\to\!& x,\\ 
y_0\,d\,r\,t\,\upsilon\,x &\!\to\!& y \;\,\}.
\end{eqnarray*}
\end{subfigure}
\vspace{3pt}
\end{framed}
\vspace{-8pt}
\caption{Fd sets encoded from the given structures $\mathcal S_k(\mathcal E_k,\, \mathcal V_k)$ for hypotheses $k=1..3$ from Example \ref{ex:population}.}
\label{fig:population-fds}
\vspace{-3pt}
\end{figure}

\begin{spacing}{1.1}
\begin{figure}[H]
\centering
\begingroup\setlength{\fboxsep}{3pt}
\colorbox{blue!7}{%
   \begin{tabular}{c|>{\columncolor[gray]{0.92}}c||c|c|c|c|c|c|c|c|c|c|c}
  $\!\!$\textsf{H}$_3$ & $\!$\textsf{tid}$\!$ & $\phi$ & $\upsilon$ & $\!t\!$ & $x_0$ & $b$ & $p$ & $y_0$ & $d$ & $r$ & $x$ & $y$\\
      \hline
   & $\!\!1\!\!$ & $1$ & $3$ & $\!0\!$ & $30$ & $\!\!.5\!\!$ & $\!\!.02\!\!$ & $\!\!4\!\!$ & $\!\!.75\!\!$ & $\!\!.02\!\!$ & $\!\!30\!\!$ & $\!\!4\!\!$\\
   & $\!\!1\!\!$ & $1$ & $\!3\!$ & $\!...\!$ & $30$ & $\!.5\!$ & $\!.02\!$ & $\!\!4\!\!$ & $\!.75\!$ & $\!.02\!$ & ... & ...\\
   \cline{2-13}
   
   & $\!\!2\!\!$ & $1$ & $3$ & $\!0\!$ & $30$ & $\!\!.5\!\!$ & $\!\!.018\!\!$ & $\!\!4\!\!$ & $\!\!.75\!\!$ & $\!\!.023\!\!$ & $\!\!30\!\!$ & $\!\!4\!\!$\\
   & $\!\!2\!\!$ & $1$ & $\!3\!$ & $\!...\!$ & $30$ & $\!.5\!$ & $\!.018\!$ & $\!\!4\!\!$ & $\!.75\!$ & $\!.023\!$ & ... & ...\\
   \cline{2-13}
   
   & $\!\!3\!\!$ & $1$ & $3$ & $\!0\!$ & $30$ & $\!\!.4\!\!$ & $\!\!.02\!\!$ & $\!\!4\!\!$ & $\!\!.8\!\!$ & $\!\!.02\!\!$ & $\!\!30\!\!$ & $\!\!4\!\!$\\
   & $\!\!3\!\!$ & $1$ & $\!3\!$ & $\!...\!$ & $30$ & $\!.4\!$ & $\!.02\!$ & $\!\!4\!\!$ & $\!.8\!$ & $\!.02\!$ & ... & ...\\
   \cline{2-13}
   
   & $\!\!4\!\!$ & $1$ & $3$ & $\!0\!$ & $30$ & $\!\!.4\!\!$ & $\!\!.018\!\!$ & $\!\!4\!\!$ & $\!\!.8\!\!$ & $\!\!.023\!\!$ & $\!\!30\!\!$ & $\!\!4\!\!$\\
   & $\!\!4\!\!$ & $1$ & $\!3\!$ & $\!...\!$ & $30$ & $\!.4\!$ & $\!.018\!$ & $\!\!4\!\!$ & $\!.8\!$ & $\!.023\!$ & ... & ...\\
   \cline{2-13}
   
   & $\!\!5\!\!$ & $1$ & $3$ & $\!0\!$ & $30$ & $\!\!.397\!\!$ & $\!.02\!$ & $\!\!4\!\!$ & $\!\!.786\!\!$ & $\!.02\!$ & $\!\!30\!\!$ & $\!\!4\!\!$\\
   & $\!\!5\!\!$ & $1$ & $3$ & $\!...\!$ & $30$ & $\!.397\!$ & $\!.02\!$ & $\!\!4\!\!$ & $\!.786\!$ & $\!.02\!$ & ... & ...\\
   \cline{2-13}
   
   & $\!\!6\!\!$ & $1$ & $3$ & $\!0\!$ & $30$ & $\!.397\!$ & $\!.018\!$ & $\!\!4\!\!$ & $\!.786\!$ & $\!.023\!$ & $\!30\!$ & $4$\\   
   & $\!\!6\!\!$ & $1$ & $\!3\!$ & $5\!$ & $30$ & $\!.397\!$ & $\!.018\!$ & $\!\!4\!\!$ & $\!.786\!$ & $\!.023\!$ & $\!50.1\!$ & $\!62.9\!\!$\\
   & $\!\!6\!\!$ & $1$ & $\!3\!$ & $10$ & $30$ & $\!.397\!$ & $\!.018\!$ & $\!\!4\!\!$ & $\!.786\!$ & $\!.023\!$ & $\!13.8\!$ & $\!8.65\!\!$\\
   & $\!\!6\!\!$ & $1$ & $\!3\!$ & $15$ & $30$ & $\!.397\!$ & $\!.018\!$ & $\!\!4\!\!$ & $\!.786\!$ & $\!.023\!$ & $\!79.3\!$ & $\!8.23\!\!$\\
   & $\!\!6\!\!$ & $1$ & $\!3\!$ & $20$ & $30$ & $\!.397\!$ & $\!.018\!$ & $\!\!4\!\!$ & $\!.786\!$ & $\!.023\!$ & $\!12.6\!$ & $\!30.7\!\!$\\
   & $\!\!6\!\!$ & $1$ & $\!3\!$ & $\!...\!$ & $30$ & $\!.397\!$ & $\!.018\!$ & $\!\!4\!\!$ & $\!.786\!$ & $\!.023\!$ & ... & ...\\
   \end{tabular}
}\endgroup
\vspace{-2pt}
\caption{`Big' fact table $H_3$ of hypothesis $k\!=\!3$ from Example \ref{ex:population}  loaded with trial datasets identified by special attribute \textsf{tid}.}
\label{fig:population-table}
\vspace{-3pt}
\end{figure}
\end{spacing}

Given the `big' fact table $H_3$, p-DB synthesis has two main parts: process the `empirical' uncertainty present in the `big' fact table and synthesize it out (decompose it) into independent u-factors (\emph{u-factorization}); and then propagate it precisely into the predictive data (\emph{u-propagation}).

\section{U-Factorization}\label{sec:u-factorization}
\noindent
As we have seen in \S\ref{sec:u-relations}, the \textsf{repair-key} operation allows one to create a discrete random variable in order to repair an argument key in a given relation. Our goal here is to devise a technique to perform such operation in a principled way for hypothesis management. It is a basic design principle to have exactly one random variable for each distinct uncertainty factor (`u-factor' for short), which requires carefully identifying the actual sources of uncertainty present in relations $\boldsymbol H$. 

The multiplicity of (competing) hypotheses is itself a standard one, viz., the \emph{theoretical} u-factor. Consider an `explanation' table like $H_0$ in Fig. $\!$\ref{fig:maybms}, which stores (as foreign keys) all hypotheses available and their target phenomena. We can take such $H_0$ as explanation table for the three hypotheses of Example \ref{ex:population}. Then a discrete random variable $V_0$ 
is defined into $Y_0\,[\,V_0\,D_0\,|\,\phi\, \upsilon\,]$ by query formula (\ref{eq:explanation}). U-relation $Y_0$ is considered standard in p-DB synthesis, as the repair of $\phi$ as a key in (standard) $H_0$. 

Hypotheses, nonetheless, are (abstract) `universal state\-ments' \cite{losee2001}. In order to produce a (concrete) valuation over their endogenous attributes (predictions), one has to inquire into some particular `situated' phenomenon $\phi$ and tentatively assign a valuation over the exogenous attributes, which can be eventually tuned for a target $\phi$. The multiplicity of such (competing) empirical estimations for a hypothesis $k$ leads to Problem \ref{prob:u-learning}, viz., learning \emph{empirical} u-factors for each  $H_k \in \boldsymbol H$.

\begin{myprob}
Let $\Sigma_k$ be an fd set encoded given hypothesis structure $\mathcal S_k$, and $H_k$ its `big' fact table relation loaded with trial data. Now, let $Z$ be the set of attributes encoding exogenous variables in $H_k$, then the problem of \textbf{u-factor learning} is:
\begin{enumerate}
\item to \textbf{infer} in $H_k$ `casual' fd's $\phi\,B_i \!\to B_j$, $\phi\,B_j \!\to B_i$ not in $\Sigma_k\!$ (strong input correlations), 
where $B_i,\, B_j \in Z$; 
\item to form maximal \textbf{groups} $G_1,\,...,\,G_n \subseteq Z$ of attributes such that for all $B_i,\, B_j \in G_a$, the casual fd's $\phi\,B_i \!\to B_j$ and $\phi\,B_j \!\to B_i$ hold in $H_k$;
\item to pick, for each group $G_a$, any $A \in G_a$ as a \textbf{pivot} representative and insert $\phi\,A \!\to B$ into an fd set $\Omega_k$ for all $B \in (G_a \setminus A)$.
\end{enumerate}
\label{prob:u-learning}
\vspace{-6pt}
\end{myprob}

\newcolumntype{g}{>{\columncolor{green!15}}c}
\newcolumntype{r}{>{\columncolor{red!20}}c}
\begin{spacing}{1.1}
\begin{figure}[H]
\centering
\begingroup\setlength{\fboxsep}{3pt}
\colorbox{blue!7}{%
   \begin{tabular}{c|>{\columncolor[gray]{0.92}}c||c|c|c|c|g|r|c|g|r|c|c}
  $\!\!$\textsf{H}$_3$ & $\!$\textsf{tid}$\!$ & $\phi$ & $\upsilon$ & $\!t\!$ & $x_0$ & $b$ & $p$ & $y_0$ & $d$ & $r$ & $x$ & $y$\\
      \hline
   & $\!\!1\!\!$ & $1$ & $3$ & $\!0\!$ & $30$ & $\!\!.5\!\!$ & $\!\!.02\!\!$ & $\!\!4\!\!$ & $\!\!.75\!\!$ & $\!\!.02\!\!$ & $\!\!30\!\!$ & $\!\!4\!\!$\\
   & $\!\!1\!\!$ & $1$ & $\!3\!$ & $\!...\!$ & $30$ & $\!.5\!$ & $\!.02\!$ & $\!\!4\!\!$ & $\!.75\!$ & $\!.02\!$ & ... & ...\\
   \cline{2-13}
   
   & $\!\!2\!\!$ & $1$ & $3$ & $\!0\!$ & $30$ & $\!\!.5\!\!$ & $\!\!.018\!\!$ & $\!\!4\!\!$ & $\!\!.75\!\!$ & $\!\!.023\!\!$ & $\!\!30\!\!$ & $\!\!4\!\!$\\
   & $\!\!2\!\!$ & $1$ & $\!3\!$ & $\!...\!$ & $30$ & $\!.5\!$ & $\!.018\!$ & $\!\!4\!\!$ & $\!.75\!$ & $\!.023\!$ & ... & ...\\
   \cline{2-13}
   
   & $\!\!3\!\!$ & $1$ & $3$ & $\!0\!$ & $30$ & $\!\!.4\!\!$ & $\!\!.02\!\!$ & $\!\!4\!\!$ & $\!\!.8\!\!$ & $\!\!.02\!\!$ & $\!\!30\!\!$ & $\!\!4\!\!$\\
   & $\!\!3\!\!$ & $1$ & $\!3\!$ & $\!...\!$ & $30$ & $\!.4\!$ & $\!.02\!$ & $\!\!4\!\!$ & $\!.8\!$ & $\!.02\!$ & ... & ...\\
   \cline{2-13}
   
   & $\!\!4\!\!$ & $1$ & $3$ & $\!0\!$ & $30$ & $\!\!.4\!\!$ & $\!\!.018\!\!$ & $\!\!4\!\!$ & $\!\!.8\!\!$ & $\!\!.023\!\!$ & $\!\!30\!\!$ & $\!\!4\!\!$\\
   & $\!\!4\!\!$ & $1$ & $\!3\!$ & $\!...\!$ & $30$ & $\!.4\!$ & $\!.018\!$ & $\!\!4\!\!$ & $\!.8\!$ & $\!.023\!$ & ... & ...\\
   \cline{2-13}
   
   & $\!\!5\!\!$ & $1$ & $3$ & $\!0\!$ & $30$ & $\!\!.397\!\!$ & $\!.02\!$ & $\!\!4\!\!$ & $\!\!.786\!\!$ & $\!.02\!$ & $\!\!30\!\!$ & $\!\!4\!\!$\\
   & $\!\!5\!\!$ & $1$ & $3$ & $\!...\!$ & $30$ & $\!.397\!$ & $\!.02\!$ & $\!\!4\!\!$ & $\!.786\!$ & $\!.02\!$ & ... & ...\\
   \cline{2-13}
   
   & $\!\!6\!\!$ & $1$ & $3$ & $\!0\!$ & $30$ & $\!.397\!$ & $\!.018\!$ & $\!\!4\!\!$ & $\!.786\!$ & $\!.023\!$ & $\!30\!$ & $4$\\   
   & $\!\!6\!\!$ & $1$ & $\!3\!$ & $5\!$ & $30$ & $\!.397\!$ & $\!.018\!$ & $\!\!4\!\!$ & $\!.786\!$ & $\!.023\!$ & $\!50.1\!$ & $\!62.9\!\!$\\
   & $\!\!6\!\!$ & $1$ & $\!3\!$ & $10$ & $30$ & $\!.397\!$ & $\!.018\!$ & $\!\!4\!\!$ & $\!.786\!$ & $\!.023\!$ & $\!13.8\!$ & $\!8.65\!\!$\\
   & $\!\!6\!\!$ & $1$ & $\!3\!$ & $15$ & $30$ & $\!.397\!$ & $\!.018\!$ & $\!\!4\!\!$ & $\!.786\!$ & $\!.023\!$ & $\!79.3\!$ & $\!8.23\!\!$\\
   & $\!\!6\!\!$ & $1$ & $\!3\!$ & $20$ & $30$ & $\!.397\!$ & $\!.018\!$ & $\!\!4\!\!$ & $\!.786\!$ & $\!.023\!$ & $\!12.6\!$ & $\!30.7\!\!$\\
   & $\!\!6\!\!$ & $1$ & $\!3\!$ & $\!...\!$ & $30$ & $\!.397\!$ & $\!.018\!$ & $\!\!4\!\!$ & $\!.786\!$ & $\!.023\!$ & ... & ...\\
   \end{tabular}
}\endgroup
\caption{`Big' fact table $H_3$ of hypothesis $k\!=\!3$ from Example \ref{ex:population} with u-factors $\{b,\, d\}$ and $\{p,\, r\}$ emphasized (resp.) in colors green and red.}
\label{fig:exogenous}
\vspace{-10pt}
\end{figure}
\end{spacing}

\noindent
U-factor learning is meant to process only the attributes $Z \subset U$ from $H_k[U]$ that are inferred exogenous in the given hypothesis, i.e., for all $A \in Z$, there is an fd $\langle X, A\rangle \in \Phi(\Sigma_k)$, where the latter is the $\phi$-projection of $\Sigma_k$. Such attributes are then `officially' unrelated. In fact, by `casual' fd's we mean correlations that, for a set of experimental trials, may occasionally show up in the trial input data; e.g., $x_0 \leftrightarrow y_0$ hold in $H_3$, but not because $x_0$ and $y_0$ are related in principle (theory). 

Fig. \ref{fig:exogenous} helps to illustrate Problem \ref{prob:u-learning} through the `big' fact table. We emphasize u-factors $\{b,\, d\}$ and $\{p,\, r\}$ in colors green and red. Observe that values of $b$ are strongly correlated (one-to-one) with values of $\,d\,$ for $\phi=1$, just like $p$ and $r$. Note also that $\{x_0, y_0\}$ can be seen as a certain factor. From the user point of view, this is a record that reflects a common practice in computational science known as (parameter) sensibility analysis.

Problem \ref{prob:u-learning} is dominated by the (problem of) discovery of fd's in a relation, which is not really a new problem (e.g., see \cite{huhtala1999}). We then keep focus on the synthesis method as a whole and omit our detailed \textsf{u-factor-learning} algorithm in particular.\footnote{In short, we make use of relational algebra \textsf{group-by} operation and build a pruned lattice of attribute groups having the same number of rows under the grouping (similarly to \cite{huhtala1999}).} Its output, fd set $\Omega_k$, is then filled in (completed) with the $\upsilon$-projection $\Upsilon(\Sigma_k)$. 
For illustration consider hypothesis $\upsilon\!=\!3$ and its trial input data recorded in $H_3$ in Fig. \ref{fig:population-table}. We show its resulting fd set $\Omega_3$ in Fig. \ref{fig:y-fdschema} (left), together with its folding $\Omega_k^\looparrowright$ (right). The latter is then input to (Alg. \ref{alg:merge}) \textsf{merge} to get the final information necessary for the actual synthesis of U-relations, as captured in Def. \ref{def:u-factorization-fd-set}. 
For an illustration of the merging of fd's with equivalent left-hand sides, note in Fig. \ref{fig:y-fdschema} (right) that $\,\phi\,x_0\,b\,p\,t\,\upsilon\,y \,\leftrightarrow\, \phi\,x_0\,b\,p\,t\,\upsilon\,x$ holds in $(\Omega_3^\looparrowright)^+$.

\begin{mydef}\label{def:u-factorization-fd-set}
Let $\mathcal S_k$ and $H_k$ be the complete structure and `big' fact table of hypothesis $k$, and $\Sigma_k$ an fd set defined $\Sigma_k \triangleq \textsf{h-encode}(\mathcal S_k)$. Now, let $\Omega_k \triangleq \textsf{u-factor-learning}(\,H_k,$ $\Phi(\Sigma_k)\,) \;\cup\; \Upsilon(\Sigma_k)$, and $\Omega_k^\looparrowright$ be the folding of $\Omega_k$. Finally, define $\Gamma_k \triangleq \textsf{merge}(\,\Omega_k^\looparrowright\,)$. Then we say that $\Gamma_k$ is the \textbf{u-factorization} of $\mathcal S_k$ over $H_k$.
\end{mydef}

\begin{figure}[t]
\begin{framed}
\vspace{-16pt}
\begin{subfigure}{0.45\columnwidth}
\begin{eqnarray*}
\Omega_3 = \{\;\;\; \phi\;x_0 &\to& y_0,\\ 
\phi\;b &\to& d,\\ 
\phi\;p &\to& r,\\ 
x_0\;b\;p\;t\;\upsilon\;y &\!\to\!& x,\\ 
y_0\;d\;r\;t\;\upsilon\;x &\!\to\!& y \;\;\;\}.
\end{eqnarray*}
\end{subfigure}
\hspace{12pt}
\begin{subfigure}{0.45\columnwidth}
\begin{eqnarray*}
\Omega_3^\looparrowright  = \{\;\;\; \phi\;x_0 &\to& y_0,\\ 
\phi\;b &\to& d,\\ 
\phi\;p &\to& r,\\ 
\phi\;x_0\;b\;p\;t\;\upsilon\;y &\!\to\!& x,\\ 
\phi\;x_0\;b\;p\;t\;\upsilon\;x &\!\to\!& y \;\;\;\}.
\end{eqnarray*}
\end{subfigure}
\end{framed}
\vspace{-5pt}
\caption{Fd set $\Omega_3$ (compare with $\Sigma_3$) and its folding $\Omega_3^\looparrowright\!$.}
\label{fig:y-fdschema}
\vspace{-6pt}
\end{figure}

\begin{spacing}{1.1}
\begin{algorithm}[H]
\caption{Merge fd's with equivalent left-hand sides.}
\label{alg:merge}
\begin{algorithmic}[1]
\Procedure{merge}{$\Sigma: \text{fd set}$}
\vspace{1pt}
\State $\Omega \gets \varnothing$
\ForAll{$\langle X, C\rangle \in \Sigma$}
\If{there is $\langle Z, W\rangle \in \Omega$ such that $X \leftrightarrow Z$ holds in $\Sigma^+$}
\State $\Omega \gets \Omega \setminus \langle Z, W\rangle$
\State $S \gets X \setminus Z$
\State $\Omega \gets \Omega \cup \langle Z, WSC\rangle$ \Comment{merges equivalent keys}
\Else
\State $\Omega \gets \Omega \cup \langle X, C\rangle$
\EndIf
\EndFor
\State \Return $\Omega$
\EndProcedure
\end{algorithmic}
\end{algorithm}
\end{spacing}

\begin{myremark}\label{rmk:claim}
Let $\Gamma_k$ be the u-factoriza\-tion of structure $\mathcal S_k$ over `big' fact table $H_k$.  Then every fd in $\Gamma_k$ encodes a clear-cut \emph{claim}, either empirical, in $\Phi(\Gamma_k)$, 
or theoretical, in $\Upsilon(\Gamma_k)$. 
That is ensured by the \textsf{merge} algorithm, which groups fd's in $\Omega_k^\looparrowright$ with equivalent left-hand sides.
$\Box$
\end{myremark}

We are now able to employ a notion of u-factor decomposition formulated in Def. \ref{def:u-factor} into query formula (\ref{eq:u-factor}) in p-WSA's extension of relational algebra.

\begin{mydef}
\label{def:u-factor}
Let $\mathcal S_k$ be the complete structure of hypothesis $k$, and $H_k[U]$ its `big' fact table such that $\Gamma_k$ is the u-factorization of $\mathcal S_k$ over $H_k$. 
Now, let $G_a \subset U$ be a set of attributes $G_a=A\,G$ such that, for all $B \in G$, an fd $\phi\,A \!\to B$ exists in $\Phi(\Gamma_k)$. 
Then we define U-relation $Y_k^i\,[V_i\,D_i\,|\,\phi\, A\,G\,]$ by query formula (\ref{eq:u-factor}), and say that $Y_k^i$ is a \textbf{u-factor projection} of $H_k$.
\begin{eqnarray}
\label{eq:u-factor}
\! Y_k^i := \pi_{\,\phi\, A\,G}\,( \textsf{repair-key}_{\,\phi\, @\textsf{\,\emph{count}\,}} (\,\gamma_{\,\phi,\, A,\,G,\,\textsf{\emph{count}}(*)}(H_k)\,)\,)
\end{eqnarray}
where $\gamma$ is relational algebra's grouping operator. 
\end{mydef}

\begin{spacing}{1.1}
\begin{figure}[H]
\advance\leftskip-0.2cm
\begingroup\setlength{\fboxsep}{2pt}
\colorbox{yellow!10}{%
   \begin{tabular}{c|>{\columncolor[gray]{0.92}}c||c|c|c}
  $\!$\textsf{Y}$_3^1\!$ & $\!\!V \!\mapsto\! D\!\!$ & $\!\phi\!$ & $\!x_0\!$ & $y_0\!\!\!$\\
      \hline    
   & $\textsf{x}_2 \!\mapsto\! 1$ & $\!2\!$ & $\!30\!$ & $4\!$\\
   \end{tabular}
}\endgroup
\begingroup\setlength{\fboxsep}{2pt}
\colorbox{yellow!10}{%
   \begin{tabular}{c|>{\columncolor[gray]{0.92}}c||c|c|c}
 $\!$\textsf{Y}$_3^2\!$ & $\!\!V \!\mapsto\! D\!\!$ & $\!\phi\!$ & $b$ & $d\!\!\!$\\
      \hline    
   & $\textsf{x}_3 \!\mapsto\! 1$ & $\!2\!$ & $.5$ & $.5\!$\\
   & $\textsf{x}_3 \!\mapsto\! 2$ & $\!2\!$ & $.4$ & $.8\!$\\
   & $\textsf{x}_3 \!\mapsto\! 3$ & $\!2\!$ & $\!.397\!$ & $\!.786\!\!\!\!$\\
   \end{tabular}
}\endgroup
\begingroup\setlength{\fboxsep}{2pt}
\colorbox{yellow!10}{%
   \begin{tabular}{c|>{\columncolor[gray]{0.92}}c||c|c|c}
  $\!$\textsf{Y}$_3^3\!$ & $\!\!V \!\mapsto\! D\!\!$ & $\!\phi\!$ & $p$ & $r\!\!\!\!$\\
      \hline    
   & $\textsf{x}_4 \!\mapsto\! 1$ & $\!2\!$ & $\!.020\!$ & $\!.020\!\!\!$\\
   & $\textsf{x}_4 \!\mapsto\! 2$ & $\!2\!$ & $\!.018\!$ & $\!.023\!\!\!$\\
   \end{tabular}
}\endgroup
\caption{U-factor projections rendered for hypothesis $\upsilon=3$.}
\label{fig:u-factors}
\vspace{-2pt}
\end{figure}
\end{spacing}

\noindent
The synthesis of u-factor projections, in particular the application of \textsf{repair-key} (cf. Eq. \ref{eq:u-factor}), has an important consequence for the u-factorization $\Gamma_k$ of $H_k$, viz., the introduction of new fd's into $\Gamma_k^\prime$ defined as follows (see Def. \ref{def:repaired-fd-set}). We shall consider it (rather than $\Gamma_k$) to study design-theoretic properties of synthesized $\boldsymbol Y_k$ in \S\ref{sec:properties}.

\begin{mydef}\label{def:repaired-fd-set}
Let $\mathcal S_k$ and $H_k[U]$ be (resp.) the complete structure and `big' fact table of hypothesis $k$, and $\Gamma_k$ be the u-factorization of $\mathcal S_k$ over $H_k$. Now, let $\Gamma_k^\prime \,\triangleq \bigcup_{\,i \,\in\, I}\{\phi \!\to A_i\,G_i\} \,\cup\, \Gamma_k$, where $I$ indexes all u-factor projections $Y_k^i\,[\,V_i\,D_i\,|\,\phi\,A_i\,G_i\,]$ of $H_k$. We say that $\Gamma_k^\prime$ is the \textbf{repaired factorization} of $\mathcal S_k$ over $H_k$.
\end{mydef}

\section{U-Propagation}\label{sec:u-propagation}
\noindent
U-propagation is a central part of \textsf{U-intro} and the pipeline itself. 
Recall that all the machinery developed so far, from hypothesis encoding to causal reasoning to u-factorization is for enabling predictive analytics. Let us briefly reconstruct it.

For hypothesis structure $\mathcal S_k(\mathcal E, \mathcal V)$, take any endogenous variable \mbox{$x_c \in \mathcal V$} encoded by attribute $C \!\mapsto\! x_c$. There should be exactly one fd $\langle X, C\rangle \in \Upsilon(\Sigma_k)^\looparrowright\!$. By Theorem \ref{thm:connections2}, for every first cause $x_b$ of $x_c$ there is $B \in X$ where $B \mapsto x_b$ with $x_b \in \mathcal V$. 
Now, observe that when $\Omega_k$ is rendered by u-factor learning, it is filled partly with fd's from $\Upsilon(\Sigma_k)$, and partly with fd's processed from $\Phi(\Sigma_k)$. This is to summarize exogenous variables into clear-cut independent u-factors. It means that, after u-factor learning, each first cause $x_b$ encoded by $B$ shall be represented by some pivot attribute $A_i$ which is, if not $A_i=B$ itself, then occasionally strongly correlated to it (i.e., $\phi\,A_i \to B$ and $\phi\,B \!\to A_i$ hold in $H_k$). 

Further then $\Omega_k$ is subject to folding such that, if $B \in X$ for $\langle X, C\rangle \in \Upsilon(\Omega_k)$, now we have $A_i \in Z$ for $\langle Z, C\rangle \in \Upsilon(\Omega_k^\looparrowright)$. 
This processing from fd $X \!\to C$ (where $X$ contains the first causes) into $Z \!\to C$ (where $Z$ contains only their pivot representatives) is meant for enabling an economical representation of uncertainty. Our running example is small, but such a principle is quite relevant for large-scale hypotheses (say, when $|\mathcal S| \approx 1M$). The correctness of such u-factor summarization shall be ensured by Proposition \ref{prop:folding-parsimonious}, which let us know that $\Omega_k^\looparrowright$ is parsimonious then (by Def. \ref{def:parsimonious}) canonical, therefore (by Def. \ref{def:minimal}) left-reduced. 

Now, all fd's in $\Upsilon(\Omega_k^\looparrowright)$ have form $Z \!\to C$ and we are almost ready for u-propagation. 
Note that, as a result of u-factorization, each pivot attribute $A_i \in Z$ is associated with random variable $V_i$ from U-relation $Y_k^i\,[\,V_i\,D_i \,|\,\phi\, A_i\,G_i\,]$. Then we shall use each $A_i \in Z$ (from $Z \!\to C$) as a surrogate to $Y_k^i$ in order to propagate factorized uncertainty into `predictive' U-relations $Y_k^j\,[\,\overline{V_j\,D_j} \,|\,S\,T\,]$ by a join formula. Attribute sets $S$ and $T$ are defined after merging fd's in $\Omega_k^\looparrowright\!$ with equivalent lhs to get $\Gamma_k$ and pass it as argument for synthesis. We let $S=Z \setminus W$ such that $S$ contains the domain variables only (e.g., $\phi,\, \upsilon,\, t$). The pivot attributes in $Z$ shall not be included in the data columns of $Y_k^j$, but leave their trace through the condition columns $\overline{V_j\,D_j}$ that annotate $sch(Y_k^j)$ as a repair of the key $S \!\to T$.

All that (cf. Def. \ref{def:u-propagation}) is abstracted into general p-WSA query formula (\ref{eq:u-propagation}), and employed in (Alg. \ref{alg:synthesize}\,) \textsf{synthesize} to accomplish u-propagation (Part II).

\begin{mydef}
\label{def:u-propagation}
Let $\mathcal S_k$ be the complete structure of hypothesis $k$, and $H_k[U]$ its `big' fact table such that $\Gamma_k$ is the u-factorization of $\mathcal S_k$ over $H_k$. 
Now, let $Z \!\to T$ be an fd in $\Upsilon(\Gamma_k)$.
Then we define U-relation $Y_k^j\,[\,\overline{V_j\,D_j}\,|\,S\,T]$ by query formula (\ref{eq:u-propagation}), and say that $Y_k^{j}$ is a \textbf{predictive projection} of $H_k$, where:
\vspace{-8pt}
\begin{eqnarray}
\label{eq:u-propagation}
Y_k^{j} :=\, \pi_{\,S,\,T\,}(\, \sigma_{\upsilon=k}(Y_0) \bowtie (\bowtie_{\,i\,\in\, I} Y_k^i\,) \bowtie \,H_k\,)
\end{eqnarray}
\begin{itemize}
\vspace{-11pt}
\item[(a)] $\;Y_k^i\,[\,V_i\,D_i \,|\, \phi\, A_i\, G_i\,]$ is a u-factor projection of $H_k$;
\vspace{-8pt}
\item[(a)] we have $i \in I$ if $A_i \in Z$;
\vspace{-8pt}
\item[(c)] we take $S = Z \setminus W$, where $W$ is the set of all pivot attributes representing first causes in $Z$.
\end{itemize}
\end{mydef}

\begin{spacing}{1.1}
\begin{algorithm}[H]
\caption{p-DB synthesis applied over folding fd set.}
\label{alg:synthesize}
\begin{algorithmic}[1]
\Procedure{synthesize}{$\mathcal S_k\!: \text{structure}$, $H_k\!: \text{`big' table}$, $Y_0\!: \text{explan. table}$}
\vspace{2pt}
\Require $\mathcal S_k$ is complete \vspace{1pt}
\Ensure U-relations $\boldsymbol Y_k$ returned are a BCNF, lossless decomposition of $H_k$
\vspace{-1pt}
\State $\Sigma_k \gets \textsf{h-encode}(\mathcal S_k)$
\State $\Omega_k \gets \textsf{u-factor-learning}(\,\Phi(\Sigma_k),\, H_k\,) \;\cup\; \Upsilon(\Sigma_k)$
\State $\Gamma_k \gets \textsf{merge}(\, \textsf{folding}(\Omega_k)\,)$
\vspace{-2pt}
\Statex{}
\hrulefill
\Statex{Part I: \textbf{U-factorization}}
\vspace{1pt}
\ForAll{$\langle \phi\,A, G\rangle \in \Phi(\Gamma_k)$} \Comment{scans over the u-factors of hypothesis $k$}
\State $Y_k^i \gets \pi_{\,\phi,\, A,\,G}\,( \textsf{repair-key}_{\phi @\textsf{\emph{count}}} (\,\gamma_{\,\phi,\, A,\,G,\, \textsf{\emph{count}}(*)}(H_k)\,)\,)$
\State $\boldsymbol Y_k \gets \boldsymbol Y_k \cup Y_k^i$
\EndFor
\vspace{-6pt}
\Statex{}
\hrulefill
\Statex{Part II: \textbf{U-propagation}}
\vspace{1pt}
\ForAll{$\langle Z, T\rangle \in \Upsilon(\Gamma_k)$} \Comment{scans over the claims of hypothesis $k$}
\State $W \gets \varnothing$ \Comment{prepares to keep track of u-factor pivot attributes}
\ForAll{$Y_k^i\,[\,\phi\,A_i\,G_i\,] \in \boldsymbol Y_k$}\vspace{2pt}
\If{$A \in Z$} \Comment{$A$ is a first cause of all $B \in T$}
\State $I = I \cup \{i\}$ \Comment{indexes the u-factor projection}
\State $W \gets W \cup A$ \Comment{keeps track of u-factor's pivot attribute}
\EndIf
\EndFor
\State $S \gets Z \setminus W$ \Comment{removes u-factor pivot attributes}
\State $Y_k^j \gets \pi_{\,S,\,T\,}(\, \sigma_{\upsilon=k}(Y_0) \bowtie (\bowtie_{\,\,i\,\in\, I} Y_k^i) \bowtie H_k\,) \!\!\!$\vspace{2pt}
\State $\boldsymbol Y_k \gets \boldsymbol Y_k \cup Y_k^j$
\EndFor
\State \Return $\boldsymbol Y_k$
\EndProcedure
\end{algorithmic}
\end{algorithm}
\end{spacing}

Fig. \ref{fig:u-relations} shows the rendered U-relations for hypothesis $k\!=\!3$ whose `big' fact table is shown in Fig. \ref{fig:population-table}. Note that $\textsf{tid}\!=\!6$ in $H_3$ corresponds now to $\theta = \{\,\textsf{x}_1 \!\mapsto\! 3,\, \textsf{x}_2 \!\mapsto\! 1,$ $\textsf{x}_3 \!\mapsto\! 3,\, \textsf{x}_4 \!\mapsto\! 2 \,\}$, where $\theta$ defines a particular world in $\boldsymbol W$ whose probability is $\textsf{Pr}(\theta) \!\approx\! .055$. This value is derived from the marginal probabilities stored in world table $W$ (see Fig. \ref{fig:u-relations}) as a result of the application of formulas Eq. \ref{eq:explanation} and Eq. \ref{eq:u-factor}.

\begin{myremark}\label{rmk:u-intro}
Observe that, although (Alg. \ref{alg:synthesize}) \textsf{synthesize} operates locally for each hypothesis $k$, the effects of p-DB synthesis (\textsf{U-intro}) in the pipeline are global on account of the (global) `explanation' relation $H_0$ (then U-relation $Y_0$), e.g., see Fig. \ref{fig:u-relations}. In fact, the probability of each tuple (row), say, in U-relation $Y_k^j$ with $\phi=p$ for hypothesis $\upsilon=k$, is distributed among all the hypotheses $\ell\neq k$ that are keyed in $Y_0$ under $\phi=p$, i.e., all hypotheses that compete at $\phi=p$. $\Box$
\end{myremark}

\begin{spacing}{1}
\begin{figure}[H]
\advance\leftskip-0.2cm
\begin{center}
\begingroup\setlength{\fboxsep}{2pt}
\colorbox{yellow!10}{%
   \begin{tabular}{c|>{\columncolor[gray]{0.92}}c||c|c}
  $\!$\textsf{Y}$_0\!$ & $\!\!V \!\mapsto\! D\!\!$ & $\!\phi\!$ & $\upsilon$\\
      \hline    
   & $\textsf{x}_1 \!\mapsto\! 1$ & $\!1\!$ & $\!1\!$\\
   & $\textsf{x}_1 \!\mapsto\! 2$ & $\!1\!$ & $\!2\!$\\
   & $\textsf{x}_1 \!\mapsto\! 3$ & $\!1\!$ & $\!3\!$\\
   \end{tabular}
}\endgroup
\hspace{3pt}
\begingroup\setlength{\fboxsep}{2pt}
\colorbox{yellow!15}{%
   \begin{tabular}{c|>{\columncolor[gray]{0.92}}c||c}
  \textsf{W} & $V \mapsto D$ & \textsf{Pr}\\
      \hline    
   & $\textsf{x}_1 \mapsto 1$ & $.4$\\
   & $\textsf{x}_1 \mapsto 2$ & $.4$\\
   & $\textsf{x}_1 \mapsto 3$ & $.2$\\
   \cline{2-3}
   & $\textsf{x}_2 \mapsto 1$ & $1$\\
   \cline{2-3}
   & $\textsf{x}_3 \mapsto 1$ & $.33$\\
   & $\textsf{x}_3 \mapsto 2$ & $.33$\\
   & $\textsf{x}_3 \mapsto 3$ & $.33$\\
   \cline{2-3}
   & $\textsf{x}_4 \mapsto 1$ & $.5$\\
   & $\textsf{x}_4 \mapsto 2$ & $.5$\\
   \end{tabular}
}\endgroup
\vspace{-6pt}\\
\end{center}
\begingroup\setlength{\fboxsep}{2pt}
\colorbox{yellow!10}{%
   \begin{tabular}{c|>{\columncolor[gray]{0.92}}c||c|c|c}
  $\!$\textsf{Y}$_3^1\!$ & $\!\!V \!\mapsto\! D\!\!$ & $\!\phi\!$ & $\!x_0\!$ & $y_0\!\!\!$\\
      \hline    
   & $\textsf{x}_2 \!\mapsto\! 1$ & $\!1\!$ & $\!30\!$ & $4\!$\\
   \end{tabular}
}\endgroup
\begingroup\setlength{\fboxsep}{2pt}
\colorbox{yellow!10}{%
   \begin{tabular}{c|>{\columncolor[gray]{0.92}}c||c|c|c}
 $\!$\textsf{Y}$_3^2\!$ & $\!\!V \!\mapsto\! D\!\!$ & $\!\phi\!$ & $b$ & $d\!\!\!$\\
      \hline    
   & $\textsf{x}_3 \!\mapsto\! 1$ & $\!1\!$ & $.5$ & $.5\!$\\
   & $\textsf{x}_3 \!\mapsto\! 2$ & $\!1\!$ & $.4$ & $.8\!$\\
   & $\textsf{x}_3 \!\mapsto\! 3$ & $\!1\!$ & $\!.397\!$ & $\!.786\!\!\!\!$\\
   \end{tabular}
}\endgroup
\begingroup\setlength{\fboxsep}{2pt}
\colorbox{yellow!10}{%
   \begin{tabular}{c|>{\columncolor[gray]{0.92}}c||c|c|c}
  $\!$\textsf{Y}$_3^3\!$ & $\!\!V \!\mapsto\! D\!\!$ & $\!\phi\!$ & $p$ & $r\!\!\!\!$\\
      \hline    
   & $\textsf{x}_4 \!\mapsto\! 1$ & $\!1\!$ & $\!.020\!$ & $\!.020\!\!\!$\\
   & $\textsf{x}_4 \!\mapsto\! 2$ & $\!1\!$ & $\!.018\!$ & $\!.023\!\!\!$\\
   \end{tabular}
}\endgroup
\vspace{6pt}\\
\begingroup\setlength{\fboxsep}{3pt}
\colorbox{yellow!10}{%
   \begin{tabular}{c|>{\columncolor[gray]{0.95}}c|>{\columncolor[gray]{0.95}}c|>{\columncolor[gray]{0.95}}c|>{\columncolor[gray]{0.95}}c||c|c|c|c|c}
  \textsf{Y}$_3^4$ & $V_1 \!\mapsto\! D_1$ & $V_2 \!\mapsto\! D_2$ & $V_3 \!\mapsto\! D_3$ & $V_4 \!\mapsto\! D_4$ & $\phi$ & $\upsilon$ & $t$ & $y$ & $x$\\
      \hline    
   & $\textsf{x}_1 \mapsto 3$ & $\textsf{x}_2 \mapsto 1$ & $\textsf{x}_3 \mapsto 1$ & $\textsf{x}_4 \mapsto 1$ & $\!1\!$ & $3$ & $1900$ & $4$ & $30$\\
   & $\textsf{x}_1 \mapsto 3$ & $\textsf{x}_2 \mapsto 1$ & $\textsf{x}_3 \mapsto 1$ & $\textsf{x}_4 \mapsto 1$ & $1$ & $3$ & $\!...\!$ & ... & ...\\
   \cline{2-10}
   & $\!\!...\!\!$ & $\!\!...\!\!$ & $\!\!...\!\!$ & $\!\!...\!\!$ & $\!1\!$ & $3$ & $\!...\!$ & ... & ...\\
   \cline{2-10}
   & $\textsf{x}_1 \mapsto 3$ & $\textsf{x}_2 \mapsto 1$ & $\textsf{x}_3 \mapsto 3$ & $\textsf{x}_4 \mapsto 2$ & $1$ & $3$ & $1900$ & $4$ & $30$\\
   & $\textsf{x}_1 \mapsto 3$ & $\textsf{x}_2 \mapsto 1$ & $\textsf{x}_3 \mapsto 3$ & $\textsf{x}_4 \mapsto 2$ & $1$ & $3$ & $1901$ & $4.12$ & $41.5\!\!$\\
   & $\textsf{x}_1 \mapsto 3$ & $\textsf{x}_2 \mapsto 1$ & $\textsf{x}_3 \mapsto 3$ & $\textsf{x}_4 \mapsto 2$ & $1$ & $3$ & $1902$ & $5.78$ & $56.7$\\
   & $\textsf{x}_1 \mapsto 3$ & $\textsf{x}_2 \mapsto 1$ & $\textsf{x}_3 \mapsto 3$ & $\textsf{x}_4 \mapsto 2$ & $1$ & $3$ & $1903$ & $11.7$ & $72.8$\\
   & $\textsf{x}_1 \mapsto 3$ & $\textsf{x}_2 \mapsto 1$ & $\textsf{x}_3 \mapsto 3$ & $\textsf{x}_4 \mapsto 2$ & $1$ & $3\!$ & $1904$ & $31.1$ & $75.9$\\
   & $\textsf{x}_1 \mapsto 3$ & $\textsf{x}_2 \mapsto 1$ & $\textsf{x}_3 \mapsto 3$ & $\textsf{x}_4 \mapsto 2$ & $1$ & $3$ & $\!...\!$ & ... & ...\\
   \end{tabular}
}\endgroup
\caption{U-relations rendered for hypothesis $\upsilon=3$.}
\label{fig:u-relations}
\vspace{-2pt}
\end{figure}
\end{spacing}

U-relations rendered by p-DB synthesis are ready for querying. Typical queries comprise the \textsf{conf()} aggregate operation, inquiring the probability (or confidence) for each tuple to `true' in the probability space captured by the hypothesis competition. We illustrate queries in Chapter \ref{ch:applicability}.

\section{Design-Theoretic Properties}\label{sec:properties}

\noindent
For the \textsf{U-intro} procedure to be meaningful, we have to study design-theoretic properties of u-factor projections and prediction projections synthesized out of `big' fact table $H_k$ for the sake of predictive analytics. In particular, for the projections to be claim-centered, we submit that they should satisfy Boyce-Codd normal form (BCNF, cf. Def. \ref{def:nf}) w.r.t. the repaired factorization $\Gamma_k^\prime$ of $H_k$; 
and for them to be a correct decomposition of the uncertainty present in $H_k$, their join should be lossless (preserve the data in $H_k$, cf. Def. \ref{def:lossless-join}) w.r.t. $\Gamma_k^\prime$. 

Note that in this study we consider repaired factorization $\Gamma_k^\prime$ (not u-factoriza\-tion $\Gamma_k$), since it is the one which actually holds in $\boldsymbol Y_k$ after key repairing.

\subsection{Claim-Centered Decomposition}

\noindent
As emphasized through Remark \ref{rmk:claim}, every fd in u-factorization $\Gamma_k$ is a claim (cf. Remark \ref{rmk:claim}), then the same holds for repaired factorization $\Gamma_k^\prime$. Thus, for a claim-centered decomposition of `big' fact table $H_k$, it is desirable that U-relational schema $\boldsymbol Y_k$ that it satisfies BCNF w.r.t. $\Gamma_k^\prime$. BCNF (`do not represent the same fact twice' \cite[p. 251]{abiteboul1995}) is our notion of `good design' for uncertainty decomposition in view of predictive analytics. This is to avoid the uncertainty of one claim to be undesirably mixed with the uncertainty of another claim.

\begin{mydef}\label{def:fd-projection}
Let $R[U]$ be a relation scheme over set $U\!$ of attributes, and $\Sigma$ a set of fd's. Then the \textbf{projection} of $\Sigma$ onto $R[U]$, written $\pi_U(\Sigma)$, is the subset $\Sigma^\prime \subseteq \Sigma$ of fd's $X \!\to Z$ such that $XZ \subseteq U$.
\end{mydef}

\begin{mydef}\label{def:nf}
Let $R[U]$ be a relation scheme over set $U\!$ of attributes, and $\Sigma$ a set of fd's on $U$. We say that:
\begin{itemize}
\item[(a)] $R$ is in \textbf{BCNF} if, for all $\langle X, A\rangle \in \Sigma^+\!$ with $A \!\not\in X$ and $XA \!\subseteq\! U$, we have $X \!\to U$ (i.e., $\!X$ is a superkey for $R$);\vspace{-2pt}
\item[(b)] A schema $\boldsymbol R$ is in BCNF if all of its schemes $R_1, ..., R_n \in \boldsymbol R$ are in BCNF. 
\end{itemize}
\end{mydef}

\begin{myex}
To illustrate the concept of BCNF, let us consider canonical fd set $\Sigma\!=\!\{{A \!\to B,}$ $B \!\to C\,\}$ over attributes $U\!=\!\{A, B, C\}$, and a tentative schema containing a single relation $R[A B C]$. This relation is not in BCNF because, for one, $B \!\to C$ violates it ($C \nsubseteq B$ but $B$ is not a superkey for $R$). 
$\Box$
\label{ex:folding}
\end{myex}

\noindent

Observe also that an overdecomposed schema may (trivially) satisfy BCNF. 
For example, let $\Sigma=\{A \!\to B,\, A \!\to C\}$ then by Def. \ref{def:nf} both schemas $\boldsymbol R=R_1[ABC]$ and $\boldsymbol R^\prime=\{R_1[AB],\; R_2[AC]\}$ are in BCNF w.r.t. $\Sigma$. The second, however, breaks data into two tables making their access more difficult than necessary since both $B$ and $C$ brings in information about $A$. That is, if the schema were to be synthesized over the fd's in $\Sigma$, then it would be desirable to apply (R4) union or merge them before. Our point is that, if we target at a BCNF-satisfying schema, then it is also desirable for it to be the minimal-cardinality schema in BCNF.

Theorem \ref{thm:bcnf} guarantees the BCNF property w.r.t. $\Gamma_k^\prime$ by design for every schema $\boldsymbol Y_k$ rendered by (Alg. \ref{alg:synthesize}) \textsf{synthesize} over $\Gamma_k$.

\begin{mythm}\label{thm:bcnf}
Let $\mathcal S_k$ and $H_k$ be (resp.) the complete structure and `big' fact table of hypothesis $k$, and let $\;\Gamma_k^\prime$ be the repaired factorization of $\mathcal S_k$ over $H_k$, and $Y_0$ the `explanation' table where hypothesis $k$ is recorded. Now, let $\boldsymbol Y_k$ be a U-relational schema defined $\boldsymbol Y_k \triangleq \textsf{synthesize}(\mathcal S_k, H_k, Y_0)$. 
Then $\boldsymbol Y_k$ is in BCNF w.r.t. $\Gamma_k^\prime$ and is minimal-cardinality. 
\end{mythm}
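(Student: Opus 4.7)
The plan is to verify the BCNF condition (Def. \ref{def:nf}) separately for each relation scheme produced by \textsf{synthesize}, using the repaired factorization $\Gamma_k^\prime$ as the governing fd set, and then to establish that no schema with fewer relations satisfies BCNF w.r.t.\ $\Gamma_k^\prime$. Throughout, I will exploit that $\Omega_k^\looparrowright$ is parsimonious (Proposition \ref{prop:folding-parsimonious}) and that \textsf{merge} (Alg. \ref{alg:merge}) merges only fd's with equivalent left-hand sides, so $\Gamma_k$ remains canonical and has pairwise distinct (incomparable) keys.

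First I would handle the u-factor projections $Y_k^i[\phi\,A_i\,G_i]$. By Def.\ \ref{def:repaired-fd-set}, $\Gamma_k^\prime$ contains $\phi \to A_i\,G_i$; hence $\phi$ is a key of $Y_k^i$. Any non-trivial fd $\langle X, B\rangle$ in $(\Gamma_k^\prime)^+$ with $XB \subseteq \phi\,A_i\,G_i$ can be shown, by chasing through $\Gamma_k^\prime$, to require $\phi \in X$ (the only other fd involving these attributes in $\Gamma_k^\prime$ is $\phi\,A_i \to G_i$, and attributes in $G_i$ have no outgoing fd in $\Gamma_k^\prime$ within this scope), so $X$ is a superkey, which gives BCNF.

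Next, for each predictive projection $Y_k^j[S\,T]$ coming from a claim $Z \to T \in \Upsilon(\Gamma_k)$ with $S = Z \setminus W$ and $W$ the set of pivot attributes, I must show that $S$ is a key of $Y_k^j$ and that no other non-trivial fd violates BCNF. For the key part, every $A_i \in W$ is determined by $\phi$ in $\Gamma_k^\prime$ (by the fd $\phi \to A_i\,G_i$ introduced by repairing); since $\phi \in S$, augmentation and transitivity give $S \to Z$ in $(\Gamma_k^\prime)^+$, and then $S \to T$ follows from $Z \to T$. For BCNF, consider any non-trivial $\langle X, B\rangle \in (\Gamma_k^\prime)^+$ with $XB \subseteq ST$. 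Because $\Gamma_k$ is parsimonious (so each attribute appears on the right-hand side of exactly one fd and distinct claims have incomparable keys), the only way to derive $B \in T$ within the attributes of $ST$ is through (a consequence of) $S \to T$; hence $X \supseteq S$ and $X$ is a superkey. This is the step I expect to be the main obstacle: the careful chase showing that no ``sideways'' fd on $ST$ can sneak in through some other claim in $\Gamma_k^\prime$, which relies essentially on parsimony of $\Omega_k^\looparrowright$ and on the fact that pivot attributes are the only bridge between independent u-factors.

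Finally, for minimal-cardinality, note that \textsf{synthesize} produces exactly one relation per fd in $\Gamma_k$ (u-factor projections for $\Phi(\Gamma_k)$ and predictive projections for $\Upsilon(\Gamma_k)$). Suppose, for contradiction, that some schema $\boldsymbol Y_k^\star$ with $|\boldsymbol Y_k^\star| < |\boldsymbol Y_k|$ is in BCNF w.r.t.\ $\Gamma_k^\prime$ and is a lossless decomposition of $H_k$. Then some relation $R \in \boldsymbol Y_k^\star$ must cover the attributes of two distinct claims $\langle X_a, T_a\rangle, \langle X_b, T_b\rangle \in \Gamma_k$. Because \textsf{merge} guarantees $X_a \not\leftrightarrow X_b$ in $(\Gamma_k^\prime)^+$, at least one of these fd's, say $X_a \to T_a$, has $X_a$ not a superkey of $R$ (otherwise $X_a$ would determine $X_b$, contradicting parsimony), violating BCNF. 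Hence $\boldsymbol Y_k$ is minimal-cardinality, completing the argument.
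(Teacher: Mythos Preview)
Your proposal is correct and follows essentially the same route as the paper: treat u-factor projections and predictive projections separately, show in each case that the only non-trivial fd's projecting onto the scheme have the intended key on their left-hand side, and establish minimality by observing that \textsf{merge} guarantees $X_a \not\leftrightarrow X_b$ for distinct claims so any merged scheme loses BCNF. The one simplification the paper exploits that you leave implicit is the clean split of $(\Gamma_k^\prime)^+$ into its $\Phi$- and $\Upsilon$-projections: since $\upsilon \notin \phi\,A_i\,G_i$ and the pivot/exogenous attributes are stripped out of $S\,T$, the projection of $\Upsilon(\Gamma_k^\prime)^+$ onto any u-factor scheme and of $\Phi(\Gamma_k^\prime)^+$ onto any predictive scheme is empty, which disposes of your ``sideways fd'' worry in one line rather than via a chase through parsimony.
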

\begin{proof}
We exploit the fact that the projection of $(\Gamma_k^\prime)^+$ onto u-factor projections and predictive projections define a disjoint partition of $(\Gamma_k^\prime)^+$ into its $\phi$-projection $\Phi(\Gamma_k^\prime)^+$ and $\upsilon$-projection $\Upsilon(\Gamma_k^\prime)^+$. Since we know the form of fd's in each of them, the search space for BCNF violations is significantly reduced. The minimality of $|\boldsymbol Y_k|$ in turn comes from (Alg. \ref{alg:merge}) \textsf{merge}. 
See Appendix, \S\ref{a:bcnf}.
$\Box$
\end{proof}

\subsection{Correctness of Uncertainty Decomposition}

\noindent
Recall from the preliminaries (cf. \S\ref{sec:u-relations}) that the U-relational equivalent of the relational product operation (main sub-operation of the join operation) has been introduced. Now, we provide the classical definition of a lossless join \cite{ullman1988}, i.e., when a decomposition of data from a relation into two or more relations is known to preserve the data in its original form by an application of the join. 
\begin{mydef}\label{def:lossless-join}
Let $R[U]$ be a (U-)relational schema synthesized into collection $\boldsymbol R = \bigcup_{i=1}^n R_i$ and let $\Sigma$ be an fd set on attributes $U$. We say that $\boldsymbol R$ has a \textbf{lossless join} w.r.t. $\Sigma$ if for every instance $r$ of $R[U]$ satisfying $\Sigma$, we have $r =\; \bowtie_{\,i=1}^{\,n} \pi_{R_i} (r)$.
\end{mydef}

The lossless join property is of interest to ensure that our decomposition of the data from the `big' fact table into u-factor projections preserves the data so that their join to `annotate' the predictive projections when propagated by means of the U-relational join operation is correct. Theorem \ref{thm:lossless} guarantees that is the case.

\begin{mythm}\label{thm:lossless}
Let $\mathcal S_k$ be the complete structure of hypothesis $k$, and $H_k[U]$ its `big' fact table such that $\Gamma_k^\prime$ is the repaired factorization of $\mathcal S_k$ over $H_k$ and $Y_0$ is the `explanation' table where hypothesis $k$ is recorded. Now, let $\boldsymbol Y_k$ be a U-relational schema defined $\boldsymbol Y_k \triangleq \textsf{synthesize}(\mathcal S_k, H_k, Y_0)$. 
Then,
\begin{itemize}
\item[(a)] the join $\bowtie_{\,i=1}^{\,m}  Y_k^i\,[\,V_i\,D_i\,|\,\phi\, A_i\,G_i\,]$ of any subset of the u-factor projections  of $H_k$ is lossless w.r.t. $\Gamma_k^\prime$. 
\item[(b)] any predictive projection $Y_k^j\,[\,\overline{V_j\,D_j}\,|\,S\,T\,]$, result of a join of the theoretical u-factor $Y_0\,[\,V_0\,D_0\,|\,\phi\,\upsilon\,]$ with the `big' fact table $H_k[U]$ and in turn with u-factor projections $Y_k^i\,[\,\overline{V_i\,D_i}\,|\,\phi\,A_i\,G_i\,]$, is lossless w.r.t. $\Gamma_k^\prime$.
\end{itemize}
\end{mythm}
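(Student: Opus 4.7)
The plan is to reduce Theorem \ref{thm:lossless} to the classical lossless join theorem for fd-based relational decompositions (Heath's theorem: a binary decomposition $R_1, R_2$ of $R$ is lossless w.r.t.\ fd set $\Sigma$ iff $(R_1 \cap R_2) \to (R_1 \setminus R_2)$ or $(R_1 \cap R_2) \to (R_2 \setminus R_1)$ is in $\Sigma^+$), and then lift the argument to U-relations by working in a fixed possible world and invoking the p-WSA product semantics recalled in \S\ref{sec:u-relations}.

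For Part~(a), the key observation is that by construction of the repaired factorization $\Gamma_k^\prime$ (Def.~\ref{def:repaired-fd-set}), for every u-factor projection $Y_k^i[\phi\, A_i\, G_i]$ we have $\phi \to A_i G_i \in \Gamma_k^\prime$. Moreover, distinct u-factors are built from disjoint attribute groups $G_a$ by Problem~\ref{prob:u-learning}, so any two u-factor projections $Y_k^i$ and $Y_k^j$ share exactly the attribute $\phi$. Applying Heath's theorem to any pair $(Y_k^i, Y_k^j)$, the intersection $\{\phi\}$ is a key for both projections in $\Gamma_k^\prime$, so the binary join is lossless. By induction on the number of projections joined, any multi-way natural join is lossless as well.

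For Part~(b), the predictive projection $Y_k^j[\,\overline{V_j D_j} \mid S\,T\,]$ is the result of projecting, onto $S \cup T$, a join of $\sigma_{\upsilon=k}(Y_0)$, the u-factor projections $Y_k^i$ for $i \in I$, and the `big' fact table $H_k$. I would argue in two steps. First, the pre-projection join is itself lossless: $Y_0$ has key $\phi\,\upsilon$ (by its repair of $\phi$ in $H_0$), each $Y_k^i$ has $\phi \to A_i G_i$ in $\Gamma_k^\prime$, and $H_k$ contains all attributes of the union. Iteratively applying Heath's theorem over the chain of natural joins (each time the shared attributes functionally determine one side), no spurious tuples arise. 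Second, for the projection step onto $S\,T$, the condition columns $\overline{V_j D_j}$ inherit, by the p-WSA product rewrite rule (\S\ref{sec:u-relations}), the annotations $(V_0,D_0)$ from $Y_0$ and $(V_i,D_i)$ from every participating $Y_k^i$ with $i \in I$. Hence the random-variable assignments that pin down the projected-out pivot attributes $W$ remain recoverable: joining $Y_k^j$ back with the $Y_k^i$ over the shared condition columns yields, in each possible world $\theta$, the classical projection of $H_k$ restricted to $\upsilon=k$ onto $\phi\,\upsilon\,S\,T\,A_i\,G_i$ for $i \in I$. Invoking Part~(a) for the reattachment of the u-factor projections, no tuple of $H_k \cap \sigma_{\upsilon=k}$ is lost.

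The main obstacle I anticipate is the second step of Part~(b), namely formalising that the condition columns of a predictive projection correctly encode the pivot-attribute witnesses after the columns $W$ have been projected out of the data schema. This requires carefully tracking how the consistency predicate on condition columns (\S\ref{sec:u-relations}) aligns tuples across $Y_k^j$ and the $Y_k^i$, so as to ensure both completeness (every world-tuple in $H_k$ is reachable by join) and soundness (no spurious tuple is introduced by mismatched random-variable assignments). Once this bookkeeping is nailed down, losslessness in each possible world follows from Part~(a) together with the classical theorem, and the U-relational statement follows by the finite-support possible-worlds semantics of the U-database.
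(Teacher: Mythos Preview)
Your approach to Part~(a) is essentially the same as the paper's: both invoke Heath's theorem (stated as Lemma~\ref{lemma:lossless} in the paper) on pairs of u-factor projections, use that their intersection is exactly $\{\phi\}$ and that $\phi \to A_iG_i \in \Gamma_k^\prime$ by Def.~\ref{def:repaired-fd-set}, and then extend to arbitrary subsets by associativity (the paper) or induction (you).

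For Part~(b), however, you are working harder than necessary. The paper's proof is elementary: since the data columns $\phi\,\overline{A_iG_i}$ of the joined u-factor projections and the data columns $\phi\,\upsilon$ of $Y_0$ are each \emph{subsets} of $U = sch(H_k)$, the Heath condition is trivially satisfied in both cases --- the set difference $(\phi\,\overline{A_iG_i}) \setminus U$ (resp.\ $(\phi\,\upsilon) \setminus U$) is empty, and $X \to \varnothing$ holds in any fd closure. Commutativity of the join then gives losslessness of the full join. No argument about condition columns, possible-worlds semantics, or recoverability of the projected-out pivot attributes $W$ is needed: the theorem, as the paper reads it, concerns only the losslessness of the \emph{join} that feeds into the predictive projection (cf.\ Eq.~\ref{eq:u-propagation}), not the invertibility of the subsequent $\pi_{S,T}$. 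Your anticipated ``main obstacle'' is therefore not an obstacle for this statement; you have already identified the first step (the pre-projection join is lossless) and that step alone, argued via Heath's theorem with the trivial fd $X \to \varnothing$, is the entire content of the paper's proof of~(b).
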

\begin{proof}
We make use of a lemma from Ullman \cite[p. 397]{ullman1988}, and then the proof comes straightforwardly. 
See Appendix, \S\ref{a:lossless}.
$\Box$
\end{proof}

\begin{myremark}\label{rmk:lossless}
The significance of Theorem \ref{thm:bcnf} lies in that it guarantees the decomposition of uncertainty based on the causal ordering processing is in fact claim-centered as desirable for predictive analytics. 
Theorem \ref{thm:lossless} in turn is significant as it ensures that all the \emph{empirical} uncertainty implicit in a hypothesis `big' fact table $H_k$ can be decomposed into u-factor projections that are (a) independent (not strongly correlated, cf. Problem \ref{prob:u-learning}), and (b) can be fully recovered by a join that is lossless w.r.t. repaired factorization $\Gamma_k^\prime$ of structure $\mathcal S_k$ over $H_k$. This is essential to make sure that in u-propagation the composition of the required u-factors recovers the uncertainty associated with the predictive data. Since repaired factorization $\Gamma_k^\prime$ is known to be a correct processing of the causal ordering (cf. results of Chapter \ref{ch:reasoning}), altogether Theorem \ref{thm:lossless} guarantees that the first causes are joined together correctly towards the predictive variables influenced by them.
$\Box$
\end{myremark}

As we have seen, the p-DB synthesis technique presented here is essentially targeted at design-theoretic properties. It is also motivated by computational performance, as uncertainty decomposition is desirable also to speed up probabilistic inference \cite[p. 30-1]{suciu2011}. In fact, the \textsf{U-intro} procedure is fully grounded in U-relations and p-WSA as implemented in the \textsf{MayBMS} system. Its computational performance is dominated by U-relational query processing. We present experimental studies on the \textsf{U-intro} procedure in \S\ref{sec:experiments-app}, as they are designed from a applicability point of view. The goal is to provide some reference computational measures for prospective users.

\section{Related Work}\label{sec:related-work-synthesis4u}

Informed on research on Graphical Models (GM) \cite{darwiche2009}, Suciu et al. provide a striking motivation for work on \emph{probabilistic database design} \cite[p.30-1]{suciu2011}. In GM design, probability distributions on large sets of random variables are decomposed into factors of simpler probability functions, over small sets of these variables. The factors can be identified, e.g., by using a set of axioms (the so-called `graphoids') for reasoning about the probabilistic independence of variables \cite{verma1988}. 
The same design principle (sic.) applies to p-DB's \cite{suciu2011}: the data should be decomposed into its simplest components so that only key constraints hold in a table (i.e., it is in BCNF). Attribute- and tuple-level correlations should guide the table decomposition into simpler tables. Ideally, the original table with its probability distribution can be recovered as a query (a view) from the decomposed tables. We have followed such principle in our claim-centered decomposition for predictive analytics.

In fact, a connection between database normalization theory and factor decomposition in Graphical Models (GM) has been discussed by Verma and Pearl \cite{verma1988}, but has not been explored since then. To date, there is no formal design theory for p-DB's \cite{suciu2011}. 
A step in that direction is taken by Sarma et al. \cite{sarma2007}. Their initiative revisits dependency theory in view of reformulating fd's for uncertain schema design \cite{sarma2007}. Our work takes a different direction. We refer to classical dependency theory and U-relational operations (viz, its uncertainty-introduction operator) to construct p-DB's systematically from scratch. We have focused on the extraction and processing of fd's towards a factorized U-relational schema. The synthesized schema is ensured to be in BCNF and have a lossless join.

Despite some major differences, our synthesis method builds upon the classical theory of relational schema design by synthesis \cite{bernstein1976}. Classical design by synthesis \cite{bernstein1976} was once criticized due to its too strong `uniqueness' of fd's assumption \cite[p. $\!$443]{fagin1977}, as it reduces the problem of design to symbolic reasoning on fd's, arguably neglecting semantic issues. Probabilistic design, however, has roots in statistical design so that the problem is less amenable to human factors. As we extract the dependencies from a formal specification, design by synthesis is doing nothing but translating seamlessly (into fd's) the reduction made by the user herself in her tentative model for the studied phenomenon.

The last decade has seen significant research effort to make DB systems really usable \cite{jagadish2007}. Our design-by-synthesis framework can also be understood as a technique for \emph{user-friendly p-DB design}. For instance, in comparison, the \textsf{CRIUS} system supports another kind of user-friendly DB design approach that provides users with a spreadsheet-like direct manipulation interface to increasingly add structure to their data \cite{qian2010}. Our dependency extraction and processing, instead, completely alleviates the user from the burden of data organization.

Also related to probabilistic DB design is the topic of conditioning a p-DB. It has been firstly addressed by Koch and Olteanu motivated by data cleaning applications \cite{koch2008}. They have introduced the \texttt{assert} operation to implement, as in AI, a kind of knowledge compilation, viz., world elimination in face of constraints (e.g., FDs). For hypothesis management, nonetheless, we need to apply \emph{Bayes' conditioning} by asserting observed data, not constraints. In \S\ref{sec:analytics} we have presented an example that settles the kind of conditioning problem that is relevant to the \mbox{$\Upsilon$-DB} vision. In Chapter \ref{ch:applicability} we present realistic use cases. We have addressed the problem at application level only in order to complete the realization of the vision in a real prototype system. The formulation of Bayes' conditioning as an extension of, say, the U-relational data model is open to future work (cf. \S\ref{sec:future-work}).

\section{Summary of Results }\label{sec:synthesis4u-conclusions}

In this chapter we have studied and developed our end-purpose technique for the synthesis of a probabilistic DB geared for predictive analytics.
It completes the pipeline (Fig. \ref{fig:pipeline}) so that conditioning can then be performed iteratively.

\begin{itemize}
\item Algorithm \ref{alg:synthesize} \textsf{synthesize} gives a general formulation of how to perform uncertainty introduction from causal dependencies given in the form of fd's.

\item By Problem \ref{prob:u-learning}, we have given a definition of uncertainty factor learning from data available in a given relation.

\item Remark \ref{rmk:u-intro} provides an example of the u-factor and predictive projections resulting from p-DB synthesis and their corresponding probability distributions stored in the world table; 

\item By Remark \ref{rmk:claim} and Theorem \ref{thm:bcnf}, we have shown that U-relational schema $\boldsymbol Y_k$ synthesized over the fd's processed by causal reasoning is in BCNF. That is, it is in fact a claim-centered decomposition as desirable for predictive analytics.

\item Theorem \ref{thm:lossless} ensures that such (uncertainty) decomposition is correct, as the original (probability distribution) `big' fact table is fully recoverable by a lossless join.
\end{itemize}


\chapter{Applicability}\label{ch:applicability}

In this chapter we show the applicability of $\Upsilon$-DB in real-world scenarios. We present use cases in Computational Physiology extracted from the Physiome project.\footnote{\url{http://physiome.org}.} 
In \S\ref{sec:physiome} we introduce the Physiome project as providing a testbed for \mbox{$\Upsilon$-DB}. 
Then in \S\ref{sec:case-studies} we go through some Physiome case studies to show the construction of \mbox{$\Upsilon$-DB} and its application for data-driven hypothesis management and analytics. 
In \S\ref{sec:demo} we present a prototype of the $\Upsilon$-DB system and demonstrate it through the running example introduced in \S\ref{sec:u-intro}. 
In \S\ref{sec:experiments-app} we present experiments on Physiome hypotheses. In \S\ref{sec:discussion-app} we provide a general discussion on the applicability of \mbox{$\Upsilon$-DB}, its assumptions and scope. In \S\ref{sec:conclusions-app} we conclude the chapter.

\section{The Physiome Project as a Testbed}\label{sec:physiome}

\noindent
The Physiome project is an initiative to seriously address the problems of reproducibility, model integration and sharing in Computational Physiology \cite{bassingthwaighte2000,hunter2003}. It essentially comprises:

\begin{itemize}
\item a curated \emph{repository} of 380+ computational physiology models available online for researchers;\footnote{The Physiome model repository is expanded to over $73K+$ models by including models extracted from other sources (such as the EBML-EBI BioModels DB, the CellML Archive, and the Kegg Pathways DB) and converted to MML automatically.}

\item the \emph{Mathematical Modeling Language} (MML) to allow models to be written in declarative form and then exported into a number of XML-compliant interoperable formats;\footnote{\url{http://www.physiome.org/jsim/docs/MML_Intro.html}.}

\item a \emph{problem-solving environment} called \textsf{JSim} to allow researchers to code their MML models straightforwardly, run them under different parameter and solver settings and build customized data plots to see the results.
\end{itemize}

From the point of view of \mbox{$\Upsilon$-DB}, Physiome is an external data source that provides a very interesting testbed with realistic scenarios. We extract Physiome models into \mbox{$\Upsilon$-DB} by means of a wrapper we have implemented to read XMML files (\textsf{JSim}'s XML encoding of MML models). Simulation trial datasets are rendered by a parametrized UNIX script we have developed to invoke \textsf{JSim} automatically (in batch mode). Currently, $\Upsilon$-DB's Physiome wrapper is designed to read MAT files to load both the model input (parameter settings data) and its associated model output (computed predictive data) for each simulation trial.

Physiome does not keep records of phenomena in a repository, but it does have observational data attached to some of the entries of the model repository. Such models appear in the filter `models with data,' meaning that they have one or more observational datasets and plots showing how the model data fits to observations. We shall make use of model entries containing observational data in the realistic scenarios presented in this paper.

\section{Case Studies}\label{sec:case-studies}
\noindent
In this section we present use cases extracted from the Physiome model repository.\footnote{\url{http://www.physiome.org/Models/modelDB/}.}

\subsection{Case: Hemoglobin Oxygen Saturation}\label{subsec:case-hemoglobin}
\noindent
In this case we stress the potential of data-driven hypothesis analytics in comparison to handcrafted curve fitting (visual) analysis. We study three different hypotheses that perform ``closely'' visually when compared to their target phenomenon dataset, see Fig. \ref{fig:hemoglobin}. All of them have been empirically set as fit as possible to the observations (`R1s1' dataset) in their local view (in separate), and are now compared together in a global view.

\begin{figure}[H]
\centering
\includegraphics[keepaspectratio,width=.855\textwidth]{./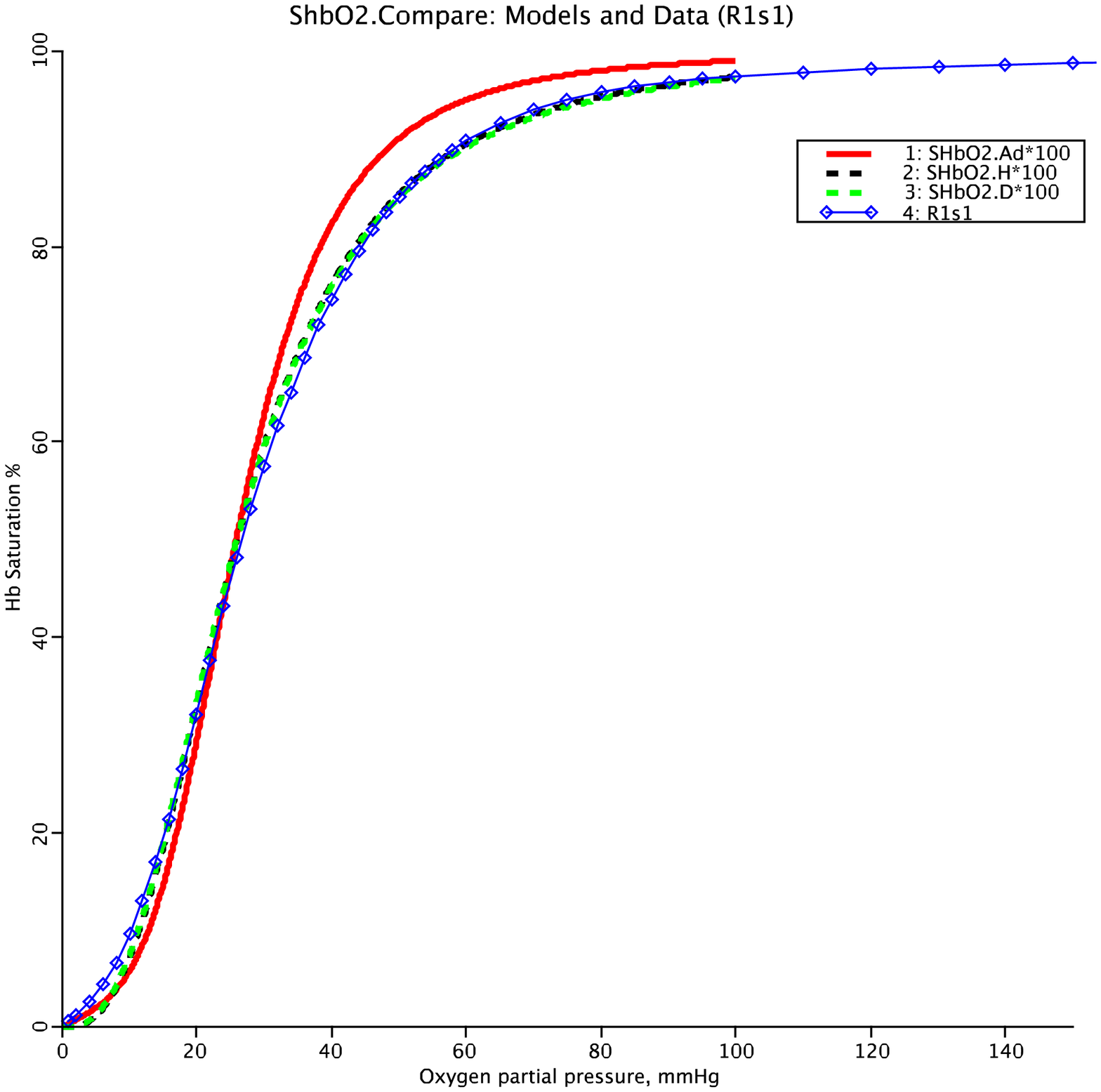}
\caption[Plot of hemoglobin oxygen saturation hypotheses]{Plot of hemoglobin oxygen saturation hypotheses (SHbO2.\{Ad, H, D\} curves) and their target observations (`R1s1' dataset). (source: Physiome).}
\label{fig:hemoglobin}
\end{figure}

\begin{myex}\label{ex:hemoglobin}
The resources of this example are shown in Fig. \ref{fig:case-hemoglobin}. We consider the Physiome model entries described in relation \begin{footnotesize}\textsf{HYPOTHESIS}\end{footnotesize}, associated to the phenomenon described in relation \begin{footnotesize}\textsf{PHENOMENON}\end{footnotesize} (cf. explanation relation $H_0$). One single hypothesis trial (its best fit) is considered for each hypothesis.
 $\Box$
\end{myex}

\begin{figure}[H]\footnotesize
\advance\leftskip-0.2cm
\begin{spacing}{0.9}
\begingroup\setlength{\fboxsep}{4pt}
\colorbox{yellow!15}{%
   \begin{tabular}{c|>{\columncolor[gray]{0.92}}c||l|p{0.58\linewidth}}
  \textsf{HYPOTHESIS} & $\upsilon$ & \textsf{Name} & \textsf{Description}\\
      \hline    
   & $28$ & \textsf{HbO.Hill} & Hill Equation for O2 binding to hemoglobin.\vspace{6pt}\\
   & $31$ & \textsf{HbO.Adair} & Hemoglobin O2 saturation curve using Adair's 4-site equation.\vspace{6pt}\\
   & $32$ & \textsf{HbO.Dash} & Hemoglobin O2 saturation curve at varied levels of PCO2 and pH.\vspace{6pt}\\
   \end{tabular}
}\endgroup
\vspace{5pt}
\begingroup\setlength{\fboxsep}{3pt}
\colorbox{blue!4}{%
   \begin{tabular}{c|>{\columncolor[gray]{0.92}}c||p{0.51\linewidth}}
  \textsf{PHENOMENON} & $\phi$ & \textsf{Description}\\
      \hline    
   & $1$ & Hemoglobin oxygen saturation with observational dataset from Sevenringhaus 1979.\vspace{6pt}\\ 
   \end{tabular}
}\endgroup
\hspace{0pt}
\begingroup\setlength{\fboxsep}{3pt}
\colorbox{blue!4}{%
   \begin{tabular}{c|>{\columncolor[gray]{0.92}}c|>{\columncolor[gray]{0.92}}c}
  $H_0$ & $\phi$ & $\upsilon$\\
      \hline    
   & $1$ & $28$\\ 
   & $1$ & $31$\\ 
   & $1$ & $32$\\ 
   \end{tabular}
}\endgroup
\end{spacing}
\caption[Descriptive (textual) data of Example \ref{ex:hemoglobin}.]{Descriptive (textual) data of Example \ref{ex:hemoglobin}, with ids $\upsilon$ from Physiome's model repository (\footnotesize{\url{http://www.physiome.org/Models/modelDB/}}).}
\label{fig:case-hemoglobin}
\end{figure}

\noindent
\textbf{Encoding}. The fd encoding of hypotheses $\upsilon \in \{28, 31, 32\}$ is shown (resp.) in Fig. \ref{fig:case-hemoglobin-encoding28}, Fig. \ref{fig:case-hemoglobin-encoding31} and Fig. \ref{fig:case-hemoglobin-encoding32}.

\begin{figure}[h!]
\begin{footnotesize}
\begin{framed}
\vspace{-10pt}
\begin{eqnarray*}
\Sigma_{28} = \{\quad
\textsf{KO2\;\;n\;\;pO2} \;\upsilon \quad\to\quad \textsf{SHbO2\_H},\\
\textsf{n\;\;p50} \;\upsilon \quad\to\quad \textsf{KO2},\\
\phi \quad\to\quad \textsf{n\, p50\, pO2\_delta\, pO2\_max\, pO2\_min} \quad\}.
\end{eqnarray*}
\vspace{-15pt}
\end{framed}
\end{footnotesize}
\vspace{-5pt}
\caption{Fd set $\Sigma_{28}$ of hypothesis $\upsilon\!=\!28$.}
\label{fig:case-hemoglobin-encoding28}
\end{figure}

\begin{figure}[h!]
\begin{footnotesize}
\begin{framed}
\vspace{-10pt}
\begin{eqnarray*}
\Sigma_{31} = \{\quad
\textsf{A1\;\;A2\;\;A3\;\;A4\;\;pO2} \;\upsilon \quad\to\quad \textsf{SHbO2\_Ad},\\
\textsf{p50} \;\upsilon \quad\to\quad \textsf{A1\;\;A2\;\;A4},\\
\phi \quad\to\quad \textsf{A3\, p50\, pO2\_delta\, pO2\_max\, pO2\_min} \quad\}.
\end{eqnarray*}
\vspace{-15pt}
\end{framed}
\end{footnotesize}
\vspace{-5pt}
\caption{Fd set $\Sigma_{31}$ of hypothesis $\upsilon\!=\!31$.}
\label{fig:case-hemoglobin-encoding31}
\end{figure}

\begin{figure}[h!]
\begin{footnotesize}
\begin{framed}
\vspace{-10pt}
\begin{eqnarray*}
\Sigma_{32} = \{\quad
\textsf{KO2\;\;pO2} \;\upsilon \quad\to\quad \textsf{SHbO2\_D},\\
\textsf{term1\;\;term2\;\;term3} \;\upsilon \quad\to\quad \textsf{KO2},\\
\textsf{HpRBC\, K6p\, alphaO2} \;\upsilon \quad\to\quad \textsf{term1},\\
\textsf{HpRBC\, K3\, K5\, alphaCO2\, pCO2} \;\upsilon \quad\to\quad \textsf{term2},\\
\textsf{HpRBC\, K2\, K4\, alphaCO2\, pCO2} \;\upsilon \quad\to\quad \textsf{term3},\\
\textsf{pH} \;\upsilon \quad\to\quad \textsf{HpRBC},\\
\phi \;\;\to\;\; \textsf{K2\;\;K3\;\;K4\;\;K5\;\;K6p\;\;alphaCO2\;\;alphaO2\;\;pCO2\;\;pH\;\;pO2\_delta, pO2\_max\, pO2\_min} \;\}.
\end{eqnarray*}
\vspace{-15pt}
\end{framed}
\end{footnotesize}
\vspace{-5pt}
\caption{Fd set $\Sigma_{32}$ of hypothesis $\upsilon\!=\!32$.}
\label{fig:case-hemoglobin-encoding32}
\end{figure}

\noindent
\textbf{Symbol Mappings}. As we have seen, the insertion of hypothesis trial datasets requires users to specify a target phenomenon and the corresponding mappings from the hypothesis symbols to the target phenomenon symbols. In this use case, we have:

\begin{itemize}
\item ${\mathcal M}_{28 \mapsto 1} = \{\; \textsf{pO2} \mapsto \textsf{pO2},\; \textsf{SHbO2\_H} \mapsto \textsf{SHbO2} \;\}$;\vspace{-3pt}
\item ${\mathcal M}_{31 \mapsto 1} = \{\; \textsf{pO2} \mapsto \textsf{pO2},\; \textsf{SHbO2\_Ad} \mapsto \textsf{SHbO2} \;\}$;\vspace{-3pt}
\item ${\mathcal M}_{32 \mapsto 1} = \{\; \textsf{pO2} \mapsto \textsf{pO2},\; \textsf{SHbO2\_D} \mapsto \textsf{SHbO2} \;\}$;
\end{itemize}

\noindent
\textbf{Hypothesis Management}. Query Q1 illustrates the feature of hypothesis management for this case. We consider the user is interested in all \textsf{SHbO2} predictions over a subset of the \textsf{pO2} domain. Its result set is shown in Fig. \ref{fig:case-hemoglobin-management}.

\vspace{12pt}
\noindent
$\;\;$\textsf{Q1. (\textbf{select} phi, upsilon, tid, ``pO2'', ``SHbO2\_H'' \textbf{as} SHbO2 \textbf{from} Y28\_claim1}\\\indent \textsf{\textbf{where} phi=1 \textbf{and} ``pO2''>=20 \textbf{and} ``pO2''<=40) \textbf{union all}}\\
\indent\textsf{(\textbf{select} phi, upsilon, tid, ``pO2'', ``SHbO2\_Ad'' \textbf{as} SHbO2 \textbf{from} Y31\_claim1}\\\indent \textsf{\textbf{where} phi=1 \textbf{and} ``pO2''>=20 \textbf{and} ``pO2''<=40) \textbf{union all}}\\
\indent\textsf{(\textbf{select} phi, upsilon, tid, ``pO2'', ``SHbO2\_D'' \textbf{as} SHbO2 \textbf{from} Y32\_claim1}\\\indent \textsf{\textbf{where} phi=1 \textbf{and} ``pO2''>=20 \textbf{and} ``pO2''<=40) \textbf{order by} ``pO2'', upsilon, tid;}
\vspace{9pt}

\begin{figure}[h]\footnotesize
\centering
\begingroup\setlength{\fboxsep}{3pt}
\colorbox{blue!5}{%
   \begin{tabular}{c|>{\columncolor[gray]{0.92}}c|>{\columncolor[gray]{0.92}}c|>{\columncolor[gray]{0.92}}c||c|c}
  \textsf{Q1} & $\phi$ & $\upsilon$ & \textsf{tid} & \textsf{pO2} & \textsf{SHbO2}\\
      \hline    
   & $1$ & $28$ & $1$ & $20$ & $0.329956122828398$\\
   & $1$ & $31$ & $1$ & $20$ & $0.294443723056007$\\
   & $1$ & $32$ & $1$ & $20$ & $0.334165672301096$\\
   \cline{2-6}
   & $\cdots$ &  $\cdots$ & $\cdots$ & $\cdots$ & $\cdots$\\
   \cline{2-6}
   & $1$ & $28$ & $1$ & $40$ & $0.761898061367189$\\
   & $1$ & $31$ & $1$ & $40$ & $0.823463829100424$\\
   & $1$ & $32$ & $1$ & $40$ & $0.759042287799556$\\
   \end{tabular}
}\endgroup
\caption{Result set of hypothesis management query Q1.}
\label{fig:case-hemoglobin-management}
\end{figure}

\vspace{-3pt}
\noindent
\textbf{Hypothesis Analytics}. Fig. \ref{fig:case-hemoglobin-analytics} shows the results of analytics after conditoning the probability distribution in the presence of observations (`R1s1' dataset). 
The fact that hypothesis $\upsilon=31$ provides the best explanation for the studied phenomenon is enabled by the application of Bayesian inference as implemented within the $\Upsilon$-DB system. The contribution of the $\Upsilon$-DB methodology is to equip users with a tool for large-scale, data-driven hypothesis management and analytics.

\vspace{5pt}
\begin{figure}[h!]\footnotesize
\centering
\begingroup\setlength{\fboxsep}{3pt}
\colorbox{yellow!15}{%
   \begin{tabular}{c|>{\columncolor[gray]{0.92}}c|>{\columncolor[gray]{0.92}}c|>{\columncolor[gray]{0.92}}c||c|c|c|c}
  \textsf{PH1\_CONF} & $\phi$ & $\upsilon$ & \textsf{tid} & \textsf{pO2} & \textsf{SHbO2} & \textsf{Prior} & \textsf{Posterior}\\
      \hline    
   & $1$ & $28$ & $1$ & $1$ & $0.000151184162020125$ & $.333$ & $.326$\\
   & $1$ & $31$ & $1$ & $1$ & $0.003789100566457180$ & $.333$ & \cellcolor{red!25} $.349$\\
   & $1$ & $32$ & $1$ & $1$ & $0.000178973375779681$ & $.333$ & $.325$\\
   \cline{2-8}   
   & $\cdots$ &  $\cdots$ & $\cdots$ & $\cdots$ & $\cdots$ & $\cdots$\\   
   \cline{2-8}   
   & $1$ &  $28$ & $1$ & $100$ & $0.974346796798538$ & $.333$ & $.326$\\
   & $1$ &  $31$ & $1$ & $100$ & $0.990781330988763$ & $.333$ & \cellcolor{red!25} $.349$\\
   & $1$ &  $32$ & $1$ & $100$ & $0.972764121981342$ & $.333$ & $.325$\\
   \end{tabular}
}\endgroup
\vspace{3pt}
\caption{Results of analytical study on the hemoglobin phenomenon.}
\label{fig:case-hemoglobin-analytics}
\vspace{-15pt}
\end{figure}


\subsection{Case: Baroreflex Dysfunction in Dahl SS Rat}\label{subsec:case-baroreflex}
\noindent
This case is extracted from the Virtual Physiological Rat project,\footnote{\url{http://virtualrat.org/computational-models/}.} Here we show the potential of data-driven hypothesis management and analytics for model tuning. Fig. \ref{fig:baroreflex} shows the best fit of a baroreflex model for an observational dataset acquired by experiment on Dahl SS rat \cite{beard2010}. We in turn use $\Upsilon$-DB to carry out such hypothesis management and analytics. We generate by a parameter sweep script $1K$ trials and insert them into the database. A best fit is then selected automatically by Bayesian inference.

\vspace{15pt}
\begin{figure}[H]
\centering
\includegraphics[keepaspectratio,width=.65\textwidth]{./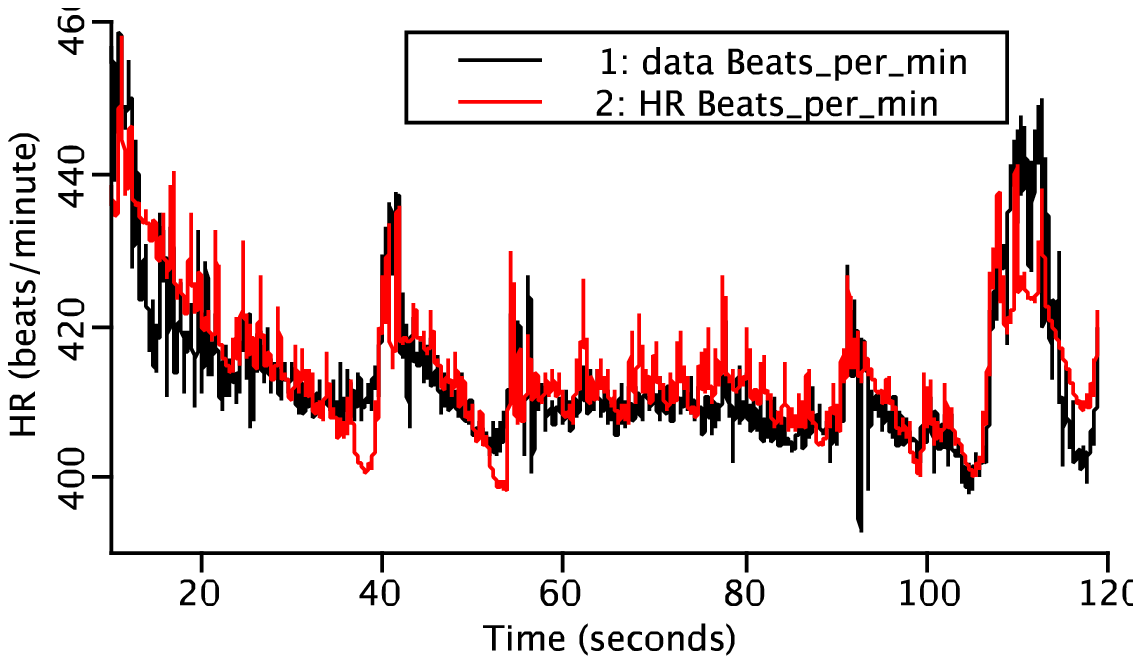}
\vspace{-10pt}
\caption[Plot of baroreflex hypothesis for Dahl SS Rat]{Plot of baroreflex hypothesis (`HR') for Dahl SS Rat and its target observations (`data'). (source: \cite{beard2010}).}
\label{fig:baroreflex}
\vspace{-4pt}
\end{figure}

\begin{myex}\label{ex:baroreflex}
The resources of this example are shown in Fig. \ref{fig:case-baroreflex}. We consider the single hypothesis entry described in relation \begin{footnotesize}\textsf{HYPOTHESIS}\end{footnotesize}, and the phenomenon described in relation \begin{footnotesize}\textsf{PHENOMENON}\end{footnotesize}. By parameter sweep, $1K$ trials are inserted into $\Upsilon$-DB for management and analytics.
 $\Box$
\end{myex}

\begin{figure}[H]\footnotesize
\advance\leftskip-0.2cm
\begin{spacing}{0.9}
\begingroup\setlength{\fboxsep}{4pt}
\colorbox{yellow!15}{%
   \begin{tabular}{c|>{\columncolor[gray]{0.92}}c||l|p{0.58\linewidth}}
  \textsf{HYPOTHESIS} & $\upsilon$ & \textsf{Name} & \textsf{Description}\\
      \hline
   & $1001$ & \textsf{Baroreflex\_SB\_CT} &  Physiological model of the full baroreflex heart control system based on experimental measurements.\vspace{6pt}\\
   \end{tabular}
}\endgroup
\vspace{5pt}
\begingroup\setlength{\fboxsep}{3pt}
\colorbox{blue!4}{%
   \begin{tabular}{c|>{\columncolor[gray]{0.92}}c||p{0.5\linewidth}}
  \textsf{PHENOMENON} & $\phi$ & \textsf{Description}\\
      \hline    
   & $2$ & Baroreflex dysfunction in Dahl SS Rat.\vspace{6pt}\\ 
   \end{tabular}
}\endgroup
\hspace{0pt}
\begingroup\setlength{\fboxsep}{3pt}
\colorbox{blue!4}{%
   \begin{tabular}{c|>{\columncolor[gray]{0.92}}c|>{\columncolor[gray]{0.92}}c}
  $H_0$ & $\phi$ & $\upsilon$\\
      \hline    
   & $2$ & $1001$\\ 
   \end{tabular}
}\endgroup
\end{spacing}
\caption[Descriptive (textual) data of Example \ref{ex:baroreflex}.]{Descriptive (textual) data of Example \ref{ex:baroreflex}.}
\label{fig:case-baroreflex}
\vspace{-20pt}
\end{figure}

\vspace{12pt}
\noindent
\textbf{Encoding}. The fd encoding of hypothesis $\upsilon \in \{1001\}$ is shown in Fig. \ref{fig:case-baroreflex-encoding}.

\noindent
\textbf{Symbol Mappings}. We consider that the user provides symbol mappings:
\begin{itemize}
\item ${\mathcal M}_{1001 \mapsto 2} = \{\; \textsf{Time} \mapsto \textsf{Time},\; \textsf{HR} \mapsto \textsf{HR} \;\}$;
\end{itemize}

\vspace{8pt}
\noindent
\textbf{Hypothesis Management}. In query Q2 we consider that the user is interested in time instants where the heart rate is higher than a threshold, say, 300 beats/min. The result set is shown in Fig. \ref{fig:case-baroreflex-management}.

\vspace{15pt}
\indent\indent$\;\;$\textsf{Q2. \textbf{select} phi, upsilon, tid, ``Time'', ``HR'' \textbf{from} Y1001\_claim1}\\ \indent\indent\indent \textsf{\textbf{where} phi=2 \textbf{and} ``HR''>=300 
\textbf{order by} ``Time'', tid;}
\vspace{9pt}

\begin{figure}[h]\footnotesize
\vspace{-3pt}
\centering
\begingroup\setlength{\fboxsep}{3pt}
\colorbox{blue!5}{%
   \begin{tabular}{c|>{\columncolor[gray]{0.92}}c|>{\columncolor[gray]{0.92}}c|>{\columncolor[gray]{0.92}}c||c|c}
  \textsf{Q2} & $\phi$ & $\upsilon$ & \textsf{tid} & \textsf{Time} & \textsf{HR}\\
      \hline    
   & $2$ & $1001$ & $1$ & $0.61$ & $300.013659905941$\\
   & $2$ & $1001$ & $2$ & $0.61$ & $300.011268345391$\\
   & $\cdots$ &  $\cdots$ & $\cdots$ & $\cdots$ & $\cdots$\\
   & $2$ & $1001$ & $96$ & $0.61$ & $300.001934440349$\\
   \cline{2-6}
   & $2$ & $1001$ & $1$ & $0.62$ & $300.607671377207$\\   
   & $\cdots$ &  $\cdots$ & $\cdots$ & $\cdots$ & $\cdots$\\
   \end{tabular}
}\endgroup
\caption{Result set of hypothesis management query Q2.}
\label{fig:case-baroreflex-management}
\vspace{-12pt}
\end{figure}

\vspace{8pt}
\noindent
\textbf{Hypothesis Analytics}. Fig. \ref{fig:case-baroreflex-analytics} shows the results of analytics on phenomenon $\phi\!=\!2$ after conditioning the probability distribution in the presence of observations (`SSBN9\_HR' dataset). Since this case deals with model tuning, viz., $1K$ slightly different parameter settings, the trial ranking is decided by small differences in the posterior probability distribution (cf. Fig \ref{fig:case-baroreflex-analytics}).

\vspace{8pt}
\begin{figure}[h]\footnotesize
\vspace{3pt}
\centering
\begingroup\setlength{\fboxsep}{3pt}
\colorbox{yellow!15}{%
   \begin{tabular}{c|>{\columncolor[gray]{0.92}}c|>{\columncolor[gray]{0.92}}c|>{\columncolor[gray]{0.92}}c||c|c|c|c}
  $\!\!$\textsf{PH2\_CONF} & $\phi$ & $\upsilon$ & \textsf{tid} & \textsf{Time} & \textsf{HR} & \textsf{Prior} & \textsf{Posterior}$\!\!$\\
      \hline    
   & $2$ & $1001$ & $491$ & $0.4$ & $\!286.556506432110\!$ & $\!.001000\!$ & \cellcolor{red!25} $\!.00103159\!$\\
   & $2$ & $1001$ & $591$ & $0.4$ & $\!286.525209765226\!$ & $\!.001000\!$ & $\!.00103144\!$\\
   & $2$ & $1001$ & $492$ & $0.4$ & $\!286.555565558231\!$ & $\!.001000\!$ & $\!.00103023\!$\\
   & $\cdots$ &  $\cdots$ & $\cdots$ & $\cdots$ & $\cdots$ & $\cdots$\\   
   \cline{2-8}
   & $2$ & $1001$ & $491$ & $\!118.9\!$ & $\!421.251853783050\!$ & $\!.001000\!$ &  \cellcolor{red!25} $\!.00103159\!$\\
   & $2$ & $1001$ & $591$ & $\!118.9\!$ & $\!421.110425308905\!$ & $\!.001000\!$ & $\!.00103144\!$\\\
   & $2$ & $1001$ & $492$ & $\!118.9\!$ & $\!421.297317710657\!$ & $\!.001000\!$ & $\!.00103023\!$\\
   & $\cdots$ &  $\cdots$ & $\cdots$ & $\cdots$ & $\cdots$ & $\cdots$\\   
   \end{tabular}
}\endgroup
\caption{Results of analytical study on the baroreflex phenomenon.}
\label{fig:case-baroreflex-analytics}
\vspace{-15pt}
\end{figure}

\begin{figure}[t!]\scriptsize
\begin{framed}
\vspace{-17pt}
\begin{eqnarray*}
\Sigma_{1001} = \{\quad
\textsf{HR}\;\upsilon \quad\to\quad \textsf{Period},\vspace{-3pt}\\
\textsf{Beta\;\;HR\_p\;\;HR\_s\;\;HRmin\;\;HRo}\;\upsilon \quad\to\quad \textsf{HR},\vspace{-3pt}\\
\textsf{HRo\;\;delta\_HR\_p}\;\upsilon \quad\to\quad \textsf{HR\_p},\\
\textsf{delta\_HR\_pfast\;\;delta\_HR\_pslow}\;\upsilon \quad\to\quad \textsf{delta\_HR\_p},\\
\textsf{HRo\;\;delta\_HR\_s}\;\upsilon \quad\to\quad \textsf{HR\_s},\\
\textsf{Time\;\;delta\_HR\_s\_\_Time\_min\;\;delta\_HR\_ss\;\;tau\_HR\_nor}\;\upsilon \quad\to\quad \textsf{delta\_HR\_s},\\
\textsf{Gamma\;\;delta\_HR\_ps}\;\upsilon \quad\to\quad \textsf{delta\_HR\_pfast},\\
\!\!\textsf{Gamma\;\;Time\;\;delta\_HR\_ps\;\;delta\_HR\_pslow\_\_Time\_min\;\;tau\_HR\_ach}\;\upsilon \;\;\to\;\; \textsf{delta\_HR\_pslow},\\
\textsf{K\_nor\;\;c\_nor\;\;delta\_HR\_smax}\;\upsilon \quad\to\quad \textsf{delta\_HR\_ss},\\
\textsf{C\_ach\;\;K\_ach\;\;delta\_HR\_pmax}\;\upsilon \quad\to\quad \textsf{delta\_HR\_ps},\\
\textsf{Time\;\;Ts\;\;c\_nor\_\_Time\_min\;\;q\_nor, tau\_nor}\;\upsilon \quad\to\quad \textsf{c\_nor},\\
\textsf{C\_ach\_\_Time\_min\;\;Time\;\;Tp\;\;q\_ach\;\;tau\_ach}\;\upsilon \quad\to\quad \textsf{C\_ach},\\
\textsf{Gs\;\;Tsmax\;\;Tsmin\;\;alpha\_cns\;\;alpha\_s0}\;\upsilon \quad\to\quad \textsf{Ts},\\
\textsf{Gp\;\;Tpmax\;\;Tpmin\;\;alpha\_cns\;\;alpha\_p0}\;\upsilon \quad\to\quad \textsf{Tp},\\
\textsf{Gcns\;\;n}\;\upsilon \quad\to\quad \textsf{alpha\_cns},\\
\textsf{S\;\;Zeta\;\;delta\;\;delta\_th}\;\upsilon \quad\to\quad \textsf{n},\\
\textsf{Eps\_1\;\;Eps\_wall}\;\upsilon \quad\to\quad \textsf{delta},\\
\textsf{B1\;\;Eps\_1\_\_Time\_min\;\;Eps\_2\;\;Eps\_2\_\_Time\_min\;\;Eps\_wall\;\;K1\;\;Kne\;\;Time}\;\upsilon \quad\to\quad \textsf{Eps\_1},\\
\textsf{B1\;\;B2\;\;Eps\_1\;\;Eps\_1\_\_Time\_min\;\;Eps\_2\_\_Time\_min\;\;Eps\_3\;\;Eps\_3\_\_Time\_min\;\;K1\;\;K2\;\;Time}\;\upsilon \;\to\; \textsf{Eps\_2},\\
\textsf{B2\;\;B3\;\;Eps\_2\;\;Eps\_2\_\_Time\_min\;\;Eps\_3\_\_Time\_min\;\;K2\;\;K3\;\;Time}\;\upsilon \quad\to\quad \textsf{Eps\_3},\\
\textsf{R\;\;R0}\;\upsilon \quad\to\quad \textsf{Eps\_wall},\\
\textsf{A}\;\upsilon \quad\to\quad \textsf{R},\\
\textsf{A\_\_Time\_min\;\;Bwall\;\;Cwall\;\;P\;\;R0\;\;Time}\;\upsilon \quad\to\quad \textsf{A},\\
\textsf{HRmax\;\;HRo}\;\upsilon \quad\to\quad \textsf{delta\_HR\_smax},\\
\textsf{HRmin\;\;HRo}\;\upsilon \quad\to\quad \textsf{delta\_HR\_pmax},\\
\textsf{Time}\;\phi \quad\to\quad \textsf{data},\\
\phi \;\;\to\;\; \textsf{A\_\_Time\_min\;\;B1\;\;B2\;\;B3\;\;Beta\;\;Bwall\;\;C\_ach\_\_Time\_min\;\;Cwall\;\;Eps\_1\_\_Time\_min}\\ \textsf{Eps\_2\_\_Time\_min\;\;Eps\_3\_\_Time\_min\;\;Gamma\;\;Gcns\;\;Gp\;\;Gs\;\;HRmax\;\;HRmin\;\;HRo\;\;K1\;\;K2\;\;K3}\\
\textsf{K\_ach\;\;K\_nor\;\;Kne\;\;P\;\;R0\;\;S\;\;Time\_delta\;\;Time\_max\;\;Time\_min\;\;Tpmax\;\;Tpmin\;\;Tsmax\;\;Tsmin}\\
\textsf{Zeta\;\;alpha\_p0\;\;alpha\_s0\;\;c\_nor\_\_Time\_min\;\;delta\_HR\_pslow\_\_Time\_min\;\;delta\_HR\_s\_\_Time\_min}\\
\textsf{delta\_th\;\;q\_ach\;\;q\_nor\;\;tau\_HR\_ach\;\;tau\_HR\_nor\;\;tau\_ach\;\;tau\_nor} \quad\}.
\end{eqnarray*}
\vspace{-20pt}
\end{framed}
\vspace{-10pt}
\caption{Fd set $\Sigma_{1001}$ of hypothesis $\upsilon\!=\!1001$.}
\label{fig:case-baroreflex-encoding}
\end{figure}

\subsection{Case: Myogenic Behavior of a Blood Vessel}\label{subsec:case-blood-vessel}
\noindent
Computational models of physiology may account for diverse effects that take place at different levels of biological organization from the organ to the cellular and molecular levels \cite{hunter2003}. Typically, a sophisticate model is developed incrementally by, say, adding detail into some previously existing model or extending its dimensionality (e.g., extending it from a stationary to a dynamic account of phenomena). In this case study (cf. Example \ref{ex:-blood-vessel}) we consider alternative models of the myogenic behavior of a reference human blood vessel.

\begin{figure}[H]
\advance\leftskip-0.5cm
\includegraphics[keepaspectratio,width=1.05\textwidth]{./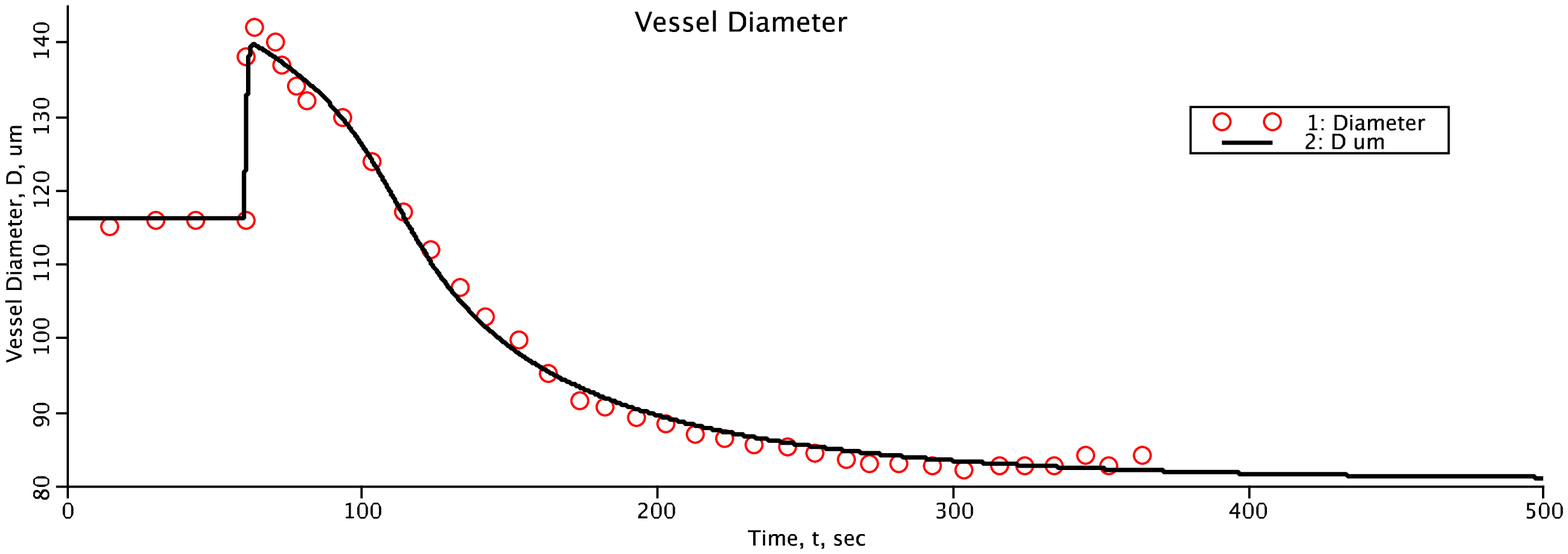}
\caption[Plot of myogenic behavior hypothesis]{Plot of myogenic behavior hypothesis (`D') according to $\upsilon\!=\!89$, trial \textsf{tid=2}, and its target observations (`Diameter').}
\label{fig:blood-vessel}
\end{figure}

\begin{myex}\label{ex:-blood-vessel}
(See Fig. \ref{fig:case-blood-vessel}). We consider the Physiome model entries displayed in relation \begin{footnotesize}\textsf{HYPOTHESIS}\end{footnotesize}, and two phenomena (see relation \begin{footnotesize}\textsf{PHENOMENON}\end{footnotesize}). One trial is considered for hypothesis $\upsilon=60$, and two for hypothesis $\upsilon=89$.
 $\Box$
\end{myex}

\vspace{15pt}
\begin{figure}[H]\footnotesize
\advance\leftskip-0.2cm
\begin{spacing}{0.9}
\begingroup\setlength{\fboxsep}{4pt}
\colorbox{yellow!15}{%
   \begin{tabular}{c|>{\columncolor[gray]{0.92}}c||p{0.20\linewidth}|p{0.525\linewidth}}
  \textsf{HYPOTHESIS} & $\!\!\upsilon\!\!$ & \textsf{Name} & \textsf{Description}\\
      \hline    
   & $\!\!60\!\!$ & \textsf{Myogenic\_Compliant} \textsf{\_Vessel} & This model simulates the flow through a passive and actively responding vessel driven by a sinusoidal pressure input.$\!\!$\vspace{6pt}\\
   & $\!\!89\!\!$ & \textsf{Myo\_Dyn\_Resp\_wFit} & This model describes the dynamic response of a vessel after a step increase in intraluminal pressure.$\!\!$\vspace{6pt}\\
   \end{tabular}
}\endgroup
\vspace{5pt}
\begingroup\setlength{\fboxsep}{3pt}
\colorbox{blue!4}{%
   \begin{tabular}{c|>{\columncolor[gray]{0.92}}c||p{0.5\linewidth}}
  \textsf{PHENOMENON} & $\phi$ & \textsf{Description}\\
      \hline    
   & $3$ & Dynamics of vessel diameter in response to pulsatile intraluminal pressure.\vspace{6pt}\\ 
   \end{tabular}
}\endgroup
\hspace{0pt}
\begingroup\setlength{\fboxsep}{3pt}
\colorbox{blue!4}{%
   \begin{tabular}{c|>{\columncolor[gray]{0.92}}c||>{\columncolor[gray]{0.92}}c}
  $H_0$ & $\phi$ & $\upsilon$\\
      \hline    
   & $3$ & $60$\\
   & $3$ & $89$\\
   \end{tabular}
}\endgroup
\caption[Descriptive (textual) data of Example \ref{ex:-blood-vessel}.]{Descriptive (textual) data of Example \ref{ex:-blood-vessel}.}
\label{fig:case-blood-vessel}
\end{spacing}
\end{figure}

\noindent
\textbf{Encoding}. The fd encoding of hypotheses $\upsilon \in \{60, 89\}$ is shown (resp.) in Fig. \ref{fig:case-blood-vessel-encoding60} and Fig. \ref{fig:case-blood-vessel-encoding89}.

\vspace{8pt}
\begin{figure}[h!]\footnotesize
\begin{framed}
\vspace{-12pt}
\begin{eqnarray*}
\Sigma_{60} = \{\quad
\textsf{Fcomp\;\;Fout} \;\upsilon \quad\to\quad \textsf{Fin},\\
\textsf{Pin\;\;Pout\;\;R} \;\upsilon \quad\to\quad \textsf{Fout},\\
\textsf{V\;\;V\_\_t\_min\;\;t} \;\upsilon \quad\to\quad \textsf{Fcomp},\\
\textsf{D\;\;L\;\;mu} \;\upsilon \quad\to\quad \textsf{R},\\
\textsf{D\;\;L} \;\upsilon \quad\to\quad \textsf{V},\\
\textsf{D\;\;Pin\;\;T} \;\upsilon \quad\to\quad \textsf{A},\\
\textsf{A\;\;C1a\;\;C1p\;\;C2a\;\;C2p\;\;C3a\;\;Dp100\;\;Pin\;\;T\;\;Ttarget} \;\upsilon \quad\to\quad \textsf{Atarget},\\
\textsf{Atarget\;\;Cglobal\;\;Cmyo\;\;Pin} \;\upsilon \quad\to\quad \textsf{D},\\
\textsf{D\;\;D\_\_t\_min\;\;Dc\;\;Tc\;\;Ttarget\;\;t\;\;taud} \;\upsilon \quad\to\quad \textsf{T},\\
\textsf{A\;\;A\_\_t\_min\;\;Atarget\;\;t\;\;taua} \;\upsilon \quad\to\quad \textsf{Ttarget},\\
\textsf{Ac} \;\upsilon \quad\to\quad \textsf{A\_\_t\_min},\\
\textsf{Cglobal\;\;Cmyo\;\;Tc} \;\upsilon \quad\to\quad \textsf{Ac},\\
\textsf{Dc} \;\upsilon \quad\to\quad \textsf{D\_\_t\_min},\\
\textsf{Dc\;\;Pmean} \;\upsilon \quad\to\quad \textsf{Tc},\\
\textsf{Pamp\;\;Pmean\;\;t\;\;tnorm} \;\upsilon \quad\to\quad \textsf{Pin},\\
\textsf{C1a\;\;C1p\;\;C2a\;\;C2p\;\;C3a\;\;Cglobal\;\;Cmyo\;\;Dp100\;\;Pc} \;\upsilon \quad\to\quad \textsf{Dc},\\
\phi \quad\to\quad \textsf{C1a\;\;C1p\;\;C2a\;\;C2p\;\;C3a\;\;Cglobal\;\;Cmyo\;\;Dp100\;\;L\;\;Pamp\;\;Pc\;\;Pext}\\
\textsf{Pmean\;\;Pout\;\;V\_\_t\_min\;\;mu\;\;t\_delta\;\;t\_max\;\;t\_min\;\;taua\;\;taud\;\;tnorm} \quad\}.
\end{eqnarray*}
\vspace{-12pt}
\end{framed}
\caption{Fd set $\Sigma_{60}$ of hypothesis $\upsilon\!=\!60$.}
\label{fig:case-blood-vessel-encoding60}
\end{figure}

\begin{figure}[h!]\footnotesize
\begin{framed}
\vspace{-12pt}
\begin{eqnarray*}
\Sigma_{89} = \{\quad
\textsf{D\, P\, T} \;\upsilon \quad\to\quad \textsf{A},\\
\textsf{A\, C1a\, C1p\, C2a\, C2p\, C3a\, Dp100\, P\, T\, Ttarget} \;\upsilon \quad\to\quad \textsf{Atarget},\\
\textsf{Atarget\, Cglobal\, Cmyo\, P} \;\upsilon \quad\to\quad \textsf{D},\\
\textsf{D\, D\_\_t\_min\, Dc\, Tc\, Ttarget\, t\, taud} \;\upsilon \quad\to\quad \textsf{T},\\
\textsf{A\, A\_\_t\_min\, Atarget\, t\, taua} \;\upsilon \quad\to\quad \textsf{Ttarget},\\
\textsf{Ac} \;\upsilon \quad\to\quad \textsf{A\_\_t\_min},\\
\textsf{Dc\, Pc} \;\upsilon \quad\to\quad \textsf{Tc},\\
\textsf{Cglobal\, Cmyo\, Dc\, Pc} \;\upsilon \quad\to\quad \textsf{Ac},\\
\textsf{Dc} \;\upsilon \quad\to\quad \textsf{D\_\_t\_min},\\
\textsf{DelP\, Pc} \;\upsilon \quad\to\quad \textsf{P},\\
\textsf{C1a\, C1p\, C2a\, C2p\, C3a\, Cglobal\, Cmyo\, Dp100\, Pc} \;\upsilon \quad\to\quad \textsf{Dc},\\
\textsf{t} \;\phi \quad\to\quad \textsf{DelP},\\
\phi \;\to\; \textsf{C1a\, C1p\, C2a\, C2p\, C3a\, Cglobal\, Cmyo\, Dp100\, Pc\, t\_delta\, t\_max\, t\_min\, taua\, taud} \;\}.
\end{eqnarray*}
\vspace{-12pt}
\end{framed}
\caption{Fd set $\Sigma_{89}$ of hypothesis $\upsilon\!=\!89$.}
\label{fig:case-blood-vessel-encoding89}
\end{figure}

\vspace{8pt}
\noindent
\textbf{Symbol Mappings}. We consider that the user provides symbol mappings:

\begin{itemize}
\item ${\mathcal M}_{60 \mapsto 1} = \{\; \textsf{t} \mapsto \textsf{Time},\; \textsf{D} \mapsto \textsf{Diameter} \;\}$;
\item ${\mathcal M}_{89 \mapsto 1} = \{\; \textsf{t} \mapsto \textsf{Time},\; \textsf{D} \mapsto \textsf{Diameter} \;\}$;
\end{itemize}

\noindent
\textbf{Hypothesis Management}. Query Q3 illustrates the feature of hypothesis management for this case. The user selects all diameter predictions within the time interval $t \in [100, 300]$ (cf. plot in Fig. \ref{fig:blood-vessel}). Its result set is shown in Fig. \ref{fig:case-blood-vessel-management}.

\indent$\;\;$\textsf{Q3. \textbf{select} phi, upsilon, tid, ``t'', ``D'' \textbf{from} Y60\_claim1}\\\indent\indent \textsf{\textbf{where} phi=3 \textbf{and} ``t''>=100 \textbf{and} ``t''<=300 \textbf{union all}}\\
\indent\indent\textsf{\textbf{select} phi, upsilon, tid, ``t'', ``D'' \textbf{from} Y89\_claim1}\\\indent\indent \textsf{\textbf{where} phi=3 \textbf{and} ``t''>=100 \textbf{and} ``t''<=300 \textbf{order by} ``t'', upsilon, tid;}
\vspace{9pt}

\begin{figure}[h!]\footnotesize
\centering
\begingroup\setlength{\fboxsep}{3pt}
\colorbox{blue!5}{%
   \begin{tabular}{c|>{\columncolor[gray]{0.92}}c|>{\columncolor[gray]{0.92}}c|>{\columncolor[gray]{0.92}}c||c|c}
  \textsf{Q3} & $\phi$ & $\upsilon$ & \textsf{tid} & \textsf{t} & \textsf{D}\\
      \hline    
   & $3$ & $89$ & $1$ & $100.00$ & $194.622865847211$\\
   & $3$ & $89$ & $1$ & $100.00$ & $97.3787340059609$\\
   & $3$ & $89$ & $2$ & $100.00$ & $126.167727083098$\\
   & $3$ & $89$ & $1$ & $100.01$ & $194.626017703936$\\
   & $3$ & $89$ & $1$ & $100.01$ & $98.0174705905828$\\
   & $3$ & $89$ & $2$ & $100.01$ & $126.161751822302$\\
   & $\cdots$ &  $\cdots$ & $\cdots$ & $\cdots$ & $\cdots$\\
   \end{tabular}
}\endgroup
\caption{Result set of hypothesis management query Q3.}
\label{fig:case-blood-vessel-management}
\end{figure}

\noindent
\textbf{Hypothesis Analytics}. Fig. \ref{fig:case-blood-vessel-analytics} shows the results of analytics on phenomenon $\phi=3$ after conditoning the probability distribution in the presence of observations, viz., `Davis\_Sikes\_Fig3\_Myo\_DigData' dataset.

\vspace{15pt}
\begin{figure}[h!]\footnotesize
\centering
\begingroup\setlength{\fboxsep}{3pt}
\colorbox{yellow!15}{%
   \begin{tabular}{c|>{\columncolor[gray]{0.92}}c|>{\columncolor[gray]{0.92}}c|>{\columncolor[gray]{0.92}}c||c|c|c|c}
  \textsf{PH3\_CONF} & $\phi$ & $\upsilon$ & \textsf{tid} & \textsf{Time} & \textsf{Diameter} & \textsf{Prior} & \textsf{Posterior}\\
      \hline    
   & $3$ & $60$ & $1$ & $14.8$ & $194.996792066637$ & $.5$ & $.000$\\
   & $3$ & $89$ & $1$ & $14.8$ & $97.0568250956827$ & $.25$ & $.269$\\
   & $3$ & $89$ & $2$ & $14.8$ & $116.327813203282$ & $.25$ & \cellcolor{red!25} $.731$\\
   \cline{2-8}   
   & $\cdots$ &  $\cdots$ & $\cdots$ & $\cdots$ & $\cdots$ & $\cdots$\\   
   \cline{2-8}   
   & $3$ & $60$ & $1$ & $30.5$ & $195.684170988267$ & $.5$ & $.000$\\
   & $3$ & $89$ & $1$ & $30.5$ & $97.0568250767575$ & $.25$ & $.269$\\
   & $3$ & $89$ & $2$ & $30.5$ & $116.327813337087$ & $.25$ & \cellcolor{red!25} $.731$\\
   \end{tabular}
}\endgroup
\vspace{3pt}
\caption{Results of analytics on the vessel's myogenic behavior phenomenon.}
\label{fig:case-blood-vessel-analytics}
\end{figure}

\noindent
In this case study two tentative models have been considered under a uniform prior probability distribution which has been updated to a posterior distribution. Note that, even though hypothesis $\upsilon\!=\!60$ has its probability weight concentrated in a single trial, the Bayesian inference is able to indicate $\upsilon=89$ as the best explanation for $\phi=3$ and $\textsf{tid}=2$, in particular, its best fit.

\section{System Prototype}\label{sec:demo}

\noindent
A first prototype of the $\Upsilon$-DB system has been implemented as a Java web application, with the pipeline component in the server side on top of \textsf{MayBMS} (a backend extension of \textsf{PostgreSQL}). 
We have developed a demonstration of this prototype (cf. \cite{goncalves2015b}), in which we go through the whole design-by-synthesis pipeline (Fig. $\!$\ref{fig:pipeline}) exploring use case scenarios. In this section we provide a brief demonstration of the system in the population dynamics scenario previously introduced in this thesis.

The demonstration unfolds in three phases. In the first phase, we show the ETL process to give a sense of what the user has to do in terms of simple phenomena description, hypothesis naming and file upload to get her phenomena and hypotheses available in the system to be managed as data. In the second phase, we reproduce some typical queries of hypothesis management (like those shown in the previous section). 
In the third phase, we enter the hypothesis analytics module. The user chooses a phenomenon for a hypothesis evaluation study, and the system lists all the predictions with their probabilities under some selectivity criteria (e.g., population at year 1920). The predictions are ranked according to their probabilities, which are conditioned on the observational data available for the chosen phenomenon.

\subsection{Demo Screenshots}\label{subsec:demo-screenshots}
\noindent

Fig. $\!$\ref{fig:demo} shows screenshots of the system. Fig. $\!$\ref{fig:demo}(a) shows the research projects currently available for a user. Figs. $\!$\ref{fig:demo}(b, c) show the ETL interfaces for phenomenon and hypothesis data definition (by synthesis), and then the insertion of hypothesis trial datasets, i.e., explanations of a hypothesis towards a target phenomenon. Fig. $\!$\ref{fig:demo}(d) shows the interface for basic hypothesis management by listing the predictions of a given simulation trial. Figs. $\!$\ref{fig:demo}(e, f) show two tabs of the hypothesis analytics module, viz., selection of observations and then viewing the corresponding alternative predictions ranked by their conditioned probabilities.


\subsection{Demo Case: Population Dynamics}\label{subsec:case-population-dynamics}
\noindent
In this case we refer to a well-known problem in Computational Science, viz., population dynamics scenarios, to demonstrate the $\Upsilon$-DB system prototype. Fig. \ref{fig:us-population} shows census data collected from in the US from 1790 to 1990.\footnote{Cf. \url{https://www.census.gov/population/censusdata/table-4.pdf}.} Fig. \ref{fig:hudson-population} shows observational data collected from Hudson's Bay from 1900 to 1920 on the Lynx-Hare population \cite{elton1942}.

\vspace{15pt}
\begin{figure}[H]
\centering
\includegraphics[keepaspectratio,width=.8\textwidth]{./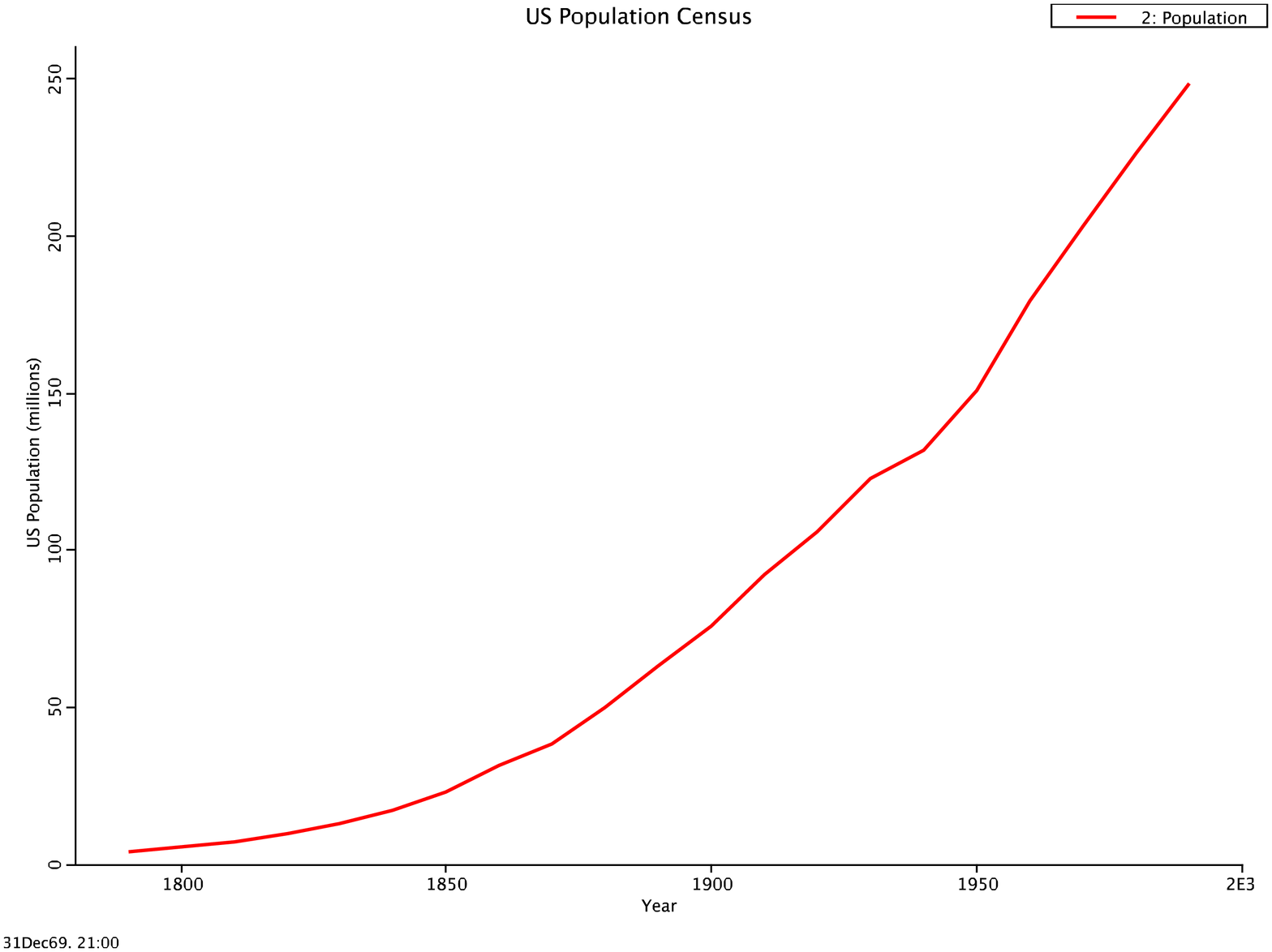}
\vspace{-5pt}
\caption[Lynx-Hare population observed]{Census US population from 1790 to 1990.}
\label{fig:us-population}
\vspace{-19pt}
\end{figure}

\vspace{30pt}
\begin{figure}[H]
\centering
\includegraphics[keepaspectratio,width=.8\textwidth]{./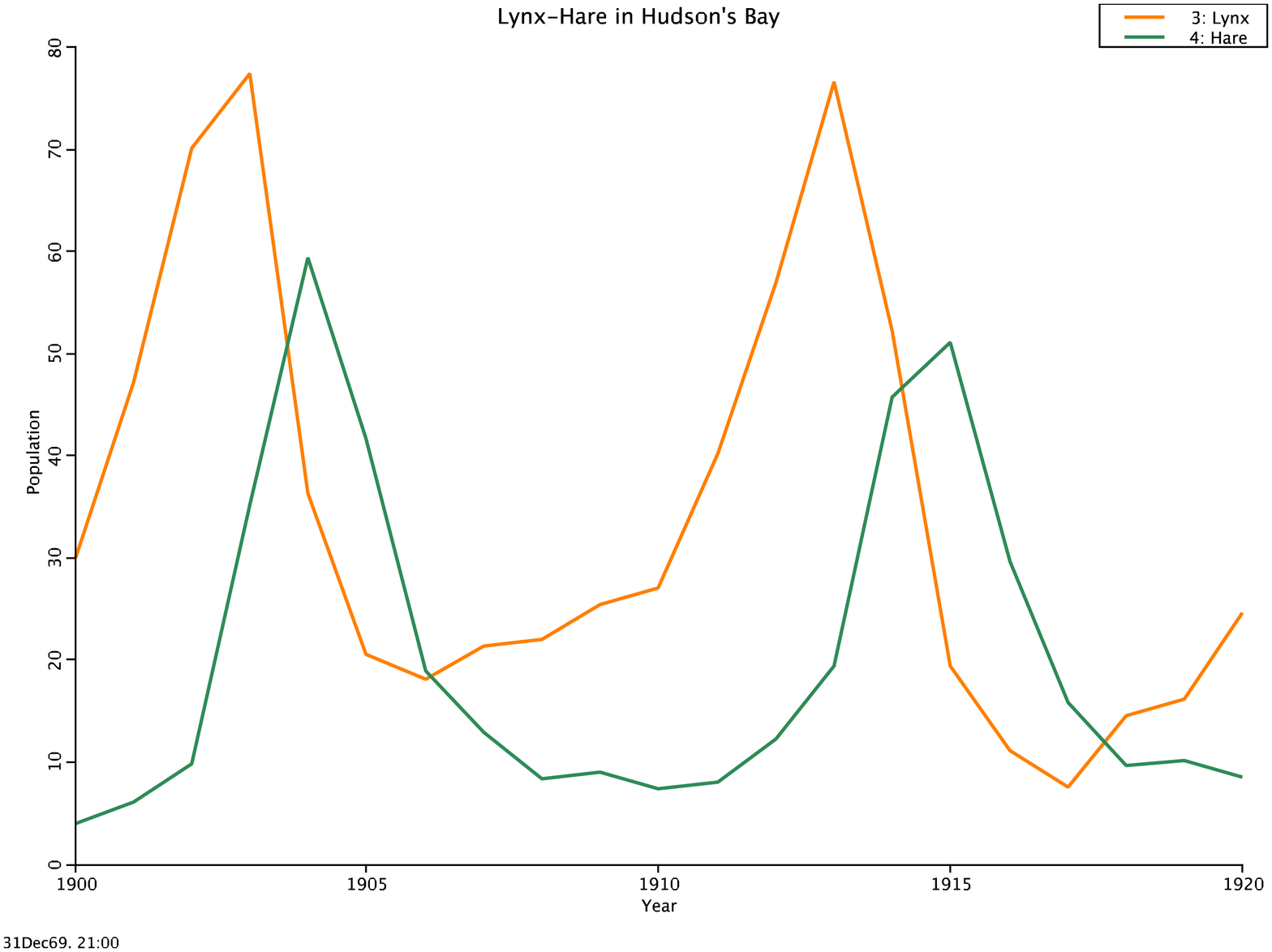}
\vspace{-5pt}
\caption[Lynx-Hare population observed]{Lynx-Hare population observed in Hudon's Bay from 1900 to 1920.}
\label{fig:hudson-population}
\vspace{-19pt}
\end{figure}

\begin{figure}[H]
\vspace{-12pt}
\centering
\begin{subfigure}[t]{0.49\textwidth}
\advance\leftskip1cm
\includegraphics[width=.69\textwidth]{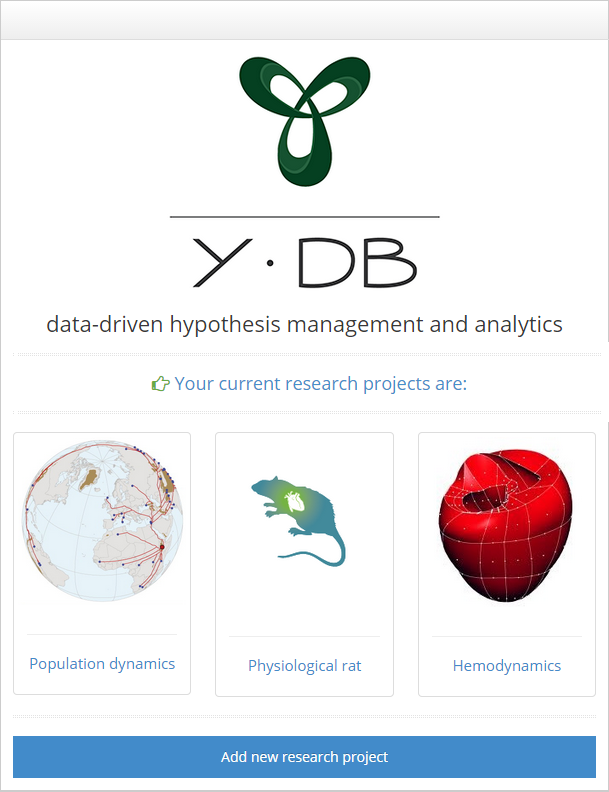}
\vspace{-1pt}
\caption{Research dashboard after login.}
\label{fig:logo}
\vspace{-6pt}
\end{subfigure}
\hspace{-10pt}
\begin{subfigure}[t]{0.49\textwidth}
\advance\leftskip0.75cm
\includegraphics[width=.78\textwidth]{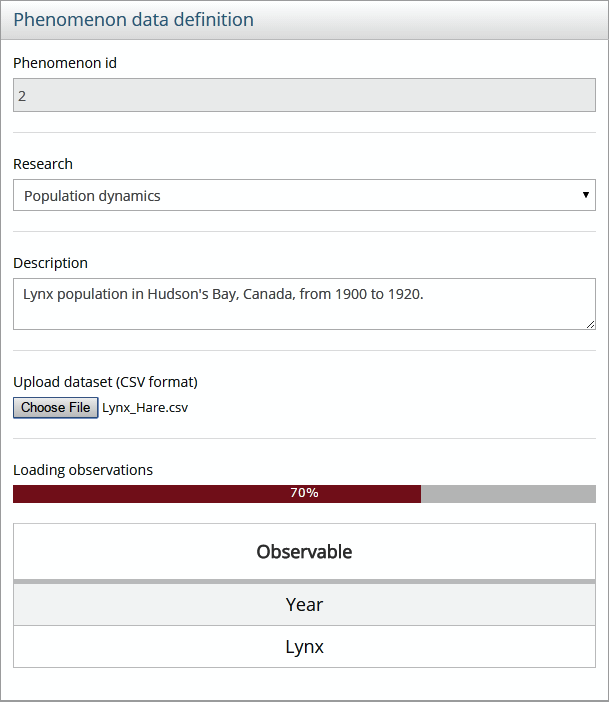}
\caption{Phenomenon data definition.}
\label{fig:etl-phenomenon}
\vspace{-5pt}
\end{subfigure}\vspace{4pt}\\
\begin{subfigure}[t]{0.49\textwidth}
\advance\leftskip0.8cm
\includegraphics[width=.78\textwidth]{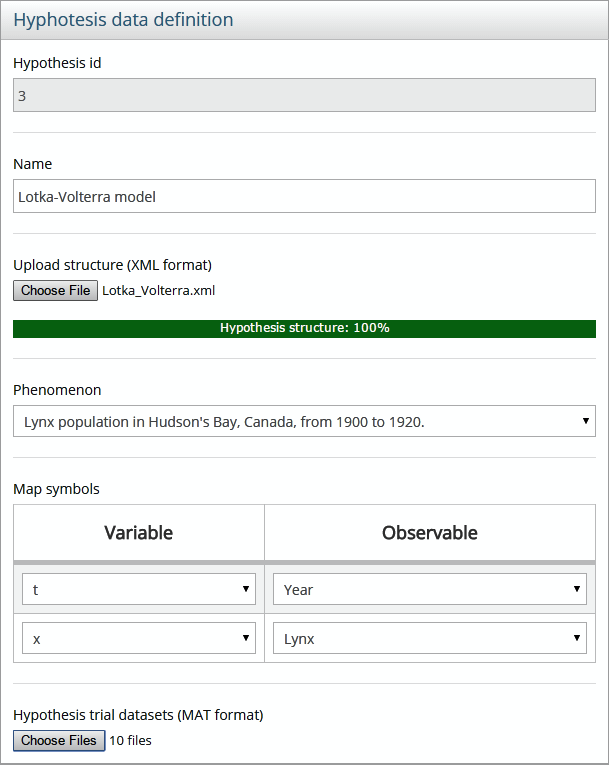}
\caption{Hypothesis data definition.}
\label{fig:etl-hypothesis}
\vspace{-4pt}
\end{subfigure}
\begin{subfigure}[t]{0.49\textwidth}
\advance\leftskip1cm
\includegraphics[width=.68\textwidth]{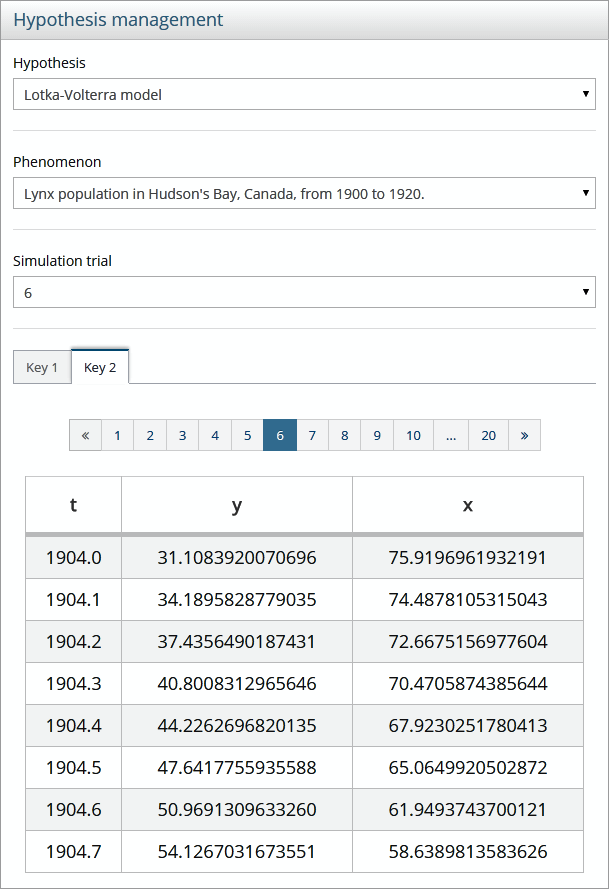}
\caption{Hypothesis management.}
\label{fig:management}
\vspace{-4pt}
\end{subfigure}\vspace{4pt}\\
\begin{subfigure}[t]{0.49\textwidth}
\advance\leftskip0.75cm
\includegraphics[width=.78\textwidth]{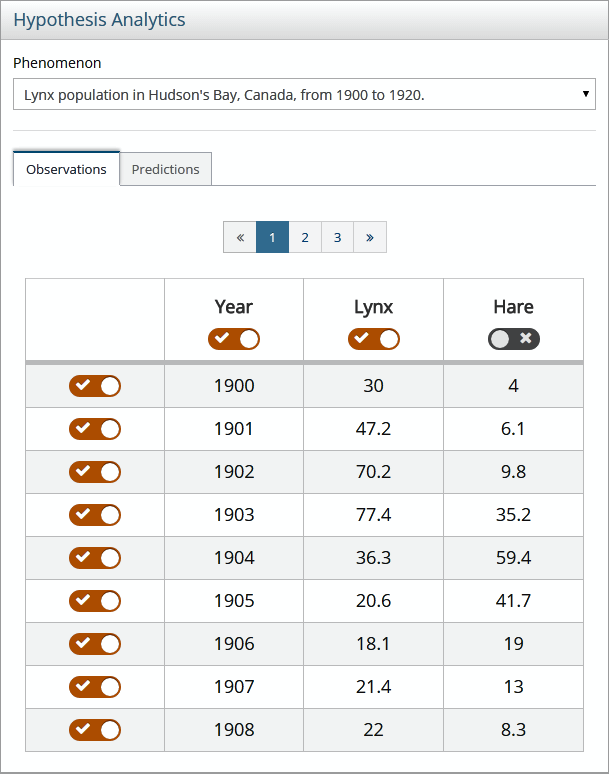}
\caption{Analytics: selected observations tab.}
\label{fig:analytics1}
\end{subfigure}
\hspace{-5pt}
\begin{subfigure}[t]{0.49\textwidth}
\advance\leftskip0.75cm
\includegraphics[width=.76\textwidth]{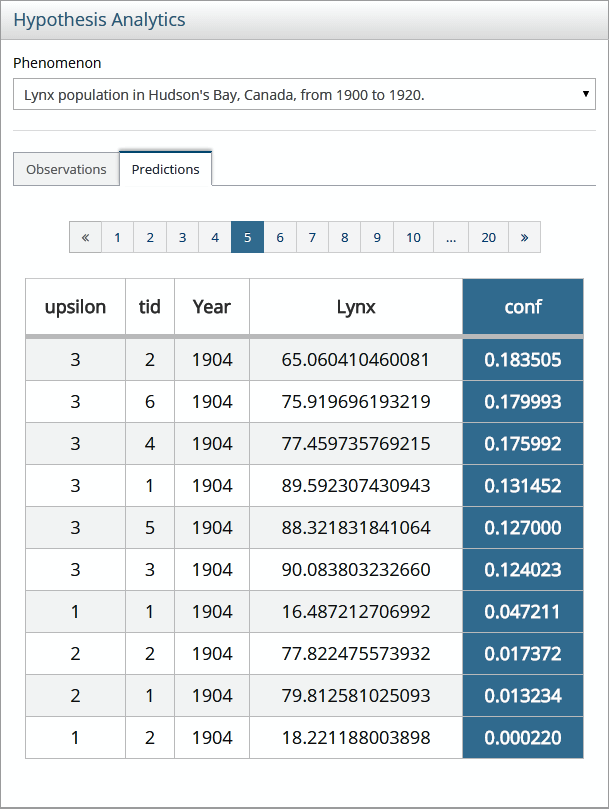}
\caption{Analytics: ranked predictions tab.}
\label{fig:analytics2}
\end{subfigure}
\vspace{-12pt}
\caption{Screenshots of this first prototype of the $\Upsilon$-DB system.}
\label{fig:demo}
\end{figure}

\begin{myex}\label{ex:-population}
(See Fig. \ref{fig:case-population}). We consider the model entries displayed in relation \begin{footnotesize}\textsf{HYPOTHESIS}\end{footnotesize}, and two phenomena (see relation \begin{footnotesize}\textsf{PHENOMENON}\end{footnotesize}). For $\phi=1$, three trials are considered
for hypothesis $\upsilon=1$ and six for hypothesis $\upsilon=2$. For $\phi=2$, in turn, two trials are considered for hypothesis $\upsilon=1$ and $\upsilon=2$, and six trials for hypothesis $\upsilon=3$. Note the data definition interfaces in Figs. $\!$\ref{fig:demo}(b, c).
 $\Box$
\end{myex}

\begin{figure}[H]\footnotesize
\advance\leftskip-0.2cm
\begin{spacing}{0.9}
\begingroup\setlength{\fboxsep}{4pt}
\colorbox{yellow!15}{%
   \begin{tabular}{c|>{\columncolor[gray]{0.92}}c||p{0.20\linewidth}|p{0.525\linewidth}}
  \textsf{HYPOTHESIS} & $\!\!\upsilon\!\!$ & \textsf{Name} & \textsf{Description}\\
      \hline    
   & $\!\!1\!\!$ & \textsf{Malthusian growth model} & Exponential growth model `growth in population is proportional to its size' is considered the first principle of population dynamics.$\!\!$\vspace{6pt}\\
   & $\!\!2\!\!$ & \textsf{Logistic equation} & This model introduces growth saturation to the Malthusian model due to the limitation of resources.$\!\!$\vspace{6pt}\\
   & $\!\!3\!\!$ & \textsf{Lotka-Volterra model} & This model describes predator-prey interactions.$\!\!$\vspace{6pt}\\
   \end{tabular}
}\endgroup
\vspace{5pt}
\begingroup\setlength{\fboxsep}{3pt}
\colorbox{blue!4}{%
   \begin{tabular}{c|>{\columncolor[gray]{0.92}}c||p{0.5\linewidth}}
  \textsf{PHENOMENON} & $\phi$ & \textsf{Description}\\
      \hline    
   & $1$ & US population from 1790 to 1990.\\
   & $2$ & Lynx population in Hudson's Bay, Canada, from 1900 to 1920.\\
   \end{tabular}
}\endgroup
\hspace{0pt}
\begingroup\setlength{\fboxsep}{3pt}
\colorbox{blue!4}{%
   \begin{tabular}{c|>{\columncolor[gray]{0.92}}c||>{\columncolor[gray]{0.92}}c}
  $H_0$ & $\phi$ & $\upsilon$\\
      \hline    
   & $1$ & $1$\\
   & $1$ & $2$\\
   & $2$ & $1$\\
   & $2$ & $2$\\
   & $2$ & $3$\\
   \end{tabular}
}\endgroup
\caption[Descriptive (textual) data of Example \ref{ex:-blood-vessel}.]{Descriptive (textual) data of Example \ref{ex:-population}.}
\label{fig:case-population}
\end{spacing}
\end{figure}

\vspace{5pt}
\noindent
\textbf{Encoding}. The fd encoding of hypotheses $\upsilon \in \{1,\, 2,\, 3\}$ is shown (resp.) in Fig. \ref{fig:case-population-encoding1}, Fig. \ref{fig:case-population-encoding2} and Fig. \ref{fig:case-population-encoding3}. See hypothesis structure processing in Fig. $\!$\ref{fig:demo}(c).
\vspace{15pt}

\begin{figure}[h!]\footnotesize
\begin{framed}
\vspace{-12pt}
\begin{eqnarray*}
\Sigma_{1} = \{\quad
\textsf{b\;\;t\;\;x\_\_t\_min} \;\upsilon \quad\to\quad \textsf{x},\\
\phi \quad\to\quad \textsf{b\;\;t\_delta\;\;t\_max\;\;t\_min\;\;x\_\_t\_min} \quad\}.
\end{eqnarray*}
\vspace{-20pt}
\end{framed}
\vspace{-10pt}
\caption{Fd set $\Sigma_{1}$ of hypothesis $\upsilon\!=\!1$.}
\label{fig:case-population-encoding1}
\vspace{-10pt}
\end{figure}
\begin{figure}[h!]\footnotesize
\begin{framed}
\vspace{-12pt}
\begin{eqnarray*}
\Sigma_{2} = \{\quad
\textsf{K\;\;b\;\;t\;\;x\_\_t\_min} \;\upsilon \quad\to\quad \textsf{x},\\
\phi \quad\to\quad \textsf{K\;\;b\;\;t\_delta\;\;t\_max\;\;t\_min\;\;x\_\_t\_min} \quad\}.
\end{eqnarray*}
\vspace{-20pt}
\end{framed}
\vspace{-10pt}
\caption{Fd set $\Sigma_{2}$ of hypothesis $\upsilon\!=\!2$.}
\label{fig:case-population-encoding2}
\end{figure}
\begin{figure}[h!]\footnotesize
\begin{framed}
\vspace{-12pt}
\begin{eqnarray*}
\Sigma_{3} = \{\quad
\textsf{b\;p\;t\;\;y} \;\upsilon \quad\to\quad \textsf{x},\\
\textsf{d\;r\;t\;\;x} \;\upsilon \quad\to\quad \textsf{y},\\
\phi \quad\to\quad \textsf{b\;\;d\;\;p\;\;r\;\;t\_delta\;\;t\_max\;\;t\_min\;\;x\_\_t\_min\;\;y\_\_t\_min} \quad\}.
\end{eqnarray*}
\vspace{-20pt}
\end{framed}
\vspace{-10pt}
\caption{Fd set $\Sigma_{3}$ of hypothesis $\upsilon\!=\!3$.}
\label{fig:case-population-encoding3}
\end{figure}

\noindent
\textbf{Symbol Mappings}. We consider that the user provides the following symbol mappings for (resp.) phenomena $\phi=1$ and $\phi=2$, see the interface for mapping symbols in Fig. $\!$\ref{fig:demo}(c).

\begin{itemize}
\item ${\mathcal M}_{1 \mapsto 1} = \{\; \textsf{t} \mapsto \textsf{Year},\; \textsf{x} \mapsto \textsf{Population} \;\}$;
\item ${\mathcal M}_{2 \mapsto 1} = \{\; \textsf{t} \mapsto \textsf{Year},\; \textsf{x} \mapsto \textsf{Population} \;\}$;
\item ${\mathcal M}_{1 \mapsto 2} = \{\; \textsf{t} \mapsto \textsf{Year},\; \textsf{x} \mapsto \textsf{Lynx} \;\}$;
\item ${\mathcal M}_{2 \mapsto 2} = \{\; \textsf{t} \mapsto \textsf{Year},\; \textsf{x} \mapsto \textsf{Lynx} \;\}$;
\item ${\mathcal M}_{3 \mapsto 2} = \{\; \textsf{t} \mapsto \textsf{Year},\; \textsf{x} \mapsto \textsf{Lynx} \;\}$;
\end{itemize}

\vspace{12pt}
\noindent
\textbf{Hypothesis Management}. Query Q4 illustrates the feature of hypothesis management for this case. The user selects hypothesis $\upsilon\!=\!3$ (the Lotka-Volterra model), and filters its available data for trial $\textsf{tid}\!=\!6$ on phenomenon $\phi=2$. Both the form-based query set-up and its result set are shown in Fig. $\!$\ref{fig:demo}(d).

\vspace{12pt}
\indent$\;\;$\textsf{Q4. \textbf{select} ``t'', ``y'', ``x'' \textbf{from} Y3\_claim1}\\\indent\indent \textsf{\textbf{where} upsilon=3 \textbf{and} phi=2 \textbf{and} tid=6 \textbf{order by} ``t'';}
\vspace{9pt}

\vspace{12pt}
\noindent
\textbf{Hypothesis Analytics}. Fig. \ref{fig:case-us-population-analytics} and Fig. \ref{fig:case-lynx-population-analytics} show the results of analytics on (resp.) phenomena $\phi=1$ and $\phi=2$ after conditoning the probability distribution in the presence of (resp.) observational datasets `US-census' and `Lynx-Hare.' In the first one, the user verifies that hypothesis $\upsilon\!=\!1$ (the Malthusian model) is unlikely  to be competitive with hypothesis $\upsilon\!=\!2$ (the Logistic equation) as an approximation of the US population dynamics from 1790 to 1990. That is, if the user knows her current trials are reasonable, then more trials on the Malthusian model hardly could outperform trials on the Logistic equation for the studied phenomenon.

\begin{figure}[t!]\footnotesize
\centering
\begingroup\setlength{\fboxsep}{3pt}
\colorbox{yellow!15}{%
   \begin{tabular}{c|>{\columncolor[gray]{0.92}}c|>{\columncolor[gray]{0.92}}c|>{\columncolor[gray]{0.92}}c||c|c|c|c}
  \textsf{PH1\_CONF} & $\phi$ & $\upsilon$ & \textsf{tid} & \textsf{Year} & \textsf{Population} & \textsf{Prior} & \textsf{Posterior}\\
      \hline    
   & $\cdots$ &  $\cdots$ & $\cdots$ & $\cdots$ & $\cdots$ & $\cdots$\\   
   \cline{2-8}   
   & $1$ & $1$ & $1$ & $1920$ & $194.102222534948$ & $.250$ & $.000000$\\
   & $1$ & $1$ & $2$ & $1920$ & $140.244165184248$ & $.250$ & $.000000$\\
   & $1$ & $2$ & $1$ & $1920$ & $82.3031951115155$ & $.125$ & $.133038$\\
   & $1$ & $2$ & $2$ & $1920$ & $108.251924734215$ & $.125$ & $.239684$\\
   & $1$ & $2$ & $3$ & $1920$ & $105.918217777077$ & $.125$ & $.290026$\\
   & $1$ & $2$ & $4$ & $1920$ & $105.988231944275$ & $.125$ & \cellcolor{red!25} $.337251$\\
   \cline{2-8}   
   & $\cdots$ &  $\cdots$ & $\cdots$ & $\cdots$ & $\cdots$ & $\cdots$\\   
   \end{tabular}
}\endgroup
\caption{Results of analytics on the US population phenomenon.}
\label{fig:case-us-population-analytics}
\end{figure}
%

\vspace{20pt}
\begin{figure}[t!]\footnotesize
\centering
\begingroup\setlength{\fboxsep}{3pt}
\colorbox{yellow!15}{%
   \begin{tabular}{c|>{\columncolor[gray]{0.92}}c|>{\columncolor[gray]{0.92}}c|>{\columncolor[gray]{0.92}}c||c|c|c|c}
  \textsf{PH2\_CONF} & $\phi$ & $\upsilon$ & \textsf{tid} & \textsf{Year} & \textsf{Lynx} & \textsf{Prior} & \textsf{Posterior}\\
      \hline    
   & $\cdots$ &  $\cdots$ & $\cdots$ & $\cdots$ & $\cdots$ & $\cdots$\\   
   \cline{2-8}   
   & $2$ & $1$ & $1$ & $1904$ & $16.49$ & $.167$ & $.047$\\
   & $2$ & $1$ & $2$ & $1904$ & $18.22$ & $.167$ & $.000$\\
   & $2$ & $2$ & $1$ & $1904$ & $79.81$ & $.167$ & $.013$\\
   & $2$ & $2$ & $2$ & $1904$ & $77.82$ & $.167$ & $.017$\\
   & $2$ & $3$ & $1$ & $1904$ & $89.59$ & $.055$ & $.131$\\
   & $2$ & $3$ & $2$ & $1904$ & $65.06$ & $.055$ & \cellcolor{red!25} $.184$\\
   & $2$ & $3$ & $3$ & $1904$ & $90.08$ & $.055$ & $.124$\\
   & $2$ & $3$ & $4$ & $1904$ & $77.46$ & $.055$ & $.176$\\
   & $2$ & $3$ & $5$ & $1904$ & $88.32$ & $.055$ & $.127$\\
   & $2$ & $3$ & $6$ & $1904$ & $75.92$ & $.055$ & $.180$\\
   \cline{2-8}
   & $\cdots$ &  $\cdots$ & $\cdots$ & $\cdots$ & $\cdots$ & $\cdots$\\   
   \end{tabular}
}\endgroup
\caption[Results of analytics on the Hudson's Bay lynx population phenomenon]{Results of analytics on the Hudson's Bay lynx population pheno- menon, see interfaces in Figs. $\!$\ref{fig:demo}(e, f).}
\label{fig:case-lynx-population-analytics}
\end{figure}

\section{Experiments}\label{sec:experiments-app}

The efficiency and scalability of the U-relational representation system and its p-WSA query algebra have been extensively demonstrated \cite{antova2008}. \mbox{$\Upsilon$-DB}'s, as U-relational hypothesis DB's, must therefore be as efficient and scalable as any arbitrary U-relational DB. 

In these experiments (see Fig. \ref{fig:experiments}) we provide some measures of performance of the method of {$\Upsilon$-DB} in the particular context of our real-world Physiome testbed. Our purpose here is to provide a concrete feel on how efficient the {$\Upsilon$-DB} methodology can be. However, most of these tests (the four graphs on the bottom in Fig. \ref{fig:experiments}) involve the data level and then require more of the hardware. Our current experimental setup (personal computer)\footnote{These experiments were performed on a 2.3GHZ/4GB Intel Core i5 running Mac OS X 10.6.8 and MayBMS (a PostgreSQL 8.3.3 extension).} allows us to reach a scale in which the uncertain data being processed in synthesis `4U' is sized up to $1\,GB$. 
For the two first graphs (\emph{XML extraction} and \emph{encoding}), we have collected the response time on the measure of interest over different structure lengths. Each one corresponds to a real Physiome hypothesis from the table of Fig. \ref{fig:physiome-hypotheses}. The last hypothesis in that table, $\upsilon=379$, is used for the tests of the four last graphs in Fig. \ref{fig:experiments}, viz., \emph{u-learning}, \emph{u-factorization}, \emph{u-propagation} and \emph{conditioning}. We have set different number of trials (\textsf{ntrials}) over it, each one having $1\,MB$. The last test in each of such four graphs, with $1K$ trials, is processing $1\,GB$ of uncertain data at once and then fits the machine's main memory. 
We interpret the performance results shown in these graphs as follows for each measure of performance.

\begin{figure}[t!]\scriptsize
\advance\leftskip-0.4cm
\begin{subfigure}{0.465\textwidth}
\begin{tikzpicture}[y=3cm, x=.013cm,font=\sffamily,scale=1.1]
    \begin{axis}[
        height=0.6\textwidth,
        width=1\textwidth,
        xlabel={structure length $|\mathcal S|$},
        y label style={at={(axis description cs:0.075,.5)},anchor=south},
        ylabel={time [ms]},
        xtick={0,500,1000,1500,2000},
        ytick={0,250,500,750,1000},
]
      \addplot[mark=*, mark size=1.2pt, mark options={fill=black}] 
		file {./data/extraction.data}; 
     \end{axis}
	\begin{scope}[shift={(35,0.6)}] 
	\draw (0,0) -- 
		plot[mark=*, mark size=1.2pt, mark options={fill=black}] (15,0) -- (30,0)
		node[right]{\textsf{XML extraction}};
	\end{scope}
\end{tikzpicture}
\end{subfigure}
\hspace{30pt}
\begin{subfigure}{0.465\textwidth}
\begin{tikzpicture}[y=3cm, x=.015cm,font=\sffamily,scale=1.1]
    \begin{axis}[
        height=0.6\textwidth,
        width=1\textwidth,
        xlabel={structure length $|\mathcal S|$},
        y label style={at={(axis description cs:0.1,.5)},anchor=south},
        ylabel={time [ms]},
        xtick={0,500,1000,1500,2000},
        ytick={0,25,50,75,100},
]
      \addplot[mark=*, mark size=1.2pt, mark options={fill=black}] 
		file {./data/encoding.data}; 
     \end{axis}
	\begin{scope}[shift={(200,0.25)}] 
	\draw (0,0) -- 
		plot[mark=*, mark size=1.2pt, mark options={fill=black}] (15,0) -- (30,0)
		node[right]{\textsf{encoding}};
	\end{scope}
\end{tikzpicture}
\end{subfigure}\vspace{20pt}\\
\begin{subfigure}{0.465\textwidth}
\begin{tikzpicture}[y=3cm, x=.015cm,font=\sffamily,scale=1.1]
    \begin{axis}[
        height=0.6\textwidth,
        width=1\textwidth,
        xlabel={ntrials},
        y label style={at={(axis description cs:0.05,.5)},anchor=south},
        ylabel={time [ms]},
]
      \addplot[mark=*, mark size=1.2pt, mark options={fill=black}] 
		file {./data/ulearning.data}; 
     \end{axis}
	\begin{scope}[shift={(205,0.22)}] 
	\draw (0,0) -- 
		plot[mark=*, mark size=1.2pt, mark options={fill=black}] (15,0) -- (30,0)
		node[right]{\textsf{u-learning}};
	\end{scope}
\end{tikzpicture}
\end{subfigure}
\hspace{30pt}
\begin{subfigure}{0.465\textwidth}
\begin{tikzpicture}[y=3cm, x=.015cm,font=\sffamily,scale=1.1]
    \begin{axis}[
        height=0.6\textwidth,
        width=1\textwidth,
        xlabel={ntrials},
        y label style={at={(axis description cs:0.1,.5)},anchor=south},
        ylabel={time [ms]},
]
      \addplot[mark=*, mark size=1.2pt, mark options={fill=black}] 
		file {./data/ufactorization.data}; 
     \end{axis}
	\begin{scope}[shift={(170,0.22)}] 
	\draw (0,0) -- 
		plot[mark=*, mark size=1.2pt, mark options={fill=black}] (15,0) -- (30,0)
		node[right]{\textsf{u-factorization}};
	\end{scope}
\end{tikzpicture}
\end{subfigure}
\vspace{20pt}\\
\begin{subfigure}{0.465\textwidth}
\begin{tikzpicture}[y=3cm, x=.015cm,font=\sffamily,scale=1.1]
    \begin{axis}[
        height=0.6\textwidth,
        width=1\textwidth,
        xlabel={ntrials},
        y label style={at={(axis description cs:0.1,.5)},anchor=south},
        ylabel={time [s]},
        xtick={0, 200, 400, 600, 800, 1000},
        ytick={0, 50, 100},
]
      \addplot[mark=*, mark size=1.2pt, mark options={fill=black}] 
		file {./data/upropagation.data}; 
     \end{axis}
	\begin{scope}[shift={(35,0.6)}] 
	\draw (0,0) -- 
		plot[mark=*, mark size=1.2pt, mark options={fill=black}] (15,0) -- (30,0)
		node[right]{\textsf{u-propagation}};
	\end{scope}
\end{tikzpicture}
\end{subfigure}
\hspace{30pt}
\begin{subfigure}{0.465\textwidth}
\begin{tikzpicture}[y=3cm, x=.015cm,font=\sffamily,scale=1.1]
    \begin{axis}[
        height=0.6\textwidth,
        width=1\textwidth,
        xlabel={ntrials},
        y label style={at={(axis description cs:0.1,.5)},anchor=south},
        ylabel={time [s]},
        xtick={0, 200, 400, 600, 800, 1000},
]
      \addplot[mark=*, mark size=1.2pt, mark options={fill=black}] 
		file {./data/conditioning.data}; 
     \end{axis}
	\begin{scope}[shift={(35,0.6)}] 
	\draw (0,0) -- 
		plot[mark=*, mark size=1.2pt, mark options={fill=black}] (15,0) -- (30,0)
		node[right]{\textsf{conditioning}};
	\end{scope}
\end{tikzpicture}
\end{subfigure}
\caption{Performance behavior of $\Upsilon$-DB on a Physiome testbed scenario.}\label{fig:experiments}
\end{figure}

\begin{figure}[t]\footnotesize
   \centering
\begingroup\setlength{\fboxsep}{5pt}
\colorbox{blue!5}{%
   \begin{tabular}{c|>{\columncolor[gray]{0.92}}c| l |c | c}
  \textsf{HYPOTHESIS} & $\upsilon$ & \textsf{name} & $|\mathcal S|$ & $|\mathcal E|$\\
      \hline  
      & $186$ & \textsf{Regulatory\_Vessel} & $40$ & $20$\\
      & $89$ & \textsf{Myo\_Dyn\_Resp\_wFit} & $73$ & $28$\\
      & $60$ & \textsf{Myogenic\_Compliant\_Vessel} & $100$ & $38$\\
      & $75$ & \textsf{Baroreceptor\_Lu\_et\_al\_2001} & $153$ & $74$\\
      & $70$ & \textsf{4-State\_Sarcomere\_Energetics} & $298$ & $91$\\
      & $120$ & \textsf{Comp\_four\_gen\_weibel\_lung} & $440$ &$186$\\
      & $91$ & \textsf{CardiopulmonaryMechanics} & $1132$ & $412$\\
      & $93$ & \textsf{CardiopulmonMechGasBloodExch} & $1593$ & $525$\\
      & $153$ & \textsf{HighlyIntegHuman} & $1624$ & $538$\\
      & $154$ & \textsf{HighlyIntHuman\_wIntervention} & $1919$ & $634$\\
      \cline{2-5}
      & $379$ & \textsf{Baroreflex\_SB\_CT} & $171$ & $74$\\      
   \end{tabular}
}\endgroup
\caption{Physiome hypotheses used in the experiments.}
\label{fig:physiome-hypotheses}
\end{figure}

\begin{itemize}
\item \emph{Extraction}. Some fluctuation may be due to practicalities of XML DOM access methods. The point of this performance study is to have practical measures of the amount of time taken to process representative hypothesis structures. Note that even for structures of size $|\mathcal S|=2\,K$ the amount of time required to extract a hypothesis is kept at subsecond order of magnitude (interactive response time) in a personal machine.

\item \emph{Encoding}. 
Some fluctuation is expected due to varying degrees of coupling between variables in the hypothesis structures. Note that, although $|\mathcal S|$ provides a very good measure of their size and complexity, the extent to which they are intricate should cause impact on the encoding procedure, which is any case kept $O(\sqrt{|\mathcal E|} |\mathcal S|)$. Again, the point here is to provide a notion of the amount of time required to encode representative hypotheses. For a scalability test on the encoding procedure, cf. Fig. \ref{fig:tcm-experiments}.

\item \emph{U-intro}. The \textsf{U-intro} procedure is composed of \emph{u-factor learning}, \emph{u-factori}- \emph{zation} and \emph{u-propagation}. We observed in previous tests that it was dominated by the learning component, viz., the discovery of occasional fd's in the `big' fact table. However, this is no longer the case once we implemented the workaround of keeping (in addition to the `big' table) a table containing only the exogenous (input parameter) variables, as it has negligible size w.r.t. the data of the endogenous (predictive output) variables. Then u-learning became subsecond again and the \textsf{U-intro} procedure became dominated by \emph{u-propagation}. In fact, the procedure of u-factorization, carried out once the fd's are discovered by u-learning, is also sub-second then has negligible processing time w.r.t. u-propagation. The latter is the most expensive sub-procedure of the synthesis method.

\item \emph{Conditioning}. The conditioning procedure is run for a selected phenomenon. It is composed of four main parts. First, by operation \textsf{conf()} it performs a probabilistic inference sub-query on the proper predictive projection of the `big' fact table of each hypothesis associated with the phenomenon. Second, it combines the results of each such sub-query through a union all query whose result set is a multi-hypothesis predictive table. Third, it loads the phenomenon observation sample data and the predictive data from the multi-hypothesis table into memory to apply Bayesian inference. Finally, the prior probability distribution of the predictive table is updated with the posterior and all the corresponding marginal probabilities are updated in their original U-relational tables. In our tests, this procedure is carried out over varying number of trials (\textsf{ntrials}). The total response times are shown in the last plot of Fig. \ref{fig:experiments}. \end{itemize}

This performance behavior is to be interpreted in the context of ETL in DW's. Loading and setting up an $\Upsilon$-DB has an overhead that shall be, though, much lower in high-performance machines. Such overhead is nonetheless justified for the use case of hypothesis management and analytics as opposed to simulation data management and exploratory analytics (cf. \S\ref{subsec:simulation-data}).

\section{Discussion}\label{sec:discussion-app}

We have verified that the hypothesis ratings/rankings shown in \S\ref{sec:case-studies} coincide with the results (e.g., of model tuning) described in the Physiome model entries and their related publications. 
That validates the applicability of the \mbox{$\Upsilon$-DB} methodology as a tool for data-driven analysis in such realistic scenarios.

The current practice in Computational Science for model evaluation and comparison in the presence of observational data is somewhat handcrafted: model agreement is assessed either qualitatively by referring to curve shapes in data plots or quantitatively by means of ad-hoc scripts. The \mbox{$\Upsilon$-DB} methodology offers a tool to perform data-driven hypothesis analytics semi-automatically directly in the database under the support of its querying capabilities. It has, therefore, potential to be a step towards higher standards of reproducibility and scalability.

\textbf{Realistic assumptions}. The core assumption of our framework is that the hypotheses are given in a formal specification which is encodable into a SEM that is complete (satisfies Defs. \ref{def:structure}, \ref{def:complete}). Also, as a semantic assumption which is standard in scientific modeling, we consider a one-to-one correspondence between real-world entities and variable/at\-tribute symbols within a structure, and that all of them must appear in some of its equations/fd's. For most science use cases involving deterministic models (if not all), such assumptions are quite reasonable. It can be a topic of future work (cf. \S\ref{sec:future-work}) to explore business use cases as well.

\textbf{Hypothesis learning}. $\!$The (user) method for hypothesis formation is irrelevant to our framework, as long as the resulting hypothesis is encodable into a SEM. So, a promising use case is to incorporate machine learning methods into our framework to scale up the formation/extraction of hypotheses and evaluate them under the querying capabilities of a p-DB. Con\-sider, e.g., learning the equations, say, from \textsf{Eureqa} \cite{schmidt2009}.\footnote{\url{http://creativemachines.cornell.edu/Eureqa}.}

\textbf{Qualitative hypotheses}. The $\Upsilon$-DB methodology is primarely motivated by computational science (usually involving differential equations). It is, however, applicable to qualitative deterministic models as well. Boolean Networks, e.g., consist in sets of functions $f(x_1, x_2, \!.., x_n)$, where $f$ is a Boolean expression. For instance, Fig. \ref{fig:boolean-hypothesis} presents the system of Boolean equations of a tentative Boolean Network model for a plant hormone (Fig. \ref{fig:boolean}) published in \cite{boolean2006}.\footnote{Cf. \url{http://atlas.bx.psu.edu/booleannet/booleannet.html}.} The notation, e.g., \textsf{SphK*}, is read (just like an ordinary differential equation), `the next state value of variable \textsf{SphK} is given by the state value of variable \textsf{ABA}. The parameters in this kind of model are the variable initial conditions.

\begin{figure}[H]\scriptsize
\begin{framed}
\vspace{-12pt}
\begin{eqnarray*}
\mathcal H = \{\quad
\textsf{SphK*} &=& \textsf{ABA},\\
\textsf{S1P*} &=&  \textsf{SphK},\\
\textsf{GPA1*} &=& \textsf{(S1P or not GCR1) and AGB1},\\
\textsf{PLD*} &=& \textsf{GPA1},\\
\textsf{PA*} &=& \textsf{PLD},\\
\textsf{pHc*} &=& \textsf{ABA},\\
\textsf{OST1*} &=& \textsf{ABA},\\
\textsf{ROP2*} &=& \textsf{PA},\\
\textsf{Atrboh*} &=& \textsf{pHc and OST1 and ROP2 and not ABI1},\\
\textsf{ROS*} &=& \textsf{Atrboh}\\
\textsf{H+ATPase*} &=& \textsf{not ROS and not pHc and not Ca2}^+\textsf{c}\\
\textsf{ABI1*} &=& \textsf{pHc and not PA and not ROS},\\
\textsf{RCN1*} &=& \textsf{ABA},\\
\textsf{NIA12*} &=& \textsf{RCN1},\\
\textsf{NOS*} &=& \textsf{Ca2}^+\textsf{c},\\
\textsf{NO*} &=& \textsf{NIA12 and NOS},\\
\textsf{GC*} &=& \textsf{NO},\\
\textsf{ADPRc*} &=&  \textsf{NO},\\
\textsf{cADPR*} &=& \textsf{ADPRc},\\
\textsf{cGMP*} &=& \textsf{GC},\\
\textsf{PLC*} &=& \textsf{ABA and Ca2}^+\textsf{c},\\
\textsf{InsP3*} &=& \textsf{PLC},\\
\textsf{InsPK*} &=& \textsf{ABA},\\
\textsf{InsP6*} &=& \textsf{InsPK},\\
\textsf{CIS*} &=& \textsf{(cGMP and cADPR) or (InsP3 and InsP6)},\\
\textsf{Ca2}^+\textsf{ATPase*} &=& \textsf{Ca2}^+\textsf{c},\\
\textsf{Ca2}^+\textsf{c *} &=& \textsf{(CaIM or CIS) and (not Ca2}^+\textsf{ATPase)},\\
\textsf{AnionEM*} &=& \textsf{((Ca2}^+\textsf{c or pHc) and not ABI1 ) or (Ca2}^+\textsf{c and pHc)},\\
\textsf{Depolar*} &=& \textsf{KEV or AnionEM  or (not H+ATPase) or (not KOUT) or Ca2}^+\textsf{c},\\
\textsf{CaIM*} &=& \textsf{(ROS or not ERA1 or not ABH1) and (not Depolar)},\\
\textsf{KOUT*} &=& \textsf{(pHc or not ROS or not NO) and Depolar},\\
\textsf{KAP*} &=& \textsf{(not pHc or not Ca2}^+\textsf{c) and Depolar},\\
\textsf{KEV*} &=& \textsf{Ca2}^+\textsf{c},\\
\textsf{PEPC*} &=& \textsf{not ABA},\\
\textsf{Malate*} &=& \textsf{PEPC and not ABA and not AnionEM},\\
\textsf{RAC1*} &=& \textsf{not ABA and not ABI1},\\
\textsf{Actin*} &=& \textsf{Ca2}^+\textsf{c  or not RAC1},\\
\textsf{Closure*} &=& \textsf{(KOUT or KAP ) and AnionEM  and Actin and not Malate} \quad\}.
\end{eqnarray*}
\vspace{-20pt}
\end{framed}
\vspace{-10pt}
\caption{Example of Boolean Network hypothesis.}
\label{fig:boolean-hypothesis}
\end{figure}

\begin{figure}[H]
\centering
\includegraphics[keepaspectratio,width=.99\textwidth]{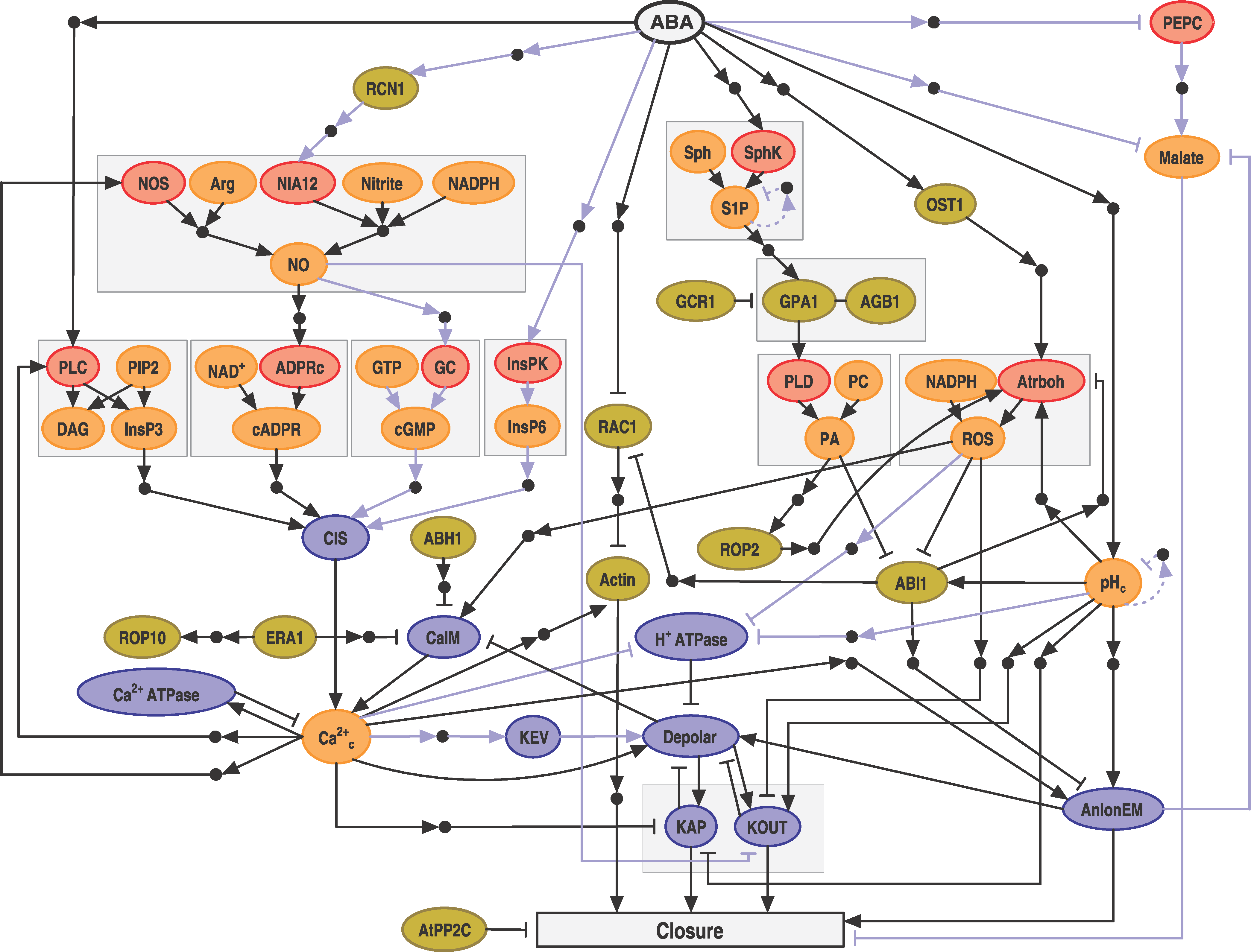}
\vspace{-5pt}
\caption[Example of Boolean Network model]{Example of Boolean Network model (source: \cite{boolean2006}).}
\label{fig:boolean}
\vspace{-7pt}
\end{figure}

Several kinds of dynamical system can be modeled in this formalism. Applications have grown out of gene regulatory network to social network and stock market predictive analytics. Even if richer semantics is considered (e.g., fuzzy logics), our encoding method is applicable likewise, as long as the equations are still deterministic.

\section{Conclusions}\label{sec:conclusions-app}

\noindent
In this chapter we have demonstrated and discussed the applicability of the \mbox{$\Upsilon$-DB} methodology. We have referred to real-world use case scenarios derived from the Physiome research project. We have shown in some detail the process of building an \mbox{$\Upsilon$-DB} with representative models from Physiome's model repository. That qualitative assessment is followed by experiments that provide some concrete feel on the performance behavior of \mbox{$\Upsilon$-DB} for models with up to 600+ mathematical variables.


\chapter{Conclusions}\label{ch:conclusions}

In this chapter we (\S\ref{sec:questions}) revisit the research questions addressed by this thesis, (\S\ref{sec:significance}) point out its significance and limitations, (\S\ref{sec:future-work}) list open problems and topics for future work, and (\S\ref{sec:final-considerations}) conclude with final considerations.

\section{Revisiting the Research Questions}\label{sec:questions}

Let us now revisit the conceptual (RQ1-4) and technical (RQ5-9) research questions.

\begin{itemize}
\item[\textbf{RQ1}.]\label{rq1}
\emph{How to define and encode hypotheses `as data'? What are the sources of uncertainty that may be present and should be considered?}

In Chapter \ref{ch:vision} we have provided core abstractions that compose the vision of hypotheses `as data,' or the $\Upsilon$-DB vision. The problem of hypothesis encoding has been defined and addressed further in Chapter \ref{ch:encoding}. We have distinguished two main sources of uncertainty in our model of uncertainty for hypothesis management, viz., (i) \emph{theoretical} uncertainty, as arising from competing hypotheses; and (ii) \emph{empirical} uncertainty, as arising from the alternative trial datasets available for each hypothesis.

\item[\textbf{RQ2}.]\label{rq2} 
\emph{How does hypotheses `as data' relate with observational data or, likewise, phenomena `as data' from a database perspective?}

Also in Chapter \ref{ch:vision}, we have presented a conceptual framework in which we have defined hypotheses `as data' and shown how it can be compared against phenomena `as data.' In fact, hypothesis management is really significant when it is possible to rate/rank hypotheses in the presence of (some partial piece of) evidence. 

\item[\textbf{RQ3}.]\label{rq3}
\emph{Does every piece of simulated data qualify as a scientific hypothesis? What is the difference between managing `simulation' data from managing `hypotheses' as data?}

Early in Table \ref{tab:hypothesis}, we provided a comparison between simulation data management and hypothesis data management. 
Furthermore, the scientific research process is abstracted in Chapter \ref{ch:vision} as a well-defined problem of data cleaning. Hypotheses are seen from an applied science point of view, and then are reduced into data such that a piece of simulation data is considered a hypothesis whenever it is assigned to explain some specific phenomenon.

\item[\textbf{RQ4}.]\label{rq4} 
\emph{Is there available a proper (machine-readable) data format we can use to automatically extract mathematically-expressed hypotheses from?}

We anticipated in Chapter \ref{ch:vision} the adoption of the XML data model as the general data format for extracting hypothesis specifications from. In particular, since we deal here with mathematical hypotheses, we refer to \textsf{MathML} as a standard for hypothesis specification. Concretely, in Chapter \ref{ch:applicability} we present use case demonstration scenarios for which we have developed a specific wrapper, viz., for the extraction of hypotheses specified in \textsf{MML} (Mathematical Modeling Language).

\item[\textbf{RQ5}.]\label{rq5}
\emph{Is there an algorithm to, given a SEM, efficiently extract its causal ordering? What are the computational properties of this problem?}

As shown in Chapter \ref{ch:encoding}, Simon's treatment of the problem of causal ordering given a SEM $\mathcal S(\mathcal E, \mathcal V)$ is NP-Hard. In the same chapter, we have discussed this problem in detail and presented an effective, efficient algorithmic approach to the problem. The computational cost for the whole process of hypothesis encoding is bounded by $O(\sqrt{|\mathcal S|} |\mathcal E|)$. Experiments show that the approach performs well in practice for large hypotheses.

\item[\textbf{RQ6}.]\label{rq6} 
\emph{What is the connection between SEM's and fd's? Can we devise an encoding scheme to `orient equations' and then effectively transform one into the other with guarantees? Once we do it, what design-theoretic properties have such a set of fd's?}

Also in Chapter \ref{ch:encoding}, we have presented an algorithmic encoding scheme to transform a SEM into a set of fd's with guarantees in terms of preserving the hypothesis causal structure. Our study of this problem has revealed some interesting properties of the resulting fd sets, in particular, that they are always `non-redundant' and, in comparison with arbitrary information systems, more precise and economical in the sense that, for any given attribute, there is exacly one fd with it in its rhs.

\item[\textbf{RQ7}.]\label{rq7} 
\emph{Is such fd set ready to be used for p-DB schema synthesis as an encoding of the hypothesis causal structure? If not, what kind of further processing we have to do? Can we perform it efficiently by reasoning directly on the fd's? How does it relate to the SEM's causal ordering?}

As we discuss in Chapter \ref{ch:reasoning}, the encoded fd set must be further processed to find the `first causes' for each of its predictive variable.
For addressing that, in Chapter \ref{ch:reasoning} we have presented the concept of the folding of an fd set and an efficient algorithm to compute it. Also, we have shown the equivalence of such fd reasoning with causal ordering processing.

\item[\textbf{RQ8}.]\label{rq8}
\emph{Is the uncertainty decomposition required for predictive analytics reducible to the structure level (fd processing), or do we need to process the simulated data to identify additional uncertainty factors? Finally, what properties are desirable for a p-DB schema targeted at hypothesis management? Are they ensured by this synthesis method?}

In Chapter \ref{ch:synthesis4u} we have presented a conceptual framework to address synthesis for uncertainty `4U.' In particular, we have introduced the need to process, for each hypothesis, its trial datasets available, and presented an efficient algorithm to factorize and propagate the overall uncertainty present in a given hypothesis (as a competing explanation for a target phenomenon). Then we have motivated BCNF as a notion of ``good'' design w.r.t. the factorized fd set based on the folding concept, and the lossless join property as required for the correctness of uncertainty decomposition. We have shown that the synthesized p-DB schema bears both properties.

\item[\textbf{RQ9}.]\label{rq9} 
\emph{Given all such a design-theoretic machinery to process hypotheses into \mbox{(U-)relational} DB's, what properties can we detect on the hypotheses back at the conceptual level? Do we have now technical means to speak of hypotheses that are ``good'' in terms of principles of the philosophy of science?}

Equipped with the design-theoretic machinery proposed in this thesis, we are able to, given a SEM, to automatically \textbf{(1)} extract its causal ordering, \textbf{(2)} detect its strongly coupled components and decide, for a given predictive projection, what are its associated u-factor projections (if any), and shall be able as well to \textbf{(3)} query the hypothesis ranking for a phenomenon of interest. All these are technical means to \cite{losee2001}: \textbf{(1$^\prime$)} extract the hypothesis `empirical content' and `predictive power;' \textbf{(2$^\prime$)} unravel its cohesiveness and how parsimonious it is in terms of the number of different claims or epistemological units carried within it, as well as its empirical grounding (`first causes'); and finally, we shall be able to \textbf{(3$^\prime$)} appraise it in face of competing or alternative explanations.

\end{itemize}

\section{Significance and Limitations}\label{sec:significance}

This thesis addresses the pressing call for large-scale, data-driven hypothesis management and analytics \cite{haas2011,dhar2013,goncalves2014}. Some reasons that contribute for its significance are listed (cf. \cite{goncalves2014,goncalves2015a,goncalves2015b}).

\begin{itemize}
\item Structured deterministic hypotheses are now shown to be \emph{encodable} as uncertain and probabilistic (U-relational) data based on p-DB principles;
\begin{itemize}
\item Study of the connection between SEM's and fd's, with contribution both to computational properties of the causal ordering problem, and to causal reasoning over fd's;
\item First \emph{synthesis} method for the construction of p-DB's from some previous existing formal specification. 
\end{itemize}
\item Definition of a concrete \emph{use case} of data-driven hypothesis management and analytics;
\begin{itemize}
\item New class of \emph{applications} introduced for p-DB's;
\item Settled the problem of Bayes' \emph{conditioning} in p-DB's.
\end{itemize}
\end{itemize}

Now some limitations of the thesis are listed.

\begin{itemize}
\item The Bayesian inference is implemented at application level, yet not formulated as a principled technical solution within research in p-DB's.

\item The encoding scheme to transform the mathematical system of a hypothesis into a set of fd's enabling the synthesis of the p-DB is applicable to structured \emph{deterministic} models only, not stochastic ones.
\end{itemize}

\section{Open Problems and Future Work}\label{sec:future-work}

Open problems and topics of future work are listed (no particular order). 

\begin{enumerate}
\item The design of a dedicated algebraic operation for Bayes' conditioning in p-WSA.

\item Investigation of other data dependency formalisms (e.g., multi-valued dependencies \cite{ullman1988}), approximate fd's \cite{huhtala1999}, conditional fd's \cite{fan2011}) to extend the scope of $\Upsilon$-DB towards structured stochastic models.

\item Development of techniques for systematic hypothesis extraction as a well-defined problem of (web) information extraction;

\item Investigation of business use case scenarios for data-driven decision making on top of $\Upsilon$-DB;

\item Definition of a machine learning use case scenario to industrialize hypothesis formation and assess $\Upsilon$-DB's performance feasibility in such a scenario;

\item Development of automatic 
data sampling techniques to leverage the data definition of both hypotheses and phenomena in $\Upsilon$-DB from a statistical point of view.
\end{enumerate}

\section{Final Considerations}\label{sec:final-considerations}

In this thesis we have developed the vision of $\Upsilon$-DB, which is essentially an abstraction of hypotheses as uncertain and probabilistic data. It comprises a design-theoretic methodology for the systematic construction and management of U-relational hypothesis DB's. It is meant to provide a principled approach to enable scientists and engineers to manage and evaluate (rate/rank) large-scale scientific hypotheses as theoretical data. 
We have addressed some core technical challenges over the $\Upsilon$-DB vision in order to properly encode deterministic hypotheses as uncertain and probabilistic data.

As envisioned by Jim Gray \cite{hey2009}, the scientific method has been shifting towards being operated as a data-driven discipline which is rapidly gaining ground \cite{dhar2013}. In this thesis we have strived for proposing some core principles and techniques for enabling data-driven hypothesis management and analytics, opening a promising line of research in both probabilistic databases and simulation data management. 



\bibliographystyle{bibliography/abnt}
\bibliography{bibliography/thesis}

\appendix 


\begin{center}
\chapter{Detailed Proofs}\label{ch:proofs} 
\end{center}


\section{Proofs of Hypothesis Encoding}\label{sec:proofs-encoding}

\subsection{Proof of Theorem \ref{thm:coa-hardness}}\label{a:coa-hardness}
\noindent
\emph{``Let $\mathcal S(\mathcal E, \mathcal V)$ be a complete structure. Then the extraction of its causal ordering by Simon's $\textsf{COA}(\mathcal S)$ is intractable (NP-Hard).''}
\begin{proof}
We show that, at each recursive step of \textsf{COA}, to find all non-trivial minimal subsets (i.e., $|\mathcal E^\prime| \geq 2$) translates into an optimization problem associated with the decision problem BPBP, which we know by Lemma \ref{lemma:pseudo-biclique} to be NP-Complete.

First, recall (Def. \ref{def:complete}) that a structure $\mathcal S(\mathcal E,\, \mathcal V)$ is complete if $|\mathcal E|=|\mathcal V|$; e.g., for the structure given in Fig. \ref{fig:hardness} (left), note (Def. \ref{def:minimal}) the minimal structure $\mathcal S^\prime(\mathcal E^\prime,\, \mathcal V^\prime)$, where $\mathcal E^\prime=\{\, f_1,\, f_2,\, f_3\,\}$.  
For non-trivial minimal structures, i.e., when $|\mathcal E^\prime|=K \geq 2$, it is easy to see that its corresponding bipartite graph $G=(V_1^\prime \cup V_2^\prime,\, E^\prime)$, where $V_1 \mapsto \mathcal E$, $V_2 \mapsto \mathcal V$ and $E \mapsto \mathcal S$ must have number of edges $|E^\prime|\geq 2K$ and, for all its vertices $u \in V_1^\prime \cup V_2^\prime$, $u$ must have $deg(u) \geq 2$, i.e., $G$ is a pseudo-biclique in accordance with Def. \ref{def:pseudo-biclique}. That intuition is elaborated as follows.

The point is that, no matter how big is such structure $\mathcal S^\prime$, its equations $f \in \mathcal E^\prime$ are such that $|Vars(f)|\geq 2$ (as $\mathcal S^\prime$ is non-trivial) and its variables can be grouped in local patterns from the sparsest kind to the densest. To construct an instance of the sparsest case, let $\mathcal S^\prime$ be built by setting a first equation where its entry in the structure matrix $A_{\mathcal S}$ has form $(1, 1, 0^+)$ and then, for the next $|\mathcal E^\prime|-2$ equations, 
shift such pair of 1's one position right w.r.t. the previous one. Then complete it with a last equation whose form is form $(1, 0^+, 1)$. That is, the structure is built with unique pairs of 1's spread all over the structure. 
Then, deciding whether there is a minimal structure of size $|\mathcal E^\prime|=K$ corresponds exactly to BPBP. It is a special case (BBP), when such minimal structure is the densest possible, i.e., when $A_{\mathcal S}$ has only 1's and then its corresponding bipartite graph is a $K$-balanced biclique with $|E^\prime| = K^2$, and $deg(u)=K$ for all vertices $u \in V_1^\prime \cup V_2^\prime$. For instance, see the minimal structure with $\mathcal E^\prime=\{\, f_4,\, f_5\,\}$ found at the second recursive step of \textsf{COA} in Fig. \ref{fig:coa}. 
$\Box$
\end{proof}

\subsection{Proof of Proposition \ref{prop:causal-ordering}}\label{a:causal-ordering}
\noindent
\emph{``Let $\mathcal S(\mathcal E, \mathcal V)$ be a structure, and $\varphi_1\!:\, \mathcal E \to \mathcal V$ and $\varphi_2\!:\, \mathcal E \to \mathcal V$ be any two total causal mappings over $\mathcal S$. Then $C^+_1$ = $C^+_2$.''} 
\begin{proof}
The proof is based on an argument from Nayak \cite{nayak1994}, which we present here arguably much clearer. Intuitively, it shows that if $\varphi_1$ and $\varphi_2$ differ on the variable an equation $f$ is mapped to, then such variables, viz., $\varphi_1(f)$ and $\varphi_2(f)$, must be causally dependent on each other (strongly coupled). 
 
To show $C^+_1$ = $C^+_2$ reduces to $C^+_1 \subseteq C^+_2$ and $C^+_2 \subseteq C^+_1$. We show the first containment, with the second being understood as following by symmetry. Closure operators are extensive, $X \subseteq cl(X)$, and 
idempotent, $cl(cl(X)) = cl(X)$. That is, if we have $C_1 \subseteq C_2^+$, then we shall have $C_1^+ \subseteq (C_2^+)^+$ and, by idempotence, $C_1^+ \subseteq C_2^+$. 

Then it suffices to show that $C_1 \subseteq C_2^+$, i.e., for any $(x^\prime,\, x) \in C_1$, we must show that $(x^\prime,\, x) \in C_2^+$ as well. Observe by Def. \ref{def:tcm} that both $\varphi_1$ and $\varphi_2$ are bijections, then, invertible functions. If $\varphi_1^{-1}(x) = \varphi_2^{-1}(x)$, then we have $(x^\prime,\, x) \in C_2$ and thus, trivially, $(x^\prime,\, x) \in C_2^+$. Else, $\varphi_1$ and $\varphi_2$ disagree in which equations they map onto $x$. But we show next, in any case, that we shall have $(x^\prime,\, x) \in C_2^+$. 

Take all equations $g \in \mathcal E^\prime \subseteq \mathcal E$ such that $\varphi_1(g) \neq \varphi_2(g)$, and let $n \leq |\mathcal E|$ be the number of such `disagreed' equations. Now, let $f \in \mathcal E^\prime$ be such that its mapped variable is $x = \varphi_1(f)$. 
Construct a sequence of length $2n$ such that, $s_0 = \varphi_1(f) = x$ and, for $1 \leq i \leq 2n$, element $s_i$ is defined $s_i = \varphi_2(\varphi_1^{-1}(s_{i-1}))$. That is, we are defining the sequence such that, for each equation $g \in \mathcal E^\prime$, its disagreed mappings $\varphi_1(g)=x_a$ and $\varphi_2(g)=x_b$ are such that $\varphi_1(g)$ is immediately followed by $\varphi_2(g)$. As $x_a,\, x_b \in Vars(g)$, we have $(x_a,\, x_b) \in C_2$ and, symmetrically, $(x_b,\, x_a) \in C_1$. The sequence is of form $s=\langle \underbrace{x,\, x_f}_{f}, \hdots, \underbrace{x_a,\, x_b}_{g}, \hdots, \underbrace{x_{2n-1},\, x_{2n}}_{h} \rangle$.

Since $x$ must be in the codomain of $\varphi_2$, we must have a repetition of $x$ at some point $2 \leq k \leq 2n$ in the sequence index, with $s_k=x$ and $s_{k-1}=x^{\prime\prime}$ such that $(x^{\prime\prime},\, x) \in C_2$. If $x^{\prime\prime}=x^\prime$, then $(x^\prime,\, x) \in C_2$ and obviously $(x^\prime,\, x) \in C_2^+$. Else, note that $x_f$ must also be in the codomain of $\varphi_1$, while $x^{\prime\prime}$ in the codomain of $\varphi_2$. Let $\ell$ be the point in the sequence, $3 \leq \ell \leq 2n\!-\!1$, at which $s_\ell=x_f=x_a$ and $s_{\ell+1}=x_b$ for some $x_b$ such that $(x_f,\, x_b) \in C_2$. It is easy to see that, either we have $x_b=x^{\prime\prime}$ or $x_b \neq x^{\prime\prime}$ but $(x_b,\, x^{\prime\prime}) \in C_2^+$. Thus, by transitivity on such a causal chain, we must have $(x_f,\, x^{\prime\prime}) \in C_2^+$ and eventually $(x_f,\, x) \in C_2^+$. Finally, since $x^\prime \in Vars(f)$ and $\varphi_2(f)=x_f$, we have $(x^\prime,\, x_f) \in C_2$ and, by transitivity, $(x^\prime,\, x) \in C_2^+$. 
$\Box$
\end{proof}

\subsection{Proof of Theorem \ref{thm:non-redundant}.}\label{a:non-redundant}
\noindent
\emph{``Let $\Sigma$ be an fd set defined $\Sigma \!\triangleq\!$ \textsf{h-encode}($\mathcal S$) for some complete structure $\mathcal S$. Then $\Sigma$ is non-redundant and singleton-rhs but may not be left-reduced (then may not be canonical).''}

\begin{proof}
We will show that properties (a-b) of Def. \ref{def:minimal} hold for $\Sigma$ produced by (Alg. \ref{alg:h-encode}) \textsf{h-encode}, but property (c) may not hold. 

At initialization, the algorithm sets $\Sigma\!=\!\varnothing$ and then inserts an fd $\langle X, A\rangle \in \Sigma$ for each $\langle f,\, x \rangle \!\in \varphi_t$ scanned, where $x \mapsto A$ and $X\cap A=\varnothing$. At termination, for all fd's in $\Sigma$ we obviously have $|A|=1$ then property (a) holds. Also, note that $\varphi\!: S \!\to Vars(S)$ is, by Def. \ref{def:tcm}, a bijection. 

Now, for property (b) not to hold there must be some fd $\langle X, A\rangle \in \Sigma$ that is redundant and then can be found in the closure of $\Gamma \!=\! \Sigma \setminus \langle X, A\rangle$. 
By Lemma \ref{lemma:singleton-rhs} (below), that can be the case only if $A \subseteq X$ or there is $\langle Y, A\rangle \in\Gamma$ for some $Y$. 
But from $X\cap A=\varnothing$, we have $A \nsubseteq X$; and from $\varphi$ being a bijection it follows that there can be no such fd in $\Gamma$. Thus it must be the case that $\Sigma$ is non-redundant, i.e., property (b) holds.

Finally, property (c) does not hold if there can be some fd $\langle X, A\rangle \in \Sigma$ with $Y \!\subset X$ such that $\Gamma=\Sigma \setminus \langle X, A\rangle \cup \langle Y, A\rangle$ has the same closure as $\Sigma$. $\!$That is, 
if we may find $\langle Y, A\rangle \in \Sigma^+$. 
$\!$Now, pick structure $S$ whose ($3 \times 3$) matrix $A_s$ has rows $(1, 0, 0), (1, 1, 0), (1, 1, 1)$ as an instance. Alg. \ref{alg:h-encode} encodes it into $\Sigma\!=\!\{\phi \!\to x_1,\; x_1 \,\upsilon \!\to x_2,\; x_1 \,x_2 \,\upsilon \!\to x_3 \}$. Let $Y\!=\!\{x_1, \upsilon\}$, and $B\!=\!\{x_2\}$. 
Note that $x_1\,\upsilon \to x_2 \in \Sigma$ can be written as $\langle Y, B\rangle \in \Sigma$, and $x_1 \,x_2 \,\upsilon \!\to x_3 \in \Sigma$ as $\langle YB, A\rangle \in \Sigma$. Now observe that $\langle Y, A\rangle \in \Sigma^+$ can be derived by R5 over $\langle Y, B\rangle, \langle YB, A\rangle \in \Sigma$, which is sufficient to show that property (c) may not hold. That is, $B$ is ``extraneous'' in $\langle YB, A\rangle \in \Sigma$ and can be removed from its lhs without loss of information to $\Sigma$.
$\Box$
\end{proof}

\begin{mylemma}\label{lemma:singleton-rhs}
Let $\,\Sigma$ be a (Def. $\!$\ref{def:minimal}-a) singleton-rhs fd set on attributes $U$. Then $X \!\to A$ can only be in $\Sigma^+$, where $XA \subseteq U$, if $A \subseteq X$ or there is non-trivial $\langle Y, A\rangle \in \Sigma$ for some $Y \subset U$.
\end{mylemma}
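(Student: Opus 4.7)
The plan is to prove the lemma by reasoning about the attribute closure $X^+$ of $X$ with respect to $\Sigma$, since by soundness and completeness of Armstrong's axioms we have $\langle X, A\rangle \in \Sigma^+$ iff $A \in X^+$. Algorithm \textsf{XClosure} (Alg. \ref{alg:xclosure}) gives a constructive handle on $X^+$: it initializes $X^+ \gets X$ and then iteratively enlarges $X^+$ by consuming fd's $\langle Y, Z\rangle \in \Sigma$ whenever $Y \subseteq X^+$, appending $Z$ to $X^+$. The singleton-rhs hypothesis will let us pinpoint exactly one fd responsible for the eventual inclusion of $A$ in $X^+$.

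Concretely, I would proceed by induction on the iteration $k$ at which the attribute $A$ first appears in $X^+$. At the base case $k=0$, we have $A \in X$ by initialization, which yields the first disjunct $A \subseteq X$. For the inductive step, suppose $A$ is first added at iteration $k\geq 1$: then some fd $\langle Y, Z\rangle \in \Sigma$ is consumed with $Y \subseteq X^+_{k-1}$ and $A \in Z$. Because $\Sigma$ is singleton-rhs (Def. \ref{def:minimal}-a), we have $Z = \{A\}$, so the fd in question is precisely $\langle Y, A\rangle \in \Sigma$. This gives the second disjunct of the lemma; it remains to check that $\langle Y, A\rangle$ is non-trivial and that $Y \subset U$.

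For non-triviality, observe that $A$ is assumed to be added at iteration $k$, hence $A \notin X^+_{k-1}$; since $Y \subseteq X^+_{k-1}$, we conclude $A \notin Y$, so $\langle Y, A\rangle$ is non-trivial. For the proper containment $Y \subset U$, note that $A \in U$ (by hypothesis $XA \subseteq U$) and $A \notin Y$, so $Y$ cannot equal $U$; together with $Y \subseteq U$ (fd's in $\Sigma$ are defined over $U$), this gives $Y \subset U$ as required.

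I do not anticipate any substantial obstacle: the argument is essentially a bookkeeping exercise on \textsf{XClosure}, and the singleton-rhs hypothesis does the main work by forcing a unique attribute to be produced per fd consumption. The only subtlety worth flagging is that one must carefully justify non-triviality using the ``first iteration at which $A$ appears'' condition (otherwise the consumed fd could in principle be trivial and not witness anything new); this is handled by the minimality of $k$ in the inductive step.
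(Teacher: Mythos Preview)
Your proposal is correct and follows essentially the same approach as the paper: both reduce membership in $\Sigma^+$ to membership in the attribute closure $X^+$ via Lemma~\ref{lemma:ullman}, and then track how $A$ can enter $X^+$ during the execution of \textsf{XClosure}. Your version is in fact a bit more careful than the paper's, which simply notes that \textsf{XClosure} inserts $A$ only if $A \subseteq X$ or some $\langle Y, A\rangle \in \Sigma$ exists; you additionally justify the non-triviality of that fd (via minimality of the iteration $k$) and the proper containment $Y \subset U$, points the paper leaves implicit.
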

\begin{proof}
By Lemma \ref{lemma:ullman} (below), we know that $X \!\to A \in \Sigma^+$ iff $A \!\subseteq X^+$. We need to prove that if $A \!\nsubseteq X$ and there is no $Y \!\to A$ in singleton-rhs $\Sigma$, then $A \!\nsubseteq X^+$. But this is equivalent to show that (Alg. \ref{alg:xclosure}) \textsf{XClosure} gives only correct answers for $X^+$ w.r.t. $\Sigma$, which is known (cf. theorem from Ullman \cite[p. 389]{ullman1988}). Note that \textsf{XClosure}($\Sigma$, $X$) inserts $A$ in $X^+$ only if $A \subseteq X$ or there is some fd  $\langle Y, A\rangle \in \Sigma$. 
$\Box$
\end{proof}

\begin{mylemma}\label{lemma:ullman}
Let $\Sigma$ be an fd set. An fd $X \!\to Y$ is in $\Sigma^+$ iff $Y \subseteq X^+$, where $X^+$ is the attribute closure of $X$ w.r.t. $\Sigma$.
\end{mylemma}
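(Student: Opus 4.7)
}
The plan is to prove the two implications separately, using standard arguments from fd theory. Both directions hinge on connecting the syntactic procedure \textsf{XClosure} (which computes $X^+$ by iteratively applying fds in $\Sigma$) with Armstrong's inference system (whose finite applications generate $\Sigma^+$).

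For the ``if'' direction ($Y \subseteq X^+ \Rightarrow X \to Y \in \Sigma^+$), the plan is to show the stronger claim that $X \to A \in \Sigma^+$ for every attribute $A \in X^+$, by induction on the outer-loop iteration of \textsf{XClosure} at which $A$ is inserted into $X^+$. In the base case $A \in X$, so $X \to A$ holds by reflexivity (R0). For the inductive step, $A$ is inserted because some fd $\langle W, Z\rangle \in \Sigma$ has $W \subseteq X^+$ at the current iteration and $A \in Z$; by the inductive hypothesis applied attribute-by-attribute to $W$ and then combined by union (R4), we get $X \to W$ in $\Sigma^+$, which chained with $W \to Z$ via transitivity (R2) yields $X \to Z$, and decomposition (R3) then gives $X \to A$. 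Finally, another application of union across all $A \in Y$ gives $X \to Y \in \Sigma^+$.

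For the ``only if'' direction ($X \to Y \in \Sigma^+ \Rightarrow Y \subseteq X^+$), the plan is the classical two-tuple ``witness relation'' argument. Let $U$ be the underlying attribute set and construct a relation instance $r = \{t_1, t_2\}$ in which $t_1[A] = t_2[A]$ precisely when $A \in X^+$, with the two tuples disagreeing on every attribute of $U \setminus X^+$. First, I verify that $r \models \Sigma$: otherwise some $\langle W, Z\rangle \in \Sigma$ would be violated, which by construction forces $W \subseteq X^+$ and $Z \not\subseteq X^+$; but then the main loop of \textsf{XClosure} would have inserted the offending attribute of $Z$ into $X^+$, contradicting the definition of $X^+$ as the fixed point of that loop.

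Once $r \models \Sigma$ is established, soundness of Armstrong's axioms (rules R0--R2, hence the derived R3--R5 used throughout the paper) immediately gives $r \models \Sigma^+$, so in particular $r \models X \to Y$. Since $X \subseteq X^+$ we have $t_1[X] = t_2[X]$, forcing $t_1[Y] = t_2[Y]$, which by the construction of $r$ means $Y \subseteq X^+$. The main obstacle is making the witness-relation construction airtight, i.e., arguing carefully that the fixed-point property of \textsf{XClosure} is exactly what guarantees $r \models \Sigma$; the rest is a routine combination of soundness and reflexivity.
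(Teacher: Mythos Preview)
Your proof is correct, but it takes a substantially longer route than the paper's. The key discrepancy is in which characterization of $X^+$ you take as primary. In the paper, $X^+$ is \emph{defined} as the set of attributes $A$ with $\langle X, A\rangle \in \Sigma^+$ (see \S\ref{sec:armstrong}); Algorithm~\ref{alg:xclosure} is then presented merely as a procedure to compute it. With that definition in hand, the lemma is almost immediate: if $Y = A_1\cdots A_n \subseteq X^+$ then each $X \to A_i$ is in $\Sigma^+$ by definition, and (R4) union gives $X \to Y$; conversely, if $X \to Y \in \Sigma^+$ then (R3) decomposition gives each $X \to A_i \in \Sigma^+$, so $A_i \in X^+$ by definition. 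That is the entire argument the paper gives (citing Ullman).

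You instead identify $X^+$ with the fixed point of \textsf{XClosure} and prove the equivalence through an induction on loop iterations for one direction and a two-tuple witness relation for the other. What you are actually establishing is the \emph{correctness of \textsf{XClosure}} together with the lemma, which is strictly more than what is asked. Your argument is sound and self-contained (it does not presuppose that \textsf{XClosure} computes the right set), but under the paper's conventions the witness-relation machinery and the soundness appeal are unnecessary overhead: union and decomposition alone suffice.
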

\begin{proof}
This is from Ullman \cite[p. $\!$386]{ullman1988}. Let $Y\!=\!A_1\,...\,A_n$ and suppose $Y \subseteq X^+$. Then for each $A_i$, we have $A_i \in X^+$ and, by the definition of $X^+$, we must have $\langle X, A_i\rangle \in \Sigma^+$. Then it follows by (R4) union that $X \to Y$ is in $\Sigma^+$ as well. Conversely, suppose $\langle X, Y\rangle \in \Sigma^+$. Then, by (R3) decomposition we have $\langle X, A_i\rangle \in \Sigma^+$ for each $A_i \in Y$. 
$\Box$
\end{proof}

\section{Proofs of Causal Reasoning}\label{sec:proofs-causal-reasoning}

\subsection{Proof of Lemma \ref{lemma:folding-unique}}\label{a:folding-unique}
\noindent
\emph{``Let $\mathcal S(\mathcal E, \mathcal V)$ be a complete structure, $\varphi$ a total causal mapping over $\mathcal S$ and $\Sigma$ an fd set encoded through $\varphi$ given $\mathcal S$. 
If $\langle X, A\rangle \in \Sigma$, then $A^\looparrowright\!$, the attribute folding of $A$ (w.r.t. $\!\Sigma$) exists and is unique.''}
\begin{proof}
The existance of $A^\looparrowright$ is ensured by the degenerate case where $X = A^\looparrowright\!$, as $X \!\to A$ is itself in $\Sigma^\vartriangleright$ by an empty application of R5. If $X \!\to A$ is in fact folded w.r.t. $\!\Sigma$, then the folding of $A$ exists. Else, it is not folded yet $X \!\to A$ is non-trivial because by Theorem \ref{thm:non-redundant} $\Sigma$ is non-redundant. Then, by Def. \ref{def:folded} there must be some 
$Y \subseteq U$ with $Y \nsupseteq X$ such that $Y \to X$ is in $\Sigma^+$ and $X \not\to Y$. By Def. \ref{def:ptc}, there is a finite application of R5 over fd's in $\Sigma$ to derive $Y \xrightarrow{\triangleright} X$. 
Then by R2$\,\sim\,$R5 over $X \!\to A$, we have $Y \!\to A$. Although there may be many such (intermediate) attribute sets $Y \subset U$ along the transitive chaining satisfying the conditions above, we claim there is at least one that is a folding of $A$. Suppose not. Then, for all such $Y \subset U$, there is some $Y^\prime \subset U$ with $Y^\prime \nsupseteq Y$ such that $Y^\prime \to Y$ and $Y \not\to Y^\prime$, leading to an infinite regress. Nonetheless, in so far as cycles are ruled out by force of Def. \ref{def:folded}, then $\Sigma^+$ must have an infinite number of fd's. But $\Sigma^+$ is finite, viz., bounded by $2^{2|U|}$ (cf. \cite[p. $\!165$]{abiteboul1995}). $\lightning$. Therefore the folding of $A$ must exist.

Moreover, observe that $\Sigma$ is encoded through $\varphi$, which is by Def. \ref{def:tcm} a bijection. Then we have $\langle X, A\rangle \in \Sigma$ for exactly one attribute set $X$. Then, as a straightfoward follow-up of the rationale that led us to infer the folding existance, note that there must be a single chaining $Y^n \!\xrightarrow{\triangleright} ... \!\xrightarrow{\triangleright} Y^1 \!\xrightarrow{\triangleright} Y^0 \!\xrightarrow{\triangleright} X \!\xrightarrow{\triangleright} A$. Again, as cycles are ruled out by force of Def. \ref{def:folded} and $\Sigma^+$ is finite, then the folding of $A$ is unique. 
$\Box$
\end{proof}

\subsection{Proof of Theorem \ref{thm:afolding}}\label{a:afolding}
\noindent
\emph{``Let $\mathcal S(\mathcal E, \mathcal V)$ be a complete structure, and $\Sigma$ an fd set encoded given $\mathcal S$. 
Now, let $\langle X, A\rangle \in \Sigma$. Then \textsf{AFolding}($\Sigma, A$) correctly computes $A^\looparrowright\!$, the attribute folding of $A$ (w.r.t. $\!\Sigma$) in time $O(|\mathcal S|^2)$.''}

\begin{proof}
For the proof roadmap, note that \textsf{AFolding} 
is monotone (size of $A^\star$ can only increase) and terminates precisely when $A^{(i+1)}\!=\!A^{(i)}$, where $A^{(i)}$ denotes the attributes in $A^\star$ at step $i$ of the outer loop. The folding $A^\looparrowright\!$ of $A$ at step $i$ is $A^{(i)} \setminus \Lambda^{(i)}$. We shall prove by induction, given attribute $A$ from fd $X \!\to A$ in $\Sigma$, that $A^\star \!\setminus \Lambda$ returned by \textsf{AFolding}($\Sigma, A$) is the unique attribute folding $A^\looparrowright\!$ of $A$. 

$\!\!$(\textbf{Base case}). 
By Theorem \ref{thm:non-redundant}, $\Sigma$ is non-redundant with (then) non-trivial $\langle X, A\rangle \!\in \Sigma$ for exactly one attribute set $X$, the algorithm always reaches step $i\!=\!1$, which is our base case. Then $X$ is placed in $A^{(1)}$ and $A$ in $\Lambda^{(1)}$, and we have $A^{(1)}\!=\!XA$ and $\Lambda^{(1)}\!=\!A$. Therefore, $A^{(1)} \setminus \Lambda^{(1)} \!=\! X$, and in fact we have $\langle X, A\rangle \in \Sigma^\vartriangleright$ by an empty application of R5. For it to be specifically in $\Sigma^\looparrowright\!\! \subset \Sigma^\vartriangleright\!$, it must be folded w.r.t. set $\Delta$ of consumed fd's at this step, viz., $\Delta^{(1)} \!=\! \{X\!\to A\}$. In fact, as the only fd in $\Delta^{(1)}$, by Def. \ref{def:folded} it must be folded w.r.t. $\Delta^{(1)}$, and we have $A^\looparrowright\!=X$ at step $i\!=\!1$. 

$\!\!$(\textbf{Induction}). Now, let $i\!=\!k$, for $k\!>\!1$, and assume that $\langle A^{(k)} \!\setminus\Lambda^{(k)},\, A\rangle \in \Sigma^\looparrowright\! \subset \Sigma^\vartriangleright\!$ with $A^{(k)} \!\neq \Lambda^{(k)}$. 
By Lemma \ref{lemma:folding-unique} we know that $A^\looparrowright\!=A^{(k)}\setminus \Lambda^{(k)}$ is the unique folding of $A$ at step $i\!=\!k$. 
For the inductive step, suppose $Y$ is placed in $A^{(k+1)}$ and $B$ in $\Lambda^{(k+1)}$ because $\langle Y, B \rangle \!\in \Sigma \setminus \Delta^{(k)}$ and $B \in A^{(k)}$. 

Since $B \in A^{(k)}$ and $B \notin \Lambda^{(k)}$ (it is yet just be consumed into $\Lambda^{(k+1)}$), we can write $(A^{(k)}\setminus \Lambda^{(k)})=ZB$ for some $Z \neq B$, where $\underbrace{(A^{(k)}\setminus \Lambda^{(k)})}_{ZB}\!\to A$ is assumed in $\Sigma^\looparrowright\!$. 
Now, with the application of R5 consuming $Y \!\to B$ we have 
$\underbrace{(A^{(k)}YB\setminus \Lambda^{(k)}B)}_{ZS}\!\to A$, where $S=Y\setminus \Lambda^{(k)}$. We claim that $ZS \!\to A$ is folded w.r.t. $\Delta^{(k+1)}$. 

Suppose not. Then by Def. \ref{def:folded} there must be some $W \nsupseteq ZS$ such that $W \!\to ZS$ is in $(\Delta^{(k+1)})^+$ but $ZS \not\to W$.  
Since $ZS \neq \varnothing$, there must be some $C \in ZS$, i.e., $C \notin \Lambda^{(k+1)}$. Note that, as $W \!\to ZS$ is in $(\Delta^{(k+1)})^+$, then by (R3) decomposition we have $W \!\to C$ in $(\Delta^{(k+1)})^+$ as well. But by Lemma \ref{lemma:singleton-rhs} that can only be the case if there is some $\langle T, C \rangle \in \Delta^{(k+1)}$, which means $C$ has been already consumed into $\Lambda^{(k+1)}$, though $C \notin \Lambda^{(k+1)}$. $\lightning$.

Finally, as for the time bound, note that 
in worst case, exactly one fd $Y \!\to B$ is consumed from $\Sigma$ into $\Delta$ for each step of the outer loop, where $|\Sigma|=|\mathcal E|$. That is, let $n=|\mathcal E|$, then $n$ is decreased stepwise in arithmetic progression 
such that $n + (n\!-\!1) +  \hdots  + 1 = n\,(n\!-\!1)/2$ scans are required overall, i.e., $O(n^2)$. Note also, however, that $B$ may be the only symbol read at each such fd scan but in worst case at most $|U|=|\mathcal V|$ symbols are read. Thus our measure $n$ should be actually overestimated $n=|\mathcal E|\,|\mathcal V|=|\mathcal S|$. Therefore Alg. \ref{alg:afolding} is bounded by $O(|\mathcal S|^2)$. 
$\Box$
\end{proof}

\subsection{Proof of Corollary \ref{cor:folding}}\label{a:folding}
\noindent
\emph{``Let $\mathcal S(\mathcal E, \mathcal V)$ be a complete structure, and $\Sigma$ an fd set encoded given $\mathcal S$. 
Then algorithm \textsf{folding}($\Sigma$) correctly computes $\Sigma^\looparrowright\!$, the folding of $\Sigma$ in time that is $f(|\mathcal S|)\,\Theta(|\mathcal E|)$, where $f(|\mathcal S|)$ is the time complexity of (Alg. \ref{alg:afolding}) \textsf{AFolding}.''}
\vspace{3pt}
\begin{proof}
By Theorem \ref{thm:afolding}, we know that sub-procedure (Alg. \ref{alg:afolding}) \textsf{AFolding} is correct and terminates. Then (Alg. \vspace{-2.5pt}\ref{alg:folding}) \textsf{folding} necessarily inserts in $\Sigma^\looparrowright\!$ (initialized empty) exactly one fd $Z \!\xrightarrow{\looparrowright} A$ for each fd $X \!\to A$ in $\Sigma$ scanned. Thus, at termination we have $|\Sigma^\looparrowright\!|\!=\!|\Sigma|$. Again, as \textsf{AFolding} is correct, we know $Z$ is the unique folding of $A$. Therefore it must be the case that Alg. \ref{alg:folding} is correct. Finally, for the time bound, the algorithm iterates over each fd in $\Sigma$ without having to read its symbols, and at each such step \textsf{AFolding} takes time that is $f(n)$. Thus \textsf{folding} takes $f(|\mathcal S|)\,\Theta(|\mathcal E|)$. But we know from Theorem 
\ref{thm:afolding} and Remark \ref{rmk:linear} that $f(|\mathcal S|) \in O(|\mathcal S|)$, then it takes $O(|\mathcal S|\,|\mathcal E|)$. 
$\Box$
\end{proof}

\subsection{Proof of Proposition \ref{prop:folding-parsimonious}}\label{a:folding-parsimonious}
\noindent
\emph{``Let $\mathcal S(\mathcal E, \mathcal V)$ be a complete structure, $\varphi$ a total causal mapping over $\mathcal S$ and $\Sigma$ an fd set encoded through $\varphi$ given $\mathcal S$. Let $\Sigma^\looparrowright\!$ be the folding of $\Sigma$, then $\Sigma^\looparrowright\!$ is parsimonious.''}
\begin{proof}
By Lemma \ref{lemma:folding-unique} we know that, for each fd $\langle X, A\rangle \in \Sigma$, the attribute folding $Z$ of $A$ such that $Z \xrightarrow{\looparrowright} A$ exists and is unique. That is, for no $Y \neq Z$ we have $Y \xrightarrow{\looparrowright} A$. Thus $\Sigma^\looparrowright \triangleq \textsf{folding}(\Sigma)$ automatically satisfies Def. \ref{def:parsimonious}, as long as we show it is canonical (cf. Def. \ref{def:minimal}).

Moreover, by Theorem \ref{thm:non-redundant} we know that $\Sigma$ is both non-redundant and singleton-rhs. 
Now, consider by Lemma \ref{lemma:folding-unique} that \textsf{AFolding} builds a bijection mapping each\vspace{-2pt} $\langle X, A\rangle \in \Sigma$ to exactly one $\langle Z, A\rangle \in \Sigma^\looparrowright\!$ such that $Z \xrightarrow{\looparrowright} A$. Since $\Sigma$ is singleton-rhs, it is obvious that $\Sigma^\looparrowright\!$ is as well and covers all attributes in the rhs of fd's in $\Sigma$. Also, the bijection implies $|\Sigma^\looparrowright\!| = |\Sigma|$. Since $\Sigma$ is non-redundant and has exactly one fd with each attribute $A$ in its rhs, then by Lemma \ref{lemma:singleton-rhs} so is $\Sigma^\looparrowright\!$.

Finally we will show that unlike $\Sigma$, its folding $\Sigma^\looparrowright\!$ must be left-reduced. Suppose not. Then for some fd $Z \!\to A$ in $\Sigma^\looparrowright\!$ there is $S \subset Z$ such that non-trivial $S \!\to A$ is in $(\Sigma^\looparrowright\!)^+$. Since $Z \!\to A$ is the only fd in $\Sigma^\looparrowright\!$ with $A$ in its rhs and $S \!\to A$ is non-trivial, we must have $S \xrightarrow{\vartriangle} Z \xrightarrow{\vartriangle} A$. 

Now, suppose $S \!\to A$ is not folded. Then there is $W \nsupseteq S$ such that $W \!\to S$ is in $(\Sigma^\looparrowright\!)^+$ but $S \not\to W$. Note that $W \neq Z$, as $W \nsupseteq S$. Also, $W \!\to S$ and $S \!\to Z $implies $W \!\to Z$ by (R5) transitivity. Note also that $S \not\to W$ and $S \!\to Z$ implies $Z \not\to W$. But $Z \!\to A$ is assumed folded. $\lightning$. That is, $S \!\to A$ must be folded. Then we have both $S \!\to A$ and $Z \!\to A$ folded, though $S \neq Z$. That is, the attribute folding of $A$ is not unique, even though we know by Lemma \ref{lemma:folding-unique} that it must be unique. $\lightning$. Thus $\Sigma^\looparrowright\!$ must be left-reduced, altogether, therefore, parsimonious. $\Box$
\end{proof}

\subsection{Proof of Theorem \ref{thm:connections1}}\label{a:connections1}
\noindent
\emph{``Let $\mathcal S(\mathcal E, \mathcal V)$ be a complete structure, $\varphi$ a total causal mapping over $\mathcal S$ and $\Sigma$ an fd set encoded through $\varphi$ given $\mathcal S$. Then $x_a, x_b \in \mathcal V$ are such that $x_b$ is causally dependent on $x_a$, i.e., $(x_a, x_b) \in C^+_\varphi$ iff there is some non-trivial fd $\langle X, B\rangle \in \Sigma^\vartriangleright$ with $A \in X$, where $B \mapsto x_b$ and $A \mapsto x_a$.''} 

\begin{proof}
We prove the statement by induction. We consider first the `if' direction, and then its `only if' converse.

\textbf{(Base case)}. Let $\langle X, B\rangle \in \Sigma$ be some fd with $A \in X$, where $B \mapsto x_b$ and $A \mapsto x_a$. By Theorem \ref{thm:non-redundant}, it is non-trivial and then by default (i.e., an empty application of R5) it is in $\Sigma^\vartriangleright$. But as $X \to B$ is in $\Sigma$, (Alg. \ref{alg:h-encode}) \textsf{h-encode} ensures that there is exactly one equation $f \in \mathcal E$ such that $\varphi(f)=x_b$ and $x_a \in Vars(f)$ where $B \mapsto x_b$ and $A \mapsto x_a$. 
Then by force of Eq. \ref{eq:direct-causal-dependencies} we must have $(x_a, x_b) \in C_\varphi$. Thus, we obviously have $(x_a, x_b) \in C^+_\varphi$ as well.

\textbf{(Induction)}. Now, recall Armstrong's (R5) pseudo-transitivity rule adapted here for the particular case of singleton-rhs fd sets, viz., if $Y \!\to C$ and $CZ \!\to B$, then $YZ \!\to B$. By the inductive hypothesis, take any two non-trivial fd's $\langle Y, C\rangle, \langle CZ, B\rangle \in \Sigma^\vartriangleright$ with $B \notin Y$ and assume that the causal dependency property holds for their attributes that encode variables. That is, let $D \in Y$ and $E \in Z$, where $D \mapsto x_d$ and $E \mapsto x_e$ for $x_d, x_e \in \mathcal V$ such that $(x_d, x_c), (x_e, x_b), (x_c, x_b) \in C^+_\varphi$.
Note that both $Y \!\to C$ and $CZ \!\to B$ are non-trivial, then $C \notin Y$, $B \notin Z$ and $B \neq C$. Moreover, $B \notin Y$ has been assumed such that the fd $\langle YZ, B\rangle \in \Sigma^\vartriangleright$ to be derived by R5 over $Y \!\to C$ and $CZ \!\to B$ is also non-trivial to satisfy the condition of the theorem. 
Now, it is easy to see that the property holds likewise for non-trivial fd $\langle YZ, B\rangle \in \Sigma^\vartriangleright$. In fact, $(x_d, x_c), (x_c, x_b) \in C^+_\varphi$ implies $(x_d, x_b) \in C^+_\varphi$ and also by the inductive hypothesis we have $(x_e, x_b) \in C^+_\varphi$. That is, for either some $D \in Y$ or some $E \in Z$, we must have $(x_d, x_b), (x_e, x_b) \in C^+_\varphi$.

The converse `only if' direction can be shown by a symmetrical inductive argument. That is, for the base case suppose $(x_a, x_b) \in C_\varphi$. Then, by Eq. \ref{eq:direct-causal-dependencies} we know there is some $f \in \mathcal E$ such that $\varphi(f)=x_b$ and $x_a \in Vars(f)$. Moreover, in that case (Alg. \ref{alg:h-encode}) \textsf{h-encode} ensures there must be some non-trivial fd $\langle X, B\rangle \in \Sigma$ with $A \in X$ where $B \mapsto x_b$ and $A \mapsto x_a$. Thus by an empty application of R5 we have $\langle X, B\rangle \in \Sigma^\vartriangleright$. The inductive step shows the property still holds for arbitrary causal dependencies in $C^+_\varphi$. 
$\Box$
\end{proof}

\subsection{Proof of Proposition \ref{prop:exogenous}}\label{a:exogenous}
\noindent
\emph{``Let $\mathcal S(\mathcal E, \mathcal V)$ be a structure with variable $x  \in \mathcal V$. Then $x$ can only be a first cause of some $y \in \mathcal V$ if $x$ is exogenous. Accordingly, any variable $y \in \mathcal V$ can only have some first cause $x \in \mathcal V$ if it is endogenous.''}
\begin{proof}
The proof is straightforward from definitons.  
For the first statement, suppose by contradiction that $x \in \mathcal V$ is not exogenous but is a first cause of some $y \in \mathcal V$. 
By Def. \ref{def:tcm}, $\varphi$ is bijective then there is some $f \in \mathcal E$ such that $\varphi(f)=x$. Moreover, as $x$ is not exogenous then by Def. \ref{def:exogenous} it must be endogenous. In other words, there must be some $x_a \in \mathcal V$ such that $x_a \neq x \in Vars(f)$ and then by Eq. \ref{eq:direct-causal-dependencies} we have $(x_a, x) \in C_{\varphi}$ hence $(x_a, x) \in C^+_{\varphi}$. However, as $x$ is a first cause, by Def. \ref{def:first-cause} there can be no $y \in \mathcal V$ such that $(y, x) \in C^+_{\varphi}$. $\lightning$.

Now, a symmetrical argument proves the second statement. Also by contradiction take a variable $y \in \mathcal V$ that is not endogenous and suppose it has some first cause $x \in \mathcal V$. 
As variable $y$ is not endogenous then by Def. \ref{def:exogenous} it must be exogenous. In other words, there must be $f \in \mathcal E$ such that $Vars(f)=\{y\}$. Thus for any total causal mapping $\varphi$ over $\mathcal S$, we must have $\varphi(f)=y$ and, for no $x \in \mathcal V$, we have $(x, y) \in C_{\varphi}$. Therefore it is not possible to derive $(x, y) \in C^+_{\varphi}$ for some $x \in \mathcal V$. But as $y$ has some first cause $x \in \mathcal V$ by assumption, we must have $(x, y) \in C^+_{\varphi}$. $\lightning$.
$\Box$
\end{proof}

\subsection{Proof of Lemma \ref{lemma:connections}}\label{a:lemma-connections}
\noindent
\emph{``Let $\mathcal S(\mathcal E, \mathcal V)$ be a complete structure, $\varphi$ a total causal mapping over $\mathcal S$ and $\Sigma$ an fd set encoded through $\varphi$ given $\mathcal S$. Then a variable $x_a \in \mathcal V$ can only be a first cause of some variable $x_b \in \mathcal V$, where $\langle X, B\rangle \in \Sigma$, and $B \mapsto x_b$, $A \mapsto x_a$, if either (i) $A \in X$ or (ii) $A \notin X$ but there is $\langle Z, C\rangle \in \Sigma^\vartriangleright\!$ with $A \in Z$ and $C \in X$.''}

\begin{proof}
We prove the statement by construction out of Theorem \ref{thm:connections1}.

By Def. \ref{def:first-cause}, one of the conditions for $x_a$ to be a first cause of $x_b$ is that $(x_a, x_b) \in C^+_\varphi$. 
Moreover, by Theorem \ref{thm:connections1} we know that $(x_a, x_b) \in C^+_\varphi$ can only hold if there is some non-trivial fd $\langle Z, B\rangle \in \Sigma^\vartriangleright$ with $A \in Z$, where $B \mapsto x_b$ and $A \mapsto x_a$. Now, by Def. \ref{def:tcm} $\varphi$ is bijective then there is $\langle X, B\rangle \in \Sigma$. Moreover, since $\Sigma$ is parsimonious, $X \!\to B$ is the only fd in $\Sigma$ with $B$ in its rhs. So let $A \notin X$. Then we know $X \neq Z$ hence $X \!\to B$ cannot be the fd required by Theorem \ref{thm:connections1}. 

Then such fd $Z \xrightarrow{\triangleright} B$ with $A \in Z$ can only exist if derived by some finite application of R5. That is, there must be some $\langle Z, C\rangle \in \Sigma\,$ with $A \in Z$ such that $X=CW$ for some $W$ and then R5 can be applied over $\langle Z, C\rangle, \langle CW, B\rangle \in \Sigma$ to get non-trivial $\langle ZW, B\rangle \in \Sigma^\vartriangleright$ where $A \in Z$. 

Now, it is easy to see that when such fd $Z \!\to C$ with $A \in Z$ does not exists in $\Sigma^\vartriangleright$ (the second condition of the lemma), then obviously $Z \!\to C$ cannot exist in $\Sigma$ to satisfy the requirement imposed by Theorem \ref{thm:connections1}. That is, $(x_a, x_b) \notin C^+_\varphi$.
$\Box$
\end{proof}

\subsection{Proof of Theorem \ref{thm:connections2}}\label{a:thm-connections2}
\noindent
\emph{``Let $\mathcal S(\mathcal E, \mathcal V)$ be a complete structure, $\varphi$ a total causal mapping over $\mathcal S$ and $\Sigma$ an fd set encoded through $\varphi$ given $\mathcal S$. Now, let $B$ be an attribute that encodes some variable $x_b \in \mathcal V$.
If $\langle X, B\rangle \in \Upsilon(\Sigma)^\looparrowright\!$,\footnote{Note that the folding is taken w.r.t. the $\upsilon$-projection of $\Sigma$, then $x_b$ where $B \mapsto x_b$ is an endogenous variable.} then every first cause $x_a$ of $x_b$ (if any) is encoded by some attribute $A \in X$.''}

\begin{proof}
We show that the existance of a missing first cause $x_c$ of $x_b$ for folded $X \xrightarrow{\looparrowright} B$, where $B \mapsto x_b$ and $C \mapsto x_c$ but $C \notin X$ leads to a contradiction.   

Suppose, by contradiction, that there is some missing first cause $x_c \in \mathcal V$ of $x_b$, where $C \mapsto x_c$ and $C \notin X$. 
Then, by Lemma \ref{lemma:connections}, since variable $x_c$ is a first cause of variable $x_b$, it must be exogenous and, for $\langle Y, B\rangle \in \Upsilon(\Sigma)$ either (i) $C \in Y$ or (ii) $C \notin Y$ but there is $\langle Z, D\rangle \in \Upsilon(\Sigma)^\vartriangleright$ with $C \in Z$ and some $D \in Y$. 

In the first case (i), since $x_c$ is exogenous and $\Sigma$ is parsimonious, we have $\langle \phi, C\rangle \in \Sigma$ but by Def. \ref{def:y-projection} there can be no $W \!\to C$ in the $\upsilon$-projection $\Upsilon(\Sigma)$ of $\Sigma$. That is, $C$ cannot be `consumed' by R5 and then $\langle Y, B\rangle \in \Upsilon(\Sigma)$ with $C \in Y$ implies that, for any $\langle W, B\rangle \in \Upsilon(\Sigma)^\vartriangleright$, we must have $C \in W$. However, by assumption we have $\langle X, B\rangle \in \Upsilon(\Sigma)^\looparrowright\!$ then, by Def. \ref{def:folding}, $\langle X, B\rangle \in \Upsilon(\Sigma)^\vartriangleright$, yet $C \notin X$. $\lightning$.

In the second case (ii), observe that $C \in Z$ and $D \in Y$, and let $Y=DS$. Then by R5 over $Z \!\to D$ and $DS \!\to B$ we get $\langle ZS, B\rangle \in \Upsilon(\Sigma)^\vartriangleright$, where $C \in Z$. 
Well, either $ZS \!\to B$ is folded or it is not, rendering two cases for analysis. If $ZS \!\to B$ is folded, then both $ZS \!\to B$ and $X \!\to B$ are folded. But as $\Upsilon(\Sigma)$ is parsimonious, then by Lemma \ref{lemma:folding-unique} the folding of $B$ must be unique. Therefore we must have $ZS=X$, with $C \in Z$ but $C \notin X$. $\lightning$.

Else, assume $ZS \!\to B$ is not folded. Then by Def. \ref{def:folded} there is some $W$ with $W \nsupseteq ZS$ such that non-trivial $W \!\to ZS$ is in $\Upsilon(\Sigma)^+$ and $ZS \!\not\to W$.

However, as $C \in Z$ and $W \!\to ZS$, by (R3) decomposition we must have $W \!\to C$ in $\Upsilon(\Sigma)^+$, either with $C \in W$ or with $C \notin W$ and then $W \!\to C$ is non-trivial. But we know the latter cannot be the case by the same argument used in the first case (i), viz., $x_c$ is exogenous with $C \mapsto x_c$ and $\Sigma$ is parsimonious. That is, we must have $C \in W$. Furthermore, as $W \!\to ZS$ in $\Upsilon(\Sigma)^+$, then by R5 over $ZS \!\to B$ we get $\langle W, B\rangle \in \Upsilon(\Sigma)^\vartriangleright$ with $C \in W$. Now it is easy to see that the same situation recurs to $W \!\to B$. If it is not folded, eventually for some $T$ we will have $T \xrightarrow{\triangleright} W \xrightarrow{\triangleright} B$ with $C \in T$, where $T \!\to B$ will be folded just like $X \!\to B$. That is, by (Lemma \ref{lemma:folding-unique}) the uniqueness of the folding of $B$, we will have $T=X$ with $C \in T$ and $C \notin X$. $\lightning$. 
$\Box$
\end{proof}

\section{Proofs of Probabilistic DB Synthesis}\label{sec:proofs-synthesis4u}

\subsection{Proof of Theorem \ref{thm:bcnf}}\label{a:bcnf}
\noindent
\emph{``Let $\mathcal S_k$ and $H_k$ be (resp.) the complete structure and `big' fact table of hypothesis $k$, and let $\;\Gamma_k^\prime$ be the repaired factorization of $\mathcal S_k$ over $H_k$, and $Y_0$ the `explanation' table where hypothesis $k$ is recorded. Now, let $\boldsymbol Y_k$ be a U-relational schema defined $\boldsymbol Y_k \triangleq \textsf{synthesize4u}(\mathcal S_k, H_k, Y_0)$. 
Then $\boldsymbol Y_k$ is in BCNF w.r.t. $\Gamma_k^\prime$ and is minimal-cardinality.''}
\begin{proof}

Let $Y_k^i\,[\,V_i\,D_i\,\,|\,\phi\,A_i\,G_i\,]$ and $Y_k^j\,[\,\overline{V_j\,D_j\,}\,|\,S\,T\,]$ be (resp.) any u-factor projection and predictive projection of $H_k$. Note that all fd's in $\Gamma_k^\prime$ are either in $\Phi(\Gamma_k^\prime)$ of form $\phi\,A_i \to B$ or $\phi \to B$, or in $\Upsilon(\Gamma_k^\prime)$ of form $A_1\,A_2\,...\,A_\ell\,S \!\to T$ with $\upsilon \in S$. 
We must show that no fd in $(\Gamma_k^\prime)^+$ can violate $Y_k^i$ or $Y_k^j$. It is easy to see that the projection (cf. Def. \ref{def:fd-projection}) of non-trivial fd's in $\Phi(\Gamma_k^\prime)^+$ onto $Y_k^j$ is empty, just like the projection of non-trivial fd's in $\Upsilon(\Gamma_k^\prime)^+$ onto $Y_k^i$.

For the u-factor projections, note by Def. \ref{def:nf} that for any fd $X \!\to C$ in $(\Gamma_k^\prime)^+$ to violate BCNF in $Y_k^i\,[\,V_i\,D_i\,\,|\,\phi\,A_i\,G_i\,]$, it must be non-trivial ($C \not\in X$) with $XC \subseteq \phi\,A_i\,G_i$ but $X \not\to \phi\,A_i\,G_i$ (that is, $X$ is not a superkey for $Y_k^i$).
Note that we have both $\phi\,A_i \!\to B$ and $\phi \!\to A_i\,B$ in $(\Gamma_k^\prime)^+$ for any $B \in G_i$, but both $\phi\,A_i$ and $\phi$ are superkeys for $Y_k^i$. Also, that there can be no non-trivial fd's $\langle X, C\rangle \in \Phi(\Gamma_k^\prime)^+$ with $\phi \not\in X$, and by the definition of Problem \ref{prob:u-learning} we know that $A_i\,G_i$ is a maximal group. So, for any non-trivial $\langle X, C\rangle \in \Phi(\Gamma_k^\prime)^+$, $X$ must be a superkey for $Y_k^i$. Thus no u-factor projection can be subject of BCNF violation w.r.t. $\Gamma_k^\prime$.

Now, for predictive projection $Y_k^j\,[\,\overline{V_j\,D_j\,}\,|\,S\,T\,]$ let us reconstruct the process towards deriving $\langle S, T\rangle \in \Upsilon(\Gamma_k^\prime)^\vartriangleright$. Note that it is derived by \textsf{synthesize4u} simulating $\ell$ applications of R5 over $\langle \phi, A_i\rangle,\, \langle A_1\, A_2\, ... A_\ell\,S,\, T \rangle \in \Gamma_k^\prime$ for $0 \leq i \leq \ell$. Note also that (i) no cyclic fd's can be involved in such R5 applications, as they must always be over an fd in $\Phi(\Gamma_k^\prime)$; and (ii) as a result of (Alg. \ref{alg:merge}) \textsf{merge}, $A_1\, A_2\, ... A_\ell\,S \!\to T$ was the only non-trivial fd in the projection 
$\pi_{A_1\, A_2\, ... A_\ell\,ST}(\Gamma_k^\prime)$. 
Thus, 
the only non-trivial fd's in the projection $\pi_{ST}(\,(\Gamma_k^\prime)\,)^+$ in addition to $S \!\to T$ itself must be of form $S \!\to C$ rendered out of (R3) decomposition from it for all $C \in T$. In any such fd's, we have $S$ as a superkey for $Y_k^j$. Therefore no predictive projection can be subject of BCNF violation w.r.t. $\Gamma_k^\prime$.

For the minimality note, as a consequence of (Alg. \ref{alg:merge}) \textsf{merge}, any two schemes $Y_k^p[XZ]$, $Y_k^q[VW]$ are rendered by \textsf{synthesize4u} into $\boldsymbol Y_k$ iff we have fd's $\langle X, Z\rangle, \langle V, W\rangle \in (\Gamma_k^\prime)^+$ and $X \not\leftrightarrow V$, i.e., it is not the case that both $X \!\to V$ and $V \!\to X$ hold in $(\Gamma_k^\prime)^+$. Now, to prove that $\boldsymbol Y_k$ is minimal-cardinality, we have to find that merging any such pair of arbitrary schemes shall hinder BCNF in $\boldsymbol Y_k$. In fact, take $\boldsymbol Y_k^\prime := \boldsymbol Y_k \setminus (Y_k^p[XZ] \cup Y_k^q[VW]) \cup Y_k^\ell[XZVW]$. As $X \not\leftrightarrow V$, then neither $X$ nor $V$ can be a superkey for $Y_k^\ell$, which therefore cannot be in BCNF.
$\Box$
\end{proof}

\subsection{Proof of Theorem \ref{thm:lossless}}\label{a:lossless}
\noindent
\emph{``Let $\mathcal S_k$ be the complete structure of hypothesis $k$, and $H_k[U]$ its `big' fact table such that $\Gamma_k^\prime$ is the repaired factorization of $\mathcal S_k$ over $H_k$ and $Y_0$ is the `explanation' table where hypothesis $k$ is recorded. Now, let $\boldsymbol Y_k$ be a U-relational schema defined $\boldsymbol Y_k \triangleq \textsf{synthesize4u}(\mathcal S_k, H_k, Y_0)$. 
Then,
\begin{itemize}
\item[(a)] the join $\bowtie_{\,i=1}^{\,m}  Y_k^i\,[\,V_i\,D_i\,|\,\phi\, A_i\,G_i\,]$ of any subset of the u-factor projections  of $H_k$ is lossless w.r.t. $\Gamma_k^\prime$. 
\item[(b)] any predictive projection $Y_k^j\,[\,\overline{V_j\,D_j}\,|\,S\,T\,]$, result of a join of the theoretical u-factor $Y_0\,[\,V_0\,D_0\,|\,\phi\,\upsilon\,]$ with the `big' fact table $H_k[U]$ and in turn with u-factor projections $Y_k^i\,[\,\overline{V_i\,D_i}\,|\,\phi\,A_i\,G_i\,]$, is lossless w.r.t. $\Gamma_k^\prime$.''
\end{itemize}
}
\begin{proof}
For item (a), by Lemma \ref{lemma:lossless}, we know that any pair $Y_k^i\,[\,V_i\,D_i\,|\,\phi\, A_i\,G_i\,],$ $Y_k^j\,[\,V_j\,D_j\,|\, \phi\, A_j\,G_j\,]$ of u-factor projections of $H_k$ will have a lossless join w.r.t. $\!\Gamma_k^\prime$ iff $(\phi\,A_i\,G_i \,\cap\, \phi\,A_j\,G_j) \to (\phi\,A_i\,G_i \setminus\, \phi\,A_j\,G_j)$ or $(\phi\,A_i\,G_i \,\cap\, \phi\,A_j\,G_j) \to (\phi\,A_j\,G_j \setminus\, \phi\,A_i\,G_i)$ hold in $(\Gamma_k^\prime)^+$. By Def. \ref{def:u-factor}, we know that $(\phi\,A_i\,G_i \cap\, \phi\,A_j\,G_j)=\{\phi\}$, and $\phi\,A_i\,G_i \setminus \phi\,A_j\,G_j = A_i\,G_i$. In fact $\phi \!\to A_i\,G_i$ is a repaired fd in $\Gamma_k^\prime$, therefore $Y_k^i$ and $Y_k^j$ have a lossless join. Now, since the join is an associative operation \cite[p. 62]{ullman1988}, and as we have chosen $Y_k^i$ and $Y_k^j$ arbitrarily, then clearly any subset of the u-factor projections must have a lossless join.

For item (b), 
for any predictive projection $Y_k^j\,[\,\overline{V_j\,D_j}\,|\,S\,T\,]$ take the join $\bowtie Y_k^i\,[\,\overline{V_i\,D_i}\,|\,\phi\,\overline{A_i\,G_i}\,]$ of u-factor projections such that, for all $A_i \in \overline{A_i\,G_i}$, we have $A_i \in W \subset Z$ where $S=Z \setminus W$ and $\langle Z, T\rangle \in \Gamma_k^\prime$. That is, $A_i$ is a pivot attribute representing a first cause of some $C \in T$. By item (a), we know that such join is lossless. 

We must show that the join $\bowtie Y_k^i$ with `big' fact table $H_k[U]$ is also lossless. By Lemma \ref{lemma:lossless}, that is the case iff $(\phi\,\overline{A_i\,G_i} \,\cap\, U) \to (\phi\,\overline{A_i\,G_i} \setminus U)$ or $(\phi\,\overline{A_i\,G_i} \,\cap\, U) \to (U \setminus \phi\,\overline{A_j\,G_j})$ hold in $(\Gamma_k^\prime)^+$. In fact, we have $(\phi\,\overline{A_i\,G_i} \,\cap\, U)=\phi\,\overline{A_i\,G_i}$ and $(\phi\,\overline{A_i\,G_i} \setminus U)=\varnothing$ such that $\phi\,\overline{A_i\,G_i} \to \varnothing$ is trivially in $(\Gamma_k^\prime)^+$.

Finally, the join of theoretical u-factor $Y_0\,[\,V_0\,D_0\,|\,\phi\,\upsilon\,]$ with big fact table $H_k[U]$ must be lossless likewise. In fact, note that $(\,\phi\,\upsilon \,\cap\, U)= \phi\,\upsilon$, and $(\phi\,\upsilon \setminus U)=\varnothing$. Then also trivially we have $\phi\,\upsilon \!\to \varnothing$, which is in $(\Gamma_k^\prime)^+$ as well. 
Since the join is commutative \cite[p. 62]{ullman1988}, the order of application is irrelevant therefore the join of all joins examined above taken together must be lossless.
$\Box$
\end{proof}

\begin{mylemma}\label{lemma:lossless}
Let $\Sigma$ be a set of fd's on attributes $U$, and $R_i[S], R_j[T] \in \boldsymbol R[U]$ be relation schemes with $ST \subseteq U$; and let $\pi_{ST}(\Sigma)$ be the projection of $\Sigma$ onto $ST$. Then $R_i[S]$ and $R_j[T]$ have a lossless join w.r.t. $\pi_{ST}(\Sigma)$ iff $(S \cap\, T) \to (S \setminus T)$ or $(S \cap\, T) \to (T \setminus S)$ hold in $\pi_{ST}(\Sigma)^+$.
\end{mylemma}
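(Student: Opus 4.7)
\section*{Proof Plan for Lemma \ref{lemma:lossless}}

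This is essentially a restatement of the classical Heath decomposition theorem, so the plan is to adapt the standard two-direction argument. The inclusion $r \subseteq \pi_S(r) \bowtie \pi_T(r)$ always holds for any instance $r$ of $R[ST]$; the content of the lemma lies in characterizing when the reverse inclusion holds.

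For the \emph{if} direction, assume without loss of generality that $(S\cap T) \to (T\setminus S)$ is in $\pi_{ST}(\Sigma)^+$ (the other case is symmetric). Let $r$ be an arbitrary instance of $R[ST]$ satisfying $\pi_{ST}(\Sigma)$, and pick any $t \in \pi_S(r) \bowtie \pi_T(r)$. Then there exist $u,v \in r$ with $u[S]=t[S]$ and $v[T]=t[T]$, so $u[S\cap T]=v[S\cap T]$. Applying the assumed fd yields $u[T\setminus S]=v[T\setminus S]$, hence $u[ST]=t[ST]$, which gives $t\in r$ and closes the argument.

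For the \emph{only if} direction, I would argue contrapositively: assuming neither fd is in $\pi_{ST}(\Sigma)^+$, exhibit an instance $r$ of $R[ST]$ satisfying $\pi_{ST}(\Sigma)$ for which $r \subsetneq \pi_S(r)\bowtie \pi_T(r)$. The natural candidate is the two-tuple instance $r=\{u,v\}$ with $u[S\cap T]=v[S\cap T]$, $u[S\setminus T]\neq v[S\setminus T]$, and $u[T\setminus S]\neq v[T\setminus S]$: its projected join contains the two ``swapped'' tuples that are not in $r$.

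The one subtle step (and the main obstacle) is to ensure that such a witnessing $r$ actually satisfies $\pi_{ST}(\Sigma)$; I would handle this via the standard chase-based construction (cf.\ Maier \cite{maier1983}, Ullman \cite{ullman1988}). Concretely, run the chase on the tableau with rows $u,v$ agreeing on $S\cap T$ and distinct elsewhere, under the fd's in $\pi_{ST}(\Sigma)$; by hypothesis the chase cannot equate the $(S\setminus T)$-entries of $u,v$ (else $(S\cap T)\to(S\setminus T)$ would be derived) nor their $(T\setminus S)$-entries (else $(S\cap T)\to(T\setminus S)$ would be derived). The resulting tableau is a legal instance of $R[ST]$ satisfying $\pi_{ST}(\Sigma)$ and still has two rows differing on both $S\setminus T$ and $T\setminus S$, so the join of its $S$- and $T$-projections strictly contains it, completing the proof.
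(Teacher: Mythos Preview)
Your proof plan is correct and follows the standard textbook argument for the Heath decomposition theorem; the paper itself does not give a proof but simply cites Ullman \cite[p.~397]{ullman1988}, whose argument is exactly the two-direction proof (with the chase-based construction for the \emph{only if} direction) that you outline. So you have supplied the details the paper defers to the reference.
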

\begin{proof}
See Ullman \cite[p. $\!$397]{ullman1988}.
$\Box$
\end{proof}


\end{document}